\theoremstyle{plain}
\newtheorem{thm}{Theorem}[section]
\newtheorem{cor}[thm]{Corollary}
\newtheorem{lem}[thm]{Lemma}
\newtheorem{prop}[thm]{Proposition}
\newtheorem{exam}[thm]{Example}
\newtheorem{rem}[thm]{Remark}
\newtheorem{defi}[thm]{Definition}
\def\sqr#1#2{{\vcenter{\vbox{\hrule height.#2pt
              \hbox{\vrule width.#2pt height#1pt \kern#1pt \vrule
width.#2pt}
              \hrule height.#2pt}}}}
\def\dbF{{\mathbb{F}}}
\def\be{\begin{equation}}
\def\ee{\end{equation}}
\def\bb{{\beta}}
\def\ga{{\gamma}}
\def\om{{\omega}}
\def\ep{{\epsilon}}
\def\dm{{\mathrm \diamond}}
\def\Sp{{\mathrm {Sp}}}
\def\lb{\label}
\def\bb{{\beta}}
\def\ga{{\gamma}}
\def\d{{d\over dt}}
\def\sd{{d^2\over dt^2}}
\def\bS{\bar{S}}
\def\cM{{\cal M}}
\def\no{\noindent}
\def\bs{\bigskip}
\def\lan{\mathop{\langle}}
\def\ran{\mathop{\rangle}}
\def\dim{\hbox{\rm dim$\,$}}
\def\({\Big (}
\def\){\Big )}
\def\[{\Big[}
\def\]{\Big]}
\def\ol#1{\overline{#1}}
\def\be{\begin{equation}}
\def\bel{\begin{equation}\label}
\def\ee{\end{equation}}
\def\bea{\begin{eqnarray}}
\def\eea{\end{eqnarray}}
\def\bt{\begin{theorem}}
\def\et{\end{theorem}}
\def\bc{\begin{corollary}}
\def\ec{\end{corollary}}
\def\bl{\begin{lemma}}
\def\el{\end{lemma}}
\def\bp{\begin{proposition}}
\def\ep{\end{proposition}}
\def\br{\begin{remark}}
\def\er{\end{remark}}
\def\ba{\begin{array}}
\def\ea{\end{array}}
\def\bd{\begin{definition}}
\def\ed{\end{definition}}
\begin{document}

\title{\bf Trace formula for linear Hamiltonian systems with its applications to elliptic Lagrangian solutions}
\author{Xijun Hu\thanks{Partially supported
by NSFC(No.11131004), E-mail:xjhu@sdu.edu.cn }
\quad Yuwei Ou \thanks{Partially supported
by NSFC(No.11131004), E-mail:yuweiou@163.com }
 \quad Penghui Wang\thanks{ Partially supported by NSFC(No.11101240),
 E-mail: phwang@sdu.edu.cn }    \\ \\
 Department of Mathematics, Shandong University\\
Jinan, Shandong 250100, The People's Republic of China\\
}
\date{}
\maketitle
\begin{abstract}

In the present paper,  we build  up  trace formulas for both the linear Hamiltonian systems and Sturm-Liouville systems. The formula connects the monodromy
matrix of a symmetric periodic orbit with the infinite sum of eigenvalues of the Hessian of the action functional.  A natural application is to study the non-degeneracy of linear Hamiltonian systems. Precisely, by the trace formula,  we can give an estimation for the upper bound such that the non-degeneracy preserves. Moreover, we could  estimate  the relative Morse index by the trace formula. Consequently,  a series of  new stability criteria for the symmetric periodic orbits is given. As a concrete application, the trace formula is used to study the linear stability of elliptic Lagrangian solutions of the classical planar three-body problem. It is well known that the linear stability of elliptic Lagrangian solutions depends on the mass parameter $\bb=27(m_1m_2+m_2m_3+m_3m_1)/(m_1+m_2+m_3)^2\in [0,9]$ and the
eccentricity $e\in [0,1)$. Based on the trace formula,  we estimate the stable region and hyperbolic region of the elliptic Lagranian solutions.

\end{abstract}

\bs

\no{\bf AMS Subject Classification:} 37J25, 47E05, 70F07, 37B30, 37J45

\bs

\no{\bf Key Words}. trace formula, Hamiltonian systems, Sturm-Liouville systems, planar three-body problem, linear stability

\section{Introduction}

In the study of symmetric periodic solutions or quasi-periodic
solutions in $n$-body problem, it is natural to consider the
$S$-periodic solution in Hamiltonian system
\begin{eqnarray} \dot{z}(t)&=& J H'(t,z(t)), \label{1.1}\\
  z(0)&=& S z(T), \label{1.2}
  \end{eqnarray}
where $J=\left(\begin{array}{cc}0&-I_n\\
                                I_n&0\end{array}\right)$, $S$ is a symplectic orthogonal matrix on $\mathbb R^{2n}$, and $H(t,x)\in C^2(\mathbb R^{2n+1};\mathbb R)$. Please refer \cite{Ch},  \cite{CM}, \cite{FT} and references therein for the background of $S$-periodic orbits in $n$-body problems.
For the solution  $z$  of (\ref{1.1}-\ref{1.2}),  let $\gamma\equiv
\gamma_z(t)$ be the corresponding  fundamental solution, that is
$ \dot{\gamma}(t)= J B(t)\gamma(t),
\gamma(0)= I_{2n},  $ where
$B(t)=B(t)^T=H''(t,z(t))$.
 $\gamma(T)$ is called the monodromy matrix.

The linear stability of $S$-periodic solution $z(t)$ depends on the location of
 eigenvalues of $S\gamma(T)$(see e.g. \cite{HW}). But
 due to the non-commutativity, in general, the fundamental solution  could not be obtained directly.
 In the present paper, we obtain a kind of  trace formula for linear Hamiltonian system. Using the trace formula, we can estimate the relative Morse index, and hence, based on the theory of Maslov-type index  \cite{Lon4}, we give some new stability criteria for Hamiltonian system. Finally, the trace formula will be used to study the stable region and hyperbolic region of Lagrangian solutions in planar three body problem.

For $k\in \mathbb{N}$, $\mathbb \dbF=\mathbb R$ or $\mathbb C$,  let $\mathcal{M}(k,\mathbb{F})$  be the set of $k\times k$ matrices  on $\mathbb{F}^k$. We denote by
 $\Sp(2k)=\{\mathcal{P}\in \mathcal{M}(2k,\mathbb{R}) , \mathcal{P}^TJ\mathcal{P}=J \}$  the symplectic group,  $\mathcal{S}(k)$  the set of $k\times k$ real symmetric
matrices and  $\mathcal{B}(k)=C([0,T];\mathcal{S}(k))$, the space of continuous paths on $[0,T]$ of matrices in $\mathcal{S}(k)$. For
$B(t), D(t)\in \cal{B}(2n)$, consider the eigenvalue problem of the following linear
Hamiltonian systems,
\begin{eqnarray} \dot{z}(t)&=& J(B(t)+\lambda D(t))z(t), \label{0.1.5}\\
 z(0)&=& S z(T). \label{0.1.6}\end{eqnarray} Denote by $A=-J\frac{d}{dt},$ which is densely
defined in the Hilbert space $E=L^2([0,T];\mathbb C^{2n})$ with the
domain
\begin{eqnarray*}
D_S=\left\{\left.z(t)\in W^{1,2}([0,T];\mathbb C^{2n})\,\right|\,z(0)=Sz(T)\right\}.
\end{eqnarray*}
 $B$ is a bounded linear operator defined by $(Bz)(t)=B(t)z(t)$ on $E$. Then $A$ is a self-adjoint operator with compact resolvent; moreover for $\lambda\in\rho(A)$, the resolvent set of $A$, $(\lambda-A)^{-1}$ is Hilbert-Schimidt.

As above, let $\gamma_\lambda(t)$ be the fundamental solution of (\ref{0.1.5}).  To state  the trace formula for Hamiltonian system, we  need some notations. 
 Write $M=S\gamma_0(T)$ and $\hat{D}(t)=\gamma_{0}^T(t)D(t) \gamma_{0}(t)$. For $k\in\mathbb{N}$, let
\begin{eqnarray*}
M_k=\int_0^TJ\hat{D}(t_1)\int_0^{t_1}J\hat{D}(t_2)\cdots\int_0^{t_{k-1}}J\hat{D}(t_k)dt_k\cdots
dt_2 dt_1, \label{adc0.4.13}
\end{eqnarray*} and
\begin{eqnarray*}\label{adc4.14}
G_k=M_k M\left(M-e^{\nu T} I_{2n}\right)^{-1}.
\end{eqnarray*}

\begin{thm}\label{thm1.1}
For $\nu\in \mathbb C$ such that $A-B-\nu J$ is invertible, we have
for any positive integer $m$,
\begin{eqnarray}\label{0.0.0}
Tr \[\left(D\left(A-B-\nu J\right)^{-1}\right)^{m}\]=m\sum_{k=1}^m
\frac{(-1)^{k}}{k}\[\sum\limits_{j_1+\cdots+j_k=m}Tr(G_{j_1}\cdots
G_{j_k})\].
\end{eqnarray}
\end{thm}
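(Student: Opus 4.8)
The plan is to make $D(A-B-\nu J)^{-1}$ explicit as an integral operator after a symplectic change of gauge, to write $Tr$ of its $m$-th power as an iterated integral over $[0,T]^m$, and then to reorganise the resulting sum by a cyclic-composition counting identity.

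\emph{Symplectic reduction.} Since $\gamma_0(t)\in\Sp(2n)$, the relation $\gamma_0(t)^TJ\gamma_0(t)=J$ gives $\gamma_0(t)^{-1}J=J\gamma_0(t)^T$, hence $\gamma_0(t)^{-1}JD(t)\gamma_0(t)=J\hat D(t)$. Putting $z=\gamma_0 w$ turns $(A-B-\nu J)z=f$ into $\dot w+\nu w=\gamma_0^{-1}Jf$ with the twisted condition $w(0)=Mw(T)$; this reduced problem is uniquely solvable precisely because $e^{\nu T}\notin\sigma(M)$, which is the invertibility hypothesis. Let $\Gamma_0$ denote multiplication by $\gamma_0(t)$, $\mathcal E$ multiplication by $J\hat D(t)$, and $\mathcal L_\nu^{-1}$ the solution operator of $\dot w+\nu w=g,\ w(0)=Mw(T)$. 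Then $(A-B-\nu J)^{-1}=\Gamma_0\mathcal L_\nu^{-1}\Gamma_0^{-1}J$, and since $\Gamma_0^{-1}JD\Gamma_0=\mathcal E$ one gets $(A-B-\nu J)^{-1}D=\Gamma_0\mathcal L_\nu^{-1}\mathcal E\,\Gamma_0^{-1}$, so by cyclicity of the trace
\begin{equation*}
Tr\Bigl[\bigl(D(A-B-\nu J)^{-1}\bigr)^m\Bigr]=Tr\Bigl[\bigl((A-B-\nu J)^{-1}D\bigr)^m\Bigr]=Tr\Bigl[\bigl(\mathcal L_\nu^{-1}\mathcal E\bigr)^m\Bigr].
\end{equation*}

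\emph{The kernel and the iterated integral.} Solving $\dot w+\nu w=g$, $w(0)=Mw(T)$ by variation of constants yields $w(0)=(I-e^{-\nu T}M)^{-1}M\int_0^T e^{-\nu(T-s)}g(s)\,ds$, so $\mathcal L_\nu^{-1}$ has kernel $L(t,s)=e^{-\nu(t-s)}\bigl(P+\mathbf 1_{\{s<t\}}I\bigr)$ with
\begin{equation*}
P:=e^{-\nu T}(I-e^{-\nu T}M)^{-1}M=-M(M-e^{\nu T}I)^{-1}.
\end{equation*}
Thus $\mathcal L_\nu^{-1}\mathcal E$ has kernel $L(t,s)J\hat D(s)$; since $\mathcal L_\nu^{-1}$ is Hilbert--Schmidt and $\mathcal E$ bounded, the standard formula for the trace of a composition of integral operators applies, and the scalar factors $e^{-\nu(t_i-t_{i+1})}$ telescope around the cycle to $1$, giving
\begin{equation*}
Tr\Bigl[\bigl(\mathcal L_\nu^{-1}\mathcal E\bigr)^m\Bigr]=\int_{[0,T]^m}\tr\Bigl[\prod_{i=1}^m\mathcal N(t_i,t_{i+1})\,J\hat D(t_{i+1})\Bigr]\,dt_1\cdots dt_m,\qquad \mathcal N(t,s):=P+\mathbf 1_{\{s<t\}}I,
\end{equation*}
indices modulo $m$. (For $m\ge2$ the operator $(\mathcal L_\nu^{-1}\mathcal E)^m$ is trace class with continuous kernel, so this is literal; for $m=1$ the identity should be read with the trace interpreted through the symmetric eigenvalue sum of the self-adjoint operator $A-B-\nu J$, equivalently via the display with the strict-inequality convention $\mathbf 1_{\{s<t\}}|_{s=t}=0$ — this is the only analytic subtlety.)

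\emph{The combinatorial core.} Expand each factor $\mathcal N(t_i,t_{i+1})=P+\mathbf 1_{\{t_{i+1}<t_i\}}I$ and collect terms by the set $U\subseteq\mathbb Z/m\mathbb Z$ of indices at which the summand $P$ is chosen. The term $U=\varnothing$ carries the indicator of $\{t_1>t_2>\cdots>t_m>t_1\}$, a null set, so it drops. If $|U|=k\ge1$, the $k$ copies of $P$ cut the cyclic word into $k$ consecutive arcs; on an arc of length $j$ the remaining indicators force its $j$ variables to be strictly decreasing, so that arc integrates to exactly $M_j$, and the $U$-term integrates to $\tr\bigl[M_{j_1}P\,M_{j_2}P\cdots M_{j_k}P\bigr]$, where the composition $(j_1,\dots,j_k)$ of $m$ lists the arc lengths in cyclic order. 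Because $\tr[M_{j_1}P\cdots M_{j_k}P]$ is invariant under cyclic rotation of $(j_1,\dots,j_k)$, the identity
\begin{equation*}
\sum_{U\subseteq\mathbb Z/m\mathbb Z,\ |U|=k}F\bigl(\text{arc lengths of }U\bigr)=\frac{m}{k}\sum_{j_1+\cdots+j_k=m}F(j_1,\dots,j_k),
\end{equation*}
valid for every cyclically invariant $F$ and proved by counting pairs (a $k$-subset of $\mathbb Z/m\mathbb Z$, a marked element of it) in two ways, applies and gives
\begin{equation*}
Tr\Bigl[\bigl(D(A-B-\nu J)^{-1}\bigr)^m\Bigr]=\sum_{k=1}^m\frac{m}{k}\sum_{j_1+\cdots+j_k=m}\tr\bigl[M_{j_1}P\cdots M_{j_k}P\bigr].
\end{equation*}

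\emph{Conclusion.} From $P=-M(M-e^{\nu T}I)^{-1}$ and $G_j=M_jM(M-e^{\nu T}I)^{-1}$ we read off $M_jP=-G_j$, hence $\tr[M_{j_1}P\cdots M_{j_k}P]=(-1)^k\tr[G_{j_1}\cdots G_{j_k}]$; substituting this into the last display produces exactly formula (\ref{0.0.0}). I expect the main obstacle to be the combinatorial core: one must carefully match the chain of indicator constraints on each arc with the simplex appearing in the definition of $M_j$, keep the $P$-insertions in their cyclic order under the trace, and establish the counting identity; by comparison the gauge reduction and the kernel computation are routine, modulo the $m=1$ remark.
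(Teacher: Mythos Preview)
Your argument is correct for $m\ge2$ and takes a genuinely different route from the paper. The paper derives the trace formula \emph{indirectly}, via the Hill-type determinant identity
\[
\det\bigl[(A-B-\alpha D-\nu J)(A+P_0)^{-1}\bigr]=C(S)e^{-n\nu T}\det\bigl(S\gamma_\alpha(T)-e^{\nu T}I_{2n}\bigr),
\]
by expanding both sides as power series in $\alpha$: the left side gives $\exp\bigl(-\sum_m\frac{\alpha^m}{m}\,Tr(F^m)\bigr)$ through the conditional Fredholm determinant calculus of \cite{HW}, while the right side is expanded by Taylor-expanding $\hat\gamma_\alpha(T)=\sum_k\alpha^kM_k$ and using $\det(I+\alpha G(\alpha))=\exp\bigl(\sum(-1)^{m+1}\alpha^m Tr(G(\alpha)^m)/m\bigr)$; matching coefficients yields the theorem. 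Your approach bypasses the Hill-type formula and the conditional-determinant machinery entirely: you write the resolvent kernel down explicitly after the gauge change, compute $Tr(F^m)$ as an $m$-fold integral, and reduce to the right-hand side by the cyclic-composition count. The paper's method has the advantage of treating all $m$ (including $m=1$) uniformly once the conditional determinant is set up, and it makes the link to Hill's classical formula transparent; your method is more elementary, makes the combinatorics explicit, and for $m\ge2$ needs nothing beyond the trace formula for products of Hilbert--Schmidt kernels.

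Your caveat at $m=1$ is the right place to flag: the paper's $Tr\bigl(D(A-B-\nu J)^{-1}\bigr)$ is defined as $\lim_N Tr(P_NFP_N)$ with $P_N$ the spectral projections of $A$, and the identification of that limit with your diagonal integral $\int_0^T\tr\bigl[P\,J\hat D(t)\bigr]\,dt$ (equivalently, the invariance of the conditional trace under the conjugation by $\Gamma_0$) is exactly what the paper's determinant machinery supplies and what your sketch leaves unproved. So for $m=1$ your argument is heuristic rather than complete, while for $m\ge2$ it is a clean independent proof.
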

There are two reasons why we consider the parameter $\nu$ in Theorem \ref{thm1.1}. Firstly, for a given $B\in\mathcal{B}(2n)$, we can not expect that $A-B$ is invertible. However, for every $\nu\in\mathbb C$ except countable points, $A-B-\nu J$ is invertible. Secondly, the operator $D(A-B-\nu J)^{-1}$ comes from the following boundary value problem naturally
 \begin{eqnarray} \dot{z}(t)&=& J(B(t)+\lambda D(t))z(t) \label{0.1.1}\\
  z(0)&=& \omega S z(T), \label{0.1.2}
  \end{eqnarray}
where $\lambda\in\mathbb R\setminus\{0\}$ and $\omega=e^{\nu T}$.
In fact, if we set $A_\omega=-J{d\over dt}$ with the domain $D_{ S}=\Big\{ z(0)=\omega Sz(T)\,\big|\, z(t)\in W^{1,2}([0,T];\mathbb C ^{2n})\Big\},$ then $e^{-\nu t}A_{\omega}e^{\nu t}=A-\nu J$. Thus $z\in \ker(A_\omega-B-\lambda D)$ if and only if $e^{-\nu T}z(t)\in \ker(A-\nu J-B-\lambda D)$, which is equivalent to that $1\over\lambda$ is an eigenvalue $D(A-\nu J-B)^{-1}$ provided that $A-\nu J-B$ is invertible.
\begin{rem}
\begin{itemize}
 \item[(1).] For $m=1$, $D(A-\nu J-B)^{-1}$ is not a trace class operator but a Hilbert-Schmidt operator. And hence $Tr(D(A-\nu J-B)^{-1})$ is not the usual trace but a kind of conditional trace\cite{HW}.
\item[(2).] For $m\geq 2$, $\left(D\left(A-\nu J-B\right)^{-1}\right)^m$ are trace class operators. By the preceding argument, $\lambda$ is a nonzero eigenvalue of system (\ref{0.1.1})-(\ref{0.1.2}) if and only if $1\over \lambda$ is an eigenvalue of $D(A-\nu J-B)^{-1}$. And hence, if we let $\{\lambda_i\}$ be the set of nonzero eigenvalues of the system (\ref{0.1.1})-(\ref{0.1.2}),
    \begin{eqnarray}
    Tr\left[\left(D(A-B-\nu J)^{-1}\right)^m\right]=\sum_{j=1}^\infty \frac{1}{\lambda_j^m}=m\sum_{k=1}^m \frac{(-1)^{k}}{k}\Big[\sum\limits_{j_1+\cdots+j_k=m}Tr(G_{j_1}\cdots
G_{j_k})\Big], \,\ m\geq2. \label{0.1.4}
    \end{eqnarray}
\end{itemize}
\end{rem}

For large $m$, the right hand side of (\ref{0.0.0}) is a little
complicated. However, for $m=1,2$, we can write it down more
precisely.
\begin{cor}\label{cor1.1} For $\nu\in\mathbb C$ such that $A-B-\nu J$ is invertible,
\begin{eqnarray}\label{ht1}
Tr\left[D(A-B-\nu J)^{-1}\right]=-Tr\[J\int_0^T\gamma_{0}^T(t)D(t)
\gamma_{0}(t)dt\cdot M(M-e^{\nu T} I_{2n})^{-1}\],
\end{eqnarray}
and
\begin{eqnarray}\label{ht2}
&&Tr\big([D(A-B-\nu J)^{-1}]^2\big)\nonumber\\&&=-2Tr \[J\int_0^T\gamma_{0}^T(t)D(t) \gamma_{0}(t)J\int_0^s\gamma_{0}^T(s)D(s) \gamma_{0}(s)ds dt \cdot M(M-e^{\nu T} I_{2n})^{-1} \]\nonumber\\
                                &&\ \ \ +Tr\[\Big(J\int_0^T\gamma_{0}^T(t)D(t) \gamma_{0}(t)dtM(M-e^{\nu T} I_{2n})^{-1}\Big)^2\].
\end{eqnarray}
Especially, in the case that $M=\pm I_{2n}$,
\begin{eqnarray} Tr\[\left(D(A-\nu J-B)^{-1}\right)^{2}\]=\frac{\pm e^{\nu T}}{(1\mp e^{\nu T})^2} Tr\[\(J\int_0^T\gamma_{0}^T(s)D(s) \gamma_{0}(s)ds\)^2\]. \label{eq5b.5.1.1a}
\end{eqnarray}
\end{cor}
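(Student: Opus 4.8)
The plan is to read off the cases $m=1$ and $m=2$ directly from the general trace formula (\ref{0.0.0}) of Theorem~\ref{thm1.1}, unwinding the definitions of $M_1,M_2,G_1,G_2$, and then to treat the special case $M=\pm I_{2n}$ by a short symmetrization argument together with an elementary computation of a scalar coefficient.

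For $m=1$ the right-hand side of (\ref{0.0.0}) collapses to the single term $k=1$, $j_1=1$, namely $-Tr(G_1)$. Since $\hat D(t)=\gamma_0^T(t)D(t)\gamma_0(t)$, we have $M_1=\int_0^T J\hat D(t)\,dt=J\int_0^T\gamma_0^T(t)D(t)\gamma_0(t)\,dt$, hence $G_1=J\int_0^T\gamma_0^T(t)D(t)\gamma_0(t)\,dt\cdot M(M-e^{\nu T}I_{2n})^{-1}$, which is exactly (\ref{ht1}). For $m=2$ the outer sum has two terms: $k=1$ (forcing $j_1=2$) contributes $2\cdot(-1)\cdot Tr(G_2)=-2Tr(G_2)$, and $k=2$ (forcing $j_1=j_2=1$) contributes $2\cdot\tfrac12\cdot Tr(G_1^2)=Tr(G_1^2)$. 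Substituting $M_2=\int_0^T J\hat D(t)\int_0^t J\hat D(s)\,ds\,dt$ into $G_2=M_2M(M-e^{\nu T}I_{2n})^{-1}$, together with the expression for $G_1$ just obtained, yields (\ref{ht2}).

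It then remains to derive (\ref{eq5b.5.1.1a}) from (\ref{ht2}). When $M=\pm I_{2n}$ one has $M(M-e^{\nu T}I_{2n})^{-1}=c\,I_{2n}$ with $c=\dfrac{\pm1}{\pm1-e^{\nu T}}$, so (\ref{ht2}) reduces to $-2c\,Tr(M_2)+c^2\,Tr(M_1^2)$, where I abbreviate $N=\int_0^T\hat D(t)\,dt$, $F(t)=\int_0^t\hat D(s)\,ds$, so that $M_1=JN$ and $M_2=\int_0^T J\hat D(t)\,JF(t)\,dt$. Plainly $Tr(M_1^2)=Tr\big((JN)^2\big)$, while the crux is the identity
\begin{equation*}
Tr(M_2)=Tr\Big[J\int_0^T\hat D(t)\,J\int_0^t\hat D(s)\,ds\,dt\Big]=\tfrac12\,Tr\big((JN)^2\big).
\end{equation*}
To see this, note that $F$ is symmetric (because $\hat D(t)$ is), $F(0)=0$, $F(T)=N$, and differentiating $JF(t)JF(t)$ and integrating over $[0,T]$ gives $\int_0^T\big(J\hat D(t)JF(t)+JF(t)J\hat D(t)\big)\,dt=(JN)^2$. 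Under the trace the two summands coincide: from $\hat D^T=\hat D$, $F^T=F$, $J^T=-J$ and the cyclicity of the trace, $Tr(J\hat D(t)JF(t))=Tr\big((J\hat D(t)JF(t))^T\big)=Tr(F(t)J\hat D(t)J)=Tr(JF(t)J\hat D(t))$. Hence taking traces of the integrated identity gives $2\,Tr(M_2)=Tr\big((JN)^2\big)$. Therefore the $m=2$ expression equals $(c^2-c)\,Tr\big((JN)^2\big)$, and since $c-1=\dfrac{e^{\nu T}}{\pm1-e^{\nu T}}$ we obtain $c^2-c=c(c-1)=\dfrac{\pm e^{\nu T}}{(\pm1-e^{\nu T})^2}=\dfrac{\pm e^{\nu T}}{(1\mp e^{\nu T})^2}$; recalling $N=\int_0^T\gamma_0^T(s)D(s)\gamma_0(s)\,ds$ this is precisely (\ref{eq5b.5.1.1a}).

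There is no genuine obstacle here: every step is either a direct specialization of (\ref{0.0.0}) or a one-line matrix manipulation. The only point that is not purely mechanical is the symmetrization identity for $Tr(M_2)$ in the case $M=\pm I_{2n}$, where one must use the symmetry of $\hat D(t)$ together with the cyclic invariance of the trace to conclude that the two ``triangular halves'' making up $(JN)^2$ contribute equally; this accounts for the factor $\tfrac12$ and is the heart of the special-case formula.
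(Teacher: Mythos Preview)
Your proof is correct. The derivations of (\ref{ht1}) and (\ref{ht2}) by specializing (\ref{0.0.0}) to $m=1,2$ are exactly what the paper does.

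For the special case (\ref{eq5b.5.1.1a}), you take a slightly different route from the paper. The key identity in both approaches is $2\,Tr(M_2)=Tr(M_1^2)$. You obtain it by an elementary integration-by-parts/symmetrization argument: differentiate $JF(t)JF(t)$, integrate, and use $\hat D^T=\hat D$, $J^T=-J$ together with trace cyclicity to match the two halves. The paper instead derives this identity from the symplectic structure: since $\hat\gamma_\alpha(T)\in\Sp(2n)$, the Taylor coefficients satisfy $\sum_{j=0}^k M_j^TJM_{k-j}=0$ (Proposition~\ref{prop3.1}), and taking $k=2$ and tracing gives $2\,Tr(M_2)=Tr(M_1^2)$ directly (the line after Corollary~\ref{cor3.1}). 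The paper then packages the $M=\pm I_{2n}$ case as a specialization of the more general Proposition~\ref{prop5b.21} (valid whenever $MJ=JM$ and $M^T=M$), followed by Corollary~\ref{cor5b.11}. Your argument is more self-contained and avoids invoking symplecticity; the paper's approach situates the identity within a systematic family of relations among the $M_k$'s that is reusable for higher $m$.
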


In some concrete problem, such as the estimation of hyperbolic region of elliptic Lagrangian solution,  the trace formula for Lagrangian system is more convenient to be used. In order to introduce the trace formula for Lagrangian system,  it is natural to consider the following  eigenvalue problem of  Sturm-Liouville
system with $\bar{S}$-periodic boundary condition
 \begin{eqnarray}
-(P\dot{y}+Qy)^\cdot+Q^T\dot{y}+(R+\lambda R_1)y=0,\quad
y(0)=\bar{S}y(T),\quad \dot{y}(0)=\bar{S}\dot{y}(T),\label{e1} \end{eqnarray}
where $\bar{S}$ is an orthogonal matrix on  $\mathbb R^n$, $P, R, R_1\in\mathcal{ B}(n)$, $Q\in C([0,T];\mathcal{M}(n,\mathbb{R}))$. Instead of Legendre convexity condition, we assume for any $t\in [0,T]$, $P(t)$  is  invertible. Moreover we assume
 \begin{eqnarray}
 \bar{S}P(T)=P(0)\bar{S}\quad\text{and}\quad \bar{S}Q(T)=Q(0)\bar{S}. \label{c1}
\end{eqnarray}
Such a boundary value problem with condition (\ref{c1}) comes naturally from the study of symmetric periodic orbits  in $n$-body problem.

By the standard Legendre transformation,  the linear  system
(\ref{e1}) corresponds to the linear Hamiltonian system,
 \bea \dot{z}=JB_\lambda(t)z, \quad z(t)=\bar{S}_dz(T), \label{h1}\eea
with \bea \bar{S}_d=\left(\begin{array}{cc}\bar{S}& 0_n \\ 0_n  & \bar{S}
\end{array}\right),\quad \text{and}\quad B_\lambda(t)=\left(\begin{array}{cc}P^{-1}(t)& -P^{-1}Q(t) \\
-Q(t)^TP^{-1}(t)  & Q(t)^TP^{-1}(t)Q(t)-R(t)-\lambda R_1(t)
\end{array}\right).\label{b2} \eea
Obviously, $\bar{S}_d$ is a symplectic orthogonal  matrix on $\mathbb R^{2n}$, and the eigenvalue problem (\ref{h1}) is a special case of the eigenvalue problem (\ref{0.1.5}-\ref{0.1.6}).  Without confusion, for Lagrangian system, denote by $\gamma_\lambda(t)$  the
fundamental solution of (\ref{h1}).

Using the notations in Theorem \ref{thm1.1}, take  $D=\left(\begin{array}{cc} 0_n & 0_n \\
0_n& -R_1\end{array}\right)$. Temporarily, we assume the unperturbed systems is non-degenerate, that is,   $0$ is not the eigenvalue of (\ref{e1}), which is equivalent to that
 $1$ is not the eigenvalue of $M=\bar{S}_d\gamma_0(T)$.

\begin{thm}\label{thm1.1a}
Let $\{\lambda_j\}$ be the  eigenvalues for the boundary value problem (\ref{e1}), then
 \bea
\sum_{j}\frac{1}{\lambda_j^m}=m\sum_{k=1}^m
\frac{(-1)^{k}}{k}\[\sum\limits_{j_1+\cdots+j_k=m}Tr(G_{j_1}\cdots
G_{j_k})\],  \forall m\in\mathbb{N}, \label{lt.1a} \eea
 especially for $m=1$,
 \bea \sum_{j}\frac{1}{\lambda_j}=-
Tr\[J\int_0^T\gamma_0^{T}(t)D(t) \gamma_0(t)dt\cdot
M(M-I_{2n})^{-1}\].  \label{lt.2}
\eea
\end{thm}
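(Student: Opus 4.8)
The strategy is to realize the Sturm--Liouville eigenvalue problem (\ref{e1}) as the particular Hamiltonian eigenvalue problem (\ref{h1})--(\ref{b2}) obtained from it by the Legendre transformation, and then to apply Theorem \ref{thm1.1} and Corollary \ref{cor1.1} with the parameter $\nu=0$. First I would make the reduction precise. Writing $z=(P\dot y+Qy,\,y)^T$, a direct computation shows that $y$ solves the differential equation in (\ref{e1}) if and only if $z$ solves $\dot z=JB_\lambda(t)z$ with $B_\lambda$ as in (\ref{b2}); and the hypotheses (\ref{c1}), namely $\bar S P(T)=P(0)\bar S$ and $\bar S Q(T)=Q(0)\bar S$, are exactly what is needed in order that $y(0)=\bar S y(T)$, $\dot y(0)=\bar S\dot y(T)$ be equivalent to $z(0)=\bar S_d z(T)$. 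Hence $y\mapsto z$ is a linear isomorphism between the eigenspace of (\ref{e1}) at a value $\lambda$ and that of (\ref{h1}), so the two problems have the same eigenvalues with the same multiplicities. Moreover $0$ is an eigenvalue of (\ref{e1}) iff $1\in\sigma(M)$ with $M=\bar S_d\gamma_0(T)$ iff $A-B$ fails to be invertible (here $B=B_0$, $S=\bar S_d$); thus the standing non-degeneracy assumption is precisely the condition that $\nu=0$ be admissible in Theorem \ref{thm1.1}.

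With $\nu=0$ we have $e^{\nu T}=1$, so $G_k=M_kM(M-I_{2n})^{-1}$ are exactly the matrices appearing in (\ref{lt.1a}), and Theorem \ref{thm1.1} gives, for every $m\in\dbN$,
\[
Tr\big[(D(A-B)^{-1})^m\big]=m\sum_{k=1}^m\frac{(-1)^k}{k}\Big[\sum_{j_1+\cdots+j_k=m}Tr(G_{j_1}\cdots G_{j_k})\Big].
\]
For $m\ge 2$ the operator $(D(A-B)^{-1})^m$ is of trace class, and by the eigenvalue correspondence just established together with (\ref{0.1.4}) its trace equals $\sum_j\lambda_j^{-m}$, which is (\ref{lt.1a}) for $m\ge 2$. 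For $m=1$, equation (\ref{ht1}) of Corollary \ref{cor1.1} evaluated at $\nu=0$ produces precisely the right-hand side of (\ref{lt.2}) for $Tr[D(A-B)^{-1}]$, which is also the $m=1$ instance of the displayed formula; so the theorem is complete once one shows
\[
Tr\big[D(A-B)^{-1}\big]=\sum_j\frac{1}{\lambda_j},
\]
in particular that this series converges absolutely.

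This last identity is the main obstacle, and it is here that the block form $D=\mathrm{diag}(0_n,-R_1)$ is essential: in the general Hamiltonian setting $D(A-B)^{-1}$ is only Hilbert--Schmidt, and its conditional trace need not be an eigenvalue sum. The plan is to use the vanishing $(1,1)$-block of $D$ to push the computation down to the genuinely second-order operator behind (\ref{e1}). Set $\mathcal T=-\big(P\frac{d}{dt}+Q\big)^\cdot+Q^T\frac{d}{dt}+R$ with the $\bar S$-periodic domain, which is invertible by non-degeneracy. Since $P(t)$ is invertible for all $t$, $\mathcal T$ is an elliptic operator of second order on a one-dimensional domain (in general not self-adjoint), so $\mathcal T^{-1}$, and hence $\mathcal T^{-1}R_1$, is of trace class with singular values $O(k^{-2})$. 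A short argument with the Legendre correspondence shows that $D(A-B)^{-1}$ and $-\mathcal T^{-1}R_1$ have the same nonzero spectrum with multiplicities, namely $\{1/\lambda_j\}$, so $\sum_j|\lambda_j|^{-1}<\infty$. Finally one identifies the conditional trace of $D(A-B)^{-1}$ with the genuine trace $Tr(\mathcal T^{-1}R_1)$ — the technical point being that $D$ factors through the second component and that the $(2,2)$-block of the Green's function of $A-B$ is continuous across the diagonal, unlike the off-diagonal blocks, which is what keeps the relevant reduced operator in the trace class — and then applies Lidskii's theorem to $\mathcal T^{-1}R_1$. Combined with (\ref{ht1}) at $\nu=0$ this yields (\ref{lt.2}) and finishes the proof; everything other than this $m=1$ reduction is a direct specialization of Theorem \ref{thm1.1} and Corollary \ref{cor1.1}.
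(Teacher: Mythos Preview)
For $m\ge 2$ your reduction is exactly what the paper does: it observes (Remark~\ref{cor.lh}, Proposition~\ref{cor2.5}) that the Hamiltonian and Sturm--Liouville problems share eigenvalues with multiplicities, that $\big(D(A-B)^{-1}\big)^m$ is trace class, and that (\ref{0.1.4}) then gives (\ref{lt.1a}) immediately.

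For $m=1$ the paper takes a \emph{different} route from yours. It does not attempt to identify the conditional trace $Tr\big[D(A-B)^{-1}\big]$ with $\sum_j 1/\lambda_j$. Instead it first proves an independent Hill-type formula for the Sturm--Liouville operator (Theorem~\ref{thm4.10}), based on a block-triangular determinant reduction (Proposition~\ref{thm5.1a}) that collapses the $2n\times 2n$ Hamiltonian determinant to the $n\times n$ one:
\[
\det\big[(\mathcal A(\nu)+\lambda R_1)\,\mathcal A(\nu)^{-1}\big]=\det(\bar S_d\gamma_\lambda(T)-e^{\nu T}I)\cdot\det(\bar S_d\gamma_0(T)-e^{\nu T}I)^{-1}.
\]
Since $R_1\mathcal A(\nu)^{-1}$ is genuinely trace class, Taylor-expanding this identity (exactly as in the proof of Theorem~\ref{thm1.1}) yields (\ref{eq3.27}) with ordinary traces, and Lidskii on $R_1\mathcal A(0)^{-1}$ gives (\ref{lt.2}) and the $m=1$ case of (\ref{lt.1a}). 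The equality $-Tr[R_1\mathcal A(\nu)^{-1}]=Tr[D(A-B-\nu J)^{-1}]$ then \emph{follows a posteriori} (Corollary~\ref{cor4.1a}), rather than being an input.

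Your $m=1$ argument is plausible but has a real gap at the sentence ``one identifies the conditional trace of $D(A-B)^{-1}$ with the genuine trace $Tr(\mathcal T^{-1}R_1)$''. Your observation that the $(2,2)$-block $G_{22}$ of the Green kernel is continuous across the diagonal is correct (the jump of $G$ equals $J$, whose $(2,2)$-block vanishes), and it does imply that $\Pi(A-B)^{-1}\Pi^*=\mathcal T^{-1}$ is trace class. But the conditional trace is defined as $\lim_N Tr(P_N\,\cdot\,P_N)$ with $P_N$ the spectral projections of $A=-J\,d/dt$, and these projections \emph{mix the two $n$-blocks}; they do not commute with $\Pi^*\Pi$. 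Writing $D(A-B)^{-1}=\Pi^*(-R_1)\mathcal T^{-1}\Pi+D(A-B)^{-1}(I-\Pi^*\Pi)$, the first summand is trace class with trace $-Tr(R_1\mathcal T^{-1})$, but you still have to show the conditional trace of the second (block off-diagonal, nilpotent) summand vanishes, and nothing in your sketch does that. The paper explicitly flags this as the obstruction (Remark~\ref{cor.lh}: ``we don't know whether $Tr(D(A-B-\nu J)^{-1})=\sum_j 1/\lambda_j$''), which is precisely why it detours through the Lagrangian Hill-type formula instead.
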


It should be pointed out that from Proposition \ref{cor2.5}, for $m\geq 2$, the trace formula (\ref{lt.1a}) is a special case of the formula (\ref{0.1.4}). However, for $m=1$, the meanings of the formula (\ref{ht1}) and (\ref{lt.2}) are totally different. In fact, $Tr \Big(D(A-B-\nu J)^{-1}\Big)$ is a kind of conditional trace. Details could be found in Remark \ref{cor.lh}. The formula (\ref{lt.2}) is proved for Sturm-Liouville system, and we do not know for general Hamiltonian system whether it holds true or not. Fortunately, (\ref{lt.2}) is easy to be calculated.

During the study of the above trace formula, thanking for Chongchun
Zeng's suggestion, we can find the original work by Krein\cite{K1,K2} in 1950s. In fact,
Krein considered the following system
\begin{eqnarray} \dot{z}(t)&=& \lambda J D(t) z(t),  \label{0.1.7}\\
  z(0)&=& -z(T), \label{0.1.8}
  \end{eqnarray}
where $D\geq 0$ and $\int_0^T D(t)dt> 0$. The system
(\ref{0.1.7}-\ref{0.1.8}) is a special case of our system
(\ref{0.1.5}-\ref{0.1.6}). For the system (\ref{0.1.7}-\ref{0.1.8}),
Krein proved that
$
\lim\limits_{r\to\infty}
\sum\limits_{|\lambda_j|<r}\frac{1}{\lambda_j}=0,$
and
\begin{eqnarray}
\sum
\frac{1}{\lambda_j^2}&=&\frac{T^2}{2}Tr(A_{11} A_{22}-A_{12}^2),\label{0.1.9}
\end{eqnarray}
where $\lambda_j$ are the eigenvalues for the system
(\ref{0.1.7}-\ref{0.1.8}), and
$\left(\begin{array}{cc}A_{11}&A_{12}\\ A_{21} &
A_{22}\end{array}\right)=\frac{1}{T}\int_0^T D(t) dt$.
Moreover, under the condition  $D\geq 0$,   $\int_0^T D(t)dt> 0$, Krein gave an interesting stability criteria:
  \begin{eqnarray*}
\label{0.1.9b} \frac{T^2}{2}Tr(A_{11} A_{22}-A_{12}^2)<1.  \end{eqnarray*}
Obviously, by taking $\nu=0$ and $M=-I_{2n}$ in
the formula (\ref{eq5b.5.1.1a}), it is easy to see that  Theorem \ref{thm1.1} generalizes
Krein's formula (\ref{0.1.9}).
\begin{rem}
Krein considered the simplest Hamiltonian system with some special conditions such as $D\geq 0$ and $\int_0^T D(t) dt>0$. For the system coming from $n$-body problem, the conditions are not satisfied. Hence, Krein's trace formula can not be used to study the $n$-body problem. However, Krein's trace formula is a powerful tool to study the stability. It is surprised  that, to the best of our knowledge, there is no further study along this line.
\end{rem}

Next, we will introduce  some applications of the trace formula.  As one application, we will give some estimations on the non-degeneracy of the linear system. It is well-known that the system preserves the non-degeneracy under small perturbations. A natural question will be arisen: can we give an upper bound for the perturbation, such that, under the smaller perturbation, the systems preserve the non-degeneracy? By the trace formula, we can answer this question partly. Details could be found in Section 4.  As another application, the trace formula could be used to estimate the relative Morse index for Hamiltonian systems and Morse index for Lagrangian systems.
It is well-known that the relative Morse index (or Morse index) is equal to the
Maslov-type index for path of symplectic matrices  and the Maslov-type index is a successful tool in
judging the linear stability \cite{Lon4}, \cite{HS}. In Section \ref{sec 4}, by
using the trace formula, we can  give some new  stability criteria.

Before  giving the further application of the trace formula on $n$-body problem, we want to interpret the proof of the trace formula intuitively. For a matrix $F$, to calculate the trace $Tr F^m$ for $m>0$, the most effective method is to consider the determinant $\det(I+\alpha F)$, where $I$ is the identity matrix and $\alpha$ is a parameter. In the case of trace formula of differential equation, the idea does work too. From this viewpoint, Hill-type formula is the cornerstone to get the trace formula. The study of such a formula begins with the original work of Hill \cite{Hi} in 1877. In his study of the motion of lunar perigee,
Hill considered the following equation:
\begin{eqnarray}\label{eq0.1}
\ddot{x}(t)+\theta(t) x(t)=0,
\end{eqnarray}
where $\theta(t)=\sum\limits_{j\in\mathbb{Z}}\theta_je^{2j\sqrt{-1}t}$ with $\theta_0\neq0$ is a real $\pi$-periodic function. Let $\gamma(t)$ be the fundamental solution of the  associated  first order system of (\ref{eq0.1}), that is,
\begin{eqnarray*}
\dot{\gamma}(t)&=&\left(\begin{array}{cc} 0 & -\theta(t) \\ 1 & 0 \end{array}\right)\gamma(t),\\ \gamma(0)&=&I_2.
\end{eqnarray*}
 Suppose $\rho=e^{c\sqrt{-1}\pi}$, $\rho^{-1}=e^{-c\sqrt{-1}\pi}$ are  the eigenvalues of the monodromy matrix $\gamma(\pi)$. In order to compute $c$, Hill obtained the following  formula which
connects  the infinite determinant, corresponding to the differential operator, and the the characteristic polynomial:
\begin{eqnarray}\label{eq0.2}
\frac{\sin^2(\frac{\pi}{2}c)}{\sin^2(\frac{\pi}{2}\theta_0)}=\det\[\(-\sd-\theta_0\)^{-1}\(-\sd-\theta\)\],
\end{eqnarray}
where  the right hand side of (\ref{eq0.2}) is the Fredholm determinant.   We should point out that the right hand side of the original formula of Hill \cite{Hi} is a determinant of an infinite matrix.  In \cite{Hi}, Hill did not prove the convergence of the infinite determinant, and the convergence was proved by Poincar\'{e} \cite{Po}.
The Hill formula for a periodic solution of  Lagrangian system on manifold was given  by Bolotin\cite{B}.  In  \cite{BT},  Bolotin and Treschev studied the Hill-type formula for both continuous and discrete Lagrangian systems with Legendre convexity condition. For the periodic solution of ODE, the Hill-type formula was given by Denk \cite{De}.

 For $S$-periodic orbit of Hamiltonian system, the Hill-type formula was given by the first and the third authors \cite{HW}, for $B,D\in\mathcal{B}(2n)$
 \begin{eqnarray}  \det\[\(A-(B+\lambda D)-\nu J\)(A+P_0)^{-1}\]=C(S) e^{-n\nu T}\det( S\gamma_\lambda(T)- e^{\nu T} I_{2n}). \label{thf1.1.2}  \end{eqnarray}
where $C(S)>0$ is a constant depending  only on $S$, and $\gamma_\lambda(t)$ satisfies $\dot{\gamma}_\lambda(t)= J(B(t)+\lambda D(t))\gamma_\lambda(t),$ and $ \gamma_\lambda(0)=I_{2n}$.
The equality (\ref{thf1.1.2}) is our starting point to get the trace formula of Hamiltonian system. In fact, both sides of (\ref{thf1.1.2}) are analytic functions on $\lambda$. Then, by taking Taylor expansion and comparing the coefficients on both sides of (\ref{thf1.1.2}), we get the trace formula in Theorem \ref{thm1.1}. Based on this idea, in order to obtain the trace formula for Lagrangian system, in the present paper we will get the following Hill-type formula.

\begin{thm}\label{thm1.2} Let $\{\lambda_j\}$ be the  nonzero eigenvalues for the boundary value problem (\ref{e1}), then
\begin{eqnarray}
 \prod_j \(1-\frac{1}{\lambda_j}\)  =\det(\bar{S}_d\gamma_1(T)-I_{2n})\cdot \det(\bar{S}_d\gamma_0(T)-I_{2n})^{-1}, \label{hillag}
\end{eqnarray}
where $\gamma_\lambda$ is the fundamental solution of the system (\ref{h1}).
\end{thm}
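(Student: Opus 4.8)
The plan is to reduce the Sturm--Liouville Hill-type formula to the Hamiltonian Hill-type formula \eqref{thf1.1.2}, applied to the particular Legendre-transformed system \eqref{h1}--\eqref{b2}. First I would set $\nu=0$ and $S=\bar S_d$ in \eqref{thf1.1.2}, and choose $B=B_0(t)$ and $D$ as prescribed just before the statement, namely $D=\begin{pmatrix}0_n&0_n\\0_n&-R_1\end{pmatrix}$, so that $B+\lambda D=B_\lambda(t)$ and $\gamma_\lambda$ matches the fundamental solution of \eqref{h1}. Then \eqref{thf1.1.2} reads
\begin{eqnarray*}
\det\!\big[(A-B_\lambda)(A+P_0)^{-1}\big]=C(\bar S_d)\,\det(\bar S_d\gamma_\lambda(T)-I_{2n}).
\end{eqnarray*}
Dividing the identity at parameter value $\lambda$ by the identity at $\lambda=0$ (the latter denominator being nonzero by the standing non-degeneracy assumption that $1$ is not an eigenvalue of $M=\bar S_d\gamma_0(T)$), the constant $C(\bar S_d)$ and the factor $(A+P_0)^{-1}$ cancel, giving
\begin{eqnarray*}
\det\!\big[(A-B_\lambda)(A-B_0)^{-1}\big]=\det(\bar S_d\gamma_\lambda(T)-I_{2n})\cdot\det(\bar S_d\gamma_0(T)-I_{2n})^{-1}.
\end{eqnarray*}
Specializing to $\lambda=1$ gives the right-hand side of \eqref{hillag} immediately, so the whole content is to identify the left-hand side with $\prod_j(1-1/\lambda_j)$.

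For that identification I would argue as follows. Since $A-B_0$ is invertible, write $(A-B_\lambda)(A-B_0)^{-1}=I-\lambda D(A-B_0)^{-1}$, using $B_\lambda=B_0+\lambda D$. The operator $D(A-B_0)^{-1}$ has the same nonzero spectrum as the compact operator appearing in the eigenvalue analysis preceding Theorem \ref{thm1.1a} (this is exactly the correspondence recorded there: $\lambda$ is a nonzero eigenvalue of the boundary value problem iff $1/\lambda$ is an eigenvalue of $D(A-B_0)^{-1}$). Hence the Fredholm determinant factorizes as $\det(I-\lambda D(A-B_0)^{-1})=\prod_j(1-\lambda/\lambda_j)$, where the product runs over the nonzero eigenvalues $\lambda_j$ of \eqref{e1}. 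Here I must check that $D(A-B_0)^{-1}$ is of trace class (so that the Fredholm determinant is defined and equals the spectral product): although $D$ is only rank-$n$-valued pointwise and $(A-B_0)^{-1}$ is merely Hilbert--Schmidt, the product of a bounded operator whose range lies in the "lower" $n$ coordinates with a Hilbert--Schmidt operator is still only Hilbert--Schmidt in general; the trace-class property should instead come from the special block structure — effectively $D(A-B_0)^{-1}$ behaves like the resolvent of a second-order (Sturm--Liouville) operator, whose resolvent decays like $k^{-2}$ in the eigenvalue index, hence is trace class. I would make this precise by conjugating back to the Lagrangian picture: $1/\lambda_j$ are, up to the non-degeneracy bookkeeping, the eigenvalues of a problem of the form $\mathcal L y=\lambda R_1 y$ with $\mathcal L$ a self-adjoint second-order operator with invertible leading coefficient $P$, so $\lambda_j\sim c j^2$ and $\sum 1/|\lambda_j|<\infty$. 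Plugging $\lambda=1$ then yields $\prod_j(1-1/\lambda_j)$ on the left, completing the proof.

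The main obstacle I anticipate is precisely this trace-class/summability point and the careful handling of the \emph{zero} eigenvalues and the kernel of $D$: the products in \eqref{hillag} are over nonzero $\lambda_j$ only, and one must be sure that passing from $\det[(A-B_1)(A-B_0)^{-1}]$ to the spectral product does not silently drop or double-count contributions from $\ker D$ or from the infinite-dimensional part of the spectrum where $R_1$ degenerates. A clean way around this is to first prove the identity for $\lambda$ in a small complex neighbourhood of $0$, where $I-\lambda D(A-B_0)^{-1}$ is invertible and the Fredholm determinant is manifestly an entire function of $\lambda$ with the stated product expansion (both sides of \eqref{thf1.1.2} being analytic in $\lambda$, as already noted in the text), and then analytically continue to $\lambda=1$, which is legitimate provided $1\ne\lambda_j$ for all $j$ — but that is exactly the hypothesis that $\gamma_1$ gives an admissible boundary value problem, i.e. that the factor $\det(\bar S_d\gamma_1(T)-I_{2n})$ on the right is the honest value (if $1$ were an eigenvalue both sides vanish and there is nothing to prove). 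I would also invoke Proposition \ref{cor2.5} (referenced in the text) to align the $G_k$-normalization with the Hamiltonian setup, though for the Hill-type formula itself only \eqref{thf1.1.2} is needed.
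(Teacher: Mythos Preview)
Your outline is close in spirit to the paper's, but there is a genuine gap at the step where you identify the conditional Fredholm determinant $\det\big(I-\lambda D(A-B_0)^{-1}\big)$ with the infinite product $\prod_j(1-\lambda/\lambda_j)$. The operator $D(A-B_0)^{-1}$ is \emph{not} trace class, despite the block structure of $D$. In the model case $P=I_n$, $Q=0$, $R=0$ one has (cf.\ the computation around \eqref{eq4.17})
\[
D(A-B_0)^{-1}=\begin{pmatrix}0_n&0_n\\[2pt]-R_1\big(\tfrac{d}{dt}\big)^{-1}&R_1\big(\tfrac{d}{dt}\big)^{-2}\end{pmatrix},
\]
and the off-diagonal block $R_1(d/dt)^{-1}$ has singular values $\sim 1/|k|$, so the trace norm diverges. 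The paper states this explicitly in Remark~\ref{cor.lh}: $D(A-B-\nu J)^{-1}$ is Hilbert--Schmidt with trace-finite condition, not trace class. Your appeal to ``$\lambda_j\sim cj^2$'' only controls the \emph{eigenvalues} of $D(A-B_0)^{-1}$, not its singular values, and for non-normal compact operators that is not enough for membership in the trace class.

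This matters because for a conditional Fredholm determinant one has only
$\det(I+F)={\det}_2(I+F)\,e^{Tr(F)}$
with $Tr(F)$ the \emph{conditional} trace, and there is no a priori reason that $Tr(F)$ equals $\sum_j\mu_j(F)$; hence $\det(I-\lambda D(A-B_0)^{-1})$ and $\prod_j(1-\lambda/\lambda_j)$ could differ by a nonvanishing entire factor. Your analytic-continuation workaround does not close this gap, since the product expansion is exactly what is in question already near $\lambda=0$. The paper avoids the issue by passing to the Lagrangian side first: it shows via a block-triangular computation (Proposition~\ref{thm5.1a} and Theorem~\ref{thm4.10}) that the Hamiltonian conditional determinant equals $\det\big((\mathcal{A}(\nu)+R_1)\mathcal{A}(\nu)^{-1}\big)$, and \emph{there} the relevant operator $R_1\mathcal{A}(\nu)^{-1}$ \emph{is} trace class, so the determinant is the honest Fredholm determinant $\prod_j(1-1/\lambda_j)$. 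That block reduction (together with the conjugation \eqref{h.5} relating $\hat\gamma_\lambda$ to $\gamma_\lambda$) is the missing ingredient in your argument; invoking Proposition~\ref{cor2.5} about matching eigenvalues is not sufficient, since matching zero sets determines an entire function only up to a nonvanishing factor.
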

\begin{rem} The Hill-type formula for periodic orbits of  Lagrangian system with the Legendre convex condition was given by
Bolotin \cite{B} in 1988, and Theorem \ref{thm1.2} can be considered as a generalization of Bolotin's work to indefinite Lagrangian systems.

\end{rem}

At the end of this paper, we will study the stability of Lagrangian orbits in planar three body problems.
In 1772, Lagrange \cite{Lag} discovered some celebrated periodic solutions, now named after him,
to the planar three-body problem, namely the three bodies form an equilateral triangle at any instant
of the motion and at the same time each body travels along a specific Keplerian elliptic orbit about the
center of masses of the system. All these orbits are homographic solutions. When $0\le e<1$, the Keplerian orbit is elliptic, following Meyer and Schmidt \cite{MS}, we call such
elliptic Lagrangian solutions {\it elliptic relative equilibria}. Specially when $e=0$, the Keplerian
elliptic motion becomes circular motion and then all the three bodies move around the center of masses
along circular orbits with the same frequency, which are called {\it relative equilibria} traditionally.
Moreover,  Meyer and Schmidt (cf. \cite{MS}) used heavily the central configuration nature
of the elliptic Lagrangian orbits and decomposed the fundamental solution of the elliptic Lagrangian
orbit into two parts symplectically, one of which is the same as that of the Keplerian solution and
the other is the essential part for the stability.

For the planar three-body problem with masses $m_1, m_2, m_3>0$, it turns out that the stability
of elliptic Lagrangian solutions depends on two parameters, namely the mass parameter $\beta\in [0,9]$
defined below and the eccentricity $e\in [0,1)$,
\be  \beta=\frac{27(m_1m_2+m_1m_3+m_2m_3)}{(m_1+m_2+m_3)^2}. \nonumber  \lb{1.4}\ee
 In the current paper, the fundamental
solution of the linearized Hamiltonian system of the essential part of the elliptic Lagrangian orbit
is denoted by $\gamma_{\beta,e}(t)$ for $t\in [0,2\pi]$, which is a path of $4\times 4$ symplectic
matrices starting from the identity. The Lagrangian orbits is called spectrally stable (or elliptic) if all the eigenvalues of $\gamma_{\beta,e}(2\pi)$ belong to the unite circle $\mathbb{U}$, is called linear stable if moreover $\gamma_{\beta,e}(2\pi)$ is semi-simple.  In contrast, Lagrangian orbits are called hyperbolic if no eigenvalue of $\ga_{\beta,e}(2\pi)$ locates on  $\mathbb{U}$.

The linear stability of relative equilibria ($e=0$) were known more than a century ago and it is due to Gascheau
(\cite{Ga}, 1843) and Routh (\cite{R2}, 1875) independently. For the elliptic  relative equilibria ($e>0$), the linear stability problem
is difficult, many interesting results could be found in \cite{MS},  \cite{MSS1}, \cite{MSS2}, \cite{R1}.  For  the historical literature  on linear stability of Lagrangian orbits, readers are referred to \cite{HLS}. Recently, Y.Long, S.Sun and the first author introduced   Maslov-type index and operator theory in studying the stability in $n$-body problem \cite{HLS},\cite{HS1}.
In \cite{HLS}, the authors  gave an analytic proof for the the stability  bifurcation diagram of Lagrangian equilateral triangular homographic orbits in the $(\beta; e)$ rectangle $[0, 9] \times [0, 1)$ and  proved that bifurcation curve is real analytic.  But it is difficult to estimate the bifurcation curve.

To the best of our  knowledge,  we don't know any result before to estimate the stability region. For the hyperbolic region,  till now,  we only know two results. Firstly,  it was proved in  \cite{HLS}
 that the Lagrangian orbits is hyperbolic for $\beta=9$ (equal mass case) with any eccentricity $e\in[0, 1)$. Secondly, based on the result in \cite{HLS}, it was  proved by
 the second author \cite{ou}  that Lagrangian orbits are hyperbolic for $\beta>8$. However, for $\beta$ near 1, we know nothing about the estimation of the hyperbolic region before.
 In the present paper, based on works in \cite{HLS},\cite{HS1} and via trace formula, we estimate the stability region and hyperbolic region for the elliptic Lagrangian orbits.

\begin{thm}\label{la1.1} The elliptic Lagrangian orbits is linear stable if \bea e<\frac{1}{1+f(\bb,-1)^{\frac{1}{2}}}, \,\ \bb\in[0,3/4),\nonumber \lb{th1.1a}  \eea
or
 \bea e<min\left\{{1\over \sqrt{f(\beta,-1)}},\frac{1}{1+\sqrt{f(\beta,e^{i\sqrt{2}\pi})}}  \right\}, \,\ \bb\in(3/4,1),\nonumber\lb{th1.1b}  \eea
 where $f(\beta,\omega)$ is a function on $[0,9]\times \mathbb{U}$  given by (\ref{fbb}). Let $\hat{f}(\beta)=\sup\{f(\bb,\omega),\omega\in\mathbb{U}\}$, then
  for $\bb\in(1,9]$, $\gamma_{\bb,e}$ is hyperbolic if \bea e<\hat{f}(\bb)^{-1/2}. \lb{th1.1c} \eea

\end{thm}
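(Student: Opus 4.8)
The plan is to run a perturbation argument in the eccentricity $e$, starting from the circular relative equilibrium ($e=0$), with the quantitative input supplied by the $m=2$ trace formula of Corollary \ref{cor1.1}. First I would recall, following Meyer--Schmidt \cite{MS} and \cite{HLS}, that after splitting off the Keplerian part the essential linearized system of the elliptic Lagrangian orbit is a $4\times 4$ Hamiltonian system on $[0,2\pi]$ (true anomaly) of the form $\dot z=JB_{\beta,e}(\theta)z$, $z(0)=z(2\pi)$, with
\begin{equation*}
B_{\beta,e}(\theta)=B_{\beta,0}(\theta)+\left(\frac{1}{1+e\cos\theta}-1\right)\begin{pmatrix}0_2&0_2\\0_2&-\hat V_\beta\end{pmatrix},
\end{equation*}
where $B_{\beta,0}$ has constant coefficients, so that $\gamma_{\beta,0}(\theta)=\exp(\theta JB_{\beta,0})$ is explicit, and $\hat V_\beta$ is a $\beta$-dependent symmetric $2\times2$ matrix. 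By the classical Gascheau--Routh theorem \cite{Ga}, \cite{R2} the monodromy $M:=\gamma_{\beta,0}(2\pi)$ is elliptic and semisimple for $\beta\in[0,1)$ and complex hyperbolic for $\beta\in(1,9]$; writing the Floquet exponents of $M$ as $\pm\omega_1(\beta),\pm\omega_2(\beta)$ one checks $\omega_1(3/4)=1/2$ and $\omega_1(1)=\omega_2(1)=1/\sqrt2$, which is the origin of the thresholds $\beta=3/4$, $\beta=1$ and of the distinguished multipliers $-1=e^{2\pi i/2}$ and $e^{i\sqrt2\pi}=e^{2\pi i/\sqrt2}$ in the statement.

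The core is a non-degeneracy estimate. Fix $\beta$ and $\omega\in\mathbb{U}$ with $M-\omega I_4$ invertible, and choose $\nu$ with $e^{2\pi\nu}=\omega$. Consider the one-parameter family $\dot z=J(B_{\beta,0}(\theta)+\lambda D_e(\theta))z$, $z(0)=\omega z(2\pi)$, with $D_e(\theta):=\frac1e\left(\frac{1}{1+e\cos\theta}-1\right)\mathrm{diag}(0_2,-\hat V_\beta)$ chosen so that $\lambda=e$ returns the actual system; then $\omega\in\mathrm{spec}\,\gamma_{\beta,e}(2\pi)$ if and only if $e$ is one of the eigenvalues $\{\lambda_j\}$ of this boundary value problem (this is the interpretation of (\ref{0.1.1})--(\ref{0.1.2}) recalled before Theorem \ref{thm1.1}). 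For the boundary conditions at $\omega\in\{-1,e^{i\sqrt2\pi}\}$ (and more generally whenever $A-B_{\beta,0}-\nu J$ is sign definite) the $\lambda_j$ are real, so $\min_j|\lambda_j|\ge\left(\sum_j\lambda_j^{-2}\right)^{-1/2}=\left(Tr\left[(D_e(A-B_{\beta,0}-\nu J)^{-1})^2\right]\right)^{-1/2}$ by (\ref{0.1.4}). Now expand the right-hand side with (\ref{ht2}) and estimate: since $|\cos\theta|\le1$ and $1+e\cos\theta\ge1-e$ one has $\|eD_e(\theta)\|\le\frac{e}{1-e}\|\hat V_\beta\|$, which yields $Tr[(D_e(A-B_{\beta,0}-\nu J)^{-1})^2]\le(1-e)^{-2}f(\beta,\omega)$, where $f(\beta,\omega)$ is precisely the $e$-free expression assembled from $\gamma_{\beta,0}$, $\hat V_\beta$ and $M(M-\omega I_4)^{-1}$ displayed in (\ref{fbb}). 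Hence $e<\min_j|\lambda_j|$, i.e.\ $\omega\notin\mathrm{spec}\,\gamma_{\beta,e}(2\pi)$, as soon as $e<\frac{1-e}{\sqrt{f(\beta,\omega)}}$, equivalently $e<\frac1{1+\sqrt{f(\beta,\omega)}}$; a sharper accounting of the factor $\cos\theta$ in the $\omega=-1$ integrals produces the variant $e<f(\beta,-1)^{-1/2}$ entering the $\beta\in(3/4,1)$ case.

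It remains to convert non-degeneracy into the statement. For $\beta\in(1,9]$ this is immediate: hyperbolicity of $\gamma_{\beta,e}(2\pi)$ is open and can break only when an eigenvalue reaches $\mathbb{U}$, i.e.\ when $\omega\in\mathrm{spec}\,\gamma_{\beta,e}(2\pi)$ for some $\omega\in\mathbb{U}$; since $M$ is hyperbolic, $M-\omega I_4$ is invertible for all $\omega\in\mathbb{U}$ and $\hat f(\beta)=\sup_{\omega\in\mathbb{U}}f(\beta,\omega)<\infty$, so running the previous estimate uniformly in $\omega$ gives hyperbolicity for $e<\hat f(\beta)^{-1/2}$, which is (\ref{th1.1c}). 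For $\beta\in[0,1)$ the monodromy $M$ is elliptic and semisimple, and linear stability persists as $e$ increases until a Floquet multiplier either reaches $\pm1$ or undergoes a Krein collision; tracking the four multipliers from their $e=0$ positions $e^{\pm2\pi i\omega_j(\beta)}$ and using that the relative Morse index is locally constant in $e$ and (by \cite{HLS},\cite{HS1}) determines linear stability, one checks that for $\beta\in[0,3/4)$ the only possible first obstruction is a multiplier reaching $-1$, while for $\beta\in(3/4,1)$ the obstructions are a multiplier reaching $-1$ together with the (near-$e^{i\sqrt2\pi}$) Krein collision of the pair $e^{2\pi i\omega_1},e^{2\pi i\omega_2}$. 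Applying the non-degeneracy estimate at $\omega=-1$, and for $\beta>3/4$ also at $\omega=e^{i\sqrt2\pi}$, then yields exactly the two stated bounds.

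The delicate points, in increasing order of difficulty, are: (i) the trace-formula estimate of $f(\beta,\omega)$ and, for the hyperbolic range, of $\hat f(\beta)=\sup_\omega f(\beta,\omega)$, which needs the closed form $\gamma_{\beta,0}(\theta)=\exp(\theta JB_{\beta,0})$ and uniform control of $\|M(M-\omega I_4)^{-1}\|$ over the compact set $\mathbb{U}$; (ii) verifying that the auxiliary eigenvalues $\{\lambda_j\}$ are real for the boundary conditions used, i.e.\ the sign definiteness of $A-B_{\beta,0}-\nu J$ (or an equivalent self-adjointness argument), which is not automatic since the perturbation direction $D_e$ is sign indefinite owing to the $\cos\theta$ factor; and (iii) the deformation/index bookkeeping in the stable case --- proving that non-degeneracy at the two multipliers $-1$ and $e^{i\sqrt2\pi}$ really suffices, with no other mechanism (a Krein collision at a nearby point, or escape through $+1$) intervening first below the stated threshold. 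I expect (iii) to be the main obstacle.
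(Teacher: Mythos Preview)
Your overall shape is right---perturb from $e=0$, bound the trace with (\ref{ht2}), and convert the resulting nondegeneracy into index control---but there is a genuine gap at exactly the point you flag in (ii), and the paper's resolution of it is the main idea you are missing.

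The perturbation $D_e(\theta)$ is sign-indefinite because of the factor $\cos\theta$, so the eigenvalues $\{\lambda_j\}$ need not be real and your inequality $\min_j|\lambda_j|\ge (\sum_j\lambda_j^{-2})^{-1/2}$ breaks down. (Your suggested fix, sign-definiteness of $A-B_{\beta,0}-\nu J$, fails for $\beta\in[0,1)$: by (\ref{l2})--(\ref{l3}) the operator $\mathcal{A}(\beta,0,\nu)$ has positive Morse index there.) The paper sidesteps this by \emph{not} perturbing along $D_e$ at all. Instead it writes $\cos\theta=\cos^+\theta+\cos^-\theta$ with $\cos^\pm=\tfrac12(\cos\theta\pm|\cos\theta|)$, sets $K_\beta^\pm(t)=\cos^\pm(t)K_\beta$ (so $K_\beta^+\ge0$, $K_\beta^-\le0$), and sandwiches the true operator between two sign-definite ones:
\[
-J\tfrac{d}{dt}-\nu J-B_{\beta,0}-eK_\beta^+\ \le\ -J\tfrac{d}{dt}-\nu J-B_{\beta,e}\ \le\ -J\tfrac{d}{dt}-\nu J-B_{\beta,0}-\tfrac{e}{1-e}K_\beta^-,
\]
using $0\le\frac{e\cos^+}{1+e\cos^+}\le e\cos^+$ and $\frac{e\cos^-}{1+e\cos^-}\ge\frac{e}{1-e}\cos^-$. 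Now each comparison operator has a \emph{definite} perturbation, so Lemma \ref{lem5b.1} applies and the trace estimate (Propositions \ref{prop5b.1}--\ref{prop5b.2}, Theorem \ref{thm4.3}) gives one-sided control of the relative Morse index. A shift-by-$\pi$ symmetry (Lemma \ref{lem5.3}) shows the two traces coincide, and this common value is exactly the $f(\beta,\omega)$ of (\ref{fbb}). This also explains the two thresholds you noticed but could not derive: the $K_\beta^+$ side gives $e<f(\beta,\omega)^{-1/2}$ (used to bound $i_\omega(\gamma_{\beta,e})$ from \emph{above}), while the $K_\beta^-$ side gives $e<(1+f(\beta,\omega)^{1/2})^{-1}$ (used to bound it from \emph{below}).

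Your step (iii) is also handled differently. The paper does not track Krein collisions or escape through $+1$; it works purely with the Maslov-type index. For $\beta\in[0,3/4)$ one shows $i_{-1}(\gamma_{\beta,e})\ge i_{-1}(\gamma_{\beta,0})=2$ (via the $K_\beta^-$ bound) and uses $i_1=0$ together with (\ref{5a.26}) to force $e(\gamma_{\beta,e}(2\pi))=4$. For $\beta\in(3/4,1)$ one shows $i_{-1}(\gamma_{\beta,e})=0$ (via $K_\beta^+$) and $i_{e^{i\sqrt2\pi}}(\gamma_{\beta,e})\ge1$ (via $K_\beta^-$), then applies (\ref{sta1}). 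For $\beta\in(1,9]$ the $K_\beta^+$ bound, run uniformly over $\omega\in\mathbb{U}$, keeps $i_\omega=0$ and $\nu_\omega=0$, hence hyperbolicity. Linear (not just spectral) stability in the first two cases is then imported from the normal-form analysis of \cite{HLS},\cite{HS1}. This route avoids the bifurcation bookkeeping you anticipated as the main obstacle.
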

It will be seen that $f(\beta, \omega)$ is a elementary function determined by the trace formula.   By Theorem \ref{la1.1}, we can draw a picture as follows.

\begin{figure}[H]
 \centering
   \includegraphics[height=0.34\textwidth,width=0.94\textwidth]{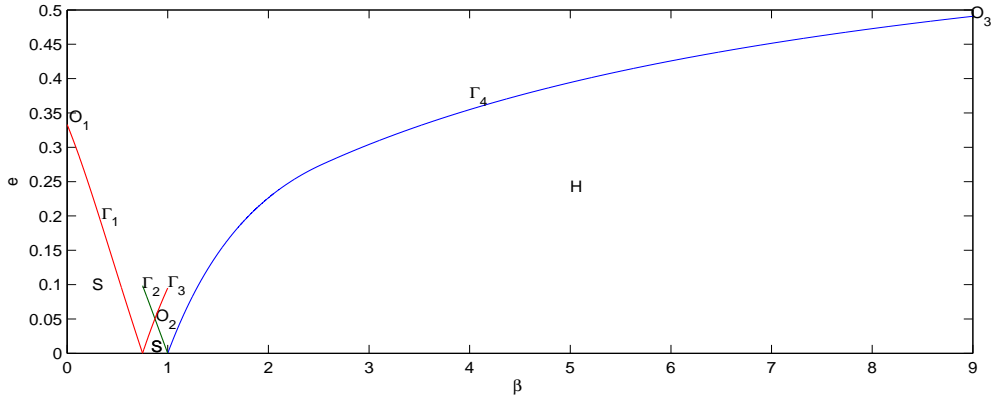}
     \caption{The stable region S and hyperbolic region H given by Theorem \ref{la1.1}.}
\end{figure}
\noindent In Figure 1, the points $O_{1}\approx(0, 0.3333)$, $O_{2}\approx(0.8730, 0.0504)$, $O_{3}\approx(9, 0.4907)$. The curves  $$\Gamma_{1}=\left\{(\beta,e)\,\left|\,  e=1\left/(1+\sqrt{f(\beta,-1)}), 0\leq\beta\leq3/4\right.\right.\right\},\quad \Gamma_{2}=\left\{(\beta,e)\,\left|\, e=1\left/\sqrt{f(\beta,-1)}, 3/4\leq\beta\leq 1\right.\right.\right\},$$
and $$     \Gamma_{3}=\left\{(\beta,e)\,\left|\, e=1\left/(1+\sqrt{f(\beta,e^{i\sqrt{2}\pi})}), 3/4\leq\beta< 1\right.\right.\right\},\quad \Gamma_{4}=\left\{(\beta,e)\,\left|\, e=1\left/\sqrt{\hat{f}(\beta)}, 1\leq\beta\leq 9\right.\right.\right\} .$$

This paper is organized as follows. In Section 2, we give the proof
of the trace formula for linear Hamiltonian systems. Moreover, some application of the trace
formula on the  identity which  related to the Zeta function is
given. In Section 3, we prove the Hill-type formula and trace formula for Sturm-Liouville systems.  The applications of the trace formula on the study of stability
for Hamiltonian systems  are given in Section 4, where we estimate
the relative Morse index (Morse index for Sturm-Liouville systems) and  some new stability criteria will be given.
The study of stability of elliptic Lagrangian solutions will be given in Section 5.

\section{Trace formula for linear Hamiltonian system}\label{sec2}

In this section, we will give the proof of the trace formula for linear Hamiltonian system. As been pointed out in the introduction, we will consider the Taylor expansion for the conditional Fredholm determinant of Hamiltonian system and the Monodromy matrices separately in \S 2.1 and \S 2.2. Based on it,  we prove Theorem \ref{thm1.1} in \S 2.3, some example on infinite identity and relation with the Zeta function is discussed.

\subsection{Taylor expansion for conditional Fredholm determinant of the linear perturbation  of  Hamiltonian system}\label{subsec2.1}

In this subsection, we will mainly consider the Taylor expansion of the
conditional Fredholm determinant  for linearly parameterized
Hamiltonian system. Let $B(\alpha):\Omega\to C([0,T], \mathcal{M}(2n,\mathbb{C}))$ be an
analytic function. For that $(A-B-\nu J)$ is invertible,  denote by \begin{eqnarray*}
p(\alpha)=\det\Big(id-(B_\alpha-B_{\alpha_0})(A-B_{\alpha_0}-\nu
J)^{-1}\Big).
\end{eqnarray*}
Notice that $(B_\alpha-B_{\alpha_0})(A-B_{\alpha_0}-\nu
J)^{-1}$ is not trace class but Hilbert-Schmidt. Hence  $p(\alpha)$ is not the usual Fredholm determinant, but a kind of conditional Fredholm determinant. The theory of conditional Fredholm determinant was studied in \cite{HW}. For readers convenience, we recall it briefly.
For  integer $N>0$, let $P_N$ be the projection onto the subspace
\begin{eqnarray*}W_N=\bigoplus\limits_{\nu\in\sigma(A),|\nu|\leq
N}\ker(A-\nu).\end{eqnarray*}
We need the following definition, which comes from \cite{HW}.
\begin{defi}\label{def1} For a Hilbert-Schmidt operator $F$, it is said to
have the \emph{trace finite condition}, if the limit
$\lim\limits_{N\to\infty}Tr\big(P_N F P_N\big)$ exists, which is
called the conditional trace and denoted by $Tr(F)$ without
confusion.
 \end{defi}

Obviously, if $F$ is a trace class operator, then the conditional
trace coincides with the traditional trace.  Moreover, if both  $F$ and $\widetilde{F}$ have
the trace finite condition, then $F+\widetilde{F}$ has the trace finite condition.
Now, for a Hilbert-Schmidt operator $F$ with trace finite condition,
by \cite{HW}, the limit
\begin{eqnarray*}  \det(id+F)=\lim\limits_{N\to\infty} \det(id+P_N F P_N) \label{00.1}  \end{eqnarray*}
is well defined, which depends on $\{P_N\}$ and is called the
\emph{Conditional Fredholm Determinant} of $id+F$.

By \cite[Corollary 3.4]{HW}, we know that $p(\alpha)$ is analytic on
$\Omega$. Now, for $B,D\in\mathcal{B}(2n)$, let $B(\alpha)=B+\alpha D$,
we have the following Theorem.
\begin{thm}\label{thm3.1b}
Let $f(\alpha)=\det\Big[(A-B-\alpha D-\nu J)(A+P_0)^{-1}\Big]$, suppose $A-B-\nu J$ is invertible, then
the Taylor expansion of $f$ at $0$ is
$f(\alpha)=\sum\limits_{m=0}^\infty \hat{b}_m \alpha ^m$, where
\begin{eqnarray*}
\hat{b}_m={a_m\over m!}\det((A-B-\nu J)(A+P_0)^{-1}),
\end{eqnarray*}
and
\begin{eqnarray}\label{eq3.1.1a}
a_m=(-1)^m\det\left( \begin{array}{cccc}Tr(F)& m-1& \cdots & 0 \\
Tr(F^2)& Tr(F) & \cdots & 0\\ \vdots & \vdots & \ddots & \vdots
\\ Tr(F^m) & Tr(F^{m-1})& \cdots & Tr(F)
\end{array}\right),
\end{eqnarray}
with $F= D (A-B-\nu J)^{-1}$.
\end{thm}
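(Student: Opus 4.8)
The plan is to recognize Theorem~\ref{thm3.1b} as the operator–theoretic version of the classical identity that expresses the coefficients of $\det(id-\alpha F)$ through the power traces $Tr(F),Tr(F^2),\dots$ by Newton's identities, and to make this work for the conditional Fredholm determinant of \cite{HW}, where $F=D(A-B-\nu J)^{-1}$ is only Hilbert--Schmidt and $Tr(F)$ is a conditional trace. First I would reduce to a pure $\det(id-\alpha F)$ statement. Since $A-B-\nu J$ is invertible and $A+P_0$ is invertible, factor
\[
(A-B-\alpha D-\nu J)(A+P_0)^{-1}=(id-\alpha F)\cdot(A-B-\nu J)(A+P_0)^{-1},\qquad F=D(A-B-\nu J)^{-1},
\]
and use multiplicativity of the conditional Fredholm determinant from \cite{HW} (both factors lie in the class $id+\{\text{Hilbert--Schmidt with the trace finite condition}\}$, and $F$ itself does as noted in the excerpt) to get $f(\alpha)=\det(id-\alpha F)\cdot\det\big((A-B-\nu J)(A+P_0)^{-1}\big)$. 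It then suffices to show $g(\alpha):=\det(id-\alpha F)$ is analytic near $0$ with $g(\alpha)=\sum_{m\ge0}\frac{a_m}{m!}\alpha^m$, $a_0=1$, and $a_m$ given by (\ref{eq3.1.1a}); multiplying by the constant $\det((A-B-\nu J)(A+P_0)^{-1})$ recovers the $\hat b_m$ of the statement.

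The analytic heart is the log--trace identity
\[
\log g(\alpha)=-\sum_{k\ge1}\frac{\alpha^k}{k}\,Tr(F^k)
\]
on a neighbourhood of $0$. For each $N$ the truncation $P_NFP_N$ has finite rank, so $\det(id-\alpha P_NFP_N)=\exp\!\big(-\sum_{k\ge1}\frac{\alpha^k}{k}Tr((P_NFP_N)^k)\big)$ for small $|\alpha|$. The estimates $\|P_NFP_N\|_{\mathrm{op}}\le\|F\|_{\mathrm{op}}$, $\|P_NFP_N\|_{\mathrm{HS}}\le\|F\|_{\mathrm{HS}}$ and $|Tr(G^k)|\le\|G\|_{\mathrm{HS}}^2\|G\|_{\mathrm{op}}^{k-2}$ for $k\ge2$ make the tail $\sum_{k\ge2}$ converge uniformly in $N$ on $|\alpha|<\|F\|_{\mathrm{op}}^{-1}$; moreover $Tr((P_NFP_N)^k)\to Tr(F^k)$ as $N\to\infty$, for $k\ge2$ because $F^k$ is trace class and for $k=1$ by the very definition of the conditional trace $Tr(F)$. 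Passing $N\to\infty$, together with $\det(id-\alpha P_NFP_N)\to g(\alpha)$ from the definition of the conditional Fredholm determinant, yields the identity and re-establishes analyticity of $g$ at $0$ (which is also available from \cite[Corollary 3.4]{HW}).

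Finally I would extract the coefficients by Newton's identities. Differentiating $g=\exp\!\big(-\sum_{k\ge1}\frac{Tr(F^k)}{k}\alpha^k\big)$ gives $g'(\alpha)=-g(\alpha)\sum_{k\ge1}Tr(F^k)\alpha^{k-1}$, so comparing Taylor coefficients gives $a_0=1$ and the recursion $a_m=-\sum_{k=1}^{m}\binom{m-1}{k-1}(k-1)!\,Tr(F^k)\,a_{m-k}$ for $m\ge1$. On the other hand, expanding the determinant in (\ref{eq3.1.1a}) along its first column, each minor splits into an upper--triangular block with diagonal entries $m-1,m-2,\dots$ (contributing the factor $\binom{m-1}{k-1}(k-1)!=\frac{(m-1)!}{(m-k)!}$) and a copy of the $(m-k)\times(m-k)$ determinant of the same shape; this reproduces exactly the recursion above, so by induction on $m$ the $a_m$ equal the right--hand side of (\ref{eq3.1.1a}). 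Together with the reduction of the first step this gives $\hat b_m=\frac{a_m}{m!}\det((A-B-\nu J)(A+P_0)^{-1})$, as claimed.

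The main obstacle is the second step: interchanging $N\to\infty$ with the infinite $k$-sum and with $\exp$. The terms $k\ge2$ are controlled uniformly in $N$ by the Hilbert--Schmidt/operator-norm bounds, which isolates the single term $k=1$; that one is handled precisely by the hypothesis that $F$ has the trace finite condition, together with additivity of conditional traces from \cite{HW}. Once $g$ is known to be analytic at $0$ with the exponential form, Steps~1 and~3 are purely formal.
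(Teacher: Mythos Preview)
Your proof is correct and follows essentially the same route as the paper: factor $f(\alpha)=g(\alpha)\det((A-B-\nu J)(A+P_0)^{-1})$ with $g(\alpha)=\det(id-\alpha F)$, approximate $F$ by the finite-rank operators $F_N=P_NFP_N$, and pass to the limit using that $Tr(F_N^k)\to Tr(F^k)$ for $k\ge 2$ (trace class) and for $k=1$ (conditional trace). The only differences are in execution: the paper invokes the normal-family Lemma~\ref{lemma1} to get uniform convergence of $g_N\to g$ and then quotes \cite[Theorem~5.4]{Si} for the determinantal formula at each level $N$, whereas you bypass the normal-family step with the direct tail bound $|Tr(G^k)|\le\|G\|_{\mathrm{HS}}^2\|G\|_{\mathrm{op}}^{k-2}$ and recover the determinantal expression for $a_m$ once at the end via Newton's identities. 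Both arguments are equivalent; yours is slightly more self-contained, while the paper's normal-family argument has the side benefit of showing $g$ is entire rather than just analytic on $|\alpha|<\|F\|_{\mathrm{op}}^{-1}$.
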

We first prove the following simple lemma.
 \begin{lem}\label{lem3.2b}
Let $B(\alpha):\Omega\to C([0,T], \mathcal{M}(2n,\mathbb{C}))$ be an analytic mapping. Write
\begin{eqnarray*}
p_N(\alpha)=\det\Big(id-P_N(B_\alpha-B_{\alpha_0})(A-B_{\alpha_0}-\nu
J)^{-1}P_N\Big),
\end{eqnarray*} then $p_N(\alpha)$ is analytic on $\Omega$.
\end{lem}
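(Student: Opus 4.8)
The plan is to prove that $p_N(\alpha)$ is analytic on $\Omega$ by reducing it to a finite-dimensional statement. Since $P_N$ is a finite-rank projection onto the finite-dimensional space $W_N$, the operator $P_N(B_\alpha - B_{\alpha_0})(A-B_{\alpha_0}-\nu J)^{-1}P_N$, restricted to $W_N$, is represented by a finite square matrix whose entries are inner products of the form $\langle (B_\alpha - B_{\alpha_0})(A-B_{\alpha_0}-\nu J)^{-1} e_j, e_i\rangle$ for an orthonormal basis $\{e_i\}$ of $W_N$ (the operator is zero on $W_N^\perp$, so $\det(id - \cdots)$ taken on the whole space equals the determinant of the $W_N$-block). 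Because the determinant of a finite matrix is a polynomial in its entries, it suffices to show each entry is an analytic function of $\alpha$.

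The key point is then the analyticity of $\alpha \mapsto (B_\alpha - B_{\alpha_0})(A-B_{\alpha_0}-\nu J)^{-1}$ as a map into bounded operators on $E$, or at least the analyticity of its matrix coefficients. First I would note that $(A-B_{\alpha_0}-\nu J)^{-1}$ is a fixed bounded (indeed Hilbert-Schmidt) operator, independent of $\alpha$, so it plays no role in the $\alpha$-dependence. Next, $B_\alpha$ acts on $E = L^2([0,T];\mathbb{C}^{2n})$ as the multiplication operator $(B_\alpha z)(t) = B(\alpha)(t) z(t)$, and by hypothesis $\alpha \mapsto B(\alpha)$ is an analytic $C([0,T],\mathcal{M}(2n,\mathbb{C}))$-valued function. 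Hence for each fixed $z, w \in E$ the scalar function $\alpha \mapsto \langle B_\alpha z, w\rangle = \int_0^T \langle B(\alpha)(t) z(t), w(t)\rangle\, dt$ is analytic: one can differentiate under the integral sign, or expand $B(\alpha)$ in a power series converging uniformly on $[0,T]$ in a neighborhood of any point of $\Omega$ and integrate termwise, using $|\langle B(\alpha)(t)z(t),w(t)\rangle| \le \|B(\alpha)(t)\|\,|z(t)|\,|w(t)|$ together with $z w \in L^1$ to get a convergent majorant. Applying this with $z = (A-B_{\alpha_0}-\nu J)^{-1} e_j$ and $w = e_i$ gives analyticity of every matrix entry, and therefore of $p_N(\alpha) = \det\bigl(I_{d_N} - (\text{that matrix})\bigr)$ where $d_N = \dim W_N$.

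The main obstacle, such as it is, is purely a matter of being careful about the definition of the conditional Fredholm determinant for the sandwiched operator: one must check that $\det\bigl(id - P_N F_\alpha P_N\bigr)$ on all of $E$ genuinely equals the ordinary finite determinant of the compression of $F_\alpha := (B_\alpha - B_{\alpha_0})(A-B_{\alpha_0}-\nu J)^{-1}$ to $W_N$. This holds because $P_N F_\alpha P_N$ is finite rank with range inside $W_N$ and kernel containing $W_N^\perp$, so in a basis adapted to the splitting $E = W_N \oplus W_N^\perp$ the operator $id - P_N F_\alpha P_N$ is block upper-triangular with the identity on the $W_N^\perp$ block; its (regularized or ordinary) determinant is then just the determinant of the $W_N$ block. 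Once this identification is made, analyticity of $p_N$ is immediate from the finite-dimensional argument above, and no genuine difficulty remains — this lemma is a routine stepping stone toward the analyticity of the limiting conditional Fredholm determinant $p(\alpha)$ invoked from \cite{HW}.
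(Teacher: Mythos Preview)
Your proof is correct and follows essentially the same approach as the paper: both reduce to showing that the matrix entries $\langle F(\alpha)e_j,e_i\rangle$ are analytic in $\alpha$, where $\{e_j\}$ is an orthonormal eigenbasis of $A$, and then use that the determinant of a finite matrix is a polynomial in its entries. You have simply supplied more detail than the paper (e.g., justifying the reduction to the $W_N$ block and the analyticity of the multiplication operator), but the argument is the same.
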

\begin{proof}
Let $\{e_j\}_{j=1}^\infty$ be an orthonormal basis, defined by the
eigenvectors of $A$. Set
\begin{eqnarray*}
F(\alpha)=(B_\alpha-B_{\alpha_0})(A-B_{\alpha_0}-\nu J)^{-1},
\end{eqnarray*}
then $F(\alpha)$ can be considered as an infinite matrix $(\lan
F(\alpha)e_j,e_i\ran)_{i,j}$. Notice that $\lan
F(\alpha)e_j,e_i\ran$ is an analytic function on $\alpha$, which
implies that $P_N F(\alpha) P_N$ is an analytic function on
$\Omega$. By the definition of $p_N(\alpha)$, we know that
$p_N(\alpha)$ is analytic.
\end{proof}

To prove Theorem \ref{thm3.1b},  write
\begin{eqnarray*}f_N(\alpha)=\det\Big(id-\alpha P_N D(A-B-\nu
J)^{-1}P_N\Big).\end{eqnarray*} Firstly, please note that
$f_N(\alpha)$ is analytic. Secondly, we will show that there is a
subsequence of $\{f_N(\alpha)\}$, which is convergent uniformly on
any compact subset of $\Omega$. Obviously, $f_N(\alpha)\to
f(\alpha)$ point-wisely on $\Omega$. Thirdly, by the theory in
\cite{Si}, we will give the expansion of $f_N(\alpha)$. Finally, by
the convergence of $f_N(\alpha)$, we get the Taylor expansion of
$f(\alpha)$.

 To prove that
there is a subsequence of $\{f_N(\alpha)\}$, which is convergent
uniformly to $f(\alpha)$ on any compact subset, we will recall some
properties of conditional Fredholm determinant and conditional
trace.

Recall that, if $\hat{F}$ is a Hilbert-Schmidt operator, then
$(id+\hat{F})e^{-\hat{F}}-id$ is a trace class operator, thus, we
can define
\begin{eqnarray*}
{\det}_2(id+\hat{F})=\det((id+\hat{F})e^{-\hat{F}}).
\end{eqnarray*}
 In the classical settings, if $\hat{F}$ is trace class, then ${\det}_2(id +\hat{F})=\det(id+\hat{F})e^{-Tr (\hat{F})}$. Inspired from this, in the case that $\hat{F}$ has the trace finite condition, we proved in \cite{HW},
\begin{eqnarray}\label{eq2.1b}
{\det}_2(id+\hat{F})=\det((id+\hat{F}))e^{-Tr (\hat{F})}
\end{eqnarray}
still holds, however, where $Tr( \hat{F})=\lim\limits_{N\to\infty}Tr(
P_N \hat{F} P_N)$ is the conditional trace.

The conditional Fredholm determinant preserves  almost all the
properties that the determinant of matrix has.   Such as, the
multiplicity of the determinant. Let $\hat{D}$ and $\hat{F}$ be two
Hilbert-Schmidt operators which have trace finite condition. Then
\begin{eqnarray}\label{eq2.2b}
\det(id+\hat{D})\det(id+\hat{F})=\det(id+\hat{D}+\hat{F}+\hat{D}\hat{F}),
\end{eqnarray}
where ``$\det$'' represents conditional Fredholm determinant. Similar to \cite[Proposition 3.2]{HW},  we have the following lemma. The proof of the lemma is almost the same as that was given for \cite[Proposition 3.2]{HW}, and we will omit the proof.
\begin{lem}\label{lemma1}
 Under the assumption of Lemma \ref{lem3.2b}, $\{p_N\}$ is a normal
family, that is, for any sequence in $\{p_N\}$ there is a subsequence which is uniformly convergent on any compact subset of $\mathbb C$.
\end{lem}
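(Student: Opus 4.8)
The plan is to invoke Montel's theorem. By Lemma~\ref{lem3.2b} each $p_N$ is holomorphic on $\Omega$, so it is enough to show that $\{p_N\}$ is uniformly bounded on every compact subset $K\subset\Omega$. Write $F(\alpha)=(B_\alpha-B_{\alpha_0})(A-B_{\alpha_0}-\nu J)^{-1}$, which is Hilbert--Schmidt, and $T_N(\alpha)=P_NF(\alpha)P_N$, so that $p_N(\alpha)=\det(id-T_N(\alpha))$. Since $P_N$ has finite rank, $T_N(\alpha)$ is trace class, and the factorization $\det(id-T_N)={\det}_2(id-T_N)\,e^{-Tr\,T_N}$, together with Carleman's estimate $|{\det}_2(id+X)|\le e^{\frac12\|X\|_2^2}$ (see \cite{Si}) and $\|P_NXP_N\|_2\le\|X\|_2$, gives
\begin{eqnarray*}
|p_N(\alpha)|\le \exp(\tfrac12\|T_N(\alpha)\|_2^2+|Tr\,T_N(\alpha)|)\le \exp(\tfrac12\|F(\alpha)\|_2^2+|Tr\,T_N(\alpha)|).
\end{eqnarray*}
The Hilbert--Schmidt term is harmless: $\|F(\alpha)\|_2\le\|B_\alpha-B_{\alpha_0}\|_{\mathrm{op}}\,\|(A-B_{\alpha_0}-\nu J)^{-1}\|_2$ and $\alpha\mapsto\|B_\alpha-B_{\alpha_0}\|_{\mathrm{op}}$ is continuous, hence bounded on $K$. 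Thus everything reduces to the uniform bound $\sup_N\sup_{\alpha\in K}|Tr\,T_N(\alpha)|<\infty$; this is exactly where the trace finite condition enters, and it is the main obstacle.

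To establish it I would split the resolvent against the (always invertible) reference operator $A+P_0$, using $(A-B_{\alpha_0}-\nu J)^{-1}=(A+P_0)^{-1}+(A+P_0)^{-1}(P_0+B_{\alpha_0}+\nu J)(A-B_{\alpha_0}-\nu J)^{-1}$. The second piece of $Tr\,T_N(\alpha)$ produced this way is the trace of a finite-rank compression of a product of two Hilbert--Schmidt operators (times a bounded one), so $\|P_NXP_N\|_1\le\|X\|_1$ bounds it by $\|B_\alpha-B_{\alpha_0}\|_{\mathrm{op}}\,\|(A+P_0)^{-1}\|_2\,\|P_0+B_{\alpha_0}+\nu J\|_{\mathrm{op}}\,\|(A-B_{\alpha_0}-\nu J)^{-1}\|_2$, which is continuous in $\alpha$ and bounded on $K$. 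For the first piece, $P_N$ commutes with $A+P_0$ (both $P_N$ and $P_0$ are spectral projections of $A$) and hence with $(A+P_0)^{-1}$, so, with $\{e_j\}$ an eigenbasis of $A$, $\lambda_j$ the corresponding eigenvalues, and $\mu_j=\lambda_j$ if $\lambda_j\ne0$, $\mu_j=1$ otherwise,
\begin{eqnarray*}
Tr(P_N(B_\alpha-B_{\alpha_0})(A+P_0)^{-1}P_N)=\sum_{|\lambda_j|\le N}\mu_j^{-1}\,\langle (B_\alpha-B_{\alpha_0})e_j,e_j\rangle ,
\end{eqnarray*}
the $N$-th symmetric partial sum of a conditionally convergent series. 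Here the eigenvalues of $A=-J\frac{d}{dt}$ under the $S$-periodic boundary condition grow linearly and pair as $\lambda_j\leftrightarrow-\lambda_j$ up to a bounded error, while the diagonal matrix elements are dominated by $\|B_\alpha-B_{\alpha_0}\|_{\mathrm{op}}$ and exhibit a matching near-cancellation across each pair; consequently these partial sums stay bounded uniformly in $N$, with a bound governed only by $\|B_\alpha-B_{\alpha_0}\|_{\mathrm{op}}$ and a modulus of continuity of $B_\alpha-B_{\alpha_0}$, both continuous in $\alpha$ and therefore bounded on $K$. This estimate is precisely the computation carried out for \cite[Proposition~3.2]{HW}, and it transfers here verbatim; everything else is routine bookkeeping with Hilbert--Schmidt and trace norms.

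With $\sup_N\sup_{\alpha\in K}|p_N(\alpha)|<\infty$ for every compact $K\subset\Omega$, Montel's theorem shows that $\{p_N\}$ is a normal family, which is the assertion of the lemma. Moreover, since by the very definition of the conditional Fredholm determinant $p_N(\alpha)\to\det(id-F(\alpha))$ pointwise on $\Omega$, Vitali's theorem upgrades normality to local uniform convergence of the full sequence, which is what is actually used in the sequel. To summarize, the two substantive points are the Carleman-type bound that converts the determinant into a trace estimate, and the uniform-in-$(N,\alpha)$ control of the conditional trace; the latter is the genuine difficulty and is supplied by the argument of \cite[Proposition~3.2]{HW}.
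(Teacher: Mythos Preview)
Your proposal is correct and follows essentially the approach the paper has in mind: the paper itself omits the proof entirely, writing only that it is ``almost the same as that was given for \cite[Proposition~3.2]{HW}.'' Your argument is precisely the natural elaboration of that reference---reduce to Montel via local uniform boundedness, use the factorization $\det=\det_2\cdot e^{-Tr}$ together with Carleman's bound $|{\det}_2(id+X)|\le e^{\frac12\|X\|_2^2}$, and then control the conditional trace by the resolvent splitting against $A+P_0$, with the decisive symmetric-partial-sum estimate imported from \cite[Proposition~3.2]{HW}. The additional remark that Vitali's theorem then gives local uniform convergence of the full sequence (not just subsequences) is a useful observation for how the lemma is applied afterwards.
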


 For $\nu\in \mathbb C$ such that $A-B-\nu J$ is invertible,
\begin{eqnarray*}
(A-B-\alpha D-\nu J)(A+P_0)^{-1}=(id-\alpha D (A-B-\nu
J)^{-1})(A-B-\nu J)(A+P_0)^{-1}.
\end{eqnarray*}
Now set
\begin{eqnarray*}
g(\alpha)=\det(id-\alpha D(A-B-\nu J)^{-1}),
\end{eqnarray*}
where the ``$\det$'' is the conditional Fredholm determinant, and
\begin{eqnarray*}
g_N(\alpha)=\det (id-\alpha P_N D(A-B-\nu J)^{-1}P_N).
\end{eqnarray*}
By Lemma \ref{lemma1}, $f_N$ and $f$ are entire functions, and there is a
subsequence $\{f_{N_k}\}$ which is convergent to $f$ uniformly on any
compact subset in $\Omega$. Set  \begin{eqnarray*}F_N=P_N D (A-B-\nu
J)^{-1}P_N,\end{eqnarray*} then all of $F_N$ are finite-rank
operators, hence they are trace class operators, by \cite[Theorem
5.4]{Si}, we have the following lemma.
\begin{lem}
Let $g_N(\alpha)=\det (id+\alpha (-F_N))$. Then  the Taylor
expansion near $0$ for $g_N(\alpha)$ is
\begin{eqnarray*}
g_N(\alpha)=\sum\limits_{m=0}^\infty\alpha^m a_{N,m}/m! ,
\end{eqnarray*}
where
\begin{eqnarray*}
a_{N,m}=(-1)^m\det\left( \begin{array}{ccccc}Tr(F_N)& m-1&0 &\cdots & 0 \\ Tr(F_N^2)& Tr(F_N) & m-2 &\cdots & 0\\ \vdots & \vdots &\ddots& \ddots & \vdots \\
Tr(F_N^{m-1}) & Tr(F_N^{m-2}) & \cdots & Tr(F_N) & 1\\
 Tr(F_N^m) & Tr(F_N^{m-1})& \cdots & Tr(F_N^2) & Tr(F_N)   \end{array}\right).
\end{eqnarray*}
\end{lem}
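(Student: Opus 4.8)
The plan is to recognize this final lemma as a direct instance of the classical Plemelj--Smithies expansion of a Fredholm determinant, applied to the operator $F_N = P_N D(A-B-\nu J)^{-1}P_N$. First I would record the only input needed about $F_N$: since the range of $P_N$ is $W_N=\bigoplus_{|\mu|\le N}\ker(A-\mu)$, which is finite dimensional because $A$ has compact resolvent, the operator $F_N$ has finite rank and is therefore trace class. Consequently $g_N(\alpha)=\det(id-\alpha F_N)$ is an entire function of $\alpha$ whose Taylor coefficients are computed by the standard formula for the Fredholm determinant of a trace-class operator, i.e. by \cite[Theorem 5.4]{Si}.

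Next I would invoke that result in the form: for a trace-class $K$, $\det(id-zK)=\sum_{m=0}^\infty \frac{(-z)^m}{m!}\,\delta_m(K)$, where $\delta_m(K)$ is the determinant of the $m\times m$ lower-Hessenberg matrix whose $(i,j)$ entry equals $Tr(K^{\,i-j+1})$ for $i\ge j$, equals $m-i$ for $j=i+1$, and vanishes for $j>i+1$. Specializing to $K=F_N$ and $z=\alpha$ yields $g_N(\alpha)=\det(id-\alpha F_N)=\sum_{m\ge 0}\frac{\alpha^m}{m!}\,(-1)^m\delta_m(F_N)$, so that putting $a_{N,m}:=(-1)^m\delta_m(F_N)$ reproduces verbatim the displayed matrix and the claimed expansion. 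The sign bookkeeping is the only thing to watch: the overall factor $(-1)^m$ in $a_{N,m}$ is exactly the $(-z)^m$ appearing in the Plemelj--Smithies formula, and the matrix entries are the plain traces $Tr(F_N^k)$ (not $Tr((-F_N)^k)$) precisely because we are expanding $id-\alpha F_N$ rather than $id+\alpha F_N$.

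If a self-contained argument is preferred to citing \cite{Si}, I would instead use that the finite-rank operator $F_N$ has only finitely many nonzero eigenvalues $\mu_1,\dots,\mu_r$ (counted with algebraic multiplicity), so that $g_N(\alpha)=\prod_{i=1}^r(1-\alpha\mu_i)=\sum_{m}(-\alpha)^m e_m(\mu_1,\dots,\mu_r)$, and then apply Newton's identities, which express the elementary symmetric function $e_m$ in terms of the power sums $p_k=\sum_i\mu_i^k=Tr(F_N^k)$ through exactly the Hessenberg determinant $\frac{1}{m!}\delta_m(F_N)$. There is essentially no genuine obstacle in this lemma; the main (and only) point requiring a little care is matching the normalization -- the signs and the superdiagonal constants $m-1,m-2,\dots,1$ -- between the paper's display and the statement of Newton's identities (equivalently, of \cite[Theorem 5.4]{Si}), which is a routine linear-algebra verification.
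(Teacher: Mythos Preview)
Your proposal is correct and follows exactly the paper's approach: the paper simply observes that $F_N$ is finite rank (hence trace class) and cites \cite[Theorem~5.4]{Si} for the Plemelj--Smithies expansion, which is precisely what you do. Your additional self-contained route via Newton's identities is more than the paper itself offers, but it is a standard and correct alternative.
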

Let $h_n$ be a sequence of analytic functions, which is convergent
to $h$ uniformly on any compact subset. Write the power series
expansions as
\begin{eqnarray*}
h_n(\alpha)=\sum\limits_{m=0}^\infty c_{n,m} \alpha^m, \quad
\text{and}\quad g(\alpha)=\sum\limits_{m=0}^\infty c_m \alpha^m,
\end{eqnarray*}
then, it is easy to see that $c_{n,m}$ converges to $c_m$ as
$n\to\infty$.

\vskip2mm\noindent\emph{Proof of Theorem \ref{thm3.1b}.} Now, notice
that $F=D (A-B-\nu J)^{-1}$ is a Hilbert-Schmidt operator with trace
finite condition, hence  the conditional trace
\begin{eqnarray*}
Tr(F)=\lim\limits_{N\to\infty} Tr(F_N).
\end{eqnarray*}
Set
\begin{eqnarray*}
a_m=(-1)^m\det\left( \begin{array}{cccc}Tr(F)& m-1& \cdots & 0 \\
Tr(F^2)& Tr(F) & \cdots & 0\\ \vdots & \vdots & \ddots & \vdots
\\ Tr(F^m) & Tr(F^{m-1})& \cdots & Tr(F)
\end{array}\right),
\end{eqnarray*}
Then $a_{N,m}$ tends to $a_m$ as $N\to\infty$. By Lemma
\ref{lemma1}, there is a subsequence $g_{N_j}(\alpha)$ of
$g_N(\alpha)$, which is convergent to $g(\alpha)$ on any compact
subset.  Then  \begin{eqnarray*} g(\alpha)=\sum\limits_{m=0}^\infty
{a_{m}\over m!} \alpha^m.
\end{eqnarray*}
 Since $$f(\alpha)=g(\alpha)\det((A-B-\nu J)(A+P_0)^{-1}),$$ we have
\begin{eqnarray*}
f(\alpha)=\sum\limits_{m=0}^\infty \alpha^m \big[{a_m\over
m!}\det((A-B-\nu J)(A+P_0)^{-1})\big]
\end{eqnarray*}
The proof is finished.\hfill$\Box$

Note that for $\alpha$
small, by \cite[p.47, (5.12)]{Si}, for a matrix $D$,
\begin{eqnarray} \det(I+\alpha D)=\exp\Big(\sum\limits_{m=1}^\infty\frac{(-1)^{m+1}}{m}\alpha^m Tr(D^m)\Big).  \label{cc4.20}  \end{eqnarray}
Thus for $\alpha$ small enough, write $g_N(\alpha)=e^{h_N(\alpha)}$, then
\begin{eqnarray*}
h_N(\alpha)=\sum\limits_{m=1}^\infty (-1)^{m+1}d_m(N) \alpha^m/m,
\end{eqnarray*}
with $d_m(N)=Tr ((-F_N)^m)$. On the other hand, since $(A-B-\nu J)$ is invertible, hence, $id-\alpha
D(A-B-\nu J)^{-1}$ is invertible in a neighborhood of $0$. It
follows that $g(\alpha)$ vanishes nowhere  in a neighborhood of $0$.
Write $g(\alpha)=e^{h(\alpha)}$ near $0$ with
\begin{eqnarray*}
h(\alpha)=\sum\limits_{m=1}^\infty (-1)^{m+1}d_m \alpha^m/m
\end{eqnarray*} be the Taylor expansion for $h(\alpha)$. Since $g_N$ converge to $g$ and is normal family, we have that $d_m=(-1)^mTr (F^m)$. We get the following theorem, which is the main result in this section.
\begin{thm}\label{thm3.3b}
Under the above assumption, we have
\begin{eqnarray*}
f(\alpha)=\det((A-B-\nu J)(A+P_0)^{-1})
\exp\Big\{\sum\limits_{m=1}^\infty b_m \alpha^m\Big\},
\end{eqnarray*}
where $b_m=-{{1}\over m}{Tr(F^m)}$.
\end{thm}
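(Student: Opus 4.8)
The plan is to peel the $\alpha$‑dependence off into a single conditional Fredholm determinant, expand that determinant on finite‑rank truncations where the classical formula is available, and then pass to the limit using the normal family property. First I would use that $A-B-\nu J$ is invertible to write the operator identity
\begin{eqnarray*}
(A-B-\alpha D-\nu J)(A+P_0)^{-1}=\bigl(id-\alpha D(A-B-\nu J)^{-1}\bigr)(A-B-\nu J)(A+P_0)^{-1}.
\end{eqnarray*}
Since the resolvent of $A$ is Hilbert–Schmidt and $\alpha D(A-B-\nu J)^{-1}$ is Hilbert–Schmidt with the trace finite condition, the multiplicativity (\ref{eq2.2b}) of the conditional Fredholm determinant applies to this factorization and yields $f(\alpha)=g(\alpha)\det\bigl((A-B-\nu J)(A+P_0)^{-1}\bigr)$, where $g(\alpha)=\det\bigl(id-\alpha D(A-B-\nu J)^{-1}\bigr)$. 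It therefore remains to show $g(\alpha)=\exp\{\sum_{m\ge1}b_m\alpha^m\}$ in a neighbourhood of $0$ with $b_m=-\tfrac1m Tr(F^m)$, and then to feed this back through the factorization.

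For the truncations $F_N=P_N D(A-B-\nu J)^{-1}P_N$, each $F_N$ is finite rank, hence trace class, so the classical identity (\ref{cc4.20}) applies with $D$ replaced by $-F_N$: writing $g_N(\alpha)=\det(id-\alpha F_N)=e^{h_N(\alpha)}$ near $0$, one finds $h_N(\alpha)=\sum_{m\ge1}(-1)^{m+1}Tr((-F_N)^m)\alpha^m/m=-\sum_{m\ge1}\tfrac1m Tr(F_N^m)\alpha^m$. On the limit side, invertibility of $A-B-\nu J$ makes $id-\alpha D(A-B-\nu J)^{-1}$ invertible for small $\alpha$, so $g$ is non‑vanishing in a neighbourhood of $0$ and has an analytic logarithm $h(\alpha)=\sum_{m\ge1}(-1)^{m+1}d_m\alpha^m/m$. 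By Lemma \ref{lemma1} some subsequence of $\{g_N\}$ converges to $g$ uniformly on compact subsets, so the power‑series coefficients of $h_N$ converge to those of $h$; combining this with $Tr(F_N^m)\to Tr(F^m)$ for each fixed $m$ — for $m=1$ the right side being the conditional trace of $F$ (Definition \ref{def1}), which exists by the trace finite condition, and for $m\ge2$ the ordinary trace, the limit coming from trace‑norm convergence $F_N^m\to F^m$ — I get $d_m=(-1)^mTr(F^m)$, so $b_m:=(-1)^{m+1}d_m/m=-\tfrac1m Tr(F^m)$. Substituting into $f(\alpha)=g(\alpha)\det\bigl((A-B-\nu J)(A+P_0)^{-1}\bigr)$ completes the proof.

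The step I expect to be the main obstacle is the passage $N\to\infty$. It requires, on one hand, that $\{g_N\}$ be a normal family so that uniform convergence of the determinants transfers to convergence of power‑series coefficients — this is exactly Lemma \ref{lemma1}, proved along the lines of \cite[Proposition 3.2]{HW} — and, on the other hand, the convergence of the level‑$m$ truncated traces to the correct limits. The $m=1$ case is the delicate one, because $F=D(A-B-\nu J)^{-1}$ is only Hilbert–Schmidt and $Tr(F)$ is a conditional trace which a priori might depend on the exhausting projections $\{P_N\}$; here one must lean on the structural results for the conditional trace and determinant from \cite{HW}, in particular the relation (\ref{eq2.1b}) between $\det$ and $\det_2$, to secure both well‑definedness and the stated limiting value.
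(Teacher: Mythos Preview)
Your proposal is correct and mirrors the paper's own argument: factor $f(\alpha)=g(\alpha)\det((A-B-\nu J)(A+P_0)^{-1})$ via multiplicativity of the conditional determinant, expand the finite-rank truncations $g_N(\alpha)=\det(id-\alpha F_N)$ by the classical formula (\ref{cc4.20}), use the normal family property (Lemma \ref{lemma1}) to pass to the limit in the Taylor coefficients of the logarithm, and identify $d_m=(-1)^m Tr(F^m)$ from $Tr(F_N^m)\to Tr(F^m)$. Your identification of the $m=1$ conditional-trace issue as the delicate point is also exactly where the paper leans on the machinery of \cite{HW}.
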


Notice that $F$ is a Hilbert-Schmidt operator with trace finite condition. Hence, $Tr(F)$ is not the usual  trace of $F$, but the conditional trace. However, in \cite[Theorem 5.4]{Si} $F$ is a trace class operator, and then $Tr(F)$ is the usual trace.

\subsection{Taylor expansion for linearly parameterized  Monodromy matrices  }\label{subsec3.1}

Set $B_\alpha=B+\alpha D$, for $\alpha\in\mathbb C$, let $\gamma_\alpha$ be the corresponding fundamental solutions, that is
\begin{eqnarray*}
\dot{\gamma}_\alpha(t)=J B_\alpha(t)\gamma_\alpha(t) \label{c4.1}.
\end{eqnarray*}
Fixed $\alpha_0\in\mathbb C$, direct computation  shows that
\begin{eqnarray*}
\frac{d}{dt}(\gamma_{\alpha_0}^{-1}(t)\gamma_\alpha(t))&=& \gamma_{\alpha_0}^{-1}(t)J(B_\alpha(t)-B_{\alpha_0}(t))\gamma_\alpha(t) \nonumber \\
&=& J(\gamma_{\alpha_0}^T(t)(B_\alpha(t)-B_{\alpha_0}(t))
\gamma_{\alpha_0}(t))\gamma_{\alpha_0}^{-1}(t)\gamma_\alpha(t)) \nonumber \\
&=& (\alpha-\alpha_0)J(\gamma_{\alpha_0}^T(t)D(t)
\gamma_{\alpha_0}(t))\gamma_{\alpha_0}^{-1}(t)\gamma_\alpha(t)
 \label{c4.2}.
\end{eqnarray*}
Without loss of generality, assume $\alpha_0=0$. In what follows, write
\begin{eqnarray*}\hat{\gamma}_\alpha(t)=\gamma_{0}^{-1}(t)\gamma_\alpha(t),\end{eqnarray*} and \begin{eqnarray*}\hat{D}(t)=\gamma_{0}^T(t)D(t) \gamma_{0}(t),\end{eqnarray*} thus
\begin{eqnarray}
\frac{d}{dt}\hat{\gamma}_\alpha(t)= \alpha
J\hat{D}(t)\hat{\gamma}_\alpha(t) \label{c4.3}.
\end{eqnarray}
To simplify the notation, we use ``$^{(k)}$'' to denote the $k$-th
derivative on $\alpha$. Taking derivative on $\alpha$ for both sides
of (\ref{c4.3}), we get
\begin{eqnarray}
\frac{d}{dt}\hat{\gamma}^{(1)}_\alpha(t)&=&J\hat{D}(t)\hat{\gamma}_\alpha(t)+\alpha
J\hat{D}_\alpha(t)\hat{\gamma}^{(1)}_\alpha(t)\label{c4.4}.
\end{eqnarray}
By taking $\alpha=0$,  $\hat{\gamma}_{0}(t)\equiv I_{2n}$, we have
\begin{eqnarray*}
\hat{\gamma}^{(1)}_{0}(t)=J\int_0^t\hat{D}(s)ds. \label{c4.5}
\end{eqnarray*}
Now, taking derivative on $\alpha$ for both sides of (\ref{c4.4}), we get
\begin{eqnarray*}
\frac{d}{dt}\hat{\gamma}^{(2)}_\alpha(t)&=&2J\hat{D}(t)\hat{\gamma}^{(1)}_\alpha(t)
+J\alpha\hat{D}(t)\hat{\gamma}^{(2)}_\alpha(t). \label{c4.6}
\end{eqnarray*}
Take $\alpha=0$, and we get
\begin{eqnarray*}
\hat{\gamma}^{(2)}_{0}(t)=2
J\int_0^t\hat{D}(s)\hat{\gamma}^{(1)}_{0}(s) ds. \label{c4.7}
\end{eqnarray*}
By induction,
\begin{eqnarray*}
\frac{d}{dt}\hat{\gamma}^{(k)}_0(t)=k
 J\hat{D}(t)\hat{\gamma}^{(k-1)}_0(t), \label{c4.9}
\end{eqnarray*}
and
\begin{eqnarray*}
\hat{\gamma}^{(k)}_0(t)=k
J\int_0^t\hat{D}(s)\hat{\gamma}^{(k-1)}_0(s)ds. \label{c4.10}
\end{eqnarray*}
For $t=T$,  by  Taylor's formula,
\begin{eqnarray*}
\hat{\gamma}_{\alpha}(T)=I_{2n}+\alpha\hat{\gamma}^{(1)}_{0}(T)+\cdots+\alpha^k\hat{\gamma}^{(k)}_{0}(T)/k!
+\cdots, \label{c4.11}
\end{eqnarray*}
where
\begin{eqnarray*}\hat{\gamma}^{(1)}_{0}(T)=\int_0^TJ\hat{D}(t)dt
\end{eqnarray*}and
\begin{eqnarray*}
\hat{\gamma}^{(k)}_{0}(T)/k!=\int_0^TJ\hat{D}(t)\hat{\gamma}^{(k-1)}_{0}(t)/(k-1)!dt,
k\in\mathbb N  \label{c4.13}.  \end{eqnarray*} By induction, we have
\begin{eqnarray*}
\hat{\gamma}^{(k)}_{0}(T)/k!=\int_0^TJ\hat{D}(t_1)\int_0^{t_1}J\hat{D}(t_2)\cdots\int_0^{t_{k-1}}J\hat{D}(t_k)dt_k\cdots dt_2dt_1,
k\in\mathbb N\label{adc4.13}.  \end{eqnarray*}
 Obviously
$\hat{\gamma}_{\alpha}(T)$ is an entire function on the variable $\alpha$. We summarize the above reasoning as the following proposition.
\begin{prop}
Let $B_\alpha=B+\alpha D$, $\gamma_\alpha(T)$ be the corresponding fundamental solutions. Write $\hat{\gamma}_\alpha=\gamma_0^{-1} \gamma_\alpha$. Then, the Taylor expansion for $\hat{\gamma}_\alpha(T)$ at $0$ is
\begin{eqnarray*}
\hat{\gamma}_{\alpha}(T)=I_{2n}+\alpha\hat{\gamma}^{(1)}_{0}(T)+\cdots+\alpha^k\hat{\gamma}^{(k)}_{0}(T)/k!
+\cdots, \label{c4.11a}
\end{eqnarray*}
where
\begin{eqnarray*}
\hat{\gamma}^{(k)}_{0}(T)/k!=\int_0^TJ\hat{D}(t_1)\int_0^{t_1}J\hat{D}(t_2)\cdots\int_0^{t_{k-1}}J\hat{D}(t_k)dt_k\cdots dt_2dt_1,
k\in\mathbb N\label{adc4.13a}.  \end{eqnarray*}
\end{prop}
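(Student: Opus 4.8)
The plan is to reduce the problem to a linear ODE in which $\alpha$ enters the coefficients linearly, and then to solve that ODE by Picard iteration, so that the Taylor coefficients can be read off directly from the resulting Neumann series. The point of conjugating by the unperturbed flow $\gamma_0$ is that it removes the $B$-part entirely and leaves a pure integral recursion driven by $\hat D$.

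First I would use that $\gamma_0(t)\in\Sp(2n)$ for every $t$ (i.e. $\gamma_0^T(t)J\gamma_0(t)=J$), so that $\gamma_0^{-1}(t)=-J\gamma_0^T(t)J$ and hence $\gamma_0^{-1}(t)JD(t)\gamma_0(t)=J\gamma_0^T(t)D(t)\gamma_0(t)=J\hat D(t)$. Differentiating $\hat\gamma_\alpha=\gamma_0^{-1}\gamma_\alpha$, using $\dot\gamma_\alpha=JB_\alpha\gamma_\alpha$ and $\dot\gamma_0=JB\gamma_0$ together with $B_\alpha-B=\alpha D$, I obtain
\begin{eqnarray*}
\frac{d}{dt}\hat\gamma_\alpha(t)=\gamma_0^{-1}(t)J\big(B_\alpha(t)-B(t)\big)\gamma_\alpha(t)=\alpha J\hat D(t)\hat\gamma_\alpha(t),\qquad \hat\gamma_\alpha(0)=I_{2n},
\end{eqnarray*}
equivalently $\hat\gamma_\alpha(t)=I_{2n}+\alpha\int_0^tJ\hat D(s)\hat\gamma_\alpha(s)\,ds$. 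Next I would iterate this integral equation: set $\Phi_0(t)\equiv I_{2n}$ and $\Phi_k(t)=\int_0^tJ\hat D(s)\Phi_{k-1}(s)\,ds$, so that an induction on $k$ gives both the iterated-integral expression
\begin{eqnarray*}
\Phi_k(t)=\int_0^tJ\hat D(t_1)\int_0^{t_1}J\hat D(t_2)\cdots\int_0^{t_{k-1}}J\hat D(t_k)\,dt_k\cdots dt_2\,dt_1
\end{eqnarray*}
and the estimate $\|\Phi_k(t)\|\le\big(\int_0^T\|\hat D(s)\|\,ds\big)^k/k!$. Consequently the series $\sum_{k\ge0}\alpha^k\Phi_k(t)$ converges absolutely and uniformly for $(\alpha,t)$ in compact subsets of $\mathbb C\times[0,T]$; termwise it solves the integral equation, so by uniqueness $\hat\gamma_\alpha(t)=\sum_{k\ge0}\alpha^k\Phi_k(t)$. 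In particular $\alpha\mapsto\hat\gamma_\alpha(T)$ is entire, and comparing power-series coefficients yields $\hat\gamma^{(k)}_0(T)/k!=\Phi_k(T)$, which is exactly the asserted formula.

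I do not expect a genuine obstacle here; the three points that need care are (i) the use of symplecticity of $\gamma_0$ to rewrite $\gamma_0^{-1}JD\gamma_0$ as $J\hat D$ — which also explains why $\hat D$ is symmetric and the conjugated system remains Hamiltonian; (ii) the factorial bound on $\|\Phi_k\|$, which is what upgrades convergence from a neighbourhood of $0$ to all of $\mathbb C$; and (iii) the identification of the Neumann coefficient $\Phi_k(T)$ with the derivative $\hat\gamma^{(k)}_0(T)/k!$. Alternatively one can bypass the Neumann series and argue exactly as in the computation preceding the statement: differentiate $\frac{d}{dt}\hat\gamma_\alpha=\alpha J\hat D\hat\gamma_\alpha$ repeatedly in $\alpha$, evaluate at $\alpha=0$ to get $\frac{d}{dt}\hat\gamma^{(k)}_0(t)=kJ\hat D(t)\hat\gamma^{(k-1)}_0(t)$, integrate, and unwind the recursion, with entireness following from the same factorial estimate applied to $\|\hat\gamma^{(k)}_0(T)/k!\|$.
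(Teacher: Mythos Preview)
Your proposal is correct. The paper's argument is precisely the alternative you sketch at the end: it differentiates the relation $\frac{d}{dt}\hat\gamma_\alpha=\alpha J\hat D\hat\gamma_\alpha$ repeatedly in $\alpha$, evaluates at $\alpha=0$ to obtain the recursion $\frac{d}{dt}\hat\gamma^{(k)}_0=kJ\hat D\,\hat\gamma^{(k-1)}_0$, integrates, and unwinds by induction; it then simply asserts that $\hat\gamma_\alpha(T)$ is entire without supplying the factorial bound. Your primary route via Picard iteration (the Dyson/Neumann series) is a mild but genuine variant: it produces the power series and the iterated-integral coefficients in one stroke, and the bound $\|\Phi_k(t)\|\le\big(\int_0^T\|\hat D\|\big)^k/k!$ gives entireness for free, which the paper leaves implicit. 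Either way the content is the same elementary ODE computation; your version is arguably cleaner because it closes the analyticity gap explicitly.
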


In what follows, to simplify the notation,  set \begin{eqnarray*}
M(\alpha)=\hat{\gamma}_{\alpha}(T),\quad
M_0=I_{2n} \quad \text{and}\quad M_j=\hat{\gamma}^{(j)}_{0}(T)/j!,\,j\in\mathbb N,\end{eqnarray*}  then
$$M(\alpha)=\sum_{j=0}^\infty \alpha^jM_j.$$
Direct computation shows that
\begin{eqnarray*}  M(\alpha)^TJM(\alpha)=J+\alpha C_1+\alpha^2 C_2+\cdots+\alpha^k C_k+\cdots\label{c4.14}  \end{eqnarray*}
where $C_1=M_1^TJ+JM_1$, $C_2=M_2^TJ+JM_2+M_1^TJM_1$, and  in
general
\begin{eqnarray*}  C_k=\sum\limits_{j=0}^kM^T_jJM_{k-j},\,\ k\in\mathbb N  \label{c4.15}.  \end{eqnarray*}
By the fact that  $M(\alpha)\in {\mathrm {Sp}}(2n)$,
$M(\alpha)^TJM(\alpha)=J$, thus $C_k=0$ for $ k\in\mathbb N $. We have the following proposition.
\begin{prop}\label{prop3.1} Under the above assumptions
\begin{eqnarray} \sum\limits_{j=0}^kM^T_jJM_{k-j}=0,\,\ \forall\, k\in\mathbb N \label{c4.16}.  \end{eqnarray}
\end{prop}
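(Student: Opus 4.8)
The plan is to deduce the identity directly from the symplecticity of the fundamental solutions, combined with the fact (recorded in the preceding proposition) that the entries of $M(\alpha)=\hat\gamma_\alpha(T)$ are entire functions of $\alpha$, so that $M(\alpha)=\sum_{j\ge0}\alpha^jM_j$ is a globally convergent power series.

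First I would check that $\gamma_\alpha(t)\in\Sp(2n)$ for every $t\in[0,T]$ and every $\alpha\in\mathbb C$. This is the standard computation: since $B_\alpha(t)=B(t)+\alpha D(t)$ is symmetric and $J^TJ=I_{2n}$, $J^2=-I_{2n}$, one has
\[
\frac{d}{dt}\big(\gamma_\alpha(t)^TJ\gamma_\alpha(t)\big)=(JB_\alpha\gamma_\alpha)^TJ\gamma_\alpha+\gamma_\alpha^TJ(JB_\alpha\gamma_\alpha)=\gamma_\alpha^TB_\alpha\gamma_\alpha-\gamma_\alpha^TB_\alpha\gamma_\alpha=0,
\]
and $\gamma_\alpha(0)^TJ\gamma_\alpha(0)=J$, hence $\gamma_\alpha(t)^TJ\gamma_\alpha(t)\equiv J$. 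In particular $\gamma_0(T)$, and therefore $\gamma_0(T)^{-1}$, lie in $\Sp(2n)$, so $M(\alpha)=\gamma_0(T)^{-1}\gamma_\alpha(T)$ is a product of symplectic matrices and satisfies $M(\alpha)^TJM(\alpha)=J$ for all $\alpha\in\mathbb C$.

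Next I would substitute the Taylor expansion $M(\alpha)=\sum_{j\ge0}\alpha^jM_j$ into this identity. Since the series is entire, the Cauchy product is legitimate, and
\[
M(\alpha)^TJM(\alpha)=\sum_{k=0}^\infty\alpha^k\Big(\sum_{j=0}^kM_j^TJM_{k-j}\Big)=J+\sum_{k=1}^\infty\alpha^kC_k,\qquad C_k:=\sum_{j=0}^kM_j^TJM_{k-j},
\]
where the $k=0$ term equals $M_0^TJM_0=J$ because $M_0=I_{2n}$. Comparing this power series with the constant series $J$ and invoking uniqueness of Taylor coefficients forces $C_k=0$ for every $k\in\mathbb N$, which is precisely (\ref{c4.16}).

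I do not expect a genuine obstacle here: the only points deserving a line of justification are the elementary ODE computation that $\gamma_\alpha(t)$ is symplectic and the termwise multiplication of the two power series, both of which are routine. If one wishes to avoid power-series multiplication altogether, an equivalent route is to differentiate $M(\alpha)^TJM(\alpha)=J$ in $\alpha$ exactly $k$ times, evaluate at $\alpha=0$, apply the Leibniz rule, and divide by $k!$; this again yields $C_k=0$.
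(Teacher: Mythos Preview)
Your argument is correct and is essentially the same as the paper's: the paper simply asserts that $M(\alpha)\in\Sp(2n)$ and then reads off $C_k=0$ from the power-series expansion of $M(\alpha)^TJM(\alpha)=J$, whereas you supply the (routine) verification that $\gamma_\alpha(t)$ is symplectic before doing the same coefficient comparison. There is no substantive difference in approach.
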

Please note that, by taking $k=1$ in (\ref{c4.16}), we have
\begin{eqnarray}
 J M_1+M^T_1 J=0,
\end{eqnarray} which coincides with the fact that $JM_1$
is a symmetric matrix. Now,   multiplying  $-J$ on both sides  of (\ref{c4.16}) and taking  trace,  we have
\begin{cor}\label{cor3.1}Under the above assumptions
\begin{eqnarray*} \sum\limits_{j=0}^m  Tr(-JM^T_jJM_{m-j})=0,\,\ \forall m\in\mathbb N \label{c4.17}.  \end{eqnarray*}
\end{cor}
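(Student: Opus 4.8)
The plan is to obtain this as an immediate scalar consequence of Proposition \ref{prop3.1}, so the argument is short and no genuine difficulty arises. First I would take the matrix identity $\sum_{j=0}^{m} M_j^{T} J M_{m-j} = 0$ supplied by (\ref{c4.16}), which holds for every $m \in \mathbb{N}$, and left-multiply both sides by the fixed matrix $-J$; since $-J$ times the zero matrix is again the zero matrix, this gives $\sum_{j=0}^{m} (-J) M_j^{T} J M_{m-j} = 0$. Then I would apply the trace functional and use its linearity over the finite sum, obtaining $\sum_{j=0}^{m} Tr\big(-J M_j^{T} J M_{m-j}\big) = Tr(0) = 0$, which is exactly the asserted identity.

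The only point that deserves a word of justification is that Proposition \ref{prop3.1} is legitimately available here, i.e.\ that the matrices $M_j = \hat{\gamma}_0^{(j)}(T)/j!$ really are the Taylor coefficients in the expansion $\hat{\gamma}_\alpha(T) = \sum_{j \geq 0} \alpha^j M_j$ and that this series converges near $0$, so that the symplecticity relation $M(\alpha)^{T} J M(\alpha) = J$ may be expanded in powers of $\alpha$ and compared coefficientwise. Both facts were already settled in the preceding paragraphs: $\hat{\gamma}_\alpha(T)$ is entire in $\alpha$, and $M(\alpha) = \hat{\gamma}_\alpha(T) = \gamma_0^{-1}(T)\gamma_\alpha(T) \in \Sp(2n)$ because $\gamma_0$ and $\gamma_\alpha$ are fundamental solutions of Hamiltonian systems. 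Hence each coefficient $C_k = \sum_{j=0}^{k} M_j^{T} J M_{k-j}$ of $M(\alpha)^{T} J M(\alpha) - J$ vanishes for $k \in \mathbb{N}$, which is precisely (\ref{c4.16}), and the trace step above then applies verbatim.

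There is essentially no obstacle; the corollary merely repackages Proposition \ref{prop3.1} in the scalar form convenient for the trace-formula computations that follow. If one wishes, one can additionally record the built-in symmetry of the summands: a scalar equals its own transpose, and with $J^{T} = -J$ plus cyclicity of the trace one gets $Tr(-J M_j^{T} J M_{m-j}) = Tr(-J M_{m-j}^{T} J M_j)$, so the $j$-th and $(m-j)$-th terms coincide and the sum may be grouped in pairs (with the middle term, when $m$ is even, standing alone). This is not needed for the proof but makes the structure of the identity transparent.
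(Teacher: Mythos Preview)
Your proof is correct and follows exactly the same approach as the paper: the paper simply says to multiply both sides of (\ref{c4.16}) by $-J$ and take the trace. Your additional remarks about the availability of Proposition \ref{prop3.1} and the pairing symmetry are accurate but not required.
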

Especially,  for $m=2$, we get
\begin{eqnarray*} 2 Tr(M_2)=Tr(JM_1^TJM_1)=Tr(M_1^2) \label{c4.18}.  \end{eqnarray*}

Set $M=S\gamma_0(T)$, then $S\gamma_\alpha(T)=M M(\alpha)$. For
$\lambda\in\mathbb C$, which is not an eigenvalue of $M$,  by some easy computations, we have that
\begin{eqnarray*}
 \det(S\gamma_\alpha(T)-\lambda I_{2n}) &=& \det(MM(\alpha)-\lambda I_{2n}) \nonumber \\  &=&  \det(M-\lambda I_{2n}+\alpha M
M_1+\cdots+\alpha^k M M_k+\cdots) \nonumber \\  &=& \det(M-\lambda
I_{2n})\det(I+\cdots+\alpha^k(M-\lambda I_{2n})^{-1}M M_k+\cdots).
\label{c4.19}
\end{eqnarray*}
Let $G_k=(M-\lambda I_{2n})^{-1}M M_k$,
$$f(\alpha)=\det(I+\cdots+\alpha^kG_k+\cdots), $$ which is an analytic
function  on $\mathbb C$. Next, we will compute the Taylor expansion for $f(\alpha)$.
Let $G(\alpha)=\sum\limits_{k=1}^\infty\alpha^{k-1}G_k$, then
for $\alpha$ small enough, by (\ref{cc4.20}), we have
\begin{eqnarray} f(\alpha) &=& \det(I+\alpha G(\alpha)) \nonumber \\  &=& \exp\Big(\sum_{m=1}^\infty\frac{(-1)^{m+1}}{m}\alpha^m Tr\big(G(\alpha)^m\big)\Big) \nonumber \\  &=&
\exp\Big(\sum_{m=1}^\infty\frac{(-1)^{m+1}}{m}\alpha^m
Tr\Big[\Big(\sum_{k=1}^\infty \alpha^{k-1}G_k\Big)^m\Big]\Big) \nonumber \\  &=&
\exp\Big(\sum_{m=1}^\infty\frac{(-1)^{m+1}}{m}\Big[\sum_{k_1,\cdots,k_m=1}^\infty\alpha^{k_1+\cdots+k_m}Tr(G_{k_1}\cdots
G_{k_m})\Big]\Big).
 \label{cc4.21}  \end{eqnarray}
Since $f(\alpha)$ vanishes nowhere near $0$, we can write  $f(\alpha)=e^{g(\alpha)}$, then by (\ref{cc4.21}), some direct computation  shows that
\begin{eqnarray}  g^{(m)}(0)/m != \sum_{k=1}^m \frac{(-1)^{k+1}}{k}\Big(\sum_{j_1+\cdots+j_k=m}Tr(G_{j_1}\cdots
G_{j_k})\Big). \label{cc4.22}  \end{eqnarray}
For $\alpha$
small enough, let $g(\alpha)$ be the function satisfying  \begin{eqnarray}
\lambda^{-n}\det(S\gamma_\alpha(T)-\lambda
I_{2n})=\lambda^{-n}\det(M-\lambda I_{2n})\cdot \exp(g(\alpha))
\label{cc4.29},
\end{eqnarray}
then the coefficients of $g^{(k)}(0)/k!$ could be determined  by
(\ref{cc4.22}). And we have the following theorem, which is the main result in this subsection.

\begin{thm}\label{thm3.7b}
Under the above assumption, let $g(\alpha)$ be the function in
(\ref{cc4.29}). Let $g(\alpha)=\sum\limits_{m=1}^\infty c_m
\alpha^m$ be its Taylor expansion. Then
\begin{eqnarray*}
c_m=\sum_{k=1}^m \frac{(-1)^{k+1}}{k}\Big(\sum_{j_1+\cdots+j_k=m}Tr(G_{j_1}\cdots
G_{j_k})\Big).
\end{eqnarray*}
\end{thm}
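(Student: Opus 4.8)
The plan is to assemble the computation carried out in the paragraphs just before the theorem, culminating in (\ref{cc4.22}); indeed the formula for $c_m$ is essentially read off from that identity, and what remains is to organize the steps and to record where convergence and rearrangement of series are invoked. First I would recall the factorization established above: for $\lambda$ not an eigenvalue of $M=S\gamma_0(T)$ one has
$$\lambda^{-n}\det(S\gamma_\alpha(T)-\lambda I_{2n})=\lambda^{-n}\det(M-\lambda I_{2n})\cdot f(\alpha),\qquad f(\alpha)=\det\Big(I_{2n}+\sum_{j=1}^\infty\alpha^jG_j\Big),$$
where $G_j=(M-\lambda I_{2n})^{-1}MM_j$ and $M_j=\hat\gamma_0^{(j)}(T)/j!$. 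Since, as observed above, $M(\alpha)=\hat\gamma_\alpha(T)$ is an entire $\Sp(2n)$-valued function with $M(0)=I_{2n}$, the matrix series $\sum_{j\ge1}\alpha^jG_j=\alpha G(\alpha)$ converges for every $\alpha$, the function $G(\alpha)=\sum_{j\ge1}\alpha^{j-1}G_j$ is entire, and $\|\alpha G(\alpha)\|<1$ once $|\alpha|$ is small enough.

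Next I would apply the finite-dimensional determinant identity (\ref{cc4.20}), with the matrix $D$ there taken to be $G(\alpha)$; this is legitimate for $|\alpha|$ small, where $\|\alpha G(\alpha)\|<1$, and it gives $f(\alpha)=\exp\big(\sum_{k\ge1}\tfrac{(-1)^{k+1}}{k}\alpha^kTr(G(\alpha)^k)\big)$. Expanding $G(\alpha)^k=\big(\sum_{j\ge1}\alpha^{j-1}G_j\big)^k$, multiplying through by $\alpha^k$ and taking traces term by term — all the series converging absolutely for $|\alpha|$ small, so that the rearrangement is permitted — I would obtain (\ref{cc4.21}), which in these variables reads
$$f(\alpha)=\exp\Big(\sum_{k=1}^\infty\frac{(-1)^{k+1}}{k}\sum_{j_1,\dots,j_k\ge1}\alpha^{j_1+\cdots+j_k}Tr(G_{j_1}\cdots G_{j_k})\Big).$$
Since $f(0)=1$, the function $f$ is zero-free in a neighbourhood of $0$, and there the argument of this exponential is the unique holomorphic logarithm of $f$ vanishing at $\alpha=0$; on the other hand, dividing (\ref{cc4.29}) by $\lambda^{-n}\det(M-\lambda I_{2n})\ne0$ shows $e^{g(\alpha)}=f(\alpha)$ with $g(0)=0$, so $g$ coincides with that logarithm, i.e. with the double series above.

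Finally I would read off the coefficient $c_m$ of $\alpha^m$ in $g$: in the double series the power $\alpha^m$ is produced exactly by the terms of the $k$-th summand with $j_1+\cdots+j_k=m$, and since every $j_i\ge1$ this forces $1\le k\le m$; collecting these contributions yields
$$c_m=\sum_{k=1}^m\frac{(-1)^{k+1}}{k}\sum_{j_1+\cdots+j_k=m}Tr(G_{j_1}\cdots G_{j_k}),$$
which is the assertion. I do not anticipate any real difficulty: the argument is bookkeeping plus two routine justifications — that the termwise trace expansion and the interchange of summations are valid (absolute convergence for $|\alpha|$ small, which suffices since $g$ is only asserted as a Taylor series at $0$) and that the multi-index constraint $j_1+\cdots+j_k=m$ is exactly what collapses the double series into the stated finite sum. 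If anything is mildly delicate it is only pinning down the identification $g=\log f$, which follows from uniqueness of the analytic logarithm together with the normalization $g(0)=0$.
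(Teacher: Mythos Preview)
Your proposal is correct and follows essentially the same route as the paper: the paper's argument is precisely the chain (\ref{cc4.20})$\Rightarrow$(\ref{cc4.21})$\Rightarrow$(\ref{cc4.22}) that you spell out, and the theorem is then simply the statement $c_m=g^{(m)}(0)/m!$. If anything, you are slightly more explicit than the paper about the convergence of the rearranged series and about the identification $g=\log f$ via uniqueness of the analytic logarithm with $g(0)=0$, but these are exactly the ``direct computation'' steps the paper leaves implicit.
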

 We only list the first $4$
terms
\begin{eqnarray*} g^{(1)}(0)=Tr(G_1) \label{cc4.23},
\end{eqnarray*}
\begin{eqnarray*} g^{(2)}(0)/2=Tr(G_2)-\frac{1}{2}Tr(G_1^2) \label{cc4.24},
\end{eqnarray*}
\begin{eqnarray*} g^{(3)}(0)/3!=Tr(G_3)-Tr(G_1G_2)+\frac{1}{3}Tr(G_1^3) \label{cc4.25},
\end{eqnarray*}
\begin{eqnarray*} g^{(4)}(0)/4!=Tr(G_4)-\frac{1}{2}Tr(G_2^2)-Tr(G_1G_3)+Tr(G_1^2G_2)-\frac{1}{4}Tr(G_4)
\label{cc4.26}.
\end{eqnarray*}
By the definition of $G_k$,
\begin{eqnarray*}Tr(G_1)=Tr(M_1M(M-\lambda I_{2n})^{-1})=Tr \Big(J\int_0^T\hat{D}(s)ds \cdot M(M-\lambda I_{2n})^{-1} \Big)  \label{cc4.27},
\end{eqnarray*}
\begin{eqnarray*}Tr(G_2)=Tr(M_2M(M-\lambda I_{2n})^{-1})=Tr \Big(J\int_0^T\hat{D}(s)J\int_0^s\hat{D}(\sigma)d\sigma ds \cdot M(M-\lambda I_{2n})^{-1} \Big)
\label{cc4.28}.
\end{eqnarray*}
Generally, \begin{eqnarray*}
Tr(G_k^m)=Tr\Big(\Big[\int_0^TJ\hat{D}(t_1)\int_0^{t_1}J\hat{D}(t_2)\cdots\int_0^{t_{k-1}}J\hat{D}(t_k)dt_k\cdots dt_2 dt_1\cdot
M(M-\lambda I_{2n})^{-1}\Big]^m\Big)\label{cc4.28},  \end{eqnarray*} and
$Tr(G_{j_1}\cdots G_{j_k}) $  could be given similarly.

\subsection{The proof of the Trace formula for Hamiltonian system }\label{sec 3}
In this subsection,  we will give proof of Theorem \ref{thm1.1}.
\vskip2mm\noindent\emph{Proof of Theorem \ref{thm1.1}.}
 We begin with the formula
\begin{eqnarray*}  \det\((A-B-\alpha D-\nu J)(A+P_0)^{-1}\)=C(S)e^{-n\nu T}\det( S\gamma_\alpha(T)- e^{\nu T} I_{2n}). \end{eqnarray*}
On the one hand, by Theorem \ref{thm3.3b},
\begin{eqnarray*}
\det((A-B-\alpha D-\nu J)(A+P_0)^{-1})=\det((A-B-\nu J)(A+P_0)^{-1}) \exp\Big\{\sum\limits_{m=1}^\infty b_m \alpha^m\Big\},
\end{eqnarray*}
where $b_m=-{{1}\over m}{Tr\big((D(A-B-\nu J)^{-1})^m\big)}$. On the other hand, by Theorem \ref{thm3.7b},
\begin{eqnarray*}
C(S)e^{-n\nu T}\det( S\gamma_\alpha(T)- e^{\nu T} I_{2n})=C(S)e^{-n\nu T}\det(S\gamma(T)-e^{\nu T} I_{2n})\cdot \exp\(\sum\limits_{n=1}^\infty c_m \alpha^m\),
\end{eqnarray*}
where
\begin{eqnarray*}
c_m=\sum_{k=1}^m \frac{(-1)^{k+1}}{k}\Big(\sum_{j_1+\cdots+j_k=m}Tr(G_{j_1}\cdots
G_{j_k})\Big).
\end{eqnarray*}
Since \begin{eqnarray*}\det((A-B-\nu J)(A+P_0)^{-1})=C(S)e^{-n\nu T}\det(S\gamma(T)-e^{\nu T} I_{2n}),\end{eqnarray*} we have that $b_m=c_m$, that is
\begin{eqnarray*}\label{cc3.43}
-{{1}\over m}{Tr\big((D(A-B-\nu J)^{-1})^m\big)}=\sum_{k=1}^m \frac{(-1)^{k+1}}{k}\(\sum_{j_1+,...,+j_k=m}Tr(G_{j_1}\cdots
G_{j_k})\).
\end{eqnarray*}
It follows that,
\begin{eqnarray}\label{cc3.44a}
Tr \[\(D(A-B-\nu J)^{-1}\)^{m}\]=m\sum_{k=1}^m \frac{(-1)^{k}}{k}\[\sum\limits_{j_1+\cdots+j_k=m}Tr(G_{j_1}\cdots
G_{j_k})\].
\end{eqnarray}
The  proof is completed. \hfill$\Box$
\vskip2mm

By the equation (\ref{cc3.44a}), theoretically, we can calculate the
trace of $\big[D(A-B-\nu J)^{-1}\big]^{m}$, at least, numerically by
computer. Notice that the right hand side of (\ref{cc3.44a}) is a kind of multiple
integral, and it is a little complicated. Next, we will write down the
first four terms.
\begin{prop}
\begin{eqnarray*}
Tr (D(A-B-\nu J)^{-1})=-Tr(G_1).
\end{eqnarray*}
\begin{eqnarray*}
Tr \Big(\big[D(A-B-\nu J)^{-1}\big]^2\Big)=Tr(G_1^2)-2Tr(G_2).
\end{eqnarray*}
\begin{eqnarray*}
Tr \Big(\big[D(A-B-\nu J)^{-1}\big]^3\Big)=-3Tr(G_3)+3Tr(G_1G_2)-Tr(G_1^3).
\end{eqnarray*}
\begin{eqnarray*}
Tr \Big(\big[D(A-B-\nu J)^{-1}\big]^4\Big)=-4Tr(G_4)+2Tr(G_2^2)+4Tr(G_1G_3)-4Tr(G_1^2G_2)+Tr(G_1^4).
\end{eqnarray*}
\end{prop}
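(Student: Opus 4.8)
The plan is to specialize the master trace identity (\ref{0.0.0}) of Theorem \ref{thm1.1} to the values $m=1,2,3,4$; no new analytic ingredient is needed, since the content is entirely the combinatorial bookkeeping of the compositions $j_1+\cdots+j_k=m$ together with the cyclic invariance of the trace.

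First I would observe that for fixed $m$ and $k$, the ordered tuples $(j_1,\dots,j_k)$ of positive integers summing to $m$ are partitioned into cyclic-rotation orbits, and that $Tr(G_{j_1}\cdots G_{j_k})$ depends only on the orbit of $(j_1,\dots,j_k)$. Hence an orbit of size $\ell$ contributes $\ell$ equal trace terms to the inner sum, and the factor $1/k$ in (\ref{0.0.0}) cancels this multiplicity only partially — precisely when $\ell<k$, i.e.\ for non-primitive tuples such as $(2,2)$ at $m=4,k=2$ or $(1,1,\dots,1)$ for every $m$.

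Then I would run through the four cases. For $m=1$ only $k=1$ occurs, giving $-Tr(G_1)$. For $m=2$ the term $k=1$ gives $-Tr(G_2)$ and $k=2$ (only tuple $(1,1)$) gives $\tfrac12 Tr(G_1^2)$; multiplying by $m=2$ yields $Tr(G_1^2)-2Tr(G_2)$. For $m=3$: $k=1$ gives $-Tr(G_3)$; $k=2$ has the cyclically equivalent tuples $(1,2),(2,1)$, contributing $\tfrac12\cdot 2\,Tr(G_1G_2)=Tr(G_1G_2)$; $k=3$ gives $-\tfrac13 Tr(G_1^3)$; multiplication by $3$ gives $-3Tr(G_3)+3Tr(G_1G_2)-Tr(G_1^3)$. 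For $m=4$: $k=1$ gives $-Tr(G_4)$; $k=2$ has the tuples $(1,3),(3,1),(2,2)$, contributing $\tfrac12\bigl(2Tr(G_1G_3)+Tr(G_2^2)\bigr)$; $k=3$ has the three cyclic rotations of $(1,1,2)$, contributing $-\tfrac13\cdot 3\,Tr(G_1^2G_2)=-Tr(G_1^2G_2)$; $k=4$ gives $\tfrac14 Tr(G_1^4)$; multiplying by $4$ gives $-4Tr(G_4)+4Tr(G_1G_3)+2Tr(G_2^2)-4Tr(G_1^2G_2)+Tr(G_1^4)$.

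There is no real obstacle here; the only point demanding a little care is not to over-count the non-primitive cyclic orbits, since it is precisely the incomplete cancellation of $1/k$ for those orbits that produces the genuine coefficients $\tfrac12 Tr(G_2^2)$, $-Tr(G_1^2G_2)$ and $Tr(G_1^4)$ rather than $Tr(G_2^2)$, etc. Collecting the four resulting displays then finishes the proof.
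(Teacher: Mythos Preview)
Your proposal is correct and takes essentially the same approach as the paper: the proposition is obtained by specializing the master identity (\ref{0.0.0}) (equivalently (\ref{cc3.44a})) to $m=1,2,3,4$, and the paper has in fact already listed the relevant coefficients $c_m=g^{(m)}(0)/m!$ just before Theorem \ref{thm3.7b}, so that the proposition is simply $Tr(F^m)=-m\,c_m$. Your enumeration of compositions and use of cyclic invariance of the trace reproduces exactly that computation; the only minor slip is in your closing remark, where you list $-Tr(G_1^2G_2)$ among the ``non-primitive'' cases, whereas the orbit of $(1,1,2)$ has full length $3=k$ and the $1/k$ does cancel completely there --- your actual calculation of that term is nevertheless right.
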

Moreover, for the first two terms, we can write it more precisely.
\begin{eqnarray}\label{eq3.49ab}
Tr[D(A-B-\nu J)^{-1}]=-Tr\Big(J\int_0^T\gamma_{0}^T(t)D(t)
\gamma_{0}(t)dt\cdot M(M-e^{\nu T} I_{2n})^{-1}\Big),
\end{eqnarray}
and
\begin{eqnarray}
&&Tr\big([D(A-B-\nu J)^{-1}]^2\big)\nonumber\\&&=-2Tr \Big(J\int_0^T\gamma_{0}^T(t)D(t) \gamma_{0}(t)J\int_0^s\gamma_{0}^T(s)D(s) \gamma_{0}(s)ds dt \cdot M(M-e^{\nu T} I_{2n})^{-1} \Big)\nonumber\\
                                &&\ \ \ +Tr\Big(\Big[J\int_0^T\gamma_{0}^T(t)D(t) \gamma_{0}(t)dtM(M-e^{\nu T} I_{2n})^{-1}\Big]^2\Big), \label{eq2.14}
\end{eqnarray}
which are (\ref{ht1}) and (\ref{ht2}) in Corollary \ref{cor1.1}.

It is worth to be pointed out that, on the left hand side of (\ref{eq3.49ab}),
the trace is the conditional trace, and on the right hand side of
it, it is the trace of matrix on $\mathbb C^{2n}$. Next, we will consider some special cases.
\begin{prop}
Assume that $B(t)\equiv B_0$ is a constant matrix and $S=\pm I_{2n}$, then,
\begin{eqnarray*}
Tr(D(A-B-\nu J)^{-1})=-Tr\(J\int_0^T D(t)dt\cdot M(M-e^{\nu T}
I_{2n})^{-1}\). \label{adpro4.1}
\end{eqnarray*}
\end{prop}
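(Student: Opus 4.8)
The plan is to obtain this special case directly from the general first-order trace formula (\ref{eq3.49ab}) (equivalently (\ref{ht1}) of Corollary \ref{cor1.1}), by exploiting the extra structure that a constant $B_0$ imposes on the fundamental solution. When $B(t)\equiv B_0$, the solution of $\dot\gamma_0(t)=JB_0\gamma_0(t)$, $\gamma_0(0)=I_{2n}$, is the one-parameter subgroup $\gamma_0(t)=e^{tJB_0}\in\Sp(2n)$, so that $\gamma_0(t)\gamma_0(s)=\gamma_0(s)\gamma_0(t)$ for all $s,t\in\mathbb R$; in particular $M=S\gamma_0(T)=\pm\gamma_0(T)$ commutes with every $\gamma_0(t)$. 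It is exactly here that the hypothesis $S=\pm I_{2n}$ enters: for a general symplectic orthogonal $S$ the matrix $S\gamma_0(T)$ need not commute with $\gamma_0(t)$. Under the standing assumption that $A-B-\nu J$ is invertible, the Hill-type formula (\ref{thf1.1.2}) shows that $M-e^{\nu T}I_{2n}$ is invertible, and since $\gamma_0(t)$ commutes with $M$ it also commutes with $(M-e^{\nu T}I_{2n})^{-1}$.

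Granting this, the right-hand side of (\ref{eq3.49ab}) equals $-Tr\big(J\int_0^T D(t)\,dt\cdot M(M-e^{\nu T}I_{2n})^{-1}\big)$ as soon as we know the pointwise identity $Tr\big(J\gamma_0^T(t)D(t)\gamma_0(t)\cdot M(M-e^{\nu T}I_{2n})^{-1}\big)=Tr\big(JD(t)\cdot M(M-e^{\nu T}I_{2n})^{-1}\big)$ for every $t$, since then we may integrate over $[0,T]$ and pull the trace and the constant matrix $M(M-e^{\nu T}I_{2n})^{-1}$ through the integral by linearity. To establish the pointwise identity I would use cyclic invariance of the trace to move $\gamma_0(t)$ to the front, commute it past $M(M-e^{\nu T}I_{2n})^{-1}$ to obtain $Tr\big(M(M-e^{\nu T}I_{2n})^{-1}\,\gamma_0(t)J\gamma_0^T(t)\,D(t)\big)$, then apply the symplectic identity $\gamma_0(t)J\gamma_0^T(t)=J$ — the transposed form of $\gamma_0^T(t)J\gamma_0(t)=J$, valid because $\Sp(2n)$ is invariant under transposition — and finally cyclicity once more.

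I do not expect a genuine obstacle: the argument is a short chain of trace manipulations performed under the integral sign. The two points that need care are (i) that $M$, hence $(M-e^{\nu T}I_{2n})^{-1}$, commutes with each $\gamma_0(t)$ — this is precisely where the two hypotheses ``$B$ constant'' and ``$S=\pm I_{2n}$'' are used together, and the statement would fail without them — and (ii) invoking the correct form $\gamma_0(t)J\gamma_0^T(t)=J$ of the symplectic relation rather than $\gamma_0^T(t)J\gamma_0(t)=J$. With these in hand the computation is routine.
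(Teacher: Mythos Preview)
Your proposal is correct and essentially identical to the paper's proof: both start from the general formula (\ref{eq3.49ab}), observe that $\gamma_0(t)=e^{tJB_0}$ commutes with $M=\pm\gamma_0(T)$ and hence with $M(M-e^{\nu T}I_{2n})^{-1}$, and then combine this commutation with cyclicity of the trace and the symplectic identity to strip off the conjugation by $\gamma_0(t)$. The only cosmetic difference is that the paper writes $J\gamma_0^T(t)=\gamma_0(t)^{-1}J=e^{-JB_0t}J$ first and then cycles, whereas you cycle first and then invoke $\gamma_0(t)J\gamma_0^T(t)=J$; these are equivalent rearrangements of the same computation.
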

\begin{proof}
Since $B(t)\equiv B_0$, obviously $\gamma_0(t)=e^{JB_0t}$,  thus
$\gamma_0(t)$ commutes with $\gamma_0(T)$ and also commutes with $M$
since $S=\pm I_{2n}$. Easy computation  shows that
\begin{eqnarray*}
Tr\Big(J\int_0^T \hat{D}(t)dt\cdot M(M-e^{\nu T}
I_{2n})^{-1}\Big)&=&Tr\Big(\int_0^T e^{-JB_0t}JD(t)e^{JB_0t}dt\cdot
M(M-e^{\nu T} I_{2n})^{-1}\Big) \nonumber \\ &=&Tr\Big(\int_0^T
e^{-JB_0t}JD(t) M(M-e^{\nu T} I_{2n})^{-1}e^{JB_0t}dt\Big)\nonumber
\\ &=&Tr\Big(J\int_0^T D(t)dt\cdot M(M-e^{\nu T}
I_{2n})^{-1}\Big). \label{adpro4.2}
\end{eqnarray*}
By (\ref{eq3.49ab}), the proposition is proved.
\end{proof}

The following proposition considers the case that $MJ=JM$, $M^T=M$.
\begin{prop}\label{prop5b.21} If $MJ=JM$, $M^T=M$, then
 \begin{eqnarray*} Tr\Big(\big(D(A-\nu J-B)^{-1}\big)^{2}\Big)\nonumber &=&Tr\[\(J\int_0^T\hat{D}(s)ds \cdot M(M-e^{\nu T} I_{2n})^{-1}\)^2\]\\&&-Tr\[\(J\int_0^T\hat{D}(s)ds\)^2M(M-e^{\nu T} I_{2n})^{-1}\]. \label{5b.5}  \end{eqnarray*}
\end{prop}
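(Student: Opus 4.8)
The plan is to deduce the identity directly from the second-order trace formula (\ref{eq2.14}) (equivalently (\ref{ht2}) of Corollary \ref{cor1.1}) by simplifying its first summand under the extra hypotheses $MJ=JM$ and $M^T=M$. I keep the notation $\hat D(t)=\gamma_0^T(t)D(t)\gamma_0(t)$ from the statement of Theorem \ref{thm1.1}, set $M_1=\int_0^TJ\hat D(t)\,dt$ and $M_2=\int_0^TJ\hat D(t_1)\int_0^{t_1}J\hat D(t_2)\,dt_2\,dt_1$ as there, and abbreviate $N=M(M-e^{\nu T}I_{2n})^{-1}$. Then (\ref{eq2.14}) reads
\begin{equation*}
Tr\big([D(A-\nu J-B)^{-1}]^2\big)=-2\,Tr(M_2N)+Tr\big((M_1N)^2\big).
\end{equation*}
Since the second term already coincides with the first term on the right-hand side of the asserted formula, it suffices to prove the single identity $2\,Tr(M_2N)=Tr(M_1^2N)$.

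First I would expand $M_1^2=\big(\int_0^TJ\hat D(u)\,du\big)\big(\int_0^TJ\hat D(v)\,dv\big)$ and split the square $[0,T]^2$ along its diagonal, obtaining $M_1^2=M_2+\widetilde M_2$, where $\widetilde M_2:=\int_0^T\big(\int_0^{t_1}J\hat D(t_2)\,dt_2\big)J\hat D(t_1)\,dt_1$ is $M_2$ with the two matrix factors on the time-ordered simplex interchanged. Thus the required identity becomes $Tr(M_2N)=Tr(\widetilde M_2N)$.

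The core of the argument is to pass from $M_2$ to $\widetilde M_2$ by transposition. Using that $\hat D(t)$ is symmetric (since $D(t)$ is) together with $J^T=-J$, a short computation gives $\widetilde M_2=JM_2^TJ^{-1}$, the two sign changes from $J^T=-J$ cancelling. Next, $M^T=M$ forces $N^T=N$ since $M$ commutes with $(M-e^{\nu T}I_{2n})^{-1}$, and $MJ=JM$ forces $JN=NJ$ since $N$ is a rational function of $M$. Combining these with cyclicity of the trace,
\begin{equation*}
Tr(M_2N)=Tr\big((M_2N)^T\big)=Tr(N^TM_2^T)=Tr(M_2^TN)=Tr\big(J^{-1}\widetilde M_2JN\big)=Tr\big(\widetilde M_2\,JNJ^{-1}\big)=Tr(\widetilde M_2N).
\end{equation*}
Hence $Tr(M_1^2N)=Tr(M_2N)+Tr(\widetilde M_2N)=2\,Tr(M_2N)$, and substituting $-2\,Tr(M_2N)=-Tr(M_1^2N)$ into (\ref{eq2.14}) gives precisely the claimed expression. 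The one step that needs genuine care --- and where I expect the main difficulty to lie --- is the bookkeeping behind $\widetilde M_2=JM_2^TJ^{-1}$: one has to reconcile the order reversal produced by transposing a product with the order reversal already built into the definition of $\widetilde M_2$, and confirm that the $J$-conjugation restores exactly the integrand of $\widetilde M_2$. Everything else is routine manipulation with the trace and the two commutation relations in the hypothesis, and no analytic input beyond Corollary \ref{cor1.1} is needed.
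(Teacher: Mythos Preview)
Your argument is correct and essentially follows the paper's own route. Both proofs start from (\ref{eq2.14}) and reduce the statement to $2\,Tr(M_2N)=Tr(M_1^2N)$, then use the hypotheses $MJ=JM$ and $M^T=M$ (hence $N^T=N$ and $JN=NJ$) together with the transpose trick to show $Tr(M_2N)=Tr(-JM_2^TJN)$. The only difference is how the identity $M_1^2=M_2-JM_2^TJ$ is obtained: the paper invokes Proposition~\ref{prop3.1} (the symplectic relation $\sum_jM_j^TJM_{k-j}=0$ for $k=1,2$), whereas you obtain the equivalent decomposition $M_1^2=M_2+\widetilde M_2$ with $\widetilde M_2=JM_2^TJ^{-1}=-JM_2^TJ$ by a direct splitting of the double integral along the diagonal. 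Your bookkeeping for $\widetilde M_2=JM_2^TJ^{-1}$ is fine: $(J\hat D(t_1)J\hat D(t_2))^T=\hat D(t_2)J\hat D(t_1)J$ since $\hat D$ is symmetric and $J^T=-J$, and conjugating by $J$ restores the leading $J$'s, giving exactly the integrand of $\widetilde M_2$.
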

\begin{proof}
Suppose $MJ=JM$, $M=M^T$ then \begin{eqnarray*}Tr(M_2M(M-\omega
I_{2n})^{-1})&=&Tr(-JM_2JM(M-e^{\nu T} I_{2n})^{-1}) \\ &=& Tr(-M(M-e^{\nu T} I_{2n})^{-1}JM^T_2J) \\ &=& Tr(-JM^T_2JM(M-e^{\nu T} I_{2n})^{-1}).\end{eqnarray*}
 By Proposition
\ref{prop3.1} \begin{eqnarray*} M_1^TJ+JM_1=0,  \label{5b.4.00}  \end{eqnarray*} and \begin{eqnarray*}
-JM_2^TJ+M_2=JM_1^TJM_1. \label{5b.4.01}  \end{eqnarray*} Thus
\begin{eqnarray*}2Tr(G_2)&=& Tr\(JM_1^TJM_1M(M-e^{\nu T} I_{2n})^{-1}\) \nonumber \\ &=& Tr\(M_1^2M(M-e^{\nu T} I_{2n})^{-1}\)\nonumber\\&=&Tr\[\(J\int_0^T\hat{D}(s)ds\)^2M(M-e^{\nu T} I_{2n})^{-1}\] \label{5b.4.1}.
\end{eqnarray*}
By the formula (\ref{eq2.14}), the proposition is proved.
\end{proof}

Some easy computation shows that, if moreover $M$ commutes with $J\int_0^T\hat{D}(s)ds $, then
 \begin{eqnarray} Tr\((D(A-\nu J-B)^{-1})^{2}\)&=&Tr\[\(J\int_0^T\hat{D}(s)ds\)^2M(M-e^{\nu T} I_{2n})^{-1}(M(M-e^{\nu T} I_{2n})^{-1}-I_{2n})\]\nonumber\\
 &=&e^{\nu T} Tr\[\(J\int_0^T\hat{D}(s)ds\)^2M(M-e^{\nu T} I_{2n})^{-2}\]. \label{5b.5.ad}  \end{eqnarray}
More specially, we have the following corollary.
\begin{cor}\label{cor5b.11}
If $M=\pm I_{2n}$,  then
\begin{eqnarray} Tr\big((D(A-\nu J-B)^{-1})^{2}\big)=\frac{\pm e^{\nu T}}{(1\mp e^{\nu T})^2} Tr\[\(J\int_0^T\gamma_0^T(s)D(s)\gamma_0(s)ds\)^2\]. \label{5b.5.1a}  \end{eqnarray}
 Especially in the case $B=0$, $\hat{D}=D$ and $S=\pm I_{2n}$,
\begin{eqnarray} Tr\big((D(A-\nu J)^{-1})^{2}\big)=\frac{\pm e^{\nu T}}{(1\mp e^{\nu T})^2} Tr\[\(J\int_0^TD(s)ds\)^2\]. \label{5b.6.1c}  \end{eqnarray}
\end{cor}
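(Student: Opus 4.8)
The plan is to read off (\ref{5b.5.1a}) as a plain specialization of the identity (\ref{5b.5.ad}), and then to get (\ref{5b.6.1c}) by the further choice $B=0$. The first step is to verify that the hypotheses behind (\ref{5b.5.ad}) hold when $M=\pm I_{2n}$: such an $M$ is symmetric and commutes with $J$, so Proposition \ref{prop5b.21} is applicable; moreover a scalar multiple of $I_{2n}$ commutes with every element of $\mathcal{M}(2n,\mathbb{R})$, in particular with $J\int_0^T\hat{D}(s)\,ds$, which is precisely the extra hypothesis used to upgrade Proposition \ref{prop5b.21} to (\ref{5b.5.ad}). Therefore
\begin{eqnarray*}
Tr\((D(A-\nu J-B)^{-1})^{2}\)&=&e^{\nu T}\,Tr\[\(J\int_0^T\hat{D}(s)\,ds\)^2 M(M-e^{\nu T} I_{2n})^{-2}\].
\end{eqnarray*}

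The second step is a short sign computation. For $M=\pm I_{2n}$ one has $M-e^{\nu T}I_{2n}=(\pm1-e^{\nu T})I_{2n}=\pm(1\mp e^{\nu T})I_{2n}$, hence $M(M-e^{\nu T}I_{2n})^{-2}=\pm(1\mp e^{\nu T})^{-2}I_{2n}$, the single surviving sign coming from the numerator $M$ since the denominator is squared. Plugging this scalar matrix into the display above, and recalling $\hat{D}(s)=\gamma_0^T(s)D(s)\gamma_0(s)$, produces (\ref{5b.5.1a}) at once, with the convention matching: $M=I_{2n}$ gives $(1-e^{\nu T})^2$ in the denominator and a leading $+$, while $M=-I_{2n}$ gives $(1+e^{\nu T})^2$ and a leading $-$.

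For the last assertion, when $B=0$ the fundamental solution of $\dot{\gamma}_0=JB\gamma_0$ with $\gamma_0(0)=I_{2n}$ is the constant path $\gamma_0(t)\equiv I_{2n}$; thus $\hat{D}=\gamma_0^TD\gamma_0=D$ and $M=S\gamma_0(T)=S=\pm I_{2n}$, so (\ref{5b.5.1a}) collapses to (\ref{5b.6.1c}). I do not expect any real obstacle here: the statement is a corollary in the literal sense, a direct substitution into (\ref{5b.5.ad}). The one place asking for a little care is tracking the signs through $M(M-e^{\nu T}I_{2n})^{-2}=\pm(1\mp e^{\nu T})^{-2}I_{2n}$, so that the $\pm$ and $\mp$ in the final formula are attached consistently.
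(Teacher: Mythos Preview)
Your proof is correct and follows essentially the same approach as the paper: the corollary is obtained by direct substitution of $M=\pm I_{2n}$ into (\ref{5b.5.ad}), after noting that such $M$ satisfies all the commutativity hypotheses, and then specializing further to $B=0$ for the second formula. Your sign tracking through $M(M-e^{\nu T}I_{2n})^{-2}=\pm(1\mp e^{\nu T})^{-2}I_{2n}$ is accurate.
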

Notice that (\ref{5b.5.1a}) is just the formula (\ref{eq5b.5.1.1a}) in Corollary \ref{cor1.1}.

\begin{exam}
In the case $D(t)=I_{2n}$, then
$\hat{D}(t)=\gamma_0^T(t)\gamma_0(t)$, so we have
\begin{eqnarray*}
Tr((A-B-\nu J)^{-1})=
Tr\(J\int_0^T\gamma_0^T(s)\gamma_0(s)ds\cdot M(M-e^{\nu T}
I_{2n})^{-1}\),
\end{eqnarray*}
and for $k\geq2$,
\begin{eqnarray*} Tr \[\((A-B-\nu
J)^{-1}\)^k\]=\sum\limits_{j=-\infty}^\infty
\frac{1}{\lambda_j^k},
\end{eqnarray*}
where $\lambda_j$ are eigenvalues of $A-B-\nu J$. From the trace
formula, we have
\begin{eqnarray}\label{cc3.64}
\sum\limits_{j=-\infty}^\infty \frac{1}{\lambda_j^m}=m\sum_{k=1}^m
\frac{(-1)^{k}}{k}\Big[\sum\limits_{j_1+\cdots+j_k=m}Tr(G_{j_1}\cdots
G_{j_k})\Big], \,\ \forall m\geq2.
\end{eqnarray}
\end{exam}
The equation (\ref{cc3.64}) has its own interests. In fact, we can deduce some interesting equalities from this.
\begin{exam}
Let $B=0$, $D=I_{2}$,  $S=I_{2}$ and $T=1$. Then, for each fixed $\alpha\in \mathbb C$,  it is easy to check that the eigenvalues for $A-\nu J-\alpha$
are $\{2k\pi\pm\sqrt{-1}\nu-\alpha|k\in \mathbb{Z}\}$.   For $\nu\not\in 2\pi\sqrt{-1}\mathbb Z-\alpha$,
$A-\nu J-\alpha$ is invertible, and the left hand side of (\ref{cc3.64}) is \begin{eqnarray*} Tr((A-\nu J-\alpha)^{-m})= \sum\limits_{k\in\mathbb Z}\frac{1}{(2k\pi+\sqrt{-1}\nu-\alpha)^m}+\sum\limits_{k\in\mathbb Z}\frac{1}{(2k\pi-\sqrt{-1}\nu-\alpha)^m},\,\ \forall m\in \mathbb{N}, \label{iden1} \end{eqnarray*}
where for $m=1$, the infinite sum in the right side is understand by $\lim\limits_{\beta\rightarrow\infty}\sum_{|k|\leq\beta}$.
For the right hand side, the traces $Tr(G_{j_1}\cdots G_{j_k})$ can be calculated directly. We only list the first 3 equalities. For $m=1$,
direct computation shows that $Tr(G_1)=\frac{2e^\nu\sin\alpha}{(\cos\alpha-e^\nu)^2+\sin^2\alpha}$, thus we have
\begin{eqnarray*}  \sum\limits_{k\in\mathbb Z}\frac{1}{2k\pi+\sqrt{-1}\nu-\alpha}+\sum\limits_{k\in\mathbb Z}\frac{1}{2k\pi-\sqrt{-1}\nu-\alpha}=\frac{-2e^\nu\sin\alpha}{(\cos\alpha-e^\nu)^2+\sin^2\alpha}. \label{iden1.0}   \end{eqnarray*}
For $m=2$, by (\ref{5b.5.ad}), direct computation shows that
\begin{eqnarray*} Tr((A-\nu J-\alpha)^{-2})&=&\frac{-2e^\nu(\cos\alpha(1+e^{2\nu})-2e^\nu)}{e^\nu\cos\alpha(4e^{\nu}\cos\alpha -4e^{2\nu}-4)+(1+e^{2\nu})^2}\nonumber\\
&=&\frac{1-\cosh\nu\cos\alpha}{(\cos\alpha-\cosh\nu)^2},\label{iden1.2} \end{eqnarray*}
thus we have the identity
 \bea  \sum\limits_{k\in\mathbb Z}\frac{1}{(2k\pi+\sqrt{-1}\nu-\alpha)^2}+\sum\limits_{k\in\mathbb Z}\frac{1}{(2k\pi-\sqrt{-1}\nu-\alpha)^2}=\frac{1-\cosh\nu\cos\alpha}{(\cos\alpha-\cosh\nu)^2}. \label{iden1.1}   \eea
 Especially in the case $\alpha=0$,
$$Tr((A-\nu J)^{-2})=2\sum\limits_{k\in\mathbb Z}\frac{1}{(2k\pi+\sqrt{-1}\nu-\alpha)^2},$$  and 
 the right hand side of (\ref{iden1.1}) is reduced to $\frac{-2e^\nu}{(1-e^\nu)^2}$, thus  we have the identity
    \begin{eqnarray*}\label{eq3.3.1}
    \sum\limits_{k\in \mathbb Z}\frac{1}{(2k\pi+\sqrt{-1}\nu)^2}=\frac{1+\cos \sqrt{-1}\nu}{2\sin ^2 \sqrt{-1}\nu}.
    \end{eqnarray*}
 Similarly, for $m=3$, we get
 \bea  \sum\limits_{k\in\mathbb Z}\frac{1}{(2k\pi+\sqrt{-1}\nu-\alpha)^3}+\sum\limits_{k\in\mathbb Z}\frac{1}{(2k\pi-\sqrt{-1}\nu-\alpha)^3}\nonumber \\=\frac{1/2\sin\alpha(\cosh^2\nu+\cosh\nu\cos\alpha-2)}{\cosh^3\nu-3\cosh^2\nu\cos\alpha+3\cosh\nu\cos^2\alpha-\cos^3\alpha}. \label{iden1.3} \nonumber  \eea

\end{exam}
The equality in the above example can be  deduced by using techniques in complex analysis. However, the above example is only a kind of easiest case. If we take a non-constant path $B$, then  the formula will be far from  trivial.
\begin{rem}
Recall that, in \cite{APS}, Atiyah, Patodi and Singer defined a kind of zeta function for self-adjoint elliptic differential operator $\mathcal{A}$(the operator may be not positive). Let $\{\lambda\}$ be the eigenvalues for $\mathcal{A}$, then
\begin{eqnarray*}
\eta_{\mathcal{A}}(s)=\sum\limits_{\lambda\not=0}(\mathrm{sign} \lambda)|\lambda|^{-s},
\end{eqnarray*}
for $Re(s)$ large, and it can be extended
meromorphically to the whole $s$-plane.
Now, for the differential operator $A$, if we can take some proper $B$, $D$ and $S$ in our framework, such that $\lambda$ are the eigenvalues of
 $\mathcal{A}=D^{-1}(A-B-\nu J)$ is real, then by the trace formula, we can obtain the values for $\eta_\mathcal{A}(s)$ at odd integers.
\end{rem}

\section{Hill-type formula and  Trace formula for Sturm-Liouville systems}

In the study of  $\bar{S}$-periodic orbits in  Lagrangian systems, it is natural to consider the standard Sturm systems:
\bea -(P\dot{y}+Qy)^\cdot+Q^T\dot{y}+Ry=0, \quad
y(0)=\bar{S}y(T),\quad \dot{y}(0)=\bar{S}\dot{y}(T), \label{lag.1} \eea
where $\bar S$ is an orthogonal matrix on $\mathbb R^n$.
We assume $P(t)$ is invertible for any $t$, which is a more general condition than the usual Legendre convexity assumptions.
Denote $\hat{Q}=P^{-1}(Q^T-Q-\dot{P})$, $\hat{R}=P^{-1}(R-\dot{Q})$. Obviously, the system (\ref{lag.1}) is equivalent to
\bea
- \ddot{ z}(t)+\hat{Q}(t)\dot{z}(t)+\hat{R}(t)z(t)=0\label{1.1.1a1},\\
                                       z(0)=\bS z(T),   \quad                                  \dot{z}(0)&=&\bS\dot{z}(T).
                                       \label{1.1.3a1}
\eea

Please note that if $\hat{z}(t)$ satisfies the equation (\ref{1.1.1a1}) with $\hat{z}(0)=e^{-\nu T}\bS \hat{z}(T)$, $\dot{\hat{z}}(0)=e^{-\nu T}\bS \dot{\hat{z}}(T)$, then $z(t)=e^{\nu t}\hat{z}(t)$
satisfies the following   second order ODE
\begin{eqnarray}
- \Big(\d+\nu\Big)^2{ z}(t)+\hat{Q}(t)\Big(\d+\nu\Big){z}(t)+\hat{R}(t)z(t)&=&0,\label{1.1.1a} \\
z(0)=\bS z(T), \,\
\dot{z}(0)&=&\bS \dot{z}(T).\label{1.1.3a}
\end{eqnarray}
Let $y(t)=\dot{z}(t)+\nu z(t)$, then we can write (\ref{1.1.1a}-\ref{1.1.3a}) as the following first order ODE
\begin{eqnarray}
\left(\begin{array}{c}\dot{y}(t)\\ \dot{z}(t)\end{array}\right)&=&\left(\begin{array}{cc}\hat{Q}(t)-\nu & \hat{R}(t) \\I_n  & -\nu \end{array}\right)\left(\begin{array}{c} y(t)\\ z(t)\end{array}\right),\label{eq5.19}\\
\left(\begin{array}{c} y(0)\\ z(0)\end{array}\right)&=&\left(\begin{array}{cc} \bS & 0_n \\ 0_n & \bS \end{array}\right)\left(\begin{array}{c} y(T)\\ z(T)\end{array}\right).\label{eq5.20}
\end{eqnarray}
For simplicity, we denote \begin{eqnarray*}
\hat{B}(t)=\left(\begin{array}{cc}I_n& 0_n \\-\hat{Q}(t)  & -\hat{R}(t) \end{array}\right),\quad \bar{S}_d=\left(\begin{array}{cc} \bar{S} & 0_n \\ 0_n & \bar{S} \end{array}\right), \quad\text{and}\quad x(t)=\left(\begin{array}{c}y(t)\\ z(t)\end{array}\right).
\end{eqnarray*}
 The system (\ref{eq5.19}-\ref{eq5.20}) can be written as the following Hamiltonian system,
\bea  \dot{x}(t) &=& J(\hat{B}(t)+\nu J)x(t), \label{5.b1} \\ x(0)&=&\bar{S}_dx(T). \label{5.b2} \eea
It follows that $z(t)$  is solution of (\ref{1.1.1a}-\ref{1.1.3a}) if and only if $x(t)=\left(\begin{array}{c}\dot{z}(t)+\nu z(t)\\ z(t)\end{array}\right)$ is solution of (\ref{5.b1}-\ref{5.b2}).
Therefore, we have
\bea \dim\ker \(-\(\d+\nu\)^2+\hat{Q}(t)\(\d+\nu\)+\hat{R}(t)\)=\dim\ker \(-J\d-\hat{B}-\nu J \). \label{5.ke}  \eea

Now, we will give the Hill-type formula for indefinite Lagrangian system.  For $N\in\mathbb{N}$, let
$$
\hat{W}_N=\bigoplus\limits_{\nu\in\sigma(\d),|\nu|\leq N}\ker\(\d-\nu I_n\),
$$
and denote by $\hat{P}_N$ the orthogonal projection onto $\hat{W}_N$. Then $Q(t)\left(\d+\nu\right)^{-1}$ is a Hilbert-Schmidt operator with the trace finite condition with respect to $\{\hat{P}_N\}$.  We define the conditional Fredholm determinant with respect to $\hat{P}_N$, $${\det}\[\(-\(\d+\nu I_n\)^2+Q(t)\(\d+\nu I_n\)+R(t)\)\cdot\(-\(\d+\nu I_n\)^2\)^{-1}\].$$

At first, we recall   Hill-type formula for linear Hamiltonian systems \cite{HW}. For $B\in C([0,T];\mathcal{M}(2n,\mathbb{C}))$, which is not have to be real symmetric, we have that
\begin{eqnarray}
{\det}\(\(-J{d\over dt}-B-\nu J\)\(-J\d+P_0\)^{-1}\)=C(S)e^{-\frac{T}{2}\int_0^T Tr (JB(t)) dt}e^{-n\nu T}\det\( S\gamma(T)- e^{\nu T} I_{2n}\),\label{hg}
\end{eqnarray}
where $\gamma$ is the fundamental solution corresponding to $B$.
 We firstly prove  the following proposition.
\begin{prop}\label{thm5.1a} For $\nu\in\mathbb{C}$ such that $\d+\nu I_n$ is invertible, we have
\begin{eqnarray}\label{eq5.17}
&&{\det}\[\(-J{d\over dt}-\hat{B}-\nu J\)\(-J{d\over dt}-\nu J \)^{-1}\]\nonumber\\&&\ \ = {\det}\[\(-\(\d+\nu I_n\)^2+\hat{Q}(t)\(\d+\nu I_n\)+\hat{R}(t)\)\(-\(\d+\nu I_n\)^2\)^{-1}\].\end{eqnarray}
\end{prop}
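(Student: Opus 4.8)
\noindent\emph{Proposal for the proof.} The plan is to reduce the $2n$-dimensional Hamiltonian operator on the left of (\ref{eq5.17}) to the $n$-dimensional Sturm operator on the right by a block-elimination (Schur complement) performed on the bounded operators whose conditional Fredholm determinants are being compared. Throughout I would abbreviate $\partial:=\frac{d}{dt}+\nu I_n$, viewed on $L^2([0,T];\mathbb{C}^n)$ with domain $\{u\in W^{1,2}:u(0)=\bar{S}u(T)\}$; the hypothesis of the proposition says exactly that $\partial$ is invertible. Then $\partial^2$ has domain $\{u\in W^{2,2}:u(0)=\bar{S}u(T),\ \dot u(0)=\bar{S}\dot u(T)\}$, which is precisely the domain of the Sturm operator $\tilde{\mathcal{L}}:=-\partial^2+\hat{Q}\partial+\hat{R}$ and of its reference $\tilde{\mathcal{L}}_0:=-\partial^2$; and $\mathcal{L}_0:=-J\frac{d}{dt}-\nu J$ is invertible on $D_{\bar{S}_d}$ with
\[
\mathcal{L}_0^{-1}=\begin{pmatrix}0 & -\partial^{-1}\\ \partial^{-1} & 0\end{pmatrix}
\]
in the block splitting $L^2([0,T];\mathbb{C}^{2n})=L^2([0,T];\mathbb{C}^n)\oplus L^2([0,T];\mathbb{C}^n)$. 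In the same splitting, since $J=\begin{pmatrix}0&-I_n\\ I_n&0\end{pmatrix}$ and $\hat{B}=\begin{pmatrix}I_n&0_n\\ -\hat{Q}&-\hat{R}\end{pmatrix}$, we have $-J\frac{d}{dt}-\hat{B}-\nu J=\begin{pmatrix}-I_n & \partial\\ -\partial+\hat{Q} & \hat{R}\end{pmatrix}=\mathcal{L}_0-\hat{B}$.

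The algebra is then short. The operator appearing on the left of (\ref{eq5.17}) equals
\[
\left(-J\tfrac{d}{dt}-\hat{B}-\nu J\right)\mathcal{L}_0^{-1}=\mathrm{id}-\hat{B}\mathcal{L}_0^{-1}=\begin{pmatrix}I_n & \partial^{-1}\\ \hat{R}\partial^{-1} & I_n-\hat{Q}\partial^{-1}\end{pmatrix},
\]
and the decisive feature is that its $(1,1)$-block is exactly $I_n$ — this is forced by the first block-row $(I_n,0_n)$ of $\hat{B}$ (the Legendre transform) together with the off-diagonal shape of $J$, and it makes the Schur reduction completely clean. Using the factorization $\begin{pmatrix}I & X\\ Y & Z\end{pmatrix}=\begin{pmatrix}I&0\\ Y&I\end{pmatrix}\begin{pmatrix}I&0\\ 0&Z-YX\end{pmatrix}\begin{pmatrix}I&X\\ 0&I\end{pmatrix}$, multiplicativity of the conditional Fredholm determinant, and the fact that the two block-unitriangular factors have conditional determinant $1$ (their off-diagonal blocks square to zero, hence are quasi-nilpotent with vanishing conditional trace), I would obtain
\[
\det\!\left[\left(-J\tfrac{d}{dt}-\hat{B}-\nu J\right)\mathcal{L}_0^{-1}\right]=\det\!\left(I_n-\hat{Q}\partial^{-1}-\hat{R}\partial^{-2}\right).
\]
On the other hand $\tilde{\mathcal{L}}\tilde{\mathcal{L}}_0^{-1}=(-\partial^2+\hat{Q}\partial+\hat{R})(-\partial^{-2})=I_n-\hat{Q}\partial^{-1}-\hat{R}\partial^{-2}$, so the right side of (\ref{eq5.17}) is the same determinant, which proves the proposition.

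The real work, and the only genuine obstacle, is the analytic bookkeeping, since all of this must be carried out with the conditional (not ordinary) Fredholm determinant. At each step I would verify that the corrections are of the form $\mathrm{id}+(\text{Hilbert--Schmidt with trace finite condition})$: $\partial^{-1}$ and $\hat{R}\partial^{-1}$ are integral operators with bounded kernel on $[0,T]$, hence Hilbert--Schmidt, while $\hat{R}\partial^{-2}$ factors through the order $-2$ operator $\partial^{-2}$ and is of trace class, and the trace finite condition follows as in \cite{HW}. The subtler point is reconciling the two places where the conditional trace is taken: on the left of (\ref{eq5.17}) it is the conditional trace in $L^2([0,T];\mathbb{C}^{2n})$ with respect to the spectral projections of $-J\frac{d}{dt}$, whereas the Sturm-side determinant uses the projections $\hat{P}_N$ of $\frac{d}{dt}$ on $\mathbb{C}^n$-valued functions, and these do not align blockwise under the decomposition above; one has to use the stability properties of the conditional trace established in \cite{HW} to conclude that $\det(I_n-\hat{Q}\partial^{-1}-\hat{R}\partial^{-2})$, arising inside $L^2([0,T];\mathbb{C}^{2n})$ as the image of the Schur complement, equals the Sturm-side conditional determinant. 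Finally I would add a short remark identifying $\det[(\cdots)\mathcal{L}_0^{-1}]$ with the determinant relative to the reference $-J\frac{d}{dt}+P_0$ appearing in the Hamiltonian Hill formula (\ref{hg}), so that (\ref{eq5.17}) can afterwards be combined with (\ref{hg}) to deduce the Hill-type formula of Theorem \ref{thm1.2}.
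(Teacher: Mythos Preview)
Your proposal is correct and follows essentially the same block-elimination idea as the paper: reduce the $2n$-dimensional Hamiltonian determinant to the $n$-dimensional Sturm one via the Schur complement, arriving at exactly $\det(I_n-\hat{Q}\partial^{-1}-\hat{R}\partial^{-2})$. The only cosmetic difference is that the paper inserts an intermediate reference $-J\tfrac{d}{dt}-K_n-\nu J$ with $K_n=\left(\begin{smallmatrix}I_n&0\\0&0\end{smallmatrix}\right)$, which makes the perturbation $(K_n-\hat{B})(\cdots)^{-1}$ block lower-triangular outright---your upper-triangular Schur factor is thereby absorbed into the separate step $\det\bigl[(-J\tfrac{d}{dt}-K_n-\nu J)(-J\tfrac{d}{dt}-\nu J)^{-1}\bigr]=1$---and the analytic caveats you raise about the conditional trace are handled at the same level of detail there.
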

\begin{proof} Let $K_n=\left(\begin{array}{cc} I_n & 0_n \\ 0_n & 0_n \end{array}\right)$,
note that
\begin{eqnarray}\label{eq4.17}
-J\d-\nu J -K_n=\left(\begin{array}{cc}-I_n & \d+\nu I_n \\ -\left(\d+\nu I_n\right) & 0_n \end{array}\right).
\end{eqnarray}
It follows that $\d+\nu I_n$ is invertible if and only if $-J\d-\nu J -K_n$ is invertible; moreover
\begin{eqnarray*}
\(-J\d-\nu J -K_n\)^{-1}=\left(\begin{array}{cc}0_n & -\left(\d+\nu I_n\right)^{-1}\\ \left(\d+\nu I_n\right)^{-1} & -\left(\d+\nu I_n\right)^{-2}\end{array}\right).
\end{eqnarray*}
It follows that
\begin{eqnarray*}
(K_n-\hat{B})\(-J\d-\nu J -K_n\)^{-1}=\left(\begin{array}{cc} 0_n & 0_n\\ \hat{R}(t)\(\d+\nu I_n\)^{-1} & -\hat{Q}(t)\(\d+\nu I_n\)^{-1}-\hat{R}(t)\(\d+\nu I_n\)^{-2}\end{array}\right).
\end{eqnarray*}
Thus we have
\bea && {\det}\[\(-J{d\over dt}-\hat{B}-\nu J\)\(-J{d\over dt}-K_n-\nu J \)^{-1}\]\nonumber \\ &=& \det\[id-(\hat{B}-K_n\hat{B})\(-J\d-\nu J -K_n\)^{-1} \]\nonumber
\\ &=& \det\[id -\hat{Q}(t)\(\d+\nu I_n\)^{-1}-\hat{R}(t)\(\d+\nu I_n\)^{-2}\] \nonumber\\ &=&{\det}\[\(-\(\d+\nu I_n\)^2+\hat{Q}(t)\(\d+\nu I_n\)+\hat{R}(t)\)\(-\(\d+\nu I_n\)^2\)^{-1}\]. \label{eq4.2}
\eea
Now, direct computation  shows that, \begin{eqnarray*}
\det\[\(-J\d-K_n-\nu J\)\(-J\d-\nu J\)^{-1}\]=1.
\end{eqnarray*}
Therefore,
\begin{eqnarray}
&&{\det}\[\(-J{d\over dt}-\hat{B}-\nu J\)\(-J{d\over dt}-\nu J \)^{-1}\]\nonumber\\&=&{\det}\[\(-J{d\over dt}-\hat{B}-\nu J\)\(-J{d\over dt}-K_n-\nu J \)^{-1}\]\cdot\det\[\(-J\d-K_n-\nu J\)\(-J\d-\nu J\)^{-1}\]\nonumber\\
&=&{\det}\[\(-J{d\over dt}-\hat{B}-\nu J\)\(-J{d\over dt}-K_n-\nu J \)^{-1}\].\label{eq4.1}
\end{eqnarray}
Combining (\ref{eq4.2}) and (\ref{eq4.1}), we have the desired result.
\end{proof}

For $R_1\in \mathcal{B}(n)$, let $\hat{B}_\lambda(t)=\left(\begin{array}{cc}I_n& 0_n \\-\hat{Q}(t)  & -\hat{R}(t)-\lambda P^{-1}R_1 \end{array}\right)$, let
$\hat{\gamma}_\lambda$ be the corresponding fundamental solutions.
With the above preparation, we have the following theorem.
\begin{thm}For $\nu\in \mathbb C$ such that $\d+\nu I_n$ is invertible,  we have
\begin{eqnarray}
&&{\det}\[\(-\(\d+\nu I_n\)^2+\hat{Q}(t)\(\d+\nu I_n\)+\hat{R}(t)\)\(-\(\d+\nu I_n\)^2\)^{-1}\]\nonumber  \\&&=e^{-\frac{T}{2}\int_0^T Tr (\hat{Q}) dt}\det( \bar{S}_d\hat{\gamma}_0(T)- e^{\nu T} I_{2n})\det(\bar{S}_d-e^{\nu T} I_{2n})^{-1}. \label{h.7}
\end{eqnarray}

\end{thm}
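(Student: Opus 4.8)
\emph{Proof strategy.} The plan is to deduce this Lagrangian Hill-type identity from the Hamiltonian Hill-type formula (\ref{hg}) by means of Proposition \ref{thm5.1a}. By Proposition \ref{thm5.1a} the left-hand side of (\ref{h.7}) equals
$$\det\[\(-J\d-\hat{B}-\nu J\)\(-J\d-\nu J\)^{-1}\],$$
so the problem reduces to evaluating this Hamiltonian conditional Fredholm determinant. Since the reference operator in (\ref{hg}) is $-J\d+P_0$ and not $-J\d-\nu J$, I would first observe that $-J\d-\nu J$ is invertible exactly when $\d+\nu I_n$ is --- the block-diagonal form of $\bar{S}_d$ splits the $\bar{S}_d$-periodic boundary condition into two copies of the $\bar{S}$-periodic one, and $-J$ is invertible --- and that the same remark gives $\det(\bar{S}_d-e^{\nu T}I_{2n})\neq 0$ under the hypothesis of the theorem. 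Then, using the multiplicativity of the conditional Fredholm determinant (identity (\ref{eq2.2b}) of \cite{HW}), together with the fact that all the operators in question differ from the identity by Hilbert-Schmidt operators carrying the trace finite condition relative to $\{P_N\}$, I would write
$$\det\[\(-J\d-\hat{B}-\nu J\)\(-J\d-\nu J\)^{-1}\]=\frac{\det\[\(-J\d-\hat{B}-\nu J\)\(-J\d+P_0\)^{-1}\]}{\det\[\(-J\d-\nu J\)\(-J\d+P_0\)^{-1}\]}.$$

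Next I would apply (\ref{hg}) with $S=\bar{S}_d$ to the numerator and to the denominator separately. For the numerator ($B=\hat{B}$) the associated fundamental solution is $\hat{\gamma}_0$ (the $\lambda=0$ member of the family $\hat{\gamma}_\lambda$ introduced just before the theorem); since $\hat{B}$ is in general \emph{not} symmetric, the exponential correction factor in (\ref{hg}) must be retained, and a blockwise computation of $J\hat{B}$ shows that its $(1,1)$ block is $\hat{Q}$ and its $(2,2)$ block is $0_n$, whence $Tr(J\hat{B})=Tr(\hat{Q})$. Thus (\ref{hg}) yields $C(\bar{S}_d)\,e^{-\frac{T}{2}\int_0^T Tr(\hat{Q})\,dt}\,e^{-n\nu T}\det(\bar{S}_d\hat{\gamma}_0(T)-e^{\nu T}I_{2n})$ for the numerator. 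For the denominator ($B=0$, which is symmetric, so the correction factor is $1$) the corresponding fundamental solution, being the solution of $\dot{\gamma}=0$ with $\gamma(0)=I_{2n}$, is $\gamma\equiv I_{2n}$, so (\ref{hg}) yields $C(\bar{S}_d)\,e^{-n\nu T}\det(\bar{S}_d-e^{\nu T}I_{2n})$. Taking the quotient, the constant $C(\bar{S}_d)$ and the factor $e^{-n\nu T}$ cancel, leaving $e^{-\frac{T}{2}\int_0^T Tr(\hat{Q})\,dt}\det(\bar{S}_d\hat{\gamma}_0(T)-e^{\nu T}I_{2n})\det(\bar{S}_d-e^{\nu T}I_{2n})^{-1}$, which combined with the identity of Proposition \ref{thm5.1a} is exactly (\ref{h.7}).

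All the algebra here is short, so I do not expect a real obstacle in the sense of a new estimate; the step that demands care is the quotient identity, i.e. verifying that $\(-J\d-\nu J\)\(-J\d+P_0\)^{-1}-id$ and $\(-J\d-\hat{B}-\nu J\)\(-J\d+P_0\)^{-1}-id$ are Hilbert-Schmidt and satisfy the trace finite condition relative to $\{P_N\}$, so that the conditional Fredholm determinants are defined and the multiplicativity (\ref{eq2.2b}) is legitimate. This is supplied by the Hilbert-Schmidt resolvent of $-J\d$ and the conditional-determinant theory developed in \cite{HW}, but it is the point where the argument is not purely formal. The only other thing to be careful about is not to ignore the non-symmetry of $\hat{B}$: it forces the use of the general Hill-type formula (\ref{hg}), which carries the $Tr(J\hat{B})$-factor, rather than the symmetric version (\ref{thf1.1.2}).
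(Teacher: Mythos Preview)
Your proposal is correct and follows essentially the same route as the paper: reduce to the Hamiltonian determinant via Proposition \ref{thm5.1a}, split through the reference operator $-J\d+P_0$ by multiplicativity of the conditional Fredholm determinant, and apply (\ref{hg}) twice (once with $B=\hat{B}$, once with $B=0$) so that $C(\bar S_d)$ and $e^{-n\nu T}$ cancel. The only difference is presentation: the paper writes the splitting as a product with the factor $\det[(-J\d+P_0)(-J\d-\nu J)^{-1}]$ and then inverts, whereas you write it directly as a quotient, and you make the computation $Tr(J\hat B)=Tr(\hat Q)$ explicit where the paper leaves it implicit.
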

\begin{proof}
By the multiplicative property of conditional Fredholm determinant
\begin{eqnarray}
{\det}\[\(-J{d\over dt}-\hat{B}-\nu J\)\(-J{d\over dt}-\nu J \)^{-1}\]
&=&{\det}\[\(-J{d\over dt}-\hat{B}-\nu J\)\(-J\d+P_0\)^{-1}\]\nonumber\\ &&\cdot{\det}\[\(-J\d+P_0\)\(-J{d\over dt}-\nu J \)^{-1}\].\label{eq4.3}
\end{eqnarray}
 By the Hill-type formula for Hamiltonian system (\ref{hg}), we have that
\begin{eqnarray}
{\det}\[\(-J{d\over dt}-\hat{B}-\nu J\)\(-J\d+P_0\)^{-1}\]=C(\bar{S}_d)e^{-\frac{T}{2}\int_0^T Tr (\hat{Q}(t)) dt}e^{-n\nu T}\det\left( \bar{S}_d\hat{\gamma}_0(T)- e^{\nu T} I_{2n}\right)\label{eq4.4}
\end{eqnarray}
and
\begin{eqnarray}
{\det}\[\(-J\d+P_0\)\(-J{d\over dt}-\nu J \)^{-1}\]&=& \[ {\det}\[\(-J{d\over dt}-\nu J \)\(-J\d+P_0\)^{-1}\] \]^{-1} \nonumber\\
&=& C(\bar{S}_d)^{-1}e^{n\nu T}\det( \bar{S}_d- e^{\nu T} I_{2n})^{-1}.\label{eq4.5}
\end{eqnarray}
Substituting (\ref{eq4.5}) and (\ref{eq4.4}) into (\ref{eq4.3}), by Proposition \ref{thm5.1a} we have the result.
\end{proof}

We come back to the Lagrangian systems.  To simplify  the notation, let
\begin{eqnarray*}\mathcal{A}(\nu)=-\(\d+\nu\)\(P\(\d+\nu\)+Q\)+Q^T\(\d+\nu\)+R(t).\end{eqnarray*}

\begin{thm}\label{thm4.10} Under the condition (\ref{c1}), for any $\nu\in\mathbb{C}$ such that $\mathcal{A}(\nu)$ is invertible,
\begin{eqnarray}
\det\[(\mathcal{A}(\nu)+R_1)\mathcal{A}(\nu)^{-1}\] =\det(\bar{S}_d\gamma_1(T)-e^{\nu T}I_{2n})\cdot \det(\bar{S}_d\gamma_0(T)-e^{\nu T}I_{2n})^{-1}, \label{hillag}
\end{eqnarray}
where $\gamma_\lambda(t)$ is the fundamental solution of (\ref{h1}).
\end{thm}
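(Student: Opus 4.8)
The plan is to derive Theorem~\ref{thm4.10} from the Hill-type formula~(\ref{h.7}) for the normalized second-order operator, by first dividing out the leading coefficient $P$ and then transporting the monodromy matrix through the Legendre change of variables; a short analytic-continuation argument will remove an auxiliary hypothesis at the end. Throughout I use $\hat{Q}=P^{-1}(Q^{T}-Q-\dot{P})$ and $\hat{R}=P^{-1}(R-\dot{Q})$ from Section~3, and I write $\hat{\mathcal{A}}(\nu):=-\big(\frac{d}{dt}+\nu\big)^{2}+\hat{Q}\big(\frac{d}{dt}+\nu\big)+\hat{R}$. Expanding $\big(\frac{d}{dt}+\nu\big)\big(P\big(\frac{d}{dt}+\nu\big)+Q\big)$ and collecting terms yields the pointwise identity $P^{-1}\mathcal{A}(\nu)=\hat{\mathcal{A}}(\nu)$, hence $\mathcal{A}(\nu)=P\,\hat{\mathcal{A}}(\nu)$. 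Since $R_{1}\mathcal{A}(\nu)^{-1}$ and $P^{-1}R_{1}\hat{\mathcal{A}}(\nu)^{-1}$ are trace class (a bounded operator composed with the trace-class inverse of a second-order operator on $[0,T]$), one has
$$(\mathcal{A}(\nu)+R_{1})\mathcal{A}(\nu)^{-1}=P\big[(\hat{\mathcal{A}}(\nu)+P^{-1}R_{1})\hat{\mathcal{A}}(\nu)^{-1}\big]P^{-1},$$
and the identity $\det(id+AB)=\det(id+BA)$ for Fredholm determinants (with $A=P$ and a trace-class $B$) gives $\det\big[(\mathcal{A}(\nu)+R_{1})\mathcal{A}(\nu)^{-1}\big]=\det\big[(\hat{\mathcal{A}}(\nu)+P^{-1}R_{1})\hat{\mathcal{A}}(\nu)^{-1}\big]$.

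Next I would apply (\ref{h.7}) twice: once to $\hat{\mathcal{A}}(\nu)$ and once to $\hat{\mathcal{A}}(\nu)+P^{-1}R_{1}$, whose normalized coefficients are the same $\hat{Q}$ and the shifted $\hat{R}+P^{-1}R_{1}$. Because $\hat{Q}$ is unchanged, the factors $e^{-\frac{T}{2}\int_{0}^{T}Tr(\hat{Q})\,dt}$ and $\det(\bar{S}_{d}-e^{\nu T}I_{2n})^{-1}$ cancel in the quotient of the two instances of~(\ref{h.7}); combined with the multiplicativity~(\ref{eq2.2b}) of the conditional Fredholm determinant this gives
$$\det\big[(\hat{\mathcal{A}}(\nu)+P^{-1}R_{1})\hat{\mathcal{A}}(\nu)^{-1}\big]=\det\big(\bar{S}_{d}\hat{\gamma}_{1}(T)-e^{\nu T}I_{2n}\big)\,\det\big(\bar{S}_{d}\hat{\gamma}_{0}(T)-e^{\nu T}I_{2n}\big)^{-1},$$
where $\hat{\gamma}_{\lambda}$ solves $\dot{\hat{\gamma}}_{\lambda}=J\hat{B}_{\lambda}\hat{\gamma}_{\lambda}$ with $\hat{B}_{\lambda}=\left(\begin{smallmatrix}I_{n}&0_{n}\\-\hat{Q}&-\hat{R}-\lambda P^{-1}R_{1}\end{smallmatrix}\right)$. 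This step uses that $\frac{d}{dt}+\nu I_{n}$ is invertible, which is exactly the standing hypothesis of~(\ref{h.7}).

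It remains to pass from $\hat{\gamma}_{\lambda}$ to the fundamental solution $\gamma_{\lambda}$ of~(\ref{h1})--(\ref{b2}). Setting $\Psi(t)=\left(\begin{smallmatrix}P(t)&Q(t)\\0_{n}&I_{n}\end{smallmatrix}\right)$, the change from the velocity $\dot{y}$ to the momentum $u=P\dot{y}+Qy$ shows that $z=\Psi\hat{x}$ intertwines the two first-order reductions of the Sturm system~(\ref{e1}) at the given $\lambda$ --- the full Legendre reduction~(\ref{h1}) and the normalized reduction above --- whence $\gamma_{\lambda}(t)=\Psi(t)\hat{\gamma}_{\lambda}(t)\Psi(0)^{-1}$. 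The compatibility condition~(\ref{c1}) is precisely the endpoint identity $\bar{S}_{d}\Psi(T)=\Psi(0)\bar{S}_{d}$, so
$$\bar{S}_{d}\gamma_{\lambda}(T)-e^{\nu T}I_{2n}=\Psi(0)\big(\bar{S}_{d}\hat{\gamma}_{\lambda}(T)-e^{\nu T}I_{2n}\big)\Psi(0)^{-1},$$
and therefore $\det\big(\bar{S}_{d}\gamma_{\lambda}(T)-e^{\nu T}I_{2n}\big)=\det\big(\bar{S}_{d}\hat{\gamma}_{\lambda}(T)-e^{\nu T}I_{2n}\big)$ for $\lambda=0,1$. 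Chaining the three displays establishes the identity of Theorem~\ref{thm4.10} for every $\nu$ with $\frac{d}{dt}+\nu I_{n}$ invertible. Finally, since $\gamma_{0}(T)$ and $\gamma_{1}(T)$ are independent of $\nu$, both sides are holomorphic functions of $e^{\nu T}$ on the open set where $\mathcal{A}(\nu)$ is invertible, while $\{\nu:\frac{d}{dt}+\nu I_{n}\text{ invertible}\}$ is the complement of a discrete set; the identity therefore propagates to all admissible $\nu$ by analytic continuation.

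I expect the main obstacle to lie in the last step --- checking that $z=\Psi\hat{x}$ genuinely intertwines the two first-order reductions, and that~(\ref{c1}) translates exactly into $\bar{S}_{d}\Psi(T)=\Psi(0)\bar{S}_{d}$ --- together with making sure that the mixed Fredholm-determinant manipulations in the earlier steps (the $\det(id+AB)=\det(id+BA)$ identity with $P$ merely bounded and invertible, and multiplicativity when one factor is only Hilbert--Schmidt with the trace finite condition) are all licensed by the conditional-determinant calculus developed in Section~2.
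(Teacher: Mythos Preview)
Your proposal is correct and follows essentially the same route as the paper's proof: reduce to the normalized operator via $P^{-1}\mathcal{A}(\nu)=\hat{\mathcal{A}}(\nu)$ and conjugation by $P$, apply~(\ref{h.7}) twice and take the quotient so that the $e^{-\frac{T}{2}\int_0^T Tr(\hat{Q})\,dt}$ and $\det(\bar{S}_d-e^{\nu T}I_{2n})^{-1}$ factors cancel, and then transport $\hat{\gamma}_\lambda$ to $\gamma_\lambda$ via the matrix $\Psi(t)=\left(\begin{smallmatrix}P(t)&Q(t)\\0_n&I_n\end{smallmatrix}\right)$ (denoted $\eta(t)$ in the paper), using~(\ref{c1}) to obtain $\bar{S}_d\Psi(T)=\Psi(0)\bar{S}_d$. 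Your final analytic-continuation step, removing the auxiliary hypothesis that $\frac{d}{dt}+\nu I_n$ be invertible, is a refinement the paper leaves implicit; note only that the left side is holomorphic in $\nu$ (not literally in $e^{\nu T}$), which is all that is needed.
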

\begin{proof} Please note that $F=R_1\mathcal{A}(\nu)^{-1}$ is a trace class operator, thus $\det(id+F)$ is the usual Fredholm determinant. Therefore
\bea \det(id+F)=\det(id+P^{-1}F P), \nonumber  \eea
hence
 \bea
 \det\left[(\mathcal{A}(\nu)+R_1)\mathcal{A}(\nu)^{-1}\right]= \det\left[P^{-1}(\mathcal{A}(\nu)+R_1)\mathcal{A}(\nu)^{-1}P\right]=\det\[\(P^{-1}(\mathcal{A}(\nu)+R_1)\)\(P^{-1}\mathcal{A}(\nu)\)^{-1}\]. \nonumber\label{h.8}
  \eea
Easy computation  shows that
\begin{eqnarray*}P^{-1}\mathcal{A}(\nu)=-\(\d+\nu\)^2+\hat{Q}\(\d+\nu\)+\hat{R},\end{eqnarray*} where
   $\hat{Q}=P^{-1}(Q^T-Q-\dot{P})$,  $\hat{R}=P^{-1}(R-\dot{Q})$. By the multiplicative property (\ref{eq2.2b}) of Fredholm determinant,
\bea
\det\[P^{-1}(\mathcal{A}(\nu)+R_1)\(P^{-1}\mathcal{A}(\nu)\)^{-1}\]\nonumber&=&\det\[P^{-1}(\mathcal{A}(\nu)+R_1)\(-\(\d+\nu\)^2\)^{-1}\]\\
&&\cdot
{\det\[\(P^{-1}\mathcal{A}(\nu)\)\cdot\(-\(\d+\nu\)^2\)^{-1}\]}^{-1}.   \label{h.9}
\eea
Substituting (\ref{h.7}) into (\ref{h.9}), we have
\bea
\det\left[(\mathcal{A}(\nu)+R_1)\mathcal{A}(\nu)^{-1}\right]=\det(\bar{S}_d\hat{\gamma}_1(T)-e^{\nu T}I_{2n})\cdot \det(\bar{S}_d\hat{\gamma}_0(T)-e^{\nu T}I_{2n})^{-1}. \label{h.10}
\eea

 To prove the theorem, we will make clear the relationship between $\hat{\gamma}_\lambda(T)$ with $\gamma_\lambda(T)$.
Let $\eta(t)=\left(\begin{array}{cc}P(t)& Q(t) \\0_n  & I_n \end{array}\right)$, then direct computation shows that
\bea \d(\eta(t)\hat{\gamma}_\lambda(t)\eta(0)^{-1})=JB_\lambda(t)\eta(t)\hat{\gamma}_\lambda(t)\eta(0)^{-1},  \label{h.4}\nonumber\eea
which implies $\gamma_\lambda(t)=\eta(t)\hat{\gamma}_\lambda(t)\eta(0)^{-1}$. Moreover, from (\ref{c1}), $\bar{S}_d\eta(T)=\eta(0)\bar{S}_d$, easy computation  shows that
\bea  \bar{S}_d\gamma_\lambda(T)=\eta(0)\bar{S}_d\hat{\gamma}_\lambda(T)\eta(0)^{-1}. \label{h.5}   \eea
It follows that $$\det (\bar{S}_d\gamma_\lambda(T)-e^{\nu T}I_{2n})=\det (\bar{S}_d\hat{\gamma}_\lambda(T)-e^{\nu T}I_{2n}).$$  Combining with (\ref{h.10}), we have the desired result.
\end{proof}

Obviously, by taking $\nu=0$ in Theorem \ref{thm4.10} we have Theorem \ref{thm1.2}.\vskip2mm

To get the trace formula,  let $\lambda R_1$ take place of $R_1$
in the Hill-type formula (\ref{hillag}), and we have
\bea \det(id+\lambda R_1\mathcal{A}(\nu)^{-1})=\det\left(\bar{S}_d\gamma_\lambda(T)-e^{\nu T}I_{2n}\right)\cdot\det\left(\bar{S}_d\gamma_0(T)-e^{\nu T} I_{2n}\right)^{-1}.\label{n3.1}  \eea
Almost the same as the proof of Theorem \ref{thm1.1}, the trace formula for Lagrangian system could be obtained  by taking Taylor expansion on the variable $\lambda$ and comparing
the coefficients of $\lambda^n$ on both sides of (\ref{n3.1}), and the proof will be omitted.  We have the trace formula for Lagrangian system,  for $m\in\mathbb{N}$,
\begin{eqnarray}\label{eq3.27}
Tr \Big(\big[R_1\mathcal{A}(\nu)^{-1}\big]^{m}\Big)=m\sum_{k=1}^m
\frac{(-1)^{m+k}}{k}\[\sum\limits_{j_1+\cdots+j_k=m}Tr(G_{j_1}\cdots
G_{j_k})\], \,
\end{eqnarray}
\emph{where  for Lagrangian system, we always denote  $D=\left(\begin{array}{cc} 0_n & 0_n\\
0_n& -R_1\end{array}\right)$}, $G_k$ is defined in Theorem \ref{thm1.1a}.

Since $\mathcal{A}(\nu)^{-1}$ is a trace class operator, let $\{\lambda_i\}$ be  the nonzero eigenvalues of $\mathcal{A}(\nu)y+\lambda R_1 y=0$, then for positive integers $m$, \bea \label{eq3.28}\sum\limits_{j} \frac{1}{\lambda_j^m}=(-1)^m \cdot Tr \left[\left(R_1\mathcal{A}(\nu)^{-1}\right)^{m}\right]. \eea
Combining (\ref{eq3.27}) and (\ref{eq3.28}) we prove Theorem \ref{thm1.1a}.

 Especially,
\begin{eqnarray}\label{eq3.49a}
Tr[R_1\mathcal{A}(\nu)^{-1}]=Tr\(J\int_0^T\gamma_{0}^T(t)D(t)
\gamma_{0}(t)dt\cdot M(M-e^{\nu T} I_{2n})^{-1}\).
\end{eqnarray}

Comparing with the Trace formula in Hamiltonian systems, we have
\begin{cor}\label{cor4.1a} For positive integers $m$,
\begin{eqnarray} (-1)^{m}\cdot Tr \left[\left(R_1\mathcal{A}(\nu)^{-1}\right)^{m}\right]=Tr \left[\left(D(A-B_0-\nu J)^{-1}\right)^{m}\right]. \label{fc4.1}
\end{eqnarray}
where $D=\left(\begin{array}{cc}0_n & 0_n \\ 0_n & -R_1\end{array}\right)$, $B_0$ is defined in (\ref{b2}) and $A=-J\d$ with $\bS_d-$boundary condition.
\end{cor}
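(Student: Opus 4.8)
\emph{Proof idea.} The plan is to obtain (\ref{fc4.1}) by comparing the two trace formulas already at our disposal: the Hamiltonian trace formula (\ref{0.0.0}) of Theorem \ref{thm1.1} and the Sturm--Liouville trace formula (\ref{eq3.27}). The decisive point is that the matrices $G_k$ occurring on the right-hand sides of both formulas are \emph{literally the same} in the present situation. Indeed, by the construction in (\ref{h1})--(\ref{b2}), after the Legendre transformation the Lagrangian eigenvalue problem (\ref{e1}) is precisely the Hamiltonian eigenvalue problem (\ref{0.1.5})--(\ref{0.1.6}) with $S=\bar S_d$, $B=B_0$ and perturbation matrix $D=\left(\begin{array}{cc}0_n&0_n\\0_n&-R_1\end{array}\right)$; consequently the fundamental solution $\gamma_0$, the matrix $M=\bar S_d\gamma_0(T)$, the matrix $\hat D(t)=\gamma_0^T(t)D(t)\gamma_0(t)$, and hence $M_k$ and $G_k=M_kM(M-e^{\nu T}I_{2n})^{-1}$, all coincide in the two settings; this is exactly the identification already built into the statement of Theorem \ref{thm1.1a}.

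Granting this, I would note that under this identification (\ref{0.0.0}) reads $Tr\big[(D(A-B_0-\nu J)^{-1})^m\big]=m\sum_{k=1}^m\frac{(-1)^k}{k}\big[\sum_{j_1+\cdots+j_k=m}Tr(G_{j_1}\cdots G_{j_k})\big]$, whereas (\ref{eq3.27}) reads $Tr\big[(R_1\mathcal{A}(\nu)^{-1})^m\big]=m\sum_{k=1}^m\frac{(-1)^{m+k}}{k}\big[\sum_{j_1+\cdots+j_k=m}Tr(G_{j_1}\cdots G_{j_k})\big]$. Since $(-1)^{m+k}=(-1)^m(-1)^k$, the two right-hand sides differ exactly by the global factor $(-1)^m$, so $Tr\big[(R_1\mathcal{A}(\nu)^{-1})^m\big]=(-1)^m\,Tr\big[(D(A-B_0-\nu J)^{-1})^m\big]$, and multiplying through by $(-1)^m$ yields (\ref{fc4.1}).

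For the borderline case $m=1$ a little extra care is needed, since $D(A-B_0-\nu J)^{-1}$ is only Hilbert--Schmidt and its ``trace'' is the conditional trace of Definition \ref{def1}, whereas $R_1\mathcal{A}(\nu)^{-1}$ is genuinely trace class. Here I would invoke the explicit first-order expressions already derived: (\ref{eq3.49a}) gives $Tr[R_1\mathcal{A}(\nu)^{-1}]=Tr\big(J\int_0^T\gamma_0^T(t)D(t)\gamma_0(t)\,dt\cdot M(M-e^{\nu T}I_{2n})^{-1}\big)$, while (\ref{eq3.49ab}) (equivalently (\ref{ht1})) shows that $Tr[D(A-B_0-\nu J)^{-1}]$ equals minus the same matrix trace; comparing gives $Tr[R_1\mathcal{A}(\nu)^{-1}]=-Tr[D(A-B_0-\nu J)^{-1}]$, which is (\ref{fc4.1}) for $m=1$.

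The proof is essentially sign bookkeeping, so I expect no substantial obstacle; the only step that must be genuinely verified rather than merely asserted is the identification of the $G_k$'s across the two theorems, i.e.\ that the $\gamma_0$ and $M$ used in Theorem \ref{thm1.1a} are exactly those of Theorem \ref{thm1.1} for the choice $S=\bar S_d$, $B=B_0$, $D=\left(\begin{array}{cc}0_n&0_n\\0_n&-R_1\end{array}\right)$, which is guaranteed by the construction preceding Theorem \ref{thm1.1a}.
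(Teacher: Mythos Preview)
Your proposal is correct and matches the paper's approach exactly: the paper simply introduces the corollary with the phrase ``Comparing with the Trace formula in Hamiltonian systems, we have'', i.e.\ it obtains (\ref{fc4.1}) by juxtaposing (\ref{eq3.27}) with (\ref{0.0.0}) under the identification $S=\bar S_d$, $B=B_0$, $D=\left(\begin{smallmatrix}0_n&0_n\\0_n&-R_1\end{smallmatrix}\right)$ so that the $G_k$'s coincide. Your separate treatment of the $m=1$ case via (\ref{eq3.49a}) and (\ref{eq3.49ab}) is more careful than what the paper writes out, but entirely in the same spirit.
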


Obviously $\mathcal{A}(\nu)+\lambda R_1$ is degenerate if only if $-J\d-\nu J-B_0-\lambda D$ is degenerate, moreover, we have
\begin{prop}\label{cor2.5}
Let  $\nu\in\mathbb C$, such that $\mathcal{A}(\nu)$ is invertible.  Then  $-{1\over\lambda_0}$ is an eigenvalue of $R_1\mathcal{A}(\nu)^{-1}$ of algebraic multiplicity $k$  if and only if $1\over\lambda_0$ is an eigenvalue of $ D(-J\d-\nu J-B_0)^{-1} $  of algebraic multiplicity $k$.
\end{prop}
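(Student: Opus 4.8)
The plan is to realize the conditional Fredholm determinant attached to $D(A-B_0-\nu J)^{-1}$ and the ordinary Fredholm determinant attached to $R_1\mathcal{A}(\nu)^{-1}$ as one and the same entire function of the auxiliary parameter $\lambda$, and then to read off both algebraic multiplicities as the order of vanishing of that function at $\lambda_0$.

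First I would record that, by the discussion preceding the proposition, ``$\mathcal{A}(\nu)$ invertible'' is equivalent to ``$-J\d-\nu J-B_0=A-B_0-\nu J$ invertible'', so that both operators in the statement are defined and the conditional Fredholm determinant $\det((A-B_0-\nu J)(A+P_0)^{-1})$ is nonzero. Replacing $R_1$ by $\lambda R_1$ in Theorem \ref{thm4.10}, i.e. using (\ref{n3.1}),
\[
\det\big(id+\lambda R_1\mathcal{A}(\nu)^{-1}\big)=\det\big(\bar{S}_d\gamma_\lambda(T)-e^{\nu T}I_{2n}\big)\cdot\det\big(\bar{S}_d\gamma_0(T)-e^{\nu T}I_{2n}\big)^{-1},
\]
where $\gamma_\lambda$ is the fundamental solution of (\ref{h1}). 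On the other hand $B_0$ (from (\ref{b2})) and $D=\left(\begin{array}{cc}0_n & 0_n\\ 0_n & -R_1\end{array}\right)$ are real symmetric with $B_0+\lambda D=B_\lambda$, so the Hill-type formula (\ref{thf1.1.2}) for the Hamiltonian system applies with $S=\bar{S}_d$; dividing that formula at parameter $\lambda$ by the same formula at parameter $0$, so that $C(\bar{S}_d)e^{-n\nu T}$ cancels, and using the multiplicativity of the conditional Fredholm determinant together with $(A-B_0-\lambda D-\nu J)(A-B_0-\nu J)^{-1}=id-\lambda D(A-B_0-\nu J)^{-1}$, one obtains
\[
\det\big(id-\lambda D(A-B_0-\nu J)^{-1}\big)=\det\big(\bar{S}_d\gamma_\lambda(T)-e^{\nu T}I_{2n}\big)\cdot\det\big(\bar{S}_d\gamma_0(T)-e^{\nu T}I_{2n}\big)^{-1}.
\]
Hence $\det(id+\lambda R_1\mathcal{A}(\nu)^{-1})=\det(id-\lambda D(A-B_0-\nu J)^{-1})$ as entire functions of $\lambda$.

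It remains to interpret each side. Since $\mathcal{A}(\nu)^{-1}$, hence $R_1\mathcal{A}(\nu)^{-1}$, is trace class, the left side is an ordinary Fredholm determinant, and its order of vanishing at $\lambda=\lambda_0$ equals the algebraic multiplicity of $-1/\lambda_0$ as an eigenvalue of $R_1\mathcal{A}(\nu)^{-1}$ (with the convention that this multiplicity is $0$ when $-1/\lambda_0$ is not an eigenvalue). For the right side, $D(A-B_0-\nu J)^{-1}$ is only Hilbert-Schmidt with the trace finite condition; applying (\ref{eq2.1b}) with $\hat{F}=-\lambda D(A-B_0-\nu J)^{-1}$ gives $\det(id-\lambda D(A-B_0-\nu J)^{-1})={\det}_2(id-\lambda D(A-B_0-\nu J)^{-1})\,e^{-\lambda Tr(D(A-B_0-\nu J)^{-1})}$, and since the exponential factor never vanishes, the order of vanishing of the conditional determinant at $\lambda_0$ coincides with that of the $2$-regularized determinant ${\det}_2$, which by the classical zero structure of regularized determinants for Hilbert-Schmidt operators equals the algebraic multiplicity of $1/\lambda_0$ as an eigenvalue of $D(A-B_0-\nu J)^{-1}$. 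Equating the two orders of vanishing yields the proposition.

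The main obstacle, to my mind, is precisely the last point: certifying that the order of vanishing of the \emph{conditional} Fredholm determinant genuinely detects the algebraic multiplicity. This rests on the identity (\ref{eq2.1b}) from \cite{HW} together with the standard behaviour of ${\det}_2$; a more pedestrian alternative would be to approximate by the finite-rank truncations $P_N D(A-B_0-\nu J)^{-1}P_N$ and control the location and order of zeros in the limit by the normal-family argument already used in \S\ref{subsec2.1}, but routing through ${\det}_2$ is cleaner. The remaining ingredients — cancellation of the normalizing constants, the identity $B_0+\lambda D=B_\lambda$, and the equivalence of the invertibility conditions — are immediate from the formulas already proved.
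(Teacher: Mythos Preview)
Your proposal is correct and follows essentially the same route as the paper: equate the two determinants through the common value $\det(\bar{S}_d\gamma_\lambda(T)-e^{\nu T}I_{2n})\det(\bar{S}_d\gamma_0(T)-e^{\nu T}I_{2n})^{-1}$ via (\ref{n3.1}) and (\ref{thf1.1.2}), then read off algebraic multiplicities as orders of vanishing. The only difference is in the last step: the paper isolates and proves Lemma~\ref{lem3.6} directly (Riesz projection onto the single eigenvalue, factoring the determinant), whereas you reduce the conditional determinant to ${\det}_2$ via (\ref{eq2.1b}) and invoke the classical zero structure of ${\det}_2$ for Hilbert--Schmidt operators; both arguments ultimately rest on the same Riesz decomposition, so this is a minor packaging choice rather than a different method.
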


\begin{rem}\label{cor.lh}
\begin{itemize}
\item[1.] For $m\geq 2$, notice that both  $\left(R_1\mathcal{A}(\nu)^{-1}\right)^m$ and $\left(D(A-B-\nu J)^{-1}\right)^m$ are trace class, and hence by Proposition \ref{cor2.5} we can get the trace formula for Lagrangian system from that of Hamiltonian system directly.
\item[2.] For $m=1$, since the operator $D(A-B-\nu J)^{-1}$ is not trace class operator, but a Hilbert-Schmidt operator with trace finite condition. Therefore,
 \begin{eqnarray}
 Tr(D(A-B-\nu J)^{-1})=\lim\limits_{N\to \infty} Tr \left[P_ND(A-B-\nu J)^{-1}P_N\right].
 \end{eqnarray}
 For a general Hamiltonian system, we don't know whether $Tr(D(A-B-\nu J)^{-1})=\sum\limits_{j}{1\over \lambda_j}$ true or not. It follows that, the trace formula (\ref{eq3.49a}) can not be obtained by the trace formula from Hamiltonian system.
\end{itemize}
\end{rem}

To prove Proposition \ref{cor2.5}, we need the following lemma, which is of interest itself.
\begin{lem}\label{lem3.6}
Let $F$ be a Hilbert-Schmidt operator with trace finite condition, and $1\over \lambda_0$ is its nonzero eigenvalue. Then $\lambda_0$ is a zero point of $\det (id-\lambda F)$ of degree $k$ if and only if $1\over\lambda_0$ is an eigenvalue of $F$ of algebraic multiplicity $k$.
\end{lem}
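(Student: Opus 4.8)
\emph{Proof plan.} The statement is the conditional--Fredholm--determinant analogue of the classical fact that the order of a zero of a Fredholm determinant equals the algebraic multiplicity of the corresponding eigenvalue, and the plan is to localize near $\lambda_0$ by splitting off the finite--dimensional generalized eigenspace. Since $F$ is Hilbert--Schmidt, hence compact, and $1/\lambda_0\neq 0$, the Riesz theory applies: $1/\lambda_0$ is an isolated point of $\sigma(F)$ of finite algebraic multiplicity $k$, and the Riesz projection $Q=\frac{1}{2\pi\sqrt{-1}}\oint(\zeta-F)^{-1}d\zeta$, taken over a small circle enclosing only $1/\lambda_0$, is a finite--rank projection commuting with $F$ with $\dim\operatorname{ran}Q=k$. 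Consequently $E=\operatorname{ran}Q\oplus\ker Q$ reduces $F$, the restriction $F|_{\operatorname{ran}Q}$ has the single eigenvalue $1/\lambda_0$, and $1/\lambda_0\notin\sigma(F|_{\ker Q})$.

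\emph{Key steps.} First I would record that $QFQ$ is finite rank, hence trace class, so it trivially has the trace finite condition, and that $(I-Q)F(I-Q)=F-QF-FQ+QFQ$ differs from $F$ by a trace class operator, hence also has the trace finite condition (using the additivity of the trace finite condition noted after Definition \ref{def1}). Next, applying the multiplicativity (\ref{eq2.2b}) with $\hat D=-\lambda QFQ$ and $\hat F=-\lambda (I-Q)F(I-Q)$, and observing $\hat D\hat F=0$ and $\hat D+\hat F=-\lambda F$ (block--diagonality of $F$ with respect to $Q$), I get
\begin{eqnarray*}
\det(id-\lambda F)=\det\bigl(id-\lambda QFQ\bigr)\cdot\det\bigl(id-\lambda(I-Q)F(I-Q)\bigr),
\end{eqnarray*}
where all three are conditional Fredholm determinants with respect to $\{P_N\}$. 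The first factor is the conditional determinant of a finite--rank operator, hence the ordinary one, and since $QFQ|_{\operatorname{ran}Q}=F|_{\operatorname{ran}Q}$ has the sole eigenvalue $1/\lambda_0$ it equals $(1-\lambda/\lambda_0)^k$. The second factor, call it $g(\lambda)$, is an entire function of $\lambda$ by the normal family argument of Lemma \ref{lemma1} (equivalently \cite[Corollary 3.4]{HW}), and $g(\lambda_0)\neq 0$ because $id-\lambda_0(I-Q)F(I-Q)$ is invertible: $1/\lambda_0$ is neither an eigenvalue of $F|_{\ker Q}$ nor equal to $0$ (the value of $(I-Q)F(I-Q)$ on $\operatorname{ran}Q$). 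Therefore, near $\lambda_0$, $\det(id-\lambda F)=(1-\lambda/\lambda_0)^k g(\lambda)$ with $g(\lambda_0)\neq0$, so $\lambda_0$ is a zero of $\det(id-\lambda F)$ of order exactly $k$, the algebraic multiplicity of $1/\lambda_0$.

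\emph{Both directions.} If $1/\lambda_0$ is an eigenvalue of $F$ of algebraic multiplicity $k$, the above gives a zero of order $k$. Conversely, if $1/\lambda_0$ is \emph{not} an eigenvalue of $F$, then $Q=0$ and the same computation with $k=0$ yields $\det(id-\lambda F)=g(\lambda)$ with $g(\lambda_0)\neq0$, so $\lambda_0$ is not a zero; comparing, a zero of order $k$ forces $1/\lambda_0$ to be an eigenvalue of algebraic multiplicity $k$. (An alternative to the whole argument: $\det(id-\lambda F)=\lim_N\det(id-\lambda P_NFP_N)$ uniformly on compacta, each factor a polynomial whose roots are reciprocals of eigenvalues of $P_NFP_N$ counted with algebraic multiplicity, and one applies Hurwitz's theorem together with convergence of the relevant spectral projections; the Riesz projection route above is cleaner.)

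\emph{Main obstacle.} The one genuinely nonroutine point is that the factorization must be carried out for the \emph{conditional} Fredholm determinant, not the ordinary one: one cannot simply restrict the ambient projections $P_N$ to $\ker Q$. This is precisely why I route the argument through the additivity of the trace finite condition under trace class perturbations together with the multiplicativity (\ref{eq2.2b}), which convert the ``block--diagonal'' heuristic into a valid identity of conditional determinants. Once that is in place, the remaining verifications ($g$ entire, $g(\lambda_0)\ne0$, the finite--dimensional factor) are standard.
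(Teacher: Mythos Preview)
Your proposal is correct and follows essentially the same route as the paper: both arguments apply the Riesz projection for the isolated eigenvalue $1/\lambda_0$, split $F$ block-diagonally as $F_1+F_2$ with $F_1$ finite rank and $1/\lambda_0\notin\sigma(F_2)$, invoke the multiplicativity (\ref{eq2.2b}) of the conditional Fredholm determinant (using $F_1F_2=0$) to factor $\det(id-\lambda F)=(1-\lambda/\lambda_0)^k\cdot\det(id-\lambda F_2)$, and observe the second factor is nonvanishing at $\lambda_0$. Your write-up is somewhat more explicit than the paper's in checking that each block inherits the trace finite condition and in treating the converse direction, but the underlying argument is the same.
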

\begin{proof}
Since $F$ is a Hilbert-Schmidt operator, so $\sigma_1=\{{1\over \lambda_0}\}$ and $\sigma_2=\sigma(F)\setminus \sigma_1$ are two disjoint closed subsets of the spectral of $F$. By Riesz Decomposition Theorem for operators, let
\begin{eqnarray*}
P_1={1\over 2\pi i}\int_{\Gamma}(\lambda-F)^{-1} d\lambda,
\end{eqnarray*}
where $\Gamma$ is a contour in the resolvent set of $F$ such that $\sigma_1$ in its interior and $\sigma_2$ in its exterior. Then $P_1$ is its Riesz projection, and let $P_2=id-P_1$. Since $1\over\lambda_0$ is a nonzero eigenvalue, then $P_1$ is a finite projection, and $P_1F=FP_1$. Now, let $F_1=FP_1$ and $F_2=FP_2$, then $F_1F_2=0$. By the multiplicative property of conditional Fredholm determinant,
\begin{eqnarray*}
\det(id-\lambda F)=\det(id-\lambda F_1-\lambda F_2-\lambda^2 F_1 F_2)=\det(id-\lambda F_1)\det(id-\lambda F_2).
\end{eqnarray*}
Since $1\over \lambda_0$ is not in the spectrum of $F_2$, hence $1\over \lambda_0$ is not zero point of $\det(id-\lambda F_2)$; moreover, it is not hard to see that $\det (id-\lambda F_1)= \left(1-{\lambda\over \lambda_0}\right)^k$ where $k$ is the algebraic multiplicity of the eigenvalue $1\over \lambda_0$ of $F$. The proof is complete.

\end{proof}

\noindent\emph{Proof of Proposition \ref{cor2.5}.}  By (\ref{n3.1}) and Lemma \ref{lem3.6}, $-{1\over \lambda_0}$ is an eigenvalue of $R_1\mathcal{A}(\nu)^{-1}$ of algebraic multiplicity $k$ if and only if it is a zero point the analytic function $\det(\bar{S}_d\gamma_\lambda(T)-e^{\nu T}I_{2n})$ of degree $k$. On the other hand, by (\ref{thf1.1.2}) and the multiplicative property, for $B_\lambda$ defined in (\ref{b2})  we have
 \begin{eqnarray*}
 \det\(id-\lambda D\(-J\d-\nu J-B_0\)^{-1}\)&=&\det\[\(-J\d-\nu J-B_\lambda\)\(-J\d+P_0\)^{-1}\] \\
                                                                     && \cdot \det\[\(-J\d+P_0\)\(-J\d-\nu J-B_0\)^{-1}\]\\
                                                                     &=&(\bS_d \gamma_\lambda(T)-e^{\nu T} I_{2n})(\bS_d \gamma_0(T)-e^{\nu T}I_{2n}).
 \end{eqnarray*}
 Again, by Lemma \ref{lem3.6}, $1/\lambda_0$ is an eigenvalue of $D(-J\d-\nu J-B_0)^{-1} $  of algebraic multiplicity $k$ if and only if it is also a zero point $\det(\bar{S}_d{\gamma}_\lambda(T)-e^{\nu T}I_{2n})$ of degree $k$. The desired result is proved. \hfill$\Box$

\begin{exam}

We will compute  the simplest case, that is
 $\mathcal{A}(\nu)=-(\d+\nu)^2$, $R_1=-R$.
  Recall that $K_n=\left(\begin{array}{cc} I_n & 0_n \\ 0_n & 0_n\end{array}\right)$, $D=\left(\begin{array}{cc} 0_n & 0_n \\ 0_n & R\end{array}\right)$.
Recall that
  $\gamma_{0}(t)$ satisfied
 $\dot{\gamma}_0(t)=JK_n\gamma_{0}(t)$ with $\gamma_{0}(0)=I_{2n}$. Direct computation shows that
 $ \gamma_0(t)=\left(\begin{array}{cc}I_n & 0_n\\ tI_n&I_n \end{array}\right),$ and obviously $\gamma_{0}(t)^{-1}=\left(\begin{array}{cc}I_n & 0_n\\ -tI_n&I_n \end{array}\right)$. Therefore,
 $$ J\hat{D}= \gamma_{0}(t)^{-1}JD\gamma_{0}(t)=\left(\begin{array}{cc} -tR & -R\\ t^2R&tR \end{array}\right),    $$
thus
$$  J\int_0^T \hat{D}dt= \left(\begin{array}{cc}  -\int_0^TtRdt &  -\int_0^TRdt\\  \int_0^Tt^2Rdt&  \int_0^TtRdt \end{array}\right).$$
Let $\bS^T$ be the transposition of $\bS$, then $\bS^T=\bS^{-1}$. For $\omega=e^{\nu T}$
 $$ M(M-\omega)^{-1}=\left(\begin{array}{cc}(I_n-\omega \bS^T)^{-1} & 0_n\\ 0_n & (I_n-\omega \bS^T)^{-1}\end{array}\right)\left(\begin{array}{cc} I_n & 0_n\\ -\omega T \bS^T(I_n-\omega \bS^T)^{-1}&I_n \end{array}\right), $$
 and
 \bea  G_1=\left(\begin{array}{cc}  -\int_0^TtRdt &  -\int_0^TRdt\\  \int_0^Tt^2Rdt& \int_0^TtRdt \end{array}\right)\cdot\left(\begin{array}{cc}(I_n-\omega \bS^T)^{-1} &  0_n\\ 0_n & (I_n-\omega \bS^T)^{-1}\end{array}\right)\left(\begin{array}{cc} I_n & 0_n\\ -\omega T \bS^T(I_n-\omega \bS^T)^{-1}&I_n \end{array}\right). \eea
 Thus  $$ Tr(G_1)=\omega Tr\left(T\int_0^T Rdt\cdot \bS^T(I_n-\omega \bS^T)^{-2}\right). $$
To simplify  the notation, we denote by $$ R_{ave}=\frac{1}{T}\int_0^TR(t)dt, $$
 which is a constant matrix. Then
\bea   Tr (R\mathcal{A}(\nu)^{-1})=-\omega T^2\cdot Tr(R_{ave}\cdot
\bS(\bS-\omega)^{-2}). \label{aa.1}\eea

Please note that by take derivative with respect
to $\nu$  on both sides of (\ref{aa.1}), we get
\bea   Tr \left(R\mathcal{A}(\nu)^{-2}\right)=\frac{\omega T^4}{6}Tr\left(R_{ave} \bS(\bS^2+4\omega \bS+\omega^2)(\bS-\omega)^{-4}\right). \label{aa.2}\eea
\end{exam}

\begin{rem}\label{remark}
In \cite{K1}, Krein also
consider the boundary value problem \bea y'' +\lambda R(t)y=0,\,\
y(0)+y(T)=y'(0)+y'(T)=0,\label{k1} \eea where $R(t)\in \mathcal{B}(n)$.  Let $\lambda_j$,
$j\in\mathbb{Z}$ or $\mathbb{N}$ (assume $\lambda_j\leq
\lambda_{j+1}$ ), be the  eigenvalues of boundary value
problem (\ref{k1}), that means the system \bea y'' +\lambda_j
R(t)y=0,\,\ y(0)+y(T)=y'(0)+y'(T)=0,\label{k1.1} \eea has a
nontrivial solution. Each $\lambda_j$ appears as many times as its
multiplicity. To state Krein's work, set
 \bea  X(t)=\int_0^t(R(s)-R_{ave})ds+C, \label{k2} \eea where $C$
is a constant matrix which is chosen such that $X_{ave}=0$. Krein
proved \cite{K1} \bea
\sum\frac{1}{\lambda_j}=\frac{T}{4}\int_0^TTr(R(t))dt, \label{k3}
\eea and \bea
\sum\frac{1}{\lambda_j^2}=\frac{T}{2}\int_0^TTr(X^2(t))dt+\frac{T^2}{48}Tr\[\(\int_0^TR(t)dt\)^2\].
\label{k4} \eea
Please note that (\ref{aa.1}) is a generalization of (\ref{k3}). Please note that, in the formula  (\ref{lt.1a}),  the expression of  $\frac{1}{\lambda_j^2}$ is different from  (\ref{k4}).
 The precise  generalization with the same form as Krein's formula  will be given in the forthcoming paper.
\end{rem}

\section{Application}\label{sec 4}

The Maslov-type index is a very useful tool in studying the
multiplicity and stability of  periodic solution in Hamiltonian
systems \cite{Lon3},\cite{Lon4}. It is well-known that the relative Morse index for linear Hamiltonian system equals to the Maslov-type index for the corresponding fundamental solutions. It will be seen that, by  the trace formula, we could estimate the relative Morse index, and therefore the trace formula could be used to
judge the linear stability via the Maslov-type index. For reader's convenience, we  review the   relative Morse index and stability criteria via Maslov-type index in \S \ref{subsec4.1}, details could be found in \cite{HS},\cite{Lon4}. The estimation of relative Morse index by the trace of operator is given in \S \ref{subsec4.2}, some new criteria for the  stability is given in \S \ref{subsec4.3}, at \S \ref{subsec4.4}, we give some estimation of Morse index for
Sturm-Liouville systems.

In the whole of this section, $\nu$ will be assumed to be an imaginary number.

\subsection{Brief  review of the relative Morse index, spectral flow and stability criteria via  Maslov-type index }\label{subsec4.1}

As we have reviewed in \cite{HW}, let $\tilde{A}$,  $\tilde{B}$ be bounded self-adjoint operators on Hilbert space $E$, $\tilde{A}$ is a Fredholm
operator and $\tilde{B}$ is compact, then the relative Morse index $I(\tilde{A}, \tilde{A}-\tilde{B})$ is defined by
\begin{eqnarray*}
I(\tilde{A},\tilde{A}-\tilde{B})=\dim (E_{-}(\tilde{A}-\tilde{B}),E_{-}(\tilde{A})). \label{5a.1}
\end{eqnarray*}
where $E_-(\tilde{A})$ and $E_-(\tilde{A}-\tilde{B})$ are respectively the subspaces on which $\tilde{A}$ and $\tilde{A}-\tilde{B}$ is negative definite, and
\begin{eqnarray*}
\dim (E_1,E_2)=\dim(E_1\cap E_2^\perp)-\dim(E_2\cap E_1^\perp). \label{5a.2}
\end{eqnarray*}
For the Hamiltonian system, let $A=-J\frac{d}{dt}$. Denote by $\mathcal{M}$ the linear space  generated by the eigenvectors of $A$. The
$1/2$ inner product on $\cM$ is defined by, \begin{eqnarray*} \langle
x,y\rangle_{1/2}=\langle(|A|+id)x,y\rangle,\quad \mathrm{for}\ x,\,y\in \mathcal{M}, \label{5a.4}
\end{eqnarray*} where $\langle\cdot,\cdot\rangle$ is the inner product in $E$.  Denote by $\tilde{E}$ the Hilbert space completed by $\mathcal{M}$ under the $1/2$ norm. Let $\tilde{A}=(id+|A|)^{-1}A$, $\tilde{B_j}=(id+|A|)^{-1}B_j$ (j=0,1), then
both of them are self-adjoint operators on $\tilde{E}$. Define
\begin{eqnarray*}
I(A-B_0,A-B_1)=I(\tilde{A}-\tilde{B_0},\tilde{A}-\tilde{B_1}). \label{5a.5}
\end{eqnarray*}
The relationship between the conditional Fredholm determinant and the relative Morse index had been given in \cite{HW}.

On the other hand, the relative Morse index could be defined by spectral flow \cite{HS}. As is well known, spectral flow was introduced by Atiyah, Patodi and
Singer \cite{APS} in their study of index theory on manifolds with
boundary. It is a very useful tool to understand the relative Morse index.
Let $\{A(\theta),\theta\in[0,1]\}$ be a continuous path of
self-adjoint Fredholm operators on a Hilbert space $\mathcal{H}$. Roughly
speaking, the spectral flow of path $\{A(\theta),\theta\in[0,1]\}$
counts the net change in the number of negative eigenvalues of
$A(\theta)$ as $\theta$ goes from $0$ to $1$, where the enumeration
follows from the rule that each negative eigenvalue crossing to the
positive axis contributes $+1$ and each positive eigenvalue crossing
to the negative axis contributes $-1$, and for each crossing, the
multiplicity of eigenvalue is counted. More precisely, as shown in
\cite{APS},  let
$$\wp=\bigcup_{\theta\in[0,1]} \sigma (A({\theta})),
$$ where $\sigma(A(\theta))$ is the spectrum for $A(\theta)$, then $\wp$ is a closed subset of the $(\theta,\lambda)$-plane. The spectral flow $Sf(\{A({\theta})\})$
 is defined to be the intersection number of $\wp$ with the line $\lambda=-\epsilon$ with respect
 to the usual orientation for some small positive $\epsilon$.
Obviously,
$Sf(\{A(\theta)\})=Sf(\{A(\theta)+\epsilon id\})$ if $id$ is the
identity operator on $\mathcal{H}$, and $0\leq\epsilon\leq\epsilon_0$ for some sufficiently small positive number
$\epsilon_0$.

We come back to the Hamiltonian systems, suppose $B(s,t)\in C([0,1]\times[0,T], S(2n))$. For $s\in[0,1]$,  let $B_s\in\mathcal{B}(2n)$. For such two operators $A-B_0$ and $A-B_1$, we can define the relative Morse index via spectral flow.
In fact, by \cite{HS}, we have,
\begin{eqnarray*}
I(A-B_0,A-B_1)=-Sf(\{A-B(s), s\in[0,1]\}). \label{5a.8}
\end{eqnarray*}
For $B_0$, $B_1$, $B_2$, then \begin{eqnarray*}
I(A-B_0,A-B_1)+I(A-B_1,A-B_2)=I(A-B_0,A-B_2). \label{a5.10}  \end{eqnarray*}
Let $D=B_1-B_0$, and we can simply let $B(s)=B_0+sD$. The next proposition is obvious from the
definition of spectral flow.
\begin{prop}\label{prop5a.1} Let $\kappa=\{s_0\in[0,1], \ker(A-B(s_0))\neq0\}$,
\begin{eqnarray*} I(A-B_0, A-B_1)\leq \sum_{s_0\in\kappa}  \mathrm{dim}\,\ker(A-B(s_0)).  \label{5a.12}  \end{eqnarray*}

\end{prop}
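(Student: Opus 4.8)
\emph{Proof strategy.}
The plan is to deduce the estimate from a standard feature of the spectral flow. By \cite{HS}, $I(A-B_0,A-B_1)=-Sf(\{A-B(s)\mid s\in[0,1]\})$, where $D=B_1-B_0$, $B(s)=B_0+sD$, and $\{T_s:=A-B(s)\}$ is a continuous path of self-adjoint Fredholm operators (for definiteness one may pass, as in \S\ref{subsec4.1}, to the bounded path $\tilde A-\tilde B(s)=(id+|A|)^{-1}(A-B(s))$ on $\tilde E$; this changes neither the kernels, since $(id+|A|)^{-1}$ is injective, nor the spectral flow). Since $I(A-B_0,A-B_1)=-Sf(\{T_s\})\le|Sf(\{T_s\})|$, it is enough to prove the general bound
\begin{eqnarray*}
|Sf(\{T_s\})|\ \le\ \sum_{s_0\in\kappa}\dim\ker T_{s_0}.
\end{eqnarray*}
We may assume $\kappa$ is finite, since otherwise the right-hand side is $+\infty$ and there is nothing to prove.

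First I would localize around the crossings. Fix $s_0\in\kappa$, put $k_{s_0}=\dim\ker T_{s_0}$, and choose $\eta_{s_0}>0$ with $\pm\eta_{s_0}\notin\sigma(T_{s_0})$ and $[-\eta_{s_0},\eta_{s_0}]\cap\sigma(T_{s_0})=\{0\}$. Because $s\mapsto T_s$ is (norm-resolvent) continuous and $\dim P_{[-\eta_{s_0},\eta_{s_0}]}(T_s)$ is locally constant wherever $\pm\eta_{s_0}\notin\sigma(T_s)$, there is an open interval $U_{s_0}\ni s_0$ in $[0,1]$, with the closures $\overline{U_{s_0}}$ pairwise disjoint and $\overline{U_{s_0}}\cap\kappa=\{s_0\}$, such that for every $s\in\overline{U_{s_0}}$ one has $\pm\eta_{s_0}\notin\sigma(T_s)$ and $\dim P_{[-\eta_{s_0},\eta_{s_0}]}(T_s)=k_{s_0}$ (its value at $s_0$ being $\dim\ker T_{s_0}$ since $[-\eta_{s_0},\eta_{s_0}]\cap\sigma(T_{s_0})=\{0\}$). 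Thus, over $\overline{U_{s_0}}$, at most $k_{s_0}$ eigenvalue branches (counted with multiplicity) lie in the band $[-\eta_{s_0},\eta_{s_0}]$, and every other eigenvalue stays outside it. On the compact set $K=[0,1]\setminus\bigcup_{s_0\in\kappa}U_{s_0}$, which is disjoint from $\kappa$, compactness gives $d:=\min\{\,|\lambda|:\lambda\in\sigma(T_s),\ s\in K\,\}>0$.

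Next I would compute $Sf(\{T_s\})$ through the intersection-number description of \S\ref{subsec4.1} (cf. \cite{APS}): it equals the signed number of intersections of $\wp=\bigcup_{s\in[0,1]}\sigma(T_s)$ with the line $\lambda=-\epsilon$ for $\epsilon>0$ sufficiently small. Take $\epsilon$ with $0<\epsilon<\min\{d,\ \min_{s_0\in\kappa}\eta_{s_0}\}$. Over $K$ the line $\lambda=-\epsilon$ does not meet $\wp$, so the intersection number localizes: $Sf(\{T_s\})=\sum_{s_0\in\kappa}N_{s_0}$, where $N_{s_0}$ is the contribution over $\overline{U_{s_0}}$. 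Inside $\overline{U_{s_0}}$ only the $\le k_{s_0}$ branches lying in $[-\eta_{s_0},\eta_{s_0}]$ can reach the level $-\epsilon$, since every other eigenvalue has modulus $>\eta_{s_0}>\epsilon$; and at the two endpoints of $U_{s_0}$, which belong to $K$, each of these branches has modulus $\ge d>\epsilon$, hence takes a value $\neq-\epsilon$. Consequently the net (up minus down) count of such a branch across $\lambda=-\epsilon$ is $+1$, $-1$ or $0$, according as it moves from below to above $-\epsilon$, from above to below, or returns to the side where it started, and in particular has modulus at most $1$ regardless of how many times it oscillates through $-\epsilon$. Summing over the $\le k_{s_0}$ branches gives $|N_{s_0}|\le k_{s_0}$, and hence $|Sf(\{T_s\})|\le\sum_{s_0\in\kappa}k_{s_0}$, as desired.

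The step I expect to be the main obstacle is exactly this localization and the attendant bookkeeping, the point being that the spectral flow is read at the level $-\epsilon$ rather than at $0$, so one must first trap the branches that vanish at a crossing $s_0$ inside a narrow band $[-\eta_{s_0},\eta_{s_0}]$ and then bound the (possibly several) times such a branch meets $-\epsilon$; everything else is continuity of spectral projections plus compactness of $[0,1]$. Equivalently, one may use the Phillips description of $Sf$ via differences of ranks of the finite projections $P_{[-\eta_{s_0},0)}(T_s)$ along a partition of $[0,1]$ subordinate to $\kappa$, in which form $N_{s_0}$ is manifestly a difference of two integers in $[0,k_{s_0}]$ and the bound $|N_{s_0}|\le k_{s_0}$ is immediate.
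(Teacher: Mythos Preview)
Your argument is correct and follows precisely the route the paper intends: the paper gives no detailed proof at all, stating only that the proposition ``is obvious from the definition of spectral flow,'' i.e.\ from the intersection-number description recalled in \S\ref{subsec4.1}. What you have written is a careful fleshing-out of this one-line remark---localizing near the finitely many crossings, trapping the relevant eigenbranches in bands $[-\eta_{s_0},\eta_{s_0}]$, and bounding the net intersection count with $\lambda=-\epsilon$ by the number of such branches---so there is no methodological difference to compare.
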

It is not hard to see that, if $D>0$, then $I(A-B,A-B-D)\geq0$. By careful analysis\cite{HS},   the crossing form
\begin{eqnarray} I(A-B, A-B-D)=\sum_{s_0\in\kappa\cap[0,1)} \dim\ker(A-B(s_0)).  \label{5a.14}  \end{eqnarray}
Similarly
\begin{eqnarray} I(A-B, A-B+D)=-\sum_{s_0\in\kappa\cap(0,1]} \dim\ker(A-B(s_0)).  \label{5a.16}  \end{eqnarray}
Thus we have
\begin{cor}\label{indexno} Suppose $D_1\leq D\leq D_2$, then
\begin{eqnarray}I(A-B,A-B-D_1)\leq I(A-B,A-B-D)\leq I(A-B,A-B-D_2).\label{indexn}
\end{eqnarray}
\end{cor}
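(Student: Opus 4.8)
The plan is to deduce both inequalities in (\ref{indexn}) from the additivity of the relative Morse index together with the sign information recorded in (\ref{5a.14})--(\ref{5a.16}). First I would isolate the following non-negativity fact: for every $C\in\mathcal{B}(2n)$ and every $D'\in\mathcal{B}(2n)$ with $D'\geq 0$ one has $I(A-C,\,A-C-D')\geq 0$. This is immediate from the spectral flow description of the relative Morse index: along the homotopy $s\mapsto A-C-sD'$, $s\in[0,1]$, the derivative in $s$ equals $-D'\leq 0$, so the crossing form is negative semidefinite and any eigenvalue that meets the level $-\epsilon$ does so while decreasing; hence $Sf(\{A-C-sD'\})\leq 0$ and $I(A-C,A-C-D')=-Sf(\{A-C-sD'\})\geq 0$. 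Equivalently, this is exactly the content of (\ref{5a.14}), which expresses $I(A-C,A-C-D')$ as a sum of the non-negative integers $\dim\ker(A-C-s_0D')$.

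Next, for the left inequality I would apply the additivity relation (\ref{a5.10}) with $B_0=B$, $B_1=B+D_1$, $B_2=B+D$, which gives
\begin{equation*}
I(A-B,\,A-B-D)=I(A-B,\,A-B-D_1)+I\big(A-(B+D_1),\,A-(B+D_1)-(D-D_1)\big).
\end{equation*}
Since $D-D_1\geq 0$, the non-negativity fact applied with $C=B+D_1$ and $D'=D-D_1$ shows the second summand is $\geq 0$, so $I(A-B,A-B-D_1)\leq I(A-B,A-B-D)$. For the right inequality I would instead use (\ref{a5.10}) with $B_0=B$, $B_1=B+D$, $B_2=B+D_2$, obtaining
\begin{equation*}
I(A-B,\,A-B-D_2)=I(A-B,\,A-B-D)+I\big(A-(B+D),\,A-(B+D)-(D_2-D)\big),
\end{equation*}
and the second summand is again $\geq 0$ because $D_2-D\geq 0$; hence $I(A-B,A-B-D)\leq I(A-B,A-B-D_2)$.

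The only point needing a little care is that the non-negativity fact must hold under the weak inequality $D'\geq 0$, whereas the discussion preceding (\ref{5a.14}) is phrased for $D'>0$; this is covered by the crossing-form computation above, which uses only semidefiniteness, or alternatively by perturbing $D'$ to $D'+\delta\,\mathrm{id}>0$ and letting $\delta\to 0^{+}$, using that the relative Morse index is locally constant off the discrete set of degenerate parameters. Apart from this, the statement is a purely formal consequence of additivity, so I do not anticipate any serious obstacle.
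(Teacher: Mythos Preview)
Your proposal is correct and is precisely the argument the paper has in mind: the paper states the corollary immediately after (\ref{5a.14})--(\ref{5a.16}) with only ``Thus we have'', leaving the additivity step (\ref{a5.10}) and the non-negativity of $I(A-C,A-C-D')$ for $D'\geq 0$ implicit. Your treatment of the semidefinite case (via the crossing-form argument or the perturbation $D'+\delta\,\mathrm{id}$) is a useful clarification that the paper does not spell out.
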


To get the stability criteria, we consider the following Hamiltonian system,
\bea \dot{z}(t)&=& JB(t)z(t)\nonumber \label{3b.3}\\
  z(0)&=& \omega S z(T). \nonumber\label{3b.4} \eea
Denote $A_\omega$, $B_\omega$ be the operators
corresponding to $A,B$ respectively under the $\omega
S$-boundary condition, then $A_\omega$ is a self-adjoint operator
with the domain $D_{\omega S}$.  Since $\nu$ is an imaginary number,  $e^{\nu t}$ is a unitary operator on $E$ and $e^{\nu t}D_S=D_{\omega S}$.   Simple calculations  show that $ e^{-\nu t}A_\omega e^{\nu t}=A-\nu J$.
 Thus we have
 \bea I(A_\omega, A_\omega-B_\omega)=I(A-\nu J, A-\nu J-B).  \label{5a.17} \eea

To judge the stability, we use the Maslov-type index $i_\omega(\gamma)$, which is essentially same as  the relative Morse index \cite{Lon4}. Roughly speaking, for a continuous path $\gamma(t)\in\Sp(2n)$, $\omega\in \mathbb{U}$, the Maslov-type index $i_\omega(\gamma)$ is defined by the intersection number of $\gamma$ and $\Sp_\omega^0(2n)=\{M\in\Sp(2n)\,|\, \det(M-\omega I_{2n})=0\}$.  Details could be found in \cite{Lon2},\cite{Lon4}, some brief review could be found in \cite{HS1}. For simplicity, we assume  $S=I_{2n}$.

\begin{prop}\label{prop5a.2}
Suppose $S=I_{2n}$, then, for  imaginary number $\nu$  such that   $\omega=e^{\nu T}\in \mathbb U\setminus \{1\}$, we have
\begin{eqnarray*} I(A, A-B)=i_1(\gamma)+n.  \label{5a.18}  \end{eqnarray*}
and
\begin{eqnarray} I(A-\nu J, A-\nu J-B)=i_\omega(\gamma).  \label{5a.20}  \end{eqnarray}
\end{prop}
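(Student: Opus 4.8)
<br>

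The plan is to establish Proposition \ref{prop5a.2} by reducing both identities to known facts: the identification of the relative Morse index with the Maslov-type index, together with the conjugation identity $e^{-\nu t}A_\omega e^{\nu t}=A-\nu J$ already recorded above. For the first identity, I would start from the general theorem (as in \cite{HS}, \cite{Lon4}) that for the $1$-periodic (here $S=I_{2n}$) boundary value problem one has $I(A,A-B)=i_1(\gamma)+\dim\ker(A)$ when $0$ is not an eigenvalue, but more robustly $I(A,A-B)=i_1(\gamma)+n$ in the sense that the constant-path reference $B\equiv 0$ contributes the index $n$. Concretely: take the path $B(s)=sB$, $s\in[0,1]$; by the spectral flow description $I(A,A-B)=-Sf(\{A-B(s)\})$, and by the homotopy invariance of the Maslov-type index together with the catenation property, $-Sf$ equals $i_1(\gamma_B)-i_1(\gamma_0)$ where $\gamma_0(t)\equiv I_{2n}$ is the fundamental solution for $B\equiv0$. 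Then one computes $i_1(\gamma_0)$ directly: the constant path at the identity has $i_1=-n$ under the standard normalization (the identity lies in $\Sp_1^0(2n)$ with the appropriate orientation contribution), giving $I(A,A-B)=i_1(\gamma)+n$.

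For the second identity, the point is that $\omega=e^{\nu T}\neq 1$, so the reference operator $A-\nu J$ is invertible (its spectrum is $\{2k\pi/T + \text{imaginary shift}\}$ avoiding $0$), and the corresponding constant-path fundamental solution $\gamma_0(t)=e^{\nu Jt}$ has $\gamma_0(T)=e^{\nu JT}$ which is $\omega$-nondegenerate precisely because $\omega\neq 1$ is not a root of $\det(e^{\nu JT}-\omega I)$... wait, one must check this: the eigenvalues of $e^{\nu JT}$ are $e^{\pm i\nu T}$-type, so nondegeneracy needs care, but since $\nu$ is purely imaginary, $\nu T = i\theta$ and the eigenvalues of $e^{\nu JT}$ are $e^{\pm i\theta}$; demanding $\omega=e^{\nu T}$ not equal these forces a condition, however the cleanest route is to use the conjugation $e^{-\nu t}A_\omega e^{\nu t}=A-\nu J$ to get $I(A-\nu J,A-\nu J-B)=I(A_\omega,A_\omega-B_\omega)$ (this is \eqref{5a.17}), and then apply the Maslov-type index identity at $\omega$: $I(A_\omega,A_\omega-B_\omega)=i_\omega(\gamma)$, where now the reference constant path contributes $0$ to the $\omega$-index because the identity matrix is $\omega$-nondegenerate for $\omega\neq 1$ (so $i_\omega$ of the constant path at $I_{2n}$ is $0$). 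Hence $I(A-\nu J,A-\nu J-B)=i_\omega(\gamma)$ with no additive constant.

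The key steps in order: (1) invoke \eqref{5a.17}, i.e.\ $I(A-\nu J,A-\nu J-B)=I(A_\omega,A_\omega-B_\omega)$, from the conjugation already established; (2) recall from \cite{HS},\cite{Lon4} the relative-Morse-index / Maslov-type-index correspondence $I(A_\omega,A_\omega-B_\omega)=i_\omega(\gamma)-i_\omega(\gamma_{\mathrm{const}})$ where $\gamma_{\mathrm{const}}\equiv I_{2n}$; (3) evaluate $i_\omega(I_{2n})$: it is $0$ for $\omega\in\mathbb U\setminus\{1\}$ and $-n$ for $\omega=1$, the latter because at $\omega=1$ the identity sits on $\Sp_1^0(2n)$ and the standard convention assigns it index $-n$; (4) assemble the two formulas. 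The main obstacle I expect is step (3) together with pinning down the exact normalization constant: making sure the sign conventions for $i_\omega$, for the spectral flow, and for the relative Morse index all line up so that the constant path at the identity yields exactly $+n$ in the $\omega=1$ case and $0$ otherwise — this requires carefully citing the precise statement in \cite{HS} or \cite{Lon4} rather than re-deriving it, and checking that the hypothesis $\omega\neq 1$ is exactly what removes the $+n$ term (equivalently, that $A-\nu J$ is invertible so no kernel correction is needed). The analytic inputs — self-adjointness, Fredholmness, compactness of $B_\omega$ relative to $A_\omega$ — are all already in place from the earlier sections, so the proof is essentially a bookkeeping argument assembling cited results.
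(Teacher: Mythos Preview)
Your proposal is correct and follows essentially the same route as the paper: the paper's proof simply cites \cite[Theorem 2.5 and Lemma 4.5]{HS} for the two identities $I(A,A-B)=i_1(\gamma)+n$ and $I(A_\omega,A_\omega-B_\omega)=i_\omega(\gamma)$ for $\omega\in\mathbb U\setminus\{1\}$, and then invokes the conjugation identity \eqref{5a.17} to pass from $A_\omega$ to $A-\nu J$. Your unpacking via the constant-path index computation ($i_1(I_{2n})=-n$, $i_\omega(I_{2n})=0$ for $\omega\neq 1$) is exactly the content behind those citations, and the paper itself records these values later in \S\ref{subsec4.3}.
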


\begin{proof} 
From \cite[Theorem 2.5 and Lemma 4.5]{HS}, we have
\begin{eqnarray*} I(A, A-B)=i_1(\gamma)+n,  \label{ad5a.18}  \end{eqnarray*} and
\begin{eqnarray*} I(A_\omega, A_\omega-B_\omega)=i_\omega(\gamma),  \omega\in \mathbb U\setminus\{1\}.  \label{ad5a.20} \end{eqnarray*}
By (\ref{5a.17}), we have (\ref{5a.20}).
 \end{proof}

 We will continue to review the stability criteria by the Maslov-type index.  Details for the stability
criteria and the Maslov-type index are given in  \cite{Lon4}.
For $\omega\in \mathbb U$, the unit circle, $\omega=e^{i\theta_0}$ with $\theta_0\in[-\pi,\pi]$, let $\mathbb U_\omega=\{e^{i\theta}, \theta\in[-|\theta_0|,|\theta_0|]\}$, denote by $e_\omega(M)$ the total algebraic multiplicities of all eigenvalues of $M$ in $\mathbb U_\omega$. We also simply denote by $e(M)$ the total algebraic multiplicities of
all eigenvalues of $M$ on $\mathbb U$. \emph{Obviously, for $M=\gamma(T)$ if $e(M)=2n$ then $M$ is spectral stable.}

For a bounded variation  function $g(w)$ defined on some closed interval $[a,b]$, we
define its variation by
\begin{eqnarray*}
 var(g(w),[a,b])=\sup
 \Big\{\sum_{j=0}^{k-1}|g(w_{j+1})-g(w_j)|\, \Big|\, a=w_0<.....<w_k=b \,\,\mathrm {is} \,\mathrm {any}\,\mathrm {partition}\Big\}. \label{n1.3.5}
\end{eqnarray*}
Notice that $i_{e^{\theta\sqrt{-1}}}$ is a bounded variation function on $[0,\theta_0]$. And the next proposition can be proved easily by the property of Masolv-type index (readers are referred to \cite{Lon4} or \cite{HS}).
\begin{prop}\label{ellipad}  Let $\gamma$ be an arbitrary path in $\Sp(2n)$ connecting $I_{2n}$ to  $M$,
\begin{eqnarray} e_\omega(M)/2\geq
 var(i_{e^{\theta\sqrt{-1}}}(\gamma),\theta\in[0,\theta_0]).
 \label{5a.24} \end{eqnarray}

\end{prop}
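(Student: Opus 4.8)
The plan is to treat $\theta\mapsto i_{e^{\theta\sqrt{-1}}}(\gamma)$ as a piecewise constant function on $[0,\theta_0]$, to write its total variation as a sum of splitting numbers of $M:=\gamma(T)$, and then to bound those splitting numbers by the multiplicities of the eigenvalues of $M$ lying on the arc $\mathbb{U}_\omega$. I would first recall from \cite{Lon4} how the $\omega$-index moves along $\mathbb{U}$: the map $\theta\mapsto i_{e^{\theta\sqrt{-1}}}(\gamma)$ is constant on each subinterval of $\{\theta:e^{\theta\sqrt{-1}}\notin\sigma(M)\}$, and at every $\theta^{*}$ with $e^{\theta^{*}\sqrt{-1}}\in\sigma(M)$ its one-sided limits satisfy $i_{e^{(\theta^{*}\pm\varepsilon)\sqrt{-1}}}(\gamma)-i_{e^{\theta^{*}\sqrt{-1}}}(\gamma)=S^{\pm}_{M}(e^{\theta^{*}\sqrt{-1}})$ for all small $\varepsilon>0$, where the splitting numbers obey $S^{\pm}_{M}\ge 0$ and vanish off $\sigma(M)$.

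Since the variation of a step function over $[0,\theta_0]$ equals the sum of the absolute sizes of all its jumps (including the one-sided jumps at the two endpoints), the preceding description gives
\[
var\big(i_{e^{\theta\sqrt{-1}}}(\gamma),\,\theta\in[0,\theta_0]\big)=S^{+}_{M}(1)+\sum_{0<\theta<\theta_0}\big(S^{+}_{M}(e^{\theta\sqrt{-1}})+S^{-}_{M}(e^{\theta\sqrt{-1}})\big)+S^{-}_{M}(e^{\theta_0\sqrt{-1}}),
\]
the term $S^{+}_{M}(1)$ coming from the jump at the left endpoint $\theta=0$ (whose value is $i_{1}(\gamma)$) and $S^{-}_{M}(e^{\theta_0\sqrt{-1}})$ from the jump at $\theta_0$; all summands with $e^{\theta\sqrt{-1}}\notin\sigma(M)$ are zero.

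Next I would invoke the standard estimates on the splitting numbers of a symplectic matrix from \cite{Lon4}: the conjugation symmetry $S^{+}_{M}(\lambda)=S^{-}_{M}(\bar\lambda)$ — so in particular $S^{+}_{M}(1)=S^{-}_{M}(1)$ — together with the bound $S^{+}_{M}(\lambda)+S^{-}_{M}(\lambda)\le m_{\lambda}$, where $m_{\lambda}$ denotes the algebraic multiplicity of $\lambda$ as an eigenvalue of $M$. These give $2S^{+}_{M}(1)\le m_{1}$ and $S^{-}_{M}(e^{\theta_0\sqrt{-1}})\le m_{e^{\theta_0\sqrt{-1}}}$, hence
\[
var\big(i_{e^{\theta\sqrt{-1}}}(\gamma),\,\theta\in[0,\theta_0]\big)\le \frac{1}{2}\,m_{1}+\sum_{0<\theta\le\theta_0}m_{e^{\theta\sqrt{-1}}}.
\]
Finally, because $M$ is real symplectic its spectrum is stable under complex conjugation with $m_{\lambda}=m_{\bar\lambda}$, so the eigenvalues of $M$ on the closed arc $\mathbb{U}_\omega=\{e^{\theta\sqrt{-1}}:|\theta|\le\theta_0\}$ account altogether for $e_\omega(M)=m_{1}+2\sum_{0<\theta\le\theta_0}m_{e^{\theta\sqrt{-1}}}$ (with the obvious change when $\theta_0=\pi$, where $-1$ is counted once). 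Combining the last two displays yields $var\big(i_{e^{\theta\sqrt{-1}}}(\gamma),\,\theta\in[0,\theta_0]\big)\le e_\omega(M)/2$, which is the claim (the case $\theta_0=0$ being trivial).

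The only delicate point is the endpoint bookkeeping and the precise form of the splitting-number estimate: at an interior eigenvalue $e^{\theta\sqrt{-1}}$ both one-sided jumps enter the variation, so one genuinely needs the combined inequality $S^{+}_{M}(\lambda)+S^{-}_{M}(\lambda)\le m_{\lambda}$ rather than merely the pointwise bound of each $S^{\pm}_{M}(\lambda)$ by the geometric multiplicity of $\lambda$; at $\theta_0$ only the left jump contributes; and the factor $\tfrac12$ that matches $e_\omega(M)/2$ is produced exactly by the self-conjugacy $S^{+}_{M}(1)=S^{-}_{M}(1)$ at $\lambda=1$ (and likewise at $\lambda=-1$ when $\theta_0=\pi$). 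Apart from fixing the convention for the value of $i_{e^{\theta\sqrt{-1}}}(\gamma)$ at $\theta=0$, the argument is pure bookkeeping on top of the cited properties of the Maslov-type index and its splitting numbers.
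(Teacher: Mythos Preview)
Your argument is correct and is precisely the ``property of the Maslov-type index'' the paper points to: the paper gives no proof beyond referring to \cite{Lon4} and \cite{HS}, and what you have written is the natural way to unpack that reference, namely the local constancy of $\theta\mapsto i_{e^{\theta\sqrt{-1}}}(\gamma)$ away from $\sigma(M)$, the description of the one-sided jumps by the splitting numbers $S^{\pm}_{M}$, the symmetry $S^{+}_{M}(\lambda)=S^{-}_{M}(\bar\lambda)$, and the bound $S^{+}_{M}(\lambda)+S^{-}_{M}(\lambda)\le m_{\lambda}$, all of which are in \cite[Chap.~9]{Lon4}. Your endpoint bookkeeping at $\theta=0$ and $\theta=\theta_0$ (including the special case $\theta_0=\pi$) is handled correctly.
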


\begin{cor}\label{ellip} With the notations as above ,
\begin{eqnarray*} e(M)/2\emph{}\geq
 var(i_{e^{\theta\sqrt{-1}}}(\gamma),\theta\in[0,\pi]).
 \label{5a.22} \end{eqnarray*}

\end{cor}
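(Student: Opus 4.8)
The plan is to obtain Corollary \ref{ellip} as the special case $\omega=-1$ of Proposition \ref{ellipad}. First I would write $-1=e^{i\pi}$, so that in the notation introduced just before Proposition \ref{ellipad} one has $\theta_0=\pi$; then the arc $\mathbb U_\omega=\mathbb U_{-1}=\{e^{i\theta}\mid\theta\in[-\pi,\pi]\}$ is the whole unit circle $\mathbb U$. Consequently $e_\omega(M)=e_{-1}(M)$, the total algebraic multiplicity of the eigenvalues of $M$ lying in $\mathbb U_{-1}$, coincides with $e(M)$, the total algebraic multiplicity of all eigenvalues of $M$ on $\mathbb U$. Substituting $\theta_0=\pi$ and $e_\omega(M)=e(M)$ into the inequality of Proposition \ref{ellipad},
\[
e_\omega(M)/2\ \geq\ var\big(i_{e^{\theta\sqrt{-1}}}(\gamma),\ \theta\in[0,\theta_0]\big),
\]
yields precisely $e(M)/2\geq var\big(i_{e^{\theta\sqrt{-1}}}(\gamma),\ \theta\in[0,\pi]\big)$, which is the assertion; the arbitrariness of the path $\gamma$ connecting $I_{2n}$ to $M$ is inherited verbatim from Proposition \ref{ellipad}.

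There is essentially no obstacle here: all of the content sits in Proposition \ref{ellipad}, which in turn rests on the monotonicity and continuity properties of the Maslov-type index $i_{e^{\theta\sqrt{-1}}}(\gamma)$ recalled from \cite{Lon4} and \cite{HS}. The one point I would double-check is the behaviour at the endpoint $\theta=\pi$: since $e^{i\pi}=-1=\omega$ itself, the arc $\mathbb U_{-1}$ is closed and contains $\omega$, so there is no ambiguity about how the eigenvalue count at that point enters and no separate boundary discussion is needed beyond what is already in Proposition \ref{ellipad}.
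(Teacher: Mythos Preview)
Your proposal is correct and matches the paper's intended derivation: the corollary is stated directly after Proposition~\ref{ellipad} with no separate proof, precisely because it is the specialization $\omega=-1$ (i.e., $\theta_0=\pi$, so $\mathbb U_{-1}=\mathbb U$ and $e_{-1}(M)=e(M)$) that you describe.
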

Obviously, for $\omega\neq\{1,-1\} $
\begin{eqnarray} e(M)/2\emph{}\geq
 |i_{-1}(\gamma)-i_\omega(\gamma)|+ |i_{1}(\gamma)-i_\omega(\gamma)|.
\label{sta1}\end{eqnarray}
Especially,
\begin{eqnarray} e_\omega(M)/2\emph{}\geq
 |i_\omega(\gamma)-i_1(\gamma)|.
 \label{5a.25} \end{eqnarray}
\begin{eqnarray} e(M)/2\emph{}\geq
 |i_{-1}(\gamma)-i_1(\gamma)|.
 \label{5a.26} \end{eqnarray}

\begin{rem}\label{rem4.1}
All the above results, for that the relative Morse index equals to Maslov-type index and for the stability criteria,   could be proved
for any $S$ boundary condition with $S\in\Sp(2n)\cap O(2n)$, and details could be found in \cite{HS}.
\end{rem}

\subsection{Estimate relative Morse index by trace formula }\label{subsec4.2}
In this subsection, we will give the application of the trace formula on the estimation of the non-degeneracy. Moreover,  we will  estimate Maslov-type index by using the trace formula. Suppose $A-\nu J-B$ is non-degenerate,  we will estimate the relative Morse index $I(A-\nu J-B,A-\nu J-B-D)$. Firstly assume $D>0$, thus $I(A-\nu J-B,A-\nu J-B-D)\geq 0$.

\begin{lem}\label{lem5b.1}
Suppose $D>0$, $\nu$ is an imaginary number, then all the eigenvalues of $D(A-\nu J-B)^{-1}$ are  real.
\end{lem}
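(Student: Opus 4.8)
The plan is to reduce the statement to the spectral reality of a self-adjoint operator. The crucial observation is that, since $\nu$ is purely imaginary, the term $\nu J$ is in fact self-adjoint: because $J^{*}=J^{T}=-J$ and $\bar\nu=-\nu$, one has $(\nu J)^{*}=\bar\nu J^{*}=(-\nu)(-J)=\nu J$. Together with the self-adjointness of $A=-J\frac{d}{dt}$ on $D_S$ and of the multiplication operator $B$ (whose values $B(t)$ are real symmetric matrices), this shows that $L:=A-\nu J-B$ is self-adjoint with compact resolvent. Since we are speaking of $D(A-\nu J-B)^{-1}$, the operator $L$ is invertible, so $L^{-1}$ is a bounded (indeed compact) self-adjoint operator on $E$.

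Next I would exploit the positivity of $D$. Since $D(t)>0$ for every $t\in[0,T]$ and $D$ is continuous on the compact interval $[0,T]$, there is $c>0$ with $D(t)\ge cI_{2n}$, hence $\langle Dz,z\rangle\ge c\|z\|^{2}>0$ for every $z\neq 0$ in $E$. Now let $\mu$ be an eigenvalue of $DL^{-1}$ with eigenvector $x\neq 0$, and set $y=L^{-1}x\in D_S$. Then $y\neq 0$, since otherwise $x=Ly=0$, and the eigenvalue equation $DL^{-1}x=\mu x$ reads $Dy=\mu x=\mu Ly$. Pairing with $y$ gives
\[
\langle Dy,y\rangle=\mu\,\langle Ly,y\rangle .
\]
The left-hand side is real and strictly positive by the lower bound on $D$, while $\langle Ly,y\rangle$ is real because $L$ is self-adjoint; moreover $\langle Ly,y\rangle\neq 0$, for otherwise the displayed identity would force $\langle Dy,y\rangle=0$. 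Hence $\mu=\langle Dy,y\rangle/\langle Ly,y\rangle$ is a quotient of two nonzero real numbers, so $\mu\in\mathbb{R}$.

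Equivalently, one may write $D=D^{1/2}D^{1/2}$ with $D^{1/2}$ a bounded, positive, invertible, self-adjoint operator, and observe that $DL^{-1}=D^{1/2}\bigl(D^{1/2}L^{-1}D^{1/2}\bigr)D^{-1/2}$ is similar to the bounded self-adjoint operator $D^{1/2}L^{-1}D^{1/2}$; similar operators have the same eigenvalues, and those of a self-adjoint operator are real. Either way there is no real obstacle here: the one point requiring care is the self-adjointness of $\nu J$, which is precisely where the hypothesis that $\nu$ be imaginary is used. (Note also that $0$ is not an eigenvalue of $DL^{-1}$, since $L^{-1}$ is injective and $D$ is invertible, consistently with the claim.)
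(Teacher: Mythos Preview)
Your proof is correct and, in its second formulation (similarity to $D^{1/2}(A-\nu J-B)^{-1}D^{1/2}$), is exactly the paper's argument; you simply make explicit the point the paper leaves implicit, namely that $\nu J$ is self-adjoint precisely because $\nu$ is imaginary, so that $L^{-1}$ is self-adjoint. The additional Rayleigh-quotient argument you give is a sound alternative but not needed.
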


\begin{proof} Let $D^{1/2}$ be the unique positive operator such that $D^{1/2}D^{1/2}=D$, then $D(A-\nu J-B)^{-1}$ is similar to $D^{1/2}(A-\nu J-B)^{-1}D^{1/2}$,
which is a self-adjoint compact operator. Hence
\begin{eqnarray*}
\sigma(D(A-\nu J-B)^{-1})=\sigma(D^{1/2}(A-\nu J-B)^{-1}D^{1/2})\subset \mathbb R.\end{eqnarray*}
\end{proof}

Let $1\over\lambda_j$ be the eigenvalues of $D(A-\nu J-B)^{-1}$. By Lemma \ref{lem5b.1}, $\lambda_j\in\mathbb{R}$, we can make the order such that
\bea   \cdots\leq\lambda_2^-\leq\lambda_1^-<0<\lambda_1^+\leq\lambda_2^+\leq\cdots .\label{4.a1} \nonumber\eea
Moreover, we have
\begin{lem}\label{lem5b.1.1}
Suppose $D>0$, then $\lim\limits_{j\rightarrow\infty}\lambda_j^+=+\infty$ and  $\lim\limits_{j\rightarrow\infty}\lambda_j^-=-\infty$.
\end{lem}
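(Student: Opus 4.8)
The plan is to reduce the claim to the spectral asymptotics of a self-adjoint compact operator whose positive and negative eigenspaces are both infinite-dimensional. Put $L=(A-\nu J-B)^{-1}$; since $A-\nu J-B$ is assumed invertible and $A$ has compact resolvent, $L$ is compact, and since $\nu$ is imaginary $\nu J$ is self-adjoint, so $A-\nu J-B$, hence $L$, is self-adjoint. As in the proof of Lemma \ref{lem5b.1}, writing $D^{1/2}$ for the positive square root of $D$ — which is bounded and boundedly invertible because $D(t)>0$ on the compact interval $[0,T]$ — the operator $D(A-\nu J-B)^{-1}=DL$ is similar, via $D^{1/2}$, to the self-adjoint compact operator $T:=D^{1/2}LD^{1/2}=D^{1/2}L(D^{1/2})^{*}$; in particular $DL$ and $T$ have the same nonzero eigenvalues $\{1/\lambda_j\}$ with the same multiplicities, and compactness of $T$ forces $1/\lambda_j\to 0$, i.e.\ $|\lambda_j|\to\infty$. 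So the lemma is equivalent to the assertion that $T$ has infinitely many positive and infinitely many negative eigenvalues: listing the positive ones as $\mu_1^{+}\ge\mu_2^{+}\ge\cdots>0$ (so $\mu_k^{+}\to 0^{+}$) gives $\lambda_k^{+}=1/\mu_k^{+}\to+\infty$, and symmetrically $\lambda_k^{-}\to-\infty$.

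Next I claim that $A-\nu J-B$ itself has infinitely many positive and infinitely many negative eigenvalues, equivalently that its positive and negative spectral subspaces $E_{+}(A-\nu J-B)$ and $E_{-}(A-\nu J-B)$ are both infinite-dimensional. The operator is self-adjoint with compact resolvent, so it has discrete real spectrum with $|\mu_k|\to\infty$; since $-\nu J-B$ is bounded, a Courant--Fischer comparison with $A$ shows that the spectrum of $A-\nu J-B$ is unbounded above and below as soon as $\sigma(A)$ is. That $\sigma(A)$ accumulates at $+\infty$ and at $-\infty$ is elementary: after a suitable orthogonal change of basis of $\mathbb R^{2n}$, $J$ and $S$ become simultaneously block-diagonal with $2\times 2$ blocks, the blocks of $J$ being the standard $2\times 2$ symplectic unit and those of $S$ being rotations through angles $\theta_j$, which decomposes $A=-J\frac{d}{dt}$ on $D_S$ into a direct sum of the corresponding scalar ($n=1$) problems, whose eigenvalues are $(2k\pi-\theta_j)/T$, $k\in\mathbb Z$. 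Hence $E_{\pm}(A-\nu J-B)$ are infinite-dimensional, and since $x\mapsto 1/x$ preserves signs, $E_{\pm}(L)=E_{\pm}\bigl((A-\nu J-B)^{-1}\bigr)$ are infinite-dimensional as well.

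Finally I transfer this through the congruence $T=D^{1/2}L(D^{1/2})^{*}$ by a Sylvester-type inertia argument. Fix $k\in\mathbb N$ and choose a $k$-dimensional subspace $V_k\subset E_{+}(L)$ with $\langle Lx,x\rangle\ge\varepsilon_k\|x\|^{2}$ on $V_k$ for some $\varepsilon_k>0$ (for instance the span of $k$ eigenvectors of $L$ with positive eigenvalue, $\varepsilon_k$ being the smallest of those eigenvalues). Then on the $k$-dimensional subspace $W_k:=(D^{1/2})^{-1}V_k$ one has $\langle Ty,y\rangle=\langle L(D^{1/2}y),D^{1/2}y\rangle>0$ for every $y\in W_k\setminus\{0\}$, and since the unit sphere of the finite-dimensional space $W_k$ is compact this quadratic form is bounded below by a positive constant there; by Courant--Fischer, $T$ has at least $k$ positive eigenvalues. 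As $k$ was arbitrary, $T$ has infinitely many positive eigenvalues, and applying the same reasoning to $-T=D^{1/2}(-L)(D^{1/2})^{*}$ and $E_{-}(L)$ shows that $T$ has infinitely many negative eigenvalues. Combined with the first paragraph, this gives $\lambda_j^{+}\to+\infty$ and $\lambda_j^{-}\to-\infty$.

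The only genuinely non-formal step, and hence the main obstacle, is establishing that $A-\nu J-B$ has infinitely many eigenvalues of each sign — that is, that $\sigma(A)$ is unbounded both above and below; everything else is routine bookkeeping with compact self-adjoint operators and the stability of inertia under congruence by a bounded invertible operator. One could instead observe that $A^{2}=-\frac{d^{2}}{dt^{2}}$ is positive with eigenvalues tending to $+\infty$, so $|\sigma(A)|$ is unbounded, but extracting unboundedness in both directions still seems to require the block reduction above, or an equally explicit description of $\sigma(A)$.
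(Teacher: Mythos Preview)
Your proof is correct and rests on the same two pillars as the paper's: the congruence by $D^{\pm 1/2}$ that realizes $\{\lambda_j\}$ as the spectrum of a genuinely self-adjoint operator, and the fact that $\sigma(A)$ is unbounded in both directions. The paper, however, packages these differently. It argues by contradiction on the unbounded side: assuming $\lambda_j^{+}<\lambda_0^{+}$ for all $j$, it observes that $\sigma\bigl(D^{-1/2}(A-\nu J-B)D^{-1/2}\bigr)=\{\lambda_j\}$, hence $D^{-1/2}(A-\nu J-B)D^{-1/2}-\lambda_0^{+}\le 0$, and then via the congruence $A-\nu J-B-\lambda_0^{+}D=D^{1/2}\bigl(D^{-1/2}(A-\nu J-B)D^{-1/2}-\lambda_0^{+}\bigr)D^{1/2}$ concludes $\sigma(A-\nu J-B-\lambda_0^{+}D)\subset(-\infty,0]$; but this operator differs from $A$ by a bounded perturbation, so $\sigma(A)$ would be bounded above, a contradiction. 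You instead work directly on the compact side, showing that $T=D^{1/2}LD^{1/2}$ inherits infinitely many eigenvalues of each sign from $L=(A-\nu J-B)^{-1}$ by a Sylvester inertia / Courant--Fischer argument. Your route is a little longer but makes the mechanism more explicit, and you actually verify that $\sigma(A)$ is unbounded both ways (the paper simply asserts it). One small caveat: in your block reduction of $J$ and $S$, the $2\times 2$ blocks of $J$ may be $\pm J_2$ rather than all $+J_2$, but this does not affect the conclusion that the eigenvalues are of the form $(2k\pi\pm\theta_j)/T$, $k\in\mathbb Z$.
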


\begin{proof} We will use the contradiction argument. Suppose there is $ \lambda^+_0$ such that, for each $j\in\mathbb N$, $\lambda_j^+<\lambda^+_0$. We claim that
$$\sigma(A-\nu J-B-\lambda^+_0D)\subset(-\infty,0].$$ In fact, notice that $$\sigma(A-\nu J-B-\lambda^+_0D)\subset(-\infty,0]$$ if and only if $$\sigma(D^{-{1\over 2}}(A-\nu J-B)D^{-{1\over 2}}-\lambda^+_0)\subset(-\infty,0].$$ Moreover, it is easy to see that $$\sigma(D^{-{1\over 2}}(A-\nu J-B)D^{-{1\over 2}})=\{\lambda _j\},$$ and hence $$\sigma(D^{-{1\over 2}}(A-\nu J-B)D^{-{1\over 2}}-\lambda^+_0)\subset(-\infty,0].$$ Now, notice that $A$ is an unbounded operator $\pm\infty$ is the limitation of its eigenvalues, and $\nu J-B-\lambda_0^+$ is a bounded operator. By the spectral theory for unbounded operator with perturbation by  bounded operator, we have that
\begin{eqnarray*}
\sigma (A)\subset \Big\{\lambda\,\Big|\,|\lambda-\lambda_0|\leq\|\nu J-B-\lambda^+_0\|, \text{for some }\lambda_0\in\sigma(A-\nu J-B-\lambda^+_0 D)\Big\}.
\end{eqnarray*}
This is a contradiction. The other part of the lemma can be proved similarly.
\end{proof}

Recall the formula (\ref{0.1.4}) that,
\begin{eqnarray} Tr \left[\left(D(A-\nu J-B)^{-1}\right)^m\right]=\sum_{j=1}^\infty \frac{1}{\lambda_j^m}, \,\ m\geq2. \label{4.0.1.3}
  \end{eqnarray}

\begin{prop}\label{prop5b.1} Suppose $D>0$, we have that, for $\forall k\in\mathbb{N}$
\begin{eqnarray*} I(A-\nu J-B,A-\nu J-B-D)+\mathrm{dim}\ker(A-\nu J-B-D)< Tr((D(A-\nu J-B)^{-1})^{2k}).   \label{5b.1}  \end{eqnarray*}
\end{prop}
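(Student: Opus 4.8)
The plan is to rewrite the left-hand side as a count of eigenvalues $\lambda_j^+$ of $D(A-\nu J-B)^{-1}$ lying in $(0,1]$, and then compare that count with the series $\sum_j \lambda_j^{-2k}=Tr\big((D(A-\nu J-B)^{-1})^{2k}\big)$ provided by (\ref{4.0.1.3}).

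First I would identify the relative Morse index. Since $D>0$, formula (\ref{5a.14}) applied to the linear path $s\mapsto A-\nu J-B-sD$ gives $I(A-\nu J-B,A-\nu J-B-D)=\sum_{s_0}\dim\ker(A-\nu J-B-s_0D)$, the sum running over those $s_0\in[0,1)$ with nontrivial kernel. The endpoint $s_0=0$ contributes nothing because $A-\nu J-B$ is invertible; for $s_0\in(0,1)$ one has $\ker(A-\nu J-B-s_0D)\neq 0$ exactly when $1/s_0$ is an eigenvalue of $D(A-\nu J-B)^{-1}$, i.e. $s_0=\lambda_j^+$ for some $j$ with $\lambda_j^+<1$. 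Moreover $D$ is boundedly invertible, so (as in the proof of Lemma \ref{lem5b.1}) $D(A-\nu J-B)^{-1}$ is similar to the self-adjoint operator $D^{1/2}(A-\nu J-B)^{-1}D^{1/2}$, whence $\dim\ker(A-\nu J-B-s_0D)$ equals the multiplicity of the eigenvalue $1/s_0$. Thus $I(A-\nu J-B,A-\nu J-B-D)$ is the number, with multiplicity, of $\lambda_j^+$ in $(0,1)$, while $\dim\ker(A-\nu J-B-D)$ is the number of $\lambda_j^+$ equal to $1$. Adding, the left-hand side of the proposition equals $N:=\#\{j:0<\lambda_j^+\le 1\}$, which is finite since $\lambda_j^+\to+\infty$ by Lemma \ref{lem5b.1.1}.

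Next I would bound the trace. By (\ref{4.0.1.3}), $Tr\big((D(A-\nu J-B)^{-1})^{2k}\big)=\sum_j(\lambda_j^+)^{-2k}+\sum_j(\lambda_j^-)^{-2k}$, and every summand is positive. For each of the $N$ eigenvalues $\lambda_j^+\in(0,1]$ we have $(\lambda_j^+)^{-2k}\ge 1$, so the first sum is at least $N$. Finally $D(A-\nu J-B)^{-1}$ has at least one negative eigenvalue — in fact infinitely many by Lemma \ref{lem5b.1.1}, or simply because $A$ is unbounded below, so $(A-\nu J-B)^{-1}$, and hence its self-adjoint congruent model above, has negative spectrum — so $\sum_j(\lambda_j^-)^{-2k}>0$. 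Combining, $Tr\big((D(A-\nu J-B)^{-1})^{2k}\big)>N$, which is precisely the asserted strict inequality.

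The main obstacle is the bookkeeping in the first step: pinning down the half-open-interval convention in (\ref{5a.14}) so that the crossings at $s_0\in(0,1)$ are charged to the relative Morse index while the crossing at $s_0=1$ is charged to $\dim\ker(A-\nu J-B-D)$, and carefully matching $\dim\ker(A-\nu J-B-s_0D)$ with the eigenvalue multiplicities of the compact operator $D(A-\nu J-B)^{-1}$ through its self-adjoint model. Once this identification is secured, the inequality reduces to the elementary fact that $t^{-2k}\ge 1$ for $0<t\le 1$ together with the existence of a negative eigenvalue.
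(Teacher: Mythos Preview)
Your proposal is correct and follows essentially the same approach as the paper: both identify the left-hand side via the crossing formula (\ref{5a.14}) with the count of eigenvalues $\lambda_j$ in $(0,1]$, and then bound this count by the trace using $\lambda_j^{-2k}\ge 1$ for $|\lambda_j|\le 1$ together with Lemma~\ref{lem5b.1.1}. The only cosmetic difference is in the source of the strict inequality: the paper observes that by Lemma~\ref{lem5b.1.1} there are eigenvalues with $|\lambda_j|>1$, while you invoke the existence of negative eigenvalues; either works.
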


\begin{proof}

From Lemma \ref{lem5b.1}, $\lambda_j$ are real numbers, and hence $\lambda_j^{2k}>0$. By Lemma \ref{lem5b.1.1} and (\ref{4.0.1.3}), we have
 \begin{eqnarray*}  Tr((D(A-\nu J-B)^{-1})^{2k}>\sum_{|\lambda_j|\leq 1}\frac{1}{\lambda_j^{2k}},\,\ \forall k\in\mathbb{N} \label{5b.1.1}.  \end{eqnarray*}
Obviously, $\sum\limits_{|\lambda_j|\leq 1}\frac{1}{\lambda_j^{2k}}$ is no less than the  total multiplicity
of eigenvalues with $|\lambda_j|\leq1$.  Please note that $\lambda_j\in D(A-\nu J-B)^{-1}$ if and only if $\ker(A-\nu J-B-\lambda_j D)$ is degenerate. Moreover, the multiplicity of the eigenvalue for $D(A-\nu J-B)^{-1}$ at $\lambda_j$ is equal to $\mathrm{dim}\ker(A-\nu J-B-\lambda_j D)$. By Proposition \ref{prop5a.1} and (\ref{5a.14}), the proposition is proved.
\end{proof}

Similar to Proposition \ref{prop5b.1}, we have the following proposition.

\begin{prop}\label{prop5b.2} Suppose $D>0$, then
\begin{eqnarray} - Tr((D(A-\nu J-B)^{-1})^{2k})<I(A-\nu J-B,A-\nu J-B+D)\leq 0,  \,\ \forall k\in\mathbb{N}\label{5b.2}.\nonumber  \end{eqnarray}
\end{prop}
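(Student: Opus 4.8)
The plan is to run the argument of Proposition~\ref{prop5b.1} with the perturbing direction reversed, i.e.\ along the path $s\mapsto A-\nu J-B+sD$ in place of $s\mapsto A-\nu J-B-sD$, so that the crossing formula (\ref{5a.16}) replaces (\ref{5a.14}). First I would reuse the set-up from the proof of Proposition~\ref{prop5b.1}: since $D>0$ and $\nu$ is imaginary, Lemma~\ref{lem5b.1} gives that the eigenvalues $1/\lambda_j$ of $D(A-\nu J-B)^{-1}$ are real; Lemma~\ref{lem5b.1.1} gives $\lambda_j^+\to+\infty$ and $\lambda_j^-\to-\infty$, so in particular there are infinitely many $j$ with $|\lambda_j|>1$; and the trace formula (\ref{4.0.1.3}) gives $Tr\big((D(A-\nu J-B)^{-1})^{2k}\big)=\sum_j\lambda_j^{-2k}$ with every summand strictly positive.

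Next I would set $B(s)=(A-\nu J-B)+sD$ and observe that $\ker\big(A-\nu J-B+s_0D\big)\neq 0$ precisely when $-1/s_0$ is an eigenvalue of $D(A-\nu J-B)^{-1}$, i.e.\ when $s_0=-\lambda_j^-$ for some negative eigenvalue $\lambda_j^-$, in which case the dimension of the kernel equals the multiplicity of that eigenvalue. Since $D>0$, formula (\ref{5a.16}) applies to the path from $A-\nu J-B$ to $A-\nu J-B+D$ and yields
\begin{eqnarray*}
I(A-\nu J-B,\,A-\nu J-B+D)=-\sum_{\lambda_j^-\in[-1,0)}\dim\ker\big(A-\nu J-B-\lambda_j^-D\big)\leq 0,
\end{eqnarray*}
where the index range $\lambda_j^-\in[-1,0)$ comes from the constraint that the crossing parameters $s_0=-\lambda_j^-$ lie in the half-open interval $(0,1]$. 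This is the right-hand inequality.

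Finally, the sum on the right above is the total multiplicity of those eigenvalues $1/\lambda_j$ of $D(A-\nu J-B)^{-1}$ with $\lambda_j^-\in[-1,0)$; for each such index $|\lambda_j^-|\le 1$ forces $(\lambda_j^-)^{-2k}\ge 1$, so this total multiplicity is at most $\sum_{\lambda_j^-\in[-1,0)}(\lambda_j^-)^{-2k}\le\sum_{|\lambda_j|\le 1}\lambda_j^{-2k}$. Since $\lambda_j^\pm\to\pm\infty$, the terms with $|\lambda_j|>1$ contribute a strictly positive amount to $\sum_j\lambda_j^{-2k}$, whence $\sum_{|\lambda_j|\le 1}\lambda_j^{-2k}<Tr\big((D(A-\nu J-B)^{-1})^{2k}\big)$. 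Combining the two estimates gives $-Tr\big((D(A-\nu J-B)^{-1})^{2k}\big)<I(A-\nu J-B,A-\nu J-B+D)$, which is the left-hand inequality, and the proof is complete.

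I do not expect a genuine obstacle, since the statement is the exact mirror of Proposition~\ref{prop5b.1}; the only point demanding care is the endpoint bookkeeping. The crossing formula (\ref{5a.16}) sums over $s_0\in(0,1]$ and therefore \emph{includes} the crossing at $s_0=1$ (should $\ker(A-\nu J-B+D)\neq0$ occur), in contrast to (\ref{5a.14}), where the crossing at $s_0=1$ is \emph{excluded}; this is why no correction term of the form $\dim\ker(A-\nu J-B+D)$ appears on the left here, unlike in Proposition~\ref{prop5b.1}. One must match the interval $(0,1]$ correctly to the negative-eigenvalue range $\lambda_j^-\in[-1,0)$, after which everything reduces to the computation above.
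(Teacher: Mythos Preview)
Your proposal is correct and follows exactly the approach the paper intends: the paper simply states that Proposition~\ref{prop5b.2} is proved ``similar to Proposition~\ref{prop5b.1}'', and your mirroring of that argument---replacing the path $s\mapsto A-\nu J-B-sD$ by $s\mapsto A-\nu J-B+sD$ and invoking the crossing formula (\ref{5a.16}) in place of (\ref{5a.14})---is precisely that. Your attention to the endpoint bookkeeping (the interval $(0,1]$ versus $[0,1)$, and why no $\dim\ker$ correction term appears here) is more explicit than the paper's own discussion.
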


We have the following corollary.
\begin{cor}\label{cor5b.1a}
Suppose $D>0$, if for some $k\in\mathbb{N}$, $Tr((D(A-\nu J-B)^{-1})^{2k})\leq 1$,
then \begin{eqnarray*}&&I(A-\nu J-B,A-\nu J-B+D)\\&&=I(A-\nu J-B,A-\nu
J-B-D)+\dim\ker(A-\nu J-B-D)=0.\end{eqnarray*}
\end{cor}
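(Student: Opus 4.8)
The plan is to deduce Corollary~\ref{cor5b.1a} directly from Propositions~\ref{prop5b.1} and~\ref{prop5b.2}, using only two soft facts: the relative Morse index is integer-valued, and, when the perturbation is positive, it has a definite sign on each side. So I would not expect a genuine obstacle here; the work is just bookkeeping with the two already-established estimates.

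First, suppose $Tr\big((D(A-\nu J-B)^{-1})^{2k}\big)\leq 1$ for some $k\in\mathbb N$. Proposition~\ref{prop5b.1} then gives
\begin{equation*}
I(A-\nu J-B,A-\nu J-B-D)+\dim\ker(A-\nu J-B-D)< Tr\big((D(A-\nu J-B)^{-1})^{2k}\big)\leq 1 .
\end{equation*}
Since $D>0$, one has $I(A-\nu J-B,A-\nu J-B-D)\geq 0$ (as recalled in Section~\ref{subsec4.1}, and also visible from the crossing-form identity (\ref{5a.14})), while $\dim\ker(A-\nu J-B-D)\geq 0$ trivially. Hence the left-hand side is a non-negative integer that is strictly smaller than $1$, so it equals $0$, and therefore each of the two summands is $0$. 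This is exactly the second equality asserted in the corollary.

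For the remaining quantity I would invoke Proposition~\ref{prop5b.2}, which together with the hypothesis yields
\begin{equation*}
-1\leq -Tr\big((D(A-\nu J-B)^{-1})^{2k}\big)< I(A-\nu J-B,A-\nu J-B+D)\leq 0 .
\end{equation*}
Thus $I(A-\nu J-B,A-\nu J-B+D)$ is an integer lying in the half-open interval $(-1,0]$, hence it is $0$. Combining the two displays gives the full chain of equalities, and the proof is complete. The only points to keep honest are (i) that the relative Morse index is an integer — immediate from its definition via relative dimensions and from the spectral-flow description of Section~\ref{subsec4.1} — and (ii) that it is enough for the trace bound to hold at a \emph{single} even exponent $2k$, since both estimates above are applied at that same $k$.
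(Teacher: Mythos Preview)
Your argument is correct and is exactly what the paper intends: the corollary is stated without proof, as an immediate consequence of Propositions~\ref{prop5b.1} and~\ref{prop5b.2}, and your integer-valuedness plus sign bookkeeping is precisely the missing two lines.
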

Now we can give the estimation on the upper bound that preserves the non-degeneracy.
\begin{thm}\label{thm4.2} Suppose $A-B-\nu J$ is non-degenerate. Suppose that there are $D_1,D_2\in \mathcal{B}(2n)$ such that $D_1< D < D_2$, with $D_1<0$, $D_2>0$, if there exists  $k\in 2\mathbb N$, such that  $Tr \big((D_j(A-B-\nu J)^{-1})^k\big)\leq1$
for $j=1,2$, then $A-B-D-\nu J$ is non-degenerate.
\end{thm}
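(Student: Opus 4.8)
The plan is to reduce the statement about the perturbed operator $A-B-D-\nu J$ to the two one-sided non-degeneracy results we already have in hand, namely Corollary \ref{cor5b.1a} together with the monotonicity of the relative Morse index in Corollary \ref{indexno}. The key observation is that non-degeneracy of $A-B-D-\nu J$ is equivalent to $\ker(A-B-D-\nu J)=0$, and that this kernel detects whether $1$ is an eigenvalue of a path of operators interpolating between $B$ and $B+D$. So first I would set, as in \S\ref{subsec4.1}, the continuous path $B(s)=B+sD$ for $s\in[0,1]$ and recall from Proposition \ref{prop5a.1} and formulas (\ref{5a.14})--(\ref{5a.16}) that crossings of the kernel are exactly counted by the relative Morse index along this path. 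The degeneracy of $A-B-D-\nu J$ would mean $s=1$ is a crossing instant.

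Next I would handle the upper and lower bounds separately. Since $D_2>0$ and $D<D_2$, Corollary \ref{indexno} gives
\[
0\le I(A-\nu J-B,\,A-\nu J-B-D)\le I(A-\nu J-B,\,A-\nu J-B-D_2),
\]
and similarly, using $D_1<D$ with $D_1<0$ (so $-D_1>0$), we get
\[
I(A-\nu J-B,\,A-\nu J-B-D_1)\le I(A-\nu J-B,\,A-\nu J-B-D)\le 0.
\]
Now apply Corollary \ref{cor5b.1a} to $D_2>0$: the hypothesis $Tr\big((D_2(A-B-\nu J)^{-1})^k\big)\le 1$ for $k\in 2\mathbb{N}$ forces
\[
I(A-\nu J-B,\,A-\nu J-B-D_2)+\dim\ker(A-\nu J-B-D_2)=0,
\]
hence $I(A-\nu J-B,\,A-\nu J-B-D)=0$ as well, and moreover $\dim\ker(A-\nu J-B-D)\le \dim\ker(A-\nu J-B(s_0))$ summed over crossings in $[0,1)$, which by (\ref{5a.14}) equals the index $I(A-\nu J-B,A-\nu J-B-D)=0$; therefore there are no crossings in $[0,1)$. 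Symmetrically, applying Corollary \ref{cor5b.1a} to $-D_1>0$ (note $Tr\big((D_1(A-B-\nu J)^{-1})^k\big)=Tr\big(((-D_1)(A-B-\nu J)^{-1})^k\big)$ since $k$ is even) shows $I(A-\nu J-B,A-\nu J-B+(-D_1))=0$, i.e.\ there are no crossings of the path $B-s D_1$ in $(0,1]$ either; but what I actually need is the absence of a crossing at $s=1$ for the path $B+sD$, so the two computations must be combined using the fact that $D$ lies between $D_1$ and $D_2$.

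The main obstacle — and the step that needs the most care — is showing that a crossing of the path $B+sD$ precisely at the endpoint $s=1$ is still excluded, since Corollary \ref{cor5b.1a} as stated controls $\dim\ker(A-\nu J-B-D)$ directly only through the combination $I(A-\nu J-B,A-\nu J-B+D)$ for the operator with $+D$. The clean route is: the monotonicity bound with $D<D_2$ gives $\dim\ker(A-\nu J-B-D)\le I(A-\nu J-B,A-\nu J-B-D_2)+\dim\ker(A-\nu J-B-D_2)$? — this inequality is not literally monotone, so instead I would argue directly from (\ref{5a.16}) applied at $D_2$: since $D_2-D>0$ and $A-\nu J-B-D$ would be degenerate, the path from $B+D$ to $B+D_2$ would pick up a strictly negative contribution, forcing $I(A-\nu J-B-D,A-\nu J-B-D_2)<0$; adding $I(A-\nu J-B,A-\nu J-B-D)\ge0$ and comparing with $I(A-\nu J-B,A-\nu J-B-D_2)=0$ (from Corollary \ref{cor5b.1a}) via additivity of the relative Morse index yields a contradiction. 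Hence $A-B-D-\nu J$ is non-degenerate. I expect the bookkeeping with the half-open interval conventions in (\ref{5a.14})--(\ref{5a.16}) to be the only delicate point; everything else is a direct assembly of the quoted results.
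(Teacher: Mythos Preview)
Your overall strategy is sound and, once patched, is actually more direct than the paper's argument. The paper works with the first crossing time $s_0$ of the path $s\mapsto A-\nu J-B-sD$, studies the signs of the individual eigenvalue branches $\lambda_j(s)$ near $s_0$, and then splits into cases according to how many branches approach $0$ from above versus below, comparing against $s_0D_1$ and $s_0D_2$. Your route via additivity of the relative Morse index and formula (\ref{5a.14}) avoids this case analysis entirely.

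However, there is a genuine sign error in your decisive step. With $D_2-D>0$, the passage from $A-\nu J-B-D$ to $A-\nu J-B-D_2$ is of the form $I(X,\,X-\text{(positive)})$, so (\ref{5a.14}) applies and gives a \emph{nonnegative} contribution, not a negative one; in fact, since $s=0$ lies in $\kappa\cap[0,1)$ under the degeneracy assumption, one obtains
\[
I(A-\nu J-B-D,\,A-\nu J-B-D_2)\;\ge\;\dim\ker(A-\nu J-B-D)\;\ge\;1.
\]
(You appear to have applied (\ref{5a.16}) in place of (\ref{5a.14}).) Also, your preliminary display ``$0\le I(A-\nu J-B,\,A-\nu J-B-D)$'' cannot be read off from $D<D_2$ alone; the lower bound $0$ comes from the $D_1$ side. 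The clean fix is: by Corollary \ref{cor5b.1a} applied to $D_2>0$ and to $-D_1>0$ (using that $k$ is even so the trace hypothesis is the same for $D_1$ and $-D_1$), one has $I(A-\nu J-B,\,A-\nu J-B-D_2)=0$ and $I(A-\nu J-B,\,A-\nu J-B-D_1)=0$; the monotonicity in Corollary \ref{indexno} then sandwiches $I(A-\nu J-B,\,A-\nu J-B-D)=0$. If $A-\nu J-B-D$ were degenerate, additivity gives
\[
0=I(A-\nu J-B,\,A-\nu J-B-D_2)=I(A-\nu J-B,\,A-\nu J-B-D)+I(A-\nu J-B-D,\,A-\nu J-B-D_2)\ge 0+1,
\]
a contradiction. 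With the sign corrected, your argument is complete and shorter than the paper's.
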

\begin{proof}
By the condition $Tr \big((D_j(A-B-\nu J)^{-1})^{2k}\big)\leq1$, for
$j=1,2$, applying   Corollary \ref{cor5b.1a}, we have that, for any $s\in[0,1]$, \bea &&I(A-\nu J-B,A-\nu
J-B-sD_1)\nonumber
\\&&=I(A-\nu J-B,A-\nu J-B-sD_2)+\dim\ker(A-\nu J-B-sD_2)=0.
\label{aaa} \eea Next, we will prove the result by
contradiction argument. Assume that $A-\nu J-B-D$ is degenerate.  Now, let
$$ s_0=\inf\{s\in[0,1], \dim\ker(A-\nu J-B-sD)\neq0\}.$$ Notice that $A-\nu J-B$ is non-degenerate, thus $s_0>0$.
 From the spectral theory of self-adjoint operators \cite
{Ka}, the eigenvalues of $A-\nu J-B-sD$ can be considered as a  smooth function on $s$.
Denote the eigenvalue functions by $\lambda_j(s)$. Since $A-B-\nu J-s_0D$ is degenerate, there is some $\lambda_j(s_0)=0$. We may assume that
$\lambda_j(s_0)=0$ for $j=1,...,m$. By the definition of $s_0$,
$\lambda_j(s)\neq0$ on $[0,s_0)$ for $j=1,...,m$. Without loss of
generality, assume $\lambda_j(s)>0$ on $[0,s_0)$ for $j=1,...,m_1$
and $\lambda_j(s)<0$ on $[0,s_0)$ for $j=m_1+1,...,m$, where $m_1$
can take value $0$ or $m$.

Firstly, if $m_1>0$, by the property of
relative morse index, we have $$I(A-\nu J-B,A-\nu
J-B-s_0D_2)=I(A-\nu J-B,A-\nu J-B-s_0D)+I(A-\nu J-B-s_0D,A-\nu
J-B-s_0D_2), $$ and $I(A-\nu J-B,A-\nu J-B-s_0D)=m_1-m$ by the
definition of $s_0$. On the other hand, since $D_2>0$ form
(\ref{5a.14}), \bea&& I(A-\nu J-B,A-\nu J-B-s_0D_2)\nonumber\\ &=& m_1-m+I(A-\nu
J-B-s_0D,A-\nu J-B-s_0D-s_0(D_2-D)) \nonumber \\ &\geq&
m_1-m+\dim\ker(A-\nu J-B-s_0D)=m_1>0, \nonumber \eea which  contradicts to
(\ref{aaa}).

Next, if $m_1=0$, noting that $D_1<D$, by the property
of spectral flow,  $I(A-\nu J-B,A-\nu J-B-s_0D)=-m$. By some similar discussion as above, we get \bea&& I(A-\nu
J-B,A-\nu J-B-s_0D_1)\nonumber\\ &=&-m+ I(A-\nu J-B-s_0D,A-\nu J-B-s_0D-s_0(D_1-D))
\nonumber \\ &\leq&-m<0, \nonumber \eea
which also contradicts to (\ref{aaa}). The proof is complete.
\end{proof}

Next, we are going to give the estimation of the relative Morse index by the trace formula.
 \begin{thm}\label{thm4.3}
Suppose $A-B-\nu J$ is non-degenerate and $D_1\leq D\leq D_2$, where
$D_1<0$, $D_2>0$. Let
$$m^-=\inf\{[Tr ((D_1(A-B-\nu J)^{-1})^k)], k\in 2\mathbb N\} \quad \text{and}\quad
m^+=\inf\{[Tr ((D_2(A-B-\nu J)^{-1})^k)], k\in 2\mathbb N\}, $$ then
\begin{eqnarray*} -m^-\leq I(A-B-\nu J,A-B-D-\nu J)\leq m^+. \label{th1.3f}\end{eqnarray*}
\end{thm}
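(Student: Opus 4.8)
The plan is to bracket $I(A-B-\nu J,\,A-B-D-\nu J)$ between the two endpoint indices $I(A-B-\nu J,\,A-B-D_1-\nu J)$ and $I(A-B-\nu J,\,A-B-D_2-\nu J)$ using monotonicity of the relative Morse index, and then to bound each endpoint index by the prescribed trace quantities via the one-sided estimates already established. First I would invoke Corollary~\ref{indexno} (applied, exactly as Propositions~\ref{prop5b.1} and~\ref{prop5b.2} are, with the self-adjoint Fredholm operator $A-B-\nu J$ as base operator in place of $A-B$; the underlying spectral-flow argument is insensitive to this replacement): since $D_1\leq D\leq D_2$,
$$I(A-B-\nu J,\,A-B-D_1-\nu J)\ \leq\ I(A-B-\nu J,\,A-B-D-\nu J)\ \leq\ I(A-B-\nu J,\,A-B-D_2-\nu J).$$
Hence it remains to prove $I(A-B-\nu J,\,A-B-D_2-\nu J)\leq m^+$ and $I(A-B-\nu J,\,A-B-D_1-\nu J)\geq -m^-$.

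For the upper bound, $D_2>0$, so by Lemma~\ref{lem5b.1} the operator $D_2(A-B-\nu J)^{-1}$ has real eigenvalues $1/\lambda_j$, whence $Tr\big((D_2(A-B-\nu J)^{-1})^{2k}\big)=\sum_j\lambda_j^{-2k}>0$ for every $k\in\mathbb{N}$ by~(\ref{4.0.1.3}); in particular $m^+$ is a well-defined non-negative integer (an infimum of non-negative integers). Proposition~\ref{prop5b.1} gives, for every $k\in\mathbb{N}$,
$$I(A-B-\nu J,\,A-B-D_2-\nu J)\ <\ Tr\big((D_2(A-B-\nu J)^{-1})^{2k}\big)$$
(the nonnegative $\dim\ker$ term appearing on the left of that proposition only helps). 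Since the left side is an integer strictly below a positive real, it is at most the integer part of that real; as $k$ runs over $\mathbb{N}$ these integer parts are exactly the numbers whose infimum defines $m^+$, so $I(A-B-\nu J,\,A-B-D_2-\nu J)\leq m^+$.

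For the lower bound, $-D_1>0$, so applying Proposition~\ref{prop5b.2} with $-D_1$ in the role of $D$ gives, for every $k\in\mathbb{N}$,
$$-\,Tr\big(((-D_1)(A-B-\nu J)^{-1})^{2k}\big)\ <\ I(A-B-\nu J,\,A-B-D_1-\nu J)\ \leq\ 0.$$
Because the exponent $2k$ is even, $((-D_1)(A-B-\nu J)^{-1})^{2k}=(D_1(A-B-\nu J)^{-1})^{2k}$ (this is exactly why $m^-$ is defined through even powers), so the leftmost quantity equals $-\,Tr\big((D_1(A-B-\nu J)^{-1})^{2k}\big)$, which is negative by the same reasoning as above. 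Thus $-I(A-B-\nu J,\,A-B-D_1-\nu J)$ is an integer strictly below the positive real $Tr\big((D_1(A-B-\nu J)^{-1})^{2k}\big)$, hence at most its integer part; taking the infimum over $k\in\mathbb{N}$ gives $-I(A-B-\nu J,\,A-B-D_1-\nu J)\leq m^-$, i.e.\ $I(A-B-\nu J,\,A-B-D_1-\nu J)\geq -m^-$. Combining this with the two-sided monotonicity inequality finishes the proof.

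I do not expect a genuine obstacle here: the theorem is essentially a bookkeeping synthesis of monotonicity of the relative Morse index with the sign-definite trace estimates of Propositions~\ref{prop5b.1} and~\ref{prop5b.2}. The only two points deserving care are (i) that a strict inequality between an integer and a positive real may be upgraded to an inequality against its integer part, using only integrality of the index together with positivity of the relevant traces (Lemma~\ref{lem5b.1}), and (ii) that one may freely replace $D_1$ by $-D_1$ inside the trace provided the exponent is even, which is precisely why the definitions of $m^{\pm}$ are restricted to $k\in 2\mathbb{N}$.
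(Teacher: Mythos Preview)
Your proof is correct and follows essentially the same approach as the paper: bound the endpoint indices $I(A-B-\nu J,A-B-D_j-\nu J)$ for $j=1,2$ via Propositions~\ref{prop5b.1} and~\ref{prop5b.2} (applying the latter to $-D_1>0$), then sandwich the index for $D$ using the monotonicity in Corollary~\ref{indexno}. You are in fact more explicit than the paper about the integer-part passage from the strict trace inequality and about why the even exponent makes $Tr\big(((-D_1)(A-B-\nu J)^{-1})^{2k}\big)=Tr\big((D_1(A-B-\nu J)^{-1})^{2k}\big)$.
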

\begin{proof}
Firstly, we will prove that
\begin{eqnarray*}I(A-B-\nu J, A-B-D_2-\nu J)\leq m^+.\end{eqnarray*}
Infact, by Proposition \ref{prop5b.1}, we have that, for any $k\in 2\mathbb N$,
\begin{eqnarray*}
I(A-B-\nu J, A-B-D_2-\nu J)< Tr((D(A-\nu J-B)^{-1})^{2k}).
\end{eqnarray*}
It follows that
\begin{eqnarray*}
I(A-B-\nu J, A-B-D_2-\nu J)\leq m^+.
\end{eqnarray*}
By Proposition \ref{prop5b.2} and some similar reasoning, we have \begin{eqnarray*}I(A-B-\nu
J, A-B-D-\nu J)\geq-m^-.\end{eqnarray*}
Since $D_1\leq D\leq D_2$, we get the result by (\ref{indexn}).
\end{proof}

Motivated by Krein's work \cite{K1}, we consider the symmetric case, that is, $D(t)=D(T-t)$. Suppose first that $D$ is real and invertible. Then
\begin{eqnarray}\label{5b.31}
\sigma(D(A+\nu J)^{-1})=\{\bar\lambda\,|\,\lambda\in\sigma(D(A-\nu J)^{-1})\}.
\end{eqnarray}
In fact,  a nonzero  $\lambda\in \sigma(D(A-\nu J)^{-1})=\sigma((A-\nu J)^{-1}D)$ if and only if there is $x\not=0$, such that
\begin{eqnarray*}
(A-\nu J)^{-1}Dx=\lambda x,
\end{eqnarray*}
if and only if
\begin{eqnarray*}
(A+\nu J)\bar x=\bar\lambda^{-1} D \bar x,
\end{eqnarray*}
if and only if $\bar\lambda \in\sigma(D(A+\nu J)^{-1})$. Therefore, (\ref{5b.31}) holds true. Now,
suppose  $D>0$. Then, $\sigma(D(A-\nu J)^{-1})\subset \mathbb R$, and hence $\sigma(D(A-\nu J)^{-1})=\sigma(D(A+\nu J)^{-1})$.
If moreover $D(t)=D(T-t)$, then, by some direct computation,   $x(t)\in\ker (A-\nu J-\lambda D)$ if and only if $x(T-t)\in \ker (A+\nu J+\lambda D)$. We summarize the above reasoning as the following lemma.
\begin{lem}\label{lem5b.2}  Suppose $D>0$  and $D(t)=D(T-t)$, then $\lambda\in\sigma(D(A-\nu J)^{-1})$ if and only if $-\lambda\in\sigma(D(A-\nu J)^{-1})$,
and with the same multiplicity.
\end{lem}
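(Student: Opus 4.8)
The plan is to chain together three observations, two of which are already recorded in the paragraph preceding the statement. First, since $D>0$ and $\nu$ is purely imaginary, $\nu J$ is self-adjoint, so $(A-\nu J)^{-1}$ is self-adjoint and $D(A-\nu J)^{-1}$ is similar (via $D^{1/2}$) to the self-adjoint compact operator $D^{1/2}(A-\nu J)^{-1}D^{1/2}$; in particular its spectrum is real (Lemma~\ref{lem5b.1}) and, for each eigenvalue, algebraic and geometric multiplicity coincide. The same applies to $D(A+\nu J)^{-1}$. Combined with~(\ref{5b.31}), which says $\sigma\big(D(A+\nu J)^{-1}\big)=\{\bar\lambda\mid\lambda\in\sigma(D(A-\nu J)^{-1})\}$, reality forces $\sigma\big(D(A+\nu J)^{-1}\big)=\sigma\big(D(A-\nu J)^{-1}\big)$; and the conjugation $x\mapsto\bar x$ that implements~(\ref{5b.31}) is a dimension-preserving bijection between the relevant eigenspaces, so this identity of spectra holds with multiplicities.

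Second, I would make the reflection step precise. For $\lambda\neq0$, writing $z=(A-\nu J)^{-1}w$ turns $D(A-\nu J)^{-1}w=\lambda w$ into $\big(A-\nu J-\frac1\lambda D\big)z=0$, so that $\lambda\in\sigma\big(D(A-\nu J)^{-1}\big)\setminus\{0\}$ is equivalent to $\ker\big(A-\nu J-\frac1\lambda D\big)\neq 0$, and the eigenspace for $\lambda$ is the image under the invertible operator $A-\nu J$ of this kernel, hence of the same dimension. Now set $y(t)=x(T-t)$: using $D(t)=D(T-t)$ together with the fact that $t\mapsto T-t$ preserves the boundary condition (e.g. for $S=I_{2n}$), a direct computation shows $x\in\ker\big(A-\nu J-\frac1\lambda D\big)$ if and only if $y\in\ker\big(A+\nu J+\frac1\lambda D\big)$, and $x\mapsto x(T-\cdot)$ is a linear isomorphism between these two kernels. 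Rewriting $A+\nu J+\frac1\lambda D=A-(-\nu)J-\big(-\frac1\lambda\big)D$, this reads $-\lambda\in\sigma\big(D(A+\nu J)^{-1}\big)$, with the same multiplicity.

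Finally, concatenating the two steps: $\lambda\in\sigma\big(D(A-\nu J)^{-1}\big)$ with multiplicity $k$ $\Longleftrightarrow$ $-\lambda\in\sigma\big(D(A+\nu J)^{-1}\big)$ with multiplicity $k$ $\Longleftrightarrow$ $-\lambda\in\sigma\big(D(A-\nu J)^{-1}\big)$ with multiplicity $k$, which is exactly the assertion; the value $\lambda=0$ plays no role in the eigenvalue bookkeeping and the statement is symmetric there. The only point that needs care is the tracking of multiplicities through the similarity $D(A\mp\nu J)^{-1}\sim D^{1/2}(A\mp\nu J)^{-1}D^{1/2}$ and through the conjugation and reflection isomorphisms; this is where self-adjointness (algebraic $=$ geometric multiplicity) is used, and it is the only mildly subtle ingredient, since the ODE manipulations themselves are routine and are essentially already displayed in the text above.
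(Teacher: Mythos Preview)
Your proposal is correct and follows exactly the paper's approach: the paragraph preceding the lemma already records (\ref{5b.31}) via conjugation, the reality of the spectrum from $D>0$, and the reflection $x(t)\mapsto x(T-t)$ sending $\ker(A-\nu J-\lambda D)$ to $\ker(A+\nu J+\lambda D)$, and the lemma is explicitly presented as a summary of that reasoning. Your only addition is the careful multiplicity bookkeeping through the similarity $D(A\mp\nu J)^{-1}\sim D^{1/2}(A\mp\nu J)^{-1}D^{1/2}$ and the conjugation/reflection isomorphisms, which the paper leaves implicit; this is a welcome refinement but not a different argument.
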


As an application, we have
\begin{prop}\label{prop5b.22} Suppose $S=I_{2n}$, $B=0$, $D>0$, and $\omega\neq1$, if one of the following conditions holds
 \begin{itemize}
 \item[1)] $\frac{\omega}{(1-\omega)^2} Tr\[\(J\int_0^TD(s)ds\)^2\]\leq 1$
  \item[2)] $D(t)=D(T-t)$, $\frac{\omega}{2(1-\omega)^2} Tr\[\(J\int_0^TD(s)ds\)^2\]\leq 1$,
 \end{itemize}
 then $i_\omega(\gamma)=0$, where $\gamma$ is the fundamental solution with respect to $D$.
\end{prop}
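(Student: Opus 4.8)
The plan is to reduce $i_\omega(\gamma)$ to a relative Morse index, bound that index by the square trace $Tr\big((D(A-\nu J)^{-1})^2\big)$, and evaluate this trace in closed form. I would begin by recording that $S=I_{2n}$ and $B=0$ force $\gamma_0(t)\equiv I_{2n}$, hence $M=S\gamma_0(T)=I_{2n}$ and $\hat D(t)=\gamma_0^T(t)D(t)\gamma_0(t)=D(t)$; moreover $\omega=e^{\nu T}\neq1$ gives $\det(M-\omega I_{2n})=(1-\omega)^{2n}\neq0$, so $A-\nu J$ is non-degenerate and the tools of Sections \ref{sec2} and \ref{sec 4} apply. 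By Proposition \ref{prop5a.2}, applied with the coefficient $B$ there replaced by $D$ so that its path $\gamma$ is the fundamental solution of $\dot\gamma=JD\gamma$, one has $i_\omega(\gamma)=I(A-\nu J,\,A-\nu J-D)$, so it suffices to show this relative Morse index vanishes.

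Next I would invoke Corollary \ref{cor5b.11}, equation (\ref{5b.6.1c}), which in the present situation ($M=+I_{2n}$, $B=0$, $\hat D=D$) reads
\[
Tr\big((D(A-\nu J)^{-1})^2\big)=\frac{\omega}{(1-\omega)^2}\,Tr\!\left[\Big(J\int_0^TD(s)\,ds\Big)^2\right].
\]
Under hypothesis (1) the right-hand side is $\le1$, so Corollary \ref{cor5b.1a}, applied with $B=0$ and $k=1$, yields at once
\[
I(A-\nu J,\,A-\nu J+D)=I(A-\nu J,\,A-\nu J-D)+\dim\ker(A-\nu J-D)=0,
\]
and in particular $i_\omega(\gamma)=I(A-\nu J,\,A-\nu J-D)=0$. (It is worth noting in passing that this trace is genuinely nonnegative, as it must be by Lemma \ref{lem5b.1}, because $\omega/(1-\omega)^2<0$ while $Tr[(J\int_0^TD)^2]\le0$, the matrix $C^{1/2}JC^{1/2}$ with $C=\int_0^TD$ being skew-symmetric; so condition (1) is a meaningful smallness assumption.)

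Under hypothesis (2) the same identity only gives $Tr\big((D(A-\nu J)^{-1})^2\big)\le2$, so Corollary \ref{cor5b.1a} no longer applies directly and I would argue by hand. The extra assumption $D(t)=D(T-t)$ lets Lemma \ref{lem5b.2} apply: $\sigma(D(A-\nu J)^{-1})$ is symmetric about $0$ with multiplicities, so writing $0<\lambda_1^+\le\lambda_2^+\le\cdots$ (repeated according to multiplicity) for the positive eigenvalues of the Sturm system — equivalently, $1/\lambda_j^+$ for the positive eigenvalues of $D(A-\nu J)^{-1}$ — one gets $Tr\big((D(A-\nu J)^{-1})^2\big)=2\sum_j(\lambda_j^+)^{-2}$, so hypothesis (2) becomes $\sum_j(\lambda_j^+)^{-2}\le1$. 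On the other hand, exactly as in the proof of Proposition \ref{prop5b.1}, formula (\ref{5a.14}) together with $D>0$ gives
\[
I(A-\nu J,\,A-\nu J-D)+\dim\ker(A-\nu J-D)=\#\{j:\lambda_j^+\le1\}
\]
(counted with multiplicity). Were this number positive, some $\lambda_{j_0}^+\le1$ would contribute a term $(\lambda_{j_0}^+)^{-2}\ge1$ to the sum, and since by Lemma \ref{lem5b.1.1} there are infinitely many further, strictly positive terms, one would get $\sum_j(\lambda_j^+)^{-2}>1$, a contradiction. Hence $I(A-\nu J,\,A-\nu J-D)=0$ and $i_\omega(\gamma)=0$.

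The routine part of the argument is just the assembly of Propositions \ref{prop5a.2}, \ref{prop5b.1} and Corollaries \ref{cor5b.11}, \ref{cor5b.1a}; the one point that needs care is the second case, where the time-reversal symmetry is used precisely to replace the bound $2$ by $1$ (this is exactly why the constant in (2) is half that in (1)), and where the divergence $\lambda_j^+\to+\infty$ of Lemma \ref{lem5b.1.1} is what upgrades the non-strict inequality $\sum_j(\lambda_j^+)^{-2}\le1$ into the strict exclusion of any $\lambda_j^+$ from $(0,1]$.
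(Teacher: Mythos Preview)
Your proof is correct and follows essentially the same route as the paper: reduce $i_\omega(\gamma)$ to the relative Morse index via Proposition \ref{prop5a.2}, compute $Tr\big((D(A-\nu J)^{-1})^2\big)$ by Corollary \ref{cor5b.11}, and bound the index using the eigenvalue-counting argument behind Proposition \ref{prop5b.1} (with Lemma \ref{lem5b.2} halving the trace in the symmetric case). The paper in fact only writes out case (2) and declares case (1) similar; you spell both out, and your contradiction argument using Lemma \ref{lem5b.1.1} is exactly the mechanism the paper has in mind when it says ``by the same discussion as in the proof of Proposition \ref{prop5b.1}''. One cosmetic remark: the eigenvalues $\lambda_j^+$ here come from the Hamiltonian boundary value problem, not a Sturm system, so that phrasing is slightly off, but the mathematics is unaffected.
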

\begin{proof}  Since $M=S=I_{2n}$, by (\ref{5b.6.1c}), \begin{eqnarray*}Tr((D(A-\nu J)^{-1})^{2})=\frac{\omega}{(1-\omega)^2} Tr\[\(J\int_0^TD(s)ds\)^2\].\end{eqnarray*}
The proofs of both cases are similar, we only list the proof under the second condition.
By Lemma \ref{lem5b.2} and the (\ref{4.0.1.3}),
\begin{eqnarray*}  Tr\[\(D(A-\nu J)^{-1}\)^{2}\]=2\sum_{j\in\mathbb{N}}\frac{1}{\lambda_j^{2}}\label{5b.7.1}.  \end{eqnarray*}
Thus we have \begin{eqnarray*}  \frac{\omega}{2(1-\omega)^2} Tr\[\(J\int_0^TD(s)ds\)^2\]=\sum_{j\in\mathbb{N}}\frac{1}{\lambda_j^{2}}\label{5b.8.1}.  \end{eqnarray*}
Notice that $\frac{\omega}{2(1-\omega)^2} Tr\[\(J\int_0^TD(s)ds\)^2\]\leq 1$. By the  same discussion  as in the proof of Proposition \ref{prop5b.1}, we have
\begin{eqnarray*} I(A-\nu J,A-\nu J-D)=0.  \label{5.9.1} \end{eqnarray*}
By Proposition \ref{prop5a.2}, $i_\omega(\gamma)=I(A-\nu J,A-\nu J-D)=0.$ The proof is complete.
\end{proof}

\subsection{Stability criteria }\label{subsec4.3}

In this section, we only consider  the case $S=I_{2n}$, and the general case is similar. Recall that $\gamma$ is the fundamental solution with respect $B$ and $M=\gamma(T)$,
we denote $\tilde{\gamma}$ be the fundamental solution with respect to $B+D$, and write $\widetilde{M}=\tilde{\gamma}(T)$.
\begin{prop}\label{prop5c.1} Suppose $D_1\leq D\leq D_2$, where $D_1<0$, $D_2>0$. If \text{for} j=1,2,
\begin{eqnarray*}Tr ((D_j(A-B)^{-1})^2)\leq1\quad \text{and}\quad
Tr ((D_j(A-\nu J-B)^{-1})^2)\leq1\quad\end{eqnarray*} then $i_{\omega}(\gamma)=i_{\omega}(\tilde{\gamma})$, and
\begin{eqnarray} e_\omega(\widetilde{M})/2\geq |i_1(\gamma)-i_{\omega}(\gamma)|,\,\ where \,\ \omega=e^{\nu T} \label{5c.1}. \nonumber \end{eqnarray}
Especially, if for $j=1,2$,
 \begin{eqnarray}Tr \[\(D_j\(A-\frac{\sqrt{-1}\pi}{T} J-B\)^{-1}\)^2\]\leq1,\nonumber\end{eqnarray} then
\begin{eqnarray} e(\widetilde{M})/2\geq |i_1(\gamma)-i_{-1}(\gamma)| \label{5c.1.0}. \nonumber \end{eqnarray}
\end{prop}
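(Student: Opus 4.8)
The plan is to exploit the stated trace bounds to show that the linear homotopy $B(s)=B+sD$, $s\in[0,1]$, produces no spectral flow for the operators $A-B-\nu J$ and $A-B$, so that $\gamma$ and $\tilde\gamma$ carry the same Maslov-type indices at $\omega$ and at $1$; the two inequalities will then follow by applying (\ref{5a.25}) and (\ref{5a.26}) to $\tilde\gamma$. First I would note that the hypotheses implicitly require $A-B$ and $A-\nu J-B$ to be non-degenerate (and recall that $\nu$ is imaginary throughout this section). The core step is to prove
\begin{eqnarray*}
I(A-\nu J-B,\,A-\nu J-B-D)=0\qquad\text{and}\qquad I(A-B,\,A-B-D)=0.
\end{eqnarray*}

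For the first identity I would apply Corollary \ref{cor5b.1a} twice. Since $D_2>0$, the bound $Tr((D_2(A-\nu J-B)^{-1})^{2})\le1$ (the case $k=1$) gives $I(A-\nu J-B,\,A-\nu J-B-D_2)=0$; since $-D_1>0$ and $((-D_1)(A-\nu J-B)^{-1})^{2}=(D_1(A-\nu J-B)^{-1})^{2}$, the $j=1$ bound is exactly the hypothesis Corollary \ref{cor5b.1a} needs and yields $I(A-\nu J-B,\,A-\nu J-B-D_1)=0$. Because $D_1\le D\le D_2$, the monotonicity statement (\ref{indexn}) of Corollary \ref{indexno} squeezes $I(A-\nu J-B,\,A-\nu J-B-D)$ between these two zeros, forcing it to vanish. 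Repeating the argument with $\nu$ replaced by $0$ and the bounds $Tr((D_j(A-B)^{-1})^{2})\le1$ gives $I(A-B,\,A-B-D)=0$.

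Next, by Proposition \ref{prop5a.2} and additivity of the relative Morse index, for $\omega\neq1$ one gets $i_\omega(\tilde\gamma)=I(A-\nu J,\,A-\nu J-B-D)=I(A-\nu J,\,A-\nu J-B)+I(A-\nu J-B,\,A-\nu J-B-D)=i_\omega(\gamma)$, and likewise $i_1(\tilde\gamma)+n=I(A,\,A-B-D)=I(A,\,A-B)+I(A-B,\,A-B-D)=i_1(\gamma)+n$, so $i_1(\tilde\gamma)=i_1(\gamma)$; the case $\omega=1$ of the claimed inequality is vacuous. Applying (\ref{5a.25}) to the path $\tilde\gamma$ and the matrix $\widetilde M=\tilde\gamma(T)$ then gives $e_\omega(\widetilde M)/2\ge|i_\omega(\tilde\gamma)-i_1(\tilde\gamma)|=|i_1(\gamma)-i_\omega(\gamma)|$. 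For the last assertion I would specialize $\nu=\sqrt{-1}\pi/T$, so that $\omega=e^{\nu T}=-1$; the displayed hypothesis together with the $\nu=0$ bound again forces $i_{-1}(\tilde\gamma)=i_{-1}(\gamma)$ and $i_1(\tilde\gamma)=i_1(\gamma)$, and (\ref{5a.26}) applied to $\tilde\gamma$ gives $e(\widetilde M)/2\ge|i_{-1}(\tilde\gamma)-i_1(\tilde\gamma)|=|i_1(\gamma)-i_{-1}(\gamma)|$.

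The one delicate point is the use of Corollary \ref{cor5b.1a}: it rests on the strict definiteness $D_2>0$, $-D_1>0$ and on $A-\nu J-B$ being non-degenerate (both built into the hypotheses), and on the fact that the \emph{strict} inequalities of Propositions \ref{prop5b.1}--\ref{prop5b.2} upgrade ``$\le1$'' to ``the relevant index and kernel dimension vanish''. The remainder — additivity of the relative Morse index, the harmless constant $n$ in the $\omega=1$ normalization (which cancels in the difference), and the endpoint cases $D=D_1$ or $D=D_2$ — is routine bookkeeping.
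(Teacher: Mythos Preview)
Your proof is correct and follows essentially the same route as the paper: use Corollary~\ref{cor5b.1a} with $D_2$ and with $-D_1$ to make the relative Morse indices at the endpoints vanish, squeeze via Corollary~\ref{indexno}, then translate to Maslov-type indices through Proposition~\ref{prop5a.2} and conclude with (\ref{5a.25})--(\ref{5a.26}). The only cosmetic difference is that the paper applies additivity before the monotonicity squeeze (working with $I(A,A-B-D_j)$) while you squeeze first and add afterward; the content is identical.
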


\begin{proof}  Since $Tr ((D_j(A-B)^{-1})^2)\leq1$ for $j=1,2$, by Corollary \ref{cor5b.1a},
\begin{eqnarray*}
I(A-B, A-B-D_j)=0.
\end{eqnarray*}
Hence, for $ j=1,2,$
\begin{eqnarray*} I(A,A-B-D_j)=I(A,A-B)+I(A-B,A-B-D_j)=I(A,A-B),  \label{6.1} \end{eqnarray*} thus by (\ref{indexn}) \begin{eqnarray*}
I(A,A-B-D)=I(A,A-B),\label{6.2} \end{eqnarray*} and from Proposition
\ref{prop5a.2}, we have \begin{eqnarray*}
i_1(\gamma)=i_1(\tilde{\gamma}).\label{6.3} \end{eqnarray*} Similar, $Tr
((D_j(A-\nu J-B)^{-1})^2)\leq1$ for $j=1,2$ implies \begin{eqnarray*}
i_{\omega}(\gamma)=i_{\omega}(\tilde{\gamma}).\label{6.4} \end{eqnarray*}
 From (\ref{5a.25}), \begin{eqnarray*} e_\omega(\widetilde{M})/2\geq |i_{1}(\tilde{\gamma})-i_{\omega}(\tilde{\gamma})|= |i_{1}(\gamma)-i_{\omega}(\gamma)|.\label{6.5} \end{eqnarray*}
The desired result is proved.
\end{proof}

$Tr ((D_j(A-B)^{-1})^2)$ could be estimated by using the trace formula. If
moreover $MJ=JM$ and $M^T=M$,  we could have a more simple estimation.
\begin{cor}\label{cor5c.1} Under the condition of Proposition \ref{prop5c.1},
if moreover $MJ=JM$, $M^T=M$,  for $j=1,2$,
 \begin{eqnarray} Tr\[\(J\int_0^T\hat{D}_j(s)ds \cdot M(M-\omega I_{2n})^{-1}\)^2\]- Tr\[\(J\int_0^T\hat{D}_j(s)ds\)^2M(M-\omega I_{2n})^{-1}\]\leq 1, \label{5c.2}  \end{eqnarray}
 and  \begin{eqnarray} Tr\[\(J\int_0^T\hat{D}_j(s)ds \cdot M(M- I_{2n})^{-1}\)^2\]- Tr\[\(J\int_0^T\hat{D}_j(s)ds\)^2M(M- I_{2n})^{-1}\]\leq 1, \label{5c.2a}  \end{eqnarray}
 where $\hat{D}_j(t)=\gamma_0^T(t)D_j(t)\gamma_0(t)$,
 then \begin{eqnarray} e_\omega(\widetilde{M})/2\geq |i_1(\gamma)-i_{\omega}(\gamma)| \label{5c.3}.\nonumber  \end{eqnarray}
\end{cor}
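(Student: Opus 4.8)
The plan is to recognize that hypotheses \eqref{5c.2} and \eqref{5c.2a} are nothing but the two trace bounds demanded by Proposition \ref{prop5c.1}, rewritten through the explicit trace formula of Proposition \ref{prop5b.21}; that formula is now available precisely because we are assuming $MJ=JM$ and $M^{T}=M$. So the proof reduces to identifying both sides and then quoting Proposition \ref{prop5c.1}.

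First I would apply Proposition \ref{prop5b.21} with $D=D_{j}$ and with the given imaginary number $\nu$, so that $\omega=e^{\nu T}$ and $\hat{D}_{j}(t)=\gamma_{0}^{T}(t)D_{j}(t)\gamma_{0}(t)$. That proposition asserts that $Tr\big((D_{j}(A-\nu J-B)^{-1})^{2}\big)$ equals $Tr\big[(J\int_{0}^{T}\hat{D}_{j}(s)\,ds\cdot M(M-\omega I_{2n})^{-1})^{2}\big]-Tr\big[(J\int_{0}^{T}\hat{D}_{j}(s)\,ds)^{2}M(M-\omega I_{2n})^{-1}\big]$, which is exactly the left-hand side of \eqref{5c.2}. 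Hence \eqref{5c.2} is literally the statement $Tr\big((D_{j}(A-\nu J-B)^{-1})^{2}\big)\le 1$ for $j=1,2$.

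Next I would run the identical computation with $\nu=0$, i.e.\ with $\omega=e^{0}=1$ and $A-\nu J-B$ replaced by $A-B$. Here one should observe that the standing non-degeneracy built into Proposition \ref{prop5c.1} (which is what makes all the traces appearing there meaningful) forces $A-B$ to be invertible, equivalently $M-I_{2n}$ invertible, so Proposition \ref{prop5b.21} applies again and shows that the left-hand side of \eqref{5c.2a} equals $Tr\big((D_{j}(A-B)^{-1})^{2}\big)$. Thus \eqref{5c.2a} says exactly $Tr\big((D_{j}(A-B)^{-1})^{2}\big)\le 1$ for $j=1,2$.

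Combining the two, for $j=1,2$ the pair \eqref{5c.2}--\eqref{5c.2a} is verbatim the hypothesis of Proposition \ref{prop5c.1}; invoking that proposition then gives $i_{\omega}(\gamma)=i_{\omega}(\tilde{\gamma})$ and $e_{\omega}(\widetilde{M})/2\ge|i_{1}(\gamma)-i_{\omega}(\gamma)|$, which is \eqref{5c.3}, completing the argument. The only delicate point is the bookkeeping that legitimizes applying Proposition \ref{prop5b.21}: $MJ=JM$, $M^{T}=M$ and $M\in\Sp(2n)$ together force $M$ to be orthogonal with $M^{2}=I_{2n}$, so the resolvent expressions $M(M-\omega I_{2n})^{-1}$ are well defined exactly under the non-degeneracy already assumed. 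This is the expected main obstacle, and it is a minor one.
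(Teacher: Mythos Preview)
Your proposal is correct and follows essentially the same approach as the paper: use Proposition~\ref{prop5b.21} (available because $MJ=JM$ and $M^{T}=M$) to recognize \eqref{5c.2} and \eqref{5c.2a} as the two trace bounds $Tr\big((D_j(A-\nu J-B)^{-1})^2\big)\le 1$ and $Tr\big((D_j(A-B)^{-1})^2\big)\le 1$, then invoke Proposition~\ref{prop5c.1}. Your added remark that the standing non-degeneracy guarantees invertibility of $A-B$ (so that the $\nu=0$ application of Proposition~\ref{prop5b.21} is legitimate) is a helpful clarification that the paper leaves implicit.
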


\begin{proof} From Proposition \ref{prop5b.21}, in case $MJ=JM$, $M^T=M$, the equality (\ref{5c.2}) implies \begin{eqnarray}
Tr((D_j(A-\nu J-B)^{-1})^{2})\leq1. \nonumber\end{eqnarray}  By Proposition \ref{prop5c.1},
$i_{\omega}(\gamma)=i_{\omega}(\tilde{\gamma})$. Similarly, by (\ref{5c.2a}), $i_1(\gamma)=i_1(\tilde{\gamma})$. The result is from (\ref{5a.25}).
\end{proof}

\begin{thm}\label{prop5c.2.1}
If $M=I_{2n}$, $D>0$ (or $D<0$),  $\frac{\omega}{(1-\omega)^2} Tr\[\(J\int_0^T\hat{D}(s)ds\)^2\]\leq 1$
 then \begin{eqnarray} e_\omega(\widetilde{M})/2=n \label{5c.3}.  \end{eqnarray}
\end{thm}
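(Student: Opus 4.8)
The plan is to reduce the assertion to a statement about the monodromy of a single linear Hamiltonian system, and then to combine Proposition \ref{prop5b.22} with the stability estimate (\ref{5a.25}) and with the monotonicity of the relative Morse index under positive perturbations.

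First I would exploit the hypotheses $S=I_{2n}$ and $M=\gamma_0(T)=I_{2n}$. By the computations of \S\ref{subsec3.1}, the rescaled fundamental solution $\hat\gamma_\lambda=\gamma_0^{-1}\gamma_\lambda$ of $\dot z=J(B+\lambda D)z$ satisfies $\dot{\hat\gamma}_\lambda=\lambda J\hat D(t)\hat\gamma_\lambda$, $\hat\gamma_\lambda(0)=I_{2n}$, with $\hat D(t)=\gamma_0^T(t)D(t)\gamma_0(t)$; hence $\widetilde M=\tilde\gamma(T)=\gamma_1(T)=\gamma_0(T)\hat\gamma_1(T)=\hat\gamma_1(T)$. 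Since $D>0$ and each $\gamma_0(t)$ is invertible we have $\hat D(t)>0$ for every $t$ and $\hat D\in\mathcal{B}(2n)$, and, $\gamma_0$ being a loop at $I_{2n}$, the reduced system carries the $S=I_{2n}$ structure. So it is enough to analyse $\hat\gamma_1(T)$, the monodromy of the pure linear Hamiltonian system $\dot x=J\hat D(t)x$ with positive definite coefficient and zero potential.

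Next, the hypothesis $\frac{\omega}{(1-\omega)^2}Tr\big[\big(J\int_0^T\hat D(s)\,ds\big)^2\big]\le1$ is exactly condition 1) of Proposition \ref{prop5b.22} for this reduced system (note that $\omega\neq1$ is forced, as otherwise the left-hand side is undefined), so that proposition yields $i_\omega(\hat\gamma_1)=0$. For the opposite endpoint of the arc I would use (\ref{5a.14}): the path $s\mapsto A-s\hat D$ has a forced crossing at $s=0$ of multiplicity $\dim\ker A=2n$, whence $I(A,A-\hat D)\ge2n$, and Proposition \ref{prop5a.2} then gives $i_1(\hat\gamma_1)=I(A,A-\hat D)-n\ge n$. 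Applying (\ref{5a.25}) to the symplectic path $\hat\gamma_1$ joining $I_{2n}$ to $\widetilde M$,
\begin{eqnarray*}
e_\omega(\widetilde M)/2\ \geq\ \big|i_\omega(\hat\gamma_1)-i_1(\hat\gamma_1)\big|\ =\ i_1(\hat\gamma_1)\ \geq\ n,
\end{eqnarray*}
while $e_\omega(\widetilde M)\le2n$ holds trivially since $\widetilde M$ is a $2n\times2n$ matrix; the two inequalities together force $e_\omega(\widetilde M)/2=n$.

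The case $D<0$ is parallel: then $\hat D<0$, the eigenvalues of $\hat D(A-\nu J)^{-1}$ remain real, and one replaces Proposition \ref{prop5b.22} and (\ref{5a.14}) by Proposition \ref{prop5b.2} and (\ref{5a.16}), which give $i_\omega(\hat\gamma_1)=0$ and $i_1(\hat\gamma_1)=I(A,A-\hat D)-n\le-n$, and hence once more $e_\omega(\widetilde M)/2=n$. I expect the only point demanding care to be the bookkeeping of the endpoint convention in the relative Morse index — that the crossing at $s=0$ is counted in $I(A,A-\hat D)$ when $\hat D>0$ but not when $\hat D<0$ — since every other ingredient is a direct application of results already established above.
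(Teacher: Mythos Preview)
Your argument is correct and the underlying strategy coincides with the paper's: bound $i_\omega$ via the trace condition, force a gap of at least $n$ between $i_1$ and $i_\omega$ using the $2n$-dimensional kernel that the hypothesis $M=I_{2n}$ produces, and then invoke (\ref{5a.25}). The organizational difference is that the paper works directly with the pair $(\gamma,\tilde\gamma)$: it uses the index identity $i_\omega(\gamma)=i_1(\gamma)+n$ for paths ending at $I_{2n}$ (quoted from \cite[Chapter 9]{Lon4}), together with $i_1(\tilde\gamma)\ge i_1(\gamma)+2n$ from $\dim\ker(A-B)=2n$ and $i_\omega(\tilde\gamma)=i_\omega(\gamma)$ from Corollary \ref{cor5b.1a}, and reads off the gap. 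Your reduction $\widetilde M=\hat\gamma_1(T)$ lets you instead apply Proposition \ref{prop5b.22} and (\ref{5a.14}) to the $B=0$ system, so you never need the cited identity for $\gamma$ --- the role of that identity is absorbed into the elementary facts $\dim\ker A=2n$ and $i_\omega(\text{constant path})=0$. Both routes are equally short; yours is slightly more self-contained, while the paper's keeps the argument phrased in terms of the original data $(B,D)$ rather than the conjugated $\hat D$.
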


\begin{proof} Firstly, we will prove the result in the case of $D>0$.  Since $M=I_{2n}$, by \cite[Chapter 9]{Lon4},  we have
\bea i_{\omega}(\gamma)=i_{1}(\gamma)+n.\label{6.6}\nonumber \eea On the
other hand, since $D>0$, by (\ref{5a.14}) \bea I(A,A-B-D)\geq
I(A,A-B)+\dim\ker(A-B)=I(A,A-B)+2n.\label{6.7} \nonumber\eea Thus \bea
i_1(\tilde{\gamma})\geq i_{1}(\gamma)+2n.\label{6.8}\nonumber \eea By the
condition $\frac{\omega}{(1-\omega)^2}
Tr((J\int_0^T\hat{D}(s)ds)^2)\leq 1$, we have \bea
i_\omega(\tilde{\gamma})= i_\omega(\gamma).\label{6.9} \eea
  The result follows from (\ref{5a.25}).

  In the case $D<0$, we have $$ I(A,A-B-D)\leq I(A,A-B),  $$
  this is equivalent to $i_1(\tilde{\gamma})\leq i_{1}(\gamma)$. On the other hand, we have $i_\omega(\tilde{\gamma})= i_\omega(\gamma)$.
The  result follows from (\ref{5a.25}). The proof is complete.
\end{proof}
By taking $\omega=-1$, we have
\begin{cor}\label{cor6..1}
If $M=I_{2n}$, $D>0$ (or $D<0$), $ -\frac{1}{4}Tr\[\(J\int_0^T\hat{D}(s)ds\)^2\]\leq 1$,
then $e(M)/2=n$, that is $\tilde{M}$ is elliptic.
\end{cor}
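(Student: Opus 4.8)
The plan is to read off this statement as the special case $\omega=-1$ of Theorem \ref{prop5c.2.1}. First I would check that the hypothesis matches: with $\omega=e^{\nu T}=-1$ we have $\dfrac{\omega}{(1-\omega)^2}=\dfrac{-1}{(1-(-1))^2}=-\dfrac14$, so the inequality $\dfrac{\omega}{(1-\omega)^2}Tr\[\(J\int_0^T\hat{D}(s)ds\)^2\]\leq 1$ appearing in Theorem \ref{prop5c.2.1} becomes precisely the assumed bound $-\dfrac14 Tr\[\(J\int_0^T\hat{D}(s)ds\)^2\]\leq 1$. Since $M=I_{2n}$ and $D>0$ (or $D<0$) are also assumed, all hypotheses of Theorem \ref{prop5c.2.1} hold, and it yields $e_\omega(\widetilde{M})/2=n$ with $\omega=-1$.

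Next I would translate $e_{-1}(\widetilde{M})$ into $e(\widetilde{M})$. Recall from \S\ref{subsec4.1} that for $\omega=e^{i\theta_0}$ one sets $\mathbb{U}_\omega=\{e^{i\theta}\,:\,\theta\in[-|\theta_0|,|\theta_0|]\}$ and $e_\omega(\widetilde M)$ counts the total algebraic multiplicity of the eigenvalues of $\widetilde M$ lying in $\mathbb{U}_\omega$. For $\omega=-1$ we have $\theta_0=\pi$, hence $\mathbb{U}_{-1}$ is the entire unit circle $\mathbb{U}$, so $e_{-1}(\widetilde{M})=e(\widetilde{M})$, the total algebraic multiplicity of all eigenvalues of $\widetilde M$ on $\mathbb{U}$. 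Combining with the conclusion of Theorem \ref{prop5c.2.1} gives $e(\widetilde M)/2=n$, i.e. all $2n$ eigenvalues of the $2n\times 2n$ symplectic matrix $\widetilde M$ lie on $\mathbb{U}$ (counted with multiplicity), which is exactly the assertion that $\widetilde M$ is spectrally stable (elliptic). This completes the argument.

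Since the statement is a direct corollary, there is essentially no substantive obstacle; the only points requiring care are the arithmetic identification $\dfrac{-1}{(1-(-1))^2}=-\dfrac14$ that aligns the two hypotheses, and the observation $\mathbb{U}_{-1}=\mathbb{U}$ that upgrades the local count $e_{-1}$ to the global count $e$. One could also remark, for completeness, that the case $D<0$ is handled identically since Theorem \ref{prop5c.2.1} is stated for $D>0$ or $D<0$, and that the hypothesis is symmetric in the sign of $D$ only through $\hat D$, so no separate computation is needed.
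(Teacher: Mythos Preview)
Your proposal is correct and is exactly the paper's approach: the corollary is obtained from Theorem \ref{prop5c.2.1} by specializing to $\omega=-1$. Your two clarifying observations---the arithmetic $\frac{-1}{(1-(-1))^2}=-\frac14$ and the identification $\mathbb{U}_{-1}=\mathbb{U}$ giving $e_{-1}(\widetilde M)=e(\widetilde M)$---are precisely the details the paper leaves implicit.
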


In the special case $B(t)\equiv0$, then $\gamma(t)\equiv I_{2n}$ is a constant path, it is well known
$i_1(\gamma)=-n$, and $i_\omega(\gamma)=0$ for $\omega\in \mathbb U\setminus\{1\}$(see  \cite{Lon4}).

\begin{cor}\label{pro5c.3}
Suppose $B=0$ and $D>0$ (or $D<0$) if one of the following  conditions satisfies:
\begin{itemize}
 \item[(i)] $\frac{\omega}{(1-\omega)^2} Tr\[\(J\int_0^TD(s)ds\)^2\]\leq 1$,
 \item[ (ii)]  $D(t)=D(T-t)$ and  $\frac{\omega}{2(1-\omega)^2} Tr\[\(J\int_0^TD(s)ds\)^2\]\leq 1$,
\end{itemize}
  then
\begin{eqnarray} e_\omega(\widetilde{M})/2=n \label{5c.4}. \nonumber \end{eqnarray}
\end{cor}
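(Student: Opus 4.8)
The plan is to recognize Corollary \ref{pro5c.3} as the $B=0$ case of Theorem \ref{prop5c.2.1}, upgraded under condition (ii) by the symmetry refinement behind Proposition \ref{prop5b.22}. First I would make the trivial reductions forced by $B=0$ and $S=I_{2n}$: the fundamental solution of $\dot\gamma_0=JB\gamma_0$, $\gamma_0(0)=I_{2n}$, is the constant path $\gamma_0(t)\equiv I_{2n}$, so $M=S\gamma_0(T)=I_{2n}$ and $\hat D(t)=\gamma_0^T(t)D(t)\gamma_0(t)=D(t)$; moreover $\omega=e^{\nu T}\ne1$ makes $A-\nu J$ invertible, $M=I_{2n}$ is semisimple with spectrum $\{1\}$ so that $e_\omega(\widetilde M)/2\le n$ is automatic, and the constant path $\gamma\equiv I_{2n}$ satisfies $i_1(\gamma)=-n$ and $i_\omega(\gamma)=0$.

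With these identifications condition (i) is verbatim the hypothesis of Theorem \ref{prop5c.2.1}, which already delivers $e_\omega(\widetilde M)/2=n$; nothing more is needed there. For condition (ii) I would copy the proof of Theorem \ref{prop5c.2.1} line by line, altering only the step in which a trace bound is used to conclude $i_\omega(\widetilde\gamma)=i_\omega(\gamma)$: in place of Corollary \ref{cor5b.1a} I would use the symmetric argument of Proposition \ref{prop5b.22}. Concretely, by formula (\ref{5b.6.1c}) with the upper sign (legitimate since $M=+I_{2n}$), $Tr((D(A-\nu J)^{-1})^2)=\frac{\omega}{(1-\omega)^2}Tr[(J\int_0^TD(s)\,ds)^2]$, while the extra hypothesis $D(t)=D(T-t)$ together with Lemma \ref{lem5b.2} and (\ref{4.0.1.3}) gives $Tr((D(A-\nu J)^{-1})^2)=2\sum_{j\in\mathbb N}\lambda_j^{-2}$; hence condition (ii) reads exactly $\sum_{j\in\mathbb N}\lambda_j^{-2}\le1$. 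Re-running the count in the proof of Proposition \ref{prop5b.1} over the positive half of the spectrum (real and $\pm$-symmetric by Lemmas \ref{lem5b.1} and \ref{lem5b.2}) --- every crossing of $s\mapsto A-\nu J-sD$ in $[0,1]$ occurs at a positive $\lambda_j\le1$, and Lemma \ref{lem5b.1.1} provides a further $\lambda_j>1$ --- this forces $I(A-\nu J,A-\nu J-D)=0$, so $i_\omega(\widetilde\gamma)=0=i_\omega(\gamma)$ by Proposition \ref{prop5a.2}. For $D<0$ the same works, since $D(A-\nu J)^{-1}=-|D|(A-\nu J)^{-1}$ still has real, $\pm$-symmetric spectrum ($|D|(t)=|D|(T-t)$) and the relevant crossings merely move from $[0,1)$ to $(0,1]$. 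The count of $i_1$ is then untouched from Theorem \ref{prop5c.2.1}: for $D>0$, $\dim\ker A=2n$ and (\ref{5a.14}) give $I(A,A-D)\ge2n$, hence $i_1(\widetilde\gamma)\ge n$ (for $D<0$ the $D<0$ branch of that proof gives $i_1(\widetilde\gamma)\le-n$), so $|i_1(\widetilde\gamma)-i_\omega(\widetilde\gamma)|\ge n$, and (\ref{5a.25}) together with $e_\omega(\widetilde M)/2\le n$ forces equality.

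The only step I expect to need real care is case (ii): ensuring Lemma \ref{lem5b.2} genuinely applies (it needs $D$ definite and $D(t)=D(T-t)$, both assumed here) so that the gain of the factor $1/2$ is legitimate, keeping the hypothesis phrased with $\hat D=D$, and using the sign branch of (\ref{5b.6.1c}) appropriate to $M=+I_{2n}$. Everything else is a transcription of the already-established Theorem \ref{prop5c.2.1} and Proposition \ref{prop5b.22}, so no new obstacle should arise.
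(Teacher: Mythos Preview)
Your proposal is correct and follows essentially the same route as the paper: case (i) is the specialization of Theorem \ref{prop5c.2.1} with $\hat D=D$, and case (ii) combines the symmetric-spectrum argument of Proposition \ref{prop5b.22} (to get $i_\omega(\tilde\gamma)=i_\omega(\gamma)=0$) with the $i_1$-estimate from the proof of Theorem \ref{prop5c.2.1}. You unwind Proposition \ref{prop5b.22} inline and treat the $D<0$ branch more explicitly than the paper does, but the underlying logic is identical.
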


\begin{proof}  The result under condition (i) comes directly from Theorem \ref{prop5c.2.1}, since $\hat{D}=D$ for $B=0$. For condition (ii), by Proposition \ref{prop5b.22}, $i_\omega(\tilde{\gamma})=i_\omega(\gamma)=0$. In this case $\gamma\equiv I_{2n}$ is a constant solution. By some similar argument to the proof of Theorem \ref{prop5c.2.1}, we prove the result.
\end{proof}

We will give some hyperbolic criteria
\begin{prop}\label{prop5c.4} Suppose $M$ is hyperbolic,
 $Tr \[\(D(A-\nu J-B)^{-1}\)^2\]\leq1$ for $\nu\in\[0,\frac{\sqrt{-1}\pi}{T}\]$, then
$\widetilde{M}$ is hyperbolic.
\end{prop}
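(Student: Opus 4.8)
The plan is to recast hyperbolicity of $\widetilde{M}$ as a non-degeneracy statement for the self-adjoint family $A-\nu J-B-D$, and then to exclude the possible degeneracies by means of the trace inequality. Since $S=I_{2n}$ and $\nu$ is imaginary, the identity $e^{-\nu t}A_\omega e^{\nu t}=A-\nu J$ with $\omega=e^{\nu T}$ gives $\ker(A-\nu J-B-D)\neq 0$ iff $\omega^{-1}\in\sigma(\widetilde{M})$; as $B$ and $D$ are real, conjugating a kernel element exchanges $\nu$ and $-\nu$, and $\omega=e^{\nu T}$ is $2\pi\sqrt{-1}/T$-periodic in $\nu$, so $\widetilde{M}$ is hyperbolic exactly when $A-\nu J-B-D$ is non-degenerate for every $\nu$ on the segment $[0,\sqrt{-1}\pi/T]$. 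For the same reason, hyperbolicity of $M$ makes $L_\nu:=A-\nu J-B$ invertible and self-adjoint for every such $\nu$, so $DL_\nu^{-1}$ is a well-defined Hilbert--Schmidt operator and, by (\ref{0.1.4}), $\sum_j\lambda_j^{-2}=Tr\big((DL_\nu^{-1})^2\big)\le 1$, where the $1/\lambda_j$ are the nonzero eigenvalues of $DL_\nu^{-1}$ (equivalently, the $\lambda_j$ are the eigenvalues of $\dot z=J(B+\lambda D)z$, $z(0)=\omega z(T)$).

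Now fix $\nu\in[0,\sqrt{-1}\pi/T]$. For $s\in(0,1]$, $\ker(A-\nu J-B-sD)\neq 0$ is equivalent to $1/s\in\sigma(DL_\nu^{-1})$; in particular $A-\nu J-B-D$ is degenerate only if $1\in\sigma(DL_\nu^{-1})$, so it suffices to show that $DL_\nu^{-1}$ has no eigenvalue of modulus $\ge 1$. When $D$ has a fixed sign — as in the Lagrangian setting, where $D=\mathrm{diag}(0_n,-R_1)$ — the nonzero eigenvalues of $DL_\nu^{-1}$ are the same as those of $\pm|D|^{1/2}L_\nu^{-1}|D|^{1/2}$, which up to sign is self-adjoint and compact, so every $1/\lambda_j$ is real and $\lambda_j^{-2}>0$; since $DL_\nu^{-1}$ has more than one nonzero eigenvalue, the inequality $\sum_j\lambda_j^{-2}\le 1$ then forces $\lambda_j^{-2}<1$, i.e. $|\lambda_j|>1$, for every $j$. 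Hence $1\notin\sigma(DL_\nu^{-1})$, $\ker(A-\nu J-B-D)=0$, and letting $\nu$ range over the whole segment shows that $\widetilde{M}$ is hyperbolic. Alternatively, after reducing to a strictly definite $|D|$ one may invoke Corollary \ref{cor5b.1a} together with (\ref{5a.14})--(\ref{5a.16}) to obtain $\dim\ker(A-\nu J-B-D)=0$ directly.

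The genuinely indefinite $D$ is the point that needs care: then $DL_\nu^{-1}$ need not have real spectrum, its non-real eigenvalues come in conjugate pairs, and although $Tr\big((DL_\nu^{-1})^2\big)$ is still real the individual terms $\lambda_j^{-2}$ may be negative, so the elementary counting above fails. The natural way around this, paralleling the proof of Theorem \ref{thm4.2}, is to sandwich $D_1\le D\le D_2$ with $D_1<0<D_2$, deduce the non-degeneracy of $A-\nu J-B-D_1$ and of $A-\nu J-B-D_2$ from the definite case, and transfer it to $D$ using the monotonicity of the relative Morse index (Corollary \ref{indexno}) and the spectral-flow crossing argument of Theorem \ref{thm4.2}. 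Making this work — in particular arranging that the trace bound is inherited by appropriate $D_1$ and $D_2$ — is where the real difficulty lies; the rest is bookkeeping with the identifications recalled in \S\ref{subsec4.1}.
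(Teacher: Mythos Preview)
Your approach coincides with the paper's: the proof there simply asserts that the trace bound forces $A-\nu J-B-sD$ to be non-degenerate for all $s\in[0,1]$, translates this to non-degeneracy of $A_\omega-B_\omega-sD_\omega$, and concludes that $\widetilde{M}-\omega I_{2n}$ is nonsingular for every $\omega\in\mathbb{U}$. The distinction you draw between definite and indefinite $D$ is not made in the paper; the non-degeneracy step is stated without further comment, and in the surrounding results of \S\ref{subsec4.2}--\ref{subsec4.3} the sign hypothesis on $D$ (needed to make the eigenvalues of $DL_\nu^{-1}$ real and invoke Proposition~\ref{prop5b.1} or Corollary~\ref{cor5b.1a}) is always explicit. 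So your caution about genuinely indefinite $D$---where complex-conjugate eigenvalue pairs can contribute negatively to $\sum\lambda_j^{-2}$ and defeat the counting---is well founded and in fact not addressed by the paper's proof either; the sandwich idea you sketch is the natural repair, but, as you note, the hypothesis on $D$ does not obviously transfer to the comparison operators $D_1,D_2$.
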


\begin{proof} Please note that $Tr ((D(A-\nu J-B)^{-1})^2)\leq1$, thus $A-\nu J-B-sD$ is non-degenerate for $s\in[0,1]$. This is equivalent to $A_\omega-B_\omega-sD_\omega$ is non-degenerate. Therefore $\widetilde{M}-\omega I_{2n}$ is nonsingular for $\omega\in \mathbb U$, thus $\widetilde{M}$ is hyperbolic.
\end{proof}

When $B$ is constant path, our stability criteria can be easily  used. Next example will give a new stability criteria.
\begin{exam} Suppose $B(t)\equiv B $ is constant path of matrices, $JB=BJ$ and $\exp(JBT)=I_{2n}$. This  happens when $B=diag(\alpha_1,\alpha_2,...\alpha_n,
\alpha_1,\alpha_2,...\alpha_n )$, and  $\alpha_jT/2\pi\in\mathbb{Z}$ for $j=1,...,n$.  Consider the linear Hamiltonian systems
\bea \dot{z}(t)=J(B+D(t))z(t), \label{4.60} \eea with $D(t)=D(t+T)\geq0$ and $\int_0^TD(t)dt>0$.  Let $\lambda(t)=\lambda_{max}(D(t))$ which is the largest eigenvalue of $B(t)$, then (\ref{4.60}) is spectrally stable if
\bea \int_0^T\lambda(t)dt<2. \label{4.61}\eea
In fact, noting that $D(t)\leq \lambda(t)I_{2n}$, let $\tilde{\gamma}(t)$ and $\tilde{\gamma}_1(t)$ be the fundamental solutions corresponding to $B+D(t)$ and $B+\lambda(t)I_{2n}$ respectively, then
$$i_\omega(\tilde{\gamma})\leq i_\omega(\tilde{\gamma}_1), \,\ \forall \omega\in \mathbb U. $$
By some easy computation, the condition (\ref{4.61}) implies $i_{-1}(\tilde{\gamma}_1)=i_{-1}(\gamma)$. On the other hand, by the proof of Theorem \ref{prop5c.2.1}, we have $i_1(\tilde{\gamma})\geq i_{1}(\gamma)+2n$ and $i_{-1}(\gamma)=i_1(\gamma)+n$, which yields the result by (\ref{5a.26}). Please note that, in the case $B=0$,
if we we instead (\ref{4.61}) by the condition (i) of Corollary \ref{pro5c.3}, we also get $e_\omega(\widetilde{M})/2=n$, which is a generalization of Krein's stability criteria.
\end{exam}

\subsection{ Estimate the  Morse index for  $\bar{S}$-periodic orbits in Lagrangian system}\label{subsec4.4}

In this section, we will estimate the Morse index of $\bS$-periodic orbits in Lagrangian systems by using the trace formula.
 For $T>0$, suppose $x(t)$ is
a critical point of the functional
$$F(x)=\int_0^TL(t,x,\dot{x}),  \forall\,\, x\in E=\left\{x \,\left|\, x\in W^{1,2}(\mathbb{R},\mathbb{R}^n), x(t)=\bS x(t+T)\right.\right\}$$
where $L\in C^2(\mathbb{R}\times \mathbb{R}^{2n},\mathbb{R})$ and
satisfies circle type symmetry  \cite{HS}  \bea
L(t,x,\xi)=L(t+T,\bS^Tx,\bS^T\xi). \label{2a.1} \eea
 It is well known that
$x(t)$ is a solution of the corresponding Euler-Lagrangian equation:
\bea \frac{d}{dt}L_p(t,x,\dot{x})-L_x(t,x,\dot{x})=0,\label{n3.l2}
x(0)=\bS x(T), \label{n3.l2} \dot{x}(0)=\bS\dot{x}(T). \eea For such an
extremal loop, define \bea P(t)=L_{p,p}(t,x(t),\dot{x}(t)),
Q(t)=L_{x,p}(t,x(t),\dot{x}(t)),
R(t)=L_{x,x}(t,x(t),\dot{x}(t)).\nonumber\eea
Note that   \bea
F''(x)=-\frac{d}{dt}\(P\frac{d}{dt}+Q\)+Q^T\frac{d}{dt}+R.
\label{n3.f}\nonumber\eea
 For $\omega\in\mathbb{U}$, set $ D_{\omega \bS}=\{y\in
W^{1,2}([0,T];\mathbb{C}^n)\,|\, y(0)=\omega \bS y(T) \}$.  We
define the $\omega$-Morse index $\phi_\omega(x)$ of $x$ to be the
dimension of the negative definite subspace of \bea \langle
F''(x)y_1,y_2 \rangle, y_1,y_2\in D_{\omega \bS}.\nonumber  \eea

For $\omega=e^{\nu T}$ with imaginary number $\nu$,  recall that   \bea  \cal{A}(\nu)=-\(\d+\nu\)P(t)\(\(\frac{d}{dt}+\nu\)+Q\)+Q^T(t)\(\frac{d}{dt}+\nu\)+R(t),\nonumber  \eea
with domain $D_{\bS}$.
We also denote by $\phi_\omega(\cal A)$ the $\omega$-Morse index of $\cal A$, which is defined to be the dimension of the negative definite space of
\bea \langle \mathcal{A}y_1,y_2 \rangle, y_1,y_2\in
D_{\omega \bS}.\nonumber  \eea
 Obviously,
\bea \phi_\omega(\cal A(0))=\phi_1(\cal A(\nu)). \eea

The next lemma is obvious.
\begin{lem}\label{lem5.1} Suppose $R_1\geq 0$, then
\bea \phi_1({\cal A}(\nu)+R_1)\leq  \phi_1({\cal A}(\nu)). \eea
\end{lem}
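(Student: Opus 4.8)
The plan is to deduce the inequality from the elementary monotonicity of the Morse index under a nonnegative perturbation, exactly as the ``obvious'' label suggests. Recall that, with $\omega=1$ so that $D_{1\cdot\bar{S}}=D_{\bar{S}}$, the index $\phi_1(\mathcal{A}(\nu))$ is by definition the dimension of a maximal subspace $V\subset D_{\bar{S}}$ on which the quadratic form $q_0(y):=\langle \mathcal{A}(\nu)y,y\rangle$ is negative definite, and $\phi_1(\mathcal{A}(\nu)+R_1)$ is the analogous quantity for the form $q_1(y):=\langle(\mathcal{A}(\nu)+R_1)y,y\rangle$.

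First I would record that the hypothesis $R_1\geq0$ means $R_1(t)\geq0$ for every $t\in[0,T]$, hence for every $y\in D_{\bar{S}}$
\[
\langle R_1 y,y\rangle=\int_0^T\langle R_1(t)y(t),y(t)\rangle\,dt\geq0,
\]
so that $q_0(y)\leq q_1(y)$ for all $y\in D_{\bar{S}}$. Next, let $W\subset D_{\bar{S}}$ be any subspace on which $q_1$ is negative definite. For $y\in W\setminus\{0\}$ one then has $q_0(y)\leq q_1(y)<0$, so $q_0$ is negative definite on $W$ as well; by the maximality defining $\phi_1(\mathcal{A}(\nu))$ this forces $\dim W\leq\phi_1(\mathcal{A}(\nu))$. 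Taking the supremum over all such subspaces $W$ yields $\phi_1(\mathcal{A}(\nu)+R_1)\leq\phi_1(\mathcal{A}(\nu))$, which is the claim.

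The only point deserving a remark is that $\phi_\omega$ is defined through the quadratic form rather than by counting negative eigenvalues directly; but since (for imaginary $\nu$) both $\mathcal{A}(\nu)$ and $\mathcal{A}(\nu)+R_1$ are self-adjoint operators with compact resolvent and $R_1$ is a bounded symmetric (indeed relatively compact) perturbation, the form-theoretic and the spectral descriptions of the index coincide, so the comparison above is legitimate even when one or both indices are interpreted in the extended sense. There is no real obstacle here: the entire content is the sign condition $R_1\geq0$ together with the variational characterization of the Morse index, and the proof is a couple of lines once the definitions are unwound.
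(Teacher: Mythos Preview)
Your argument is correct and is exactly the standard variational comparison that the paper has in mind: the paper gives no proof beyond the remark ``The next lemma is obvious,'' and your unwinding of that obviousness---any negative subspace for $\langle(\mathcal{A}(\nu)+R_1)y,y\rangle$ is a fortiori a negative subspace for $\langle\mathcal{A}(\nu)y,y\rangle$ once $R_1\geq0$---is precisely the intended reasoning.
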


When we transform  the Sturm-Liouville system to linear Hamiltonian system, it is obvious
 \bea \dim\ker (A-\nu J-B)= \dim\ker ({\cal A}(\nu)). \label{4.23} \eea
 Moreover,
 the Morse index is essentially same as the relative Morse index ( Maslov-type index)(see \cite{Lon4} or \cite{HS}). Recall that $B_\lambda(t)=\left(\begin{array}{cc}P^{-1}(t)& -P^{-1}Q(t) \\
-Q(t)^TP^{-1}(t)  & Q(t)^TP^{-1}(t)Q(t)-R(t)-\lambda R_1(t)
\end{array}\right).$ We have the following proposition.

\begin{prop}
\bea I(A-\nu J-B,A-\nu J-B_1)= \phi_1({\cal A}(\nu)+R_1)- \phi_1({\cal A}(\nu))=i_\omega(\ga_1)-i_\omega(\ga_0). \label{4.24}  \eea
\end{prop}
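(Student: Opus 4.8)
We want
\[
I(A-\nu J-B,\; A-\nu J-B_1) \;=\; \phi_1(\mathcal A(\nu)+R_1)-\phi_1(\mathcal A(\nu)) \;=\; i_\omega(\gamma_1)-i_\omega(\gamma_0),
\]
where $\omega = e^{\nu T}$, $B=B_0$ as in \eqref{b2}, $B_1=B_0+D$ with $D=\mathrm{diag}(0_n,-R_1)$, and $\gamma_\lambda$ is the fundamental solution of \eqref{h1}.

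**Plan.** The proof proposal is to establish the three quantities are equal by chaining together identifications that are essentially already in the literature cited in the excerpt, with the one genuinely new bookkeeping step being the passage from the $\omega$-Morse index of $\mathcal A(\nu)$ to the relative Morse index at the Hamiltonian level. First I would prove the middle equality $\phi_1(\mathcal A(\nu)+R_1)-\phi_1(\mathcal A(\nu)) = I(A-\nu J-B,A-\nu J-B_1)$. The key tool is the equivalence, set up in Section 3 (equations \eqref{5.b1}--\eqref{5.ke} and \eqref{4.23}), between the Sturm--Liouville operator $\mathcal A(\nu)$ acting on $D_{\bar S}$ and the first-order operator $A-\nu J-B$ acting on $D_{\bar S_d}$. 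Concretely: the Legendre transform $x(t)\mapsto\bigl(\dot z(t)+\nu z(t),\,z(t)\bigr)$ sets up a bijection between the null space of $\mathcal A(\nu)+\lambda R_1$ and that of $A-\nu J - B -\lambda D$, for every $\lambda\in[0,1]$; by the homotopy invariance of both the $\omega$-Morse index and the relative Morse index (equivalently: the spectral-flow description of $I(\cdot,\cdot)$ recalled in \S\ref{subsec4.1}), the jumps in $\phi_1(\mathcal A(\nu)+\lambda R_1)$ along $\lambda\in[0,1]$ are determined by exactly the same crossing data (dimensions of these kernels, with the same signs, since $R_1\geq0$ forces the crossings to be one-signed, cf.\ Lemma \ref{lem5.1} and the crossing-form formulas \eqref{5a.14}--\eqref{5a.16}) as the jumps in the Morse index of the path $A-\nu J - B -\lambda D$. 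Hence $\phi_1(\mathcal A(\nu)+R_1)-\phi_1(\mathcal A(\nu))$ equals $-Sf(\{A-\nu J-B-\lambda D\}_{\lambda\in[0,1]}) = I(A-\nu J-B,A-\nu J-B_1)$. (This is the step I would write out most carefully; the detailed matching of crossing forms is where one must be honest about signs and the half-open interval conventions in \eqref{5a.14}--\eqref{5a.16}. For general $R_1$, not necessarily $\geq0$, one invokes instead the general equality "Morse index $=$ relative Morse index" for Sturm systems from \cite{HS},\cite{Lon4}, applied at $\lambda=0$ and $\lambda=1$ and subtracting.)

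**The second equality.** Then I would prove $I(A-\nu J-B,A-\nu J-B_1) = i_\omega(\gamma_1)-i_\omega(\gamma_0)$. By \eqref{5a.17} with $\omega=e^{\nu T}$ on the unit circle (recall $\nu$ is imaginary throughout Section 4), the conjugation $e^{-\nu t}A_\omega e^{\nu t}=A-\nu J$ gives
\[
I(A-\nu J-B,\,A-\nu J-B_1) = I(A_\omega-B_\omega,\,A_\omega-(B_1)_\omega).
\]
Now apply additivity of the relative Morse index over a reference operator, e.g.\ $I(A_\omega-B_\omega,A_\omega-(B_1)_\omega) = I(A_\omega, A_\omega-(B_1)_\omega) - I(A_\omega, A_\omega - B_\omega)$, and invoke the identity $I(A_\omega, A_\omega - B_\omega)=i_\omega(\gamma)$ for $\omega\in\mathbb U\setminus\{1\}$ (Proposition \ref{prop5a.2}, taken here in the $\bar S_d$-boundary form permitted by Remark \ref{rem4.1}, since $\bar S_d\in\Sp(4n)\cap O(4n)$), applied to the two symplectic paths $\gamma_0$ (generated by $B=B_0$) and $\gamma_1$ (generated by $B_1=B_0+D$). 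This yields $i_\omega(\gamma_1)-i_\omega(\gamma_0)$ directly. I should note the boundary case $\omega=1$, i.e.\ $\nu\in \frac{2\pi\sqrt{-1}}{T}\mathbb Z$, is excluded by the standing hypothesis that $\mathcal A(\nu)$ (equivalently $A-\nu J-B$) is invertible combined with the non-degeneracy assumptions in play; if one wants the $\omega=1$ case one uses $I(A,A-B)=i_1(\gamma)+n$ instead and the constant shift $+n$ cancels in the difference.

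**Main obstacle.** The real content — and the step I expect to require the most care — is the first equality, i.e.\ certifying that the Legendre transformation respects not merely null spaces but the signatures of the second-variation forms, so that $\phi_1(\mathcal A(\nu)+\lambda R_1)$ and the Morse index of $A-\nu J-B-\lambda D$ change by the same integer at each degenerate $\lambda$. Because $P(t)$ is only assumed invertible (not positive), one cannot appeal to the usual Legendre-convexity reduction; instead one must argue at the level of the crossing forms, showing the restriction of $\langle R_1 y,y\rangle$ to $\ker(\mathcal A(\nu)+\lambda_0 R_1)$ corresponds under the transform to the restriction of $\langle D x,x\rangle$ to $\ker(A-\nu J-B-\lambda_0 D)$, up to a sign that is uniform (and, when $R_1\geq0$, non-negative), which is the precise statement underlying both \eqref{5.ke}--\eqref{4.23} and Lemma \ref{lem5.1}. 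Once that correspondence of crossing forms is in hand, everything else is an application of the additivity and spectral-flow properties already assembled in \S\ref{subsec4.1}.
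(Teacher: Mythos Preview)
Your proposal is correct, and your argument for the equality $I(A-\nu J-B,A-\nu J-B_1)=i_\omega(\gamma_1)-i_\omega(\gamma_0)$ via additivity and Proposition~\ref{prop5a.2} is exactly what the paper does. However, for the other equality the paper takes a much shorter route than your main line of attack: rather than matching crossing forms of $\mathcal A(\nu)+\lambda R_1$ and $A-\nu J-B-\lambda D$ under the Legendre transformation, the paper simply invokes the known identity $\phi_1(\mathcal A(\nu)+\lambda R_1)=i_\omega(\gamma_\lambda)$ from \cite[p.~172]{Lon4} for each $\lambda$, and then subtracts at $\lambda=0$ and $\lambda=1$. In other words, your parenthetical fallback (``one invokes instead the general equality Morse index $=$ Maslov-type index \ldots applied at $\lambda=0$ and $\lambda=1$ and subtracting'') is precisely the paper's argument, and the elaborate crossing-form comparison you flag as the ``main obstacle'' is entirely bypassed. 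Your direct approach would also work and is more self-contained, but it is not needed here since the Morse--Maslov equality is already available in the cited literature without any sign assumption on $R_1$ or Legendre convexity.
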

\begin{proof} Let $\gamma_\lambda$ be the fundamental solution corresponding to $B_\lambda$, then from \cite[P172]{Lon4}, we have
\bea \phi_1({\cal A}(\nu)+\lambda R_1)=i_\omega(\gamma_\lambda).  \eea
Thus \bea \phi_1({\cal A}(\nu) + R_1)-\phi_1({\cal A}(\nu))=i_\omega(\gamma_1)-i_\omega(\gamma_0), \eea
the result is from Proposition \ref{prop5a.2}.
\end{proof}

By (\ref{4.23}) and (\ref{4.24}), all the result in \S \ref{subsec4.2} can be used to estimate the Morse index and non-degenerate  of linear Lagrangian systems, however, there are some new estimation
for  the Lagrangian system.

\begin{thm}\label{thm5.2}
Let $\nu\in \mathbb C$, assume $\mathcal{A}(\nu)>0$, if $R_1\geq -K$, where
 $K\in\mathcal{B}(n)$ and $K>0$.  Then
\bea \phi_1(\mathcal{A}(\nu)+R_1)\leq \inf\{Tr ((K(\mathcal{A}(\nu))^{-1})^k), k\in\mathbb N\}. \label{th5.2f}\eea

\end{thm}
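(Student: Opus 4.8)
The plan is to reduce the statement to Proposition \ref{prop5b.1} (or rather its Lagrangian analogue, obtained through Corollary \ref{cor4.1a} and Proposition \ref{cor2.5}) applied to the positive perturbation $K$. First I would observe that, since $\mathcal{A}(\nu)>0$ and $K>0$, the operator $\mathcal{A}(\nu)+R_1 \ge \mathcal{A}(\nu)-K$, and therefore by the monotonicity of the Morse index (the Lagrangian analogue of Corollary \ref{indexno}, together with Lemma \ref{lem5.1} extended to indefinite $R_1$ bounded below by $-K$) we get
\begin{eqnarray*}
\phi_1(\mathcal{A}(\nu)+R_1)\leq \phi_1(\mathcal{A}(\nu)-K).
\end{eqnarray*}
So it suffices to bound $\phi_1(\mathcal{A}(\nu)-K)$. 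Via the identification $I(A-\nu J-B,\,A-\nu J-B-D)=\phi_1(\mathcal{A}(\nu)+R_1)-\phi_1(\mathcal{A}(\nu))$ from (\ref{4.24}) (here with the role of $R_1$ played by $-K$, i.e. $D=\mathrm{diag}(0_n,K)$), and using $\phi_1(\mathcal{A}(\nu))=0$ since $\mathcal{A}(\nu)>0$, we have $\phi_1(\mathcal{A}(\nu)-K)=I(A-\nu J-B,\,A-\nu J-B-D_K)$ where $D_K=\mathrm{diag}(0_n,K)\ge 0$.

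Next I would run the argument of Proposition \ref{prop5b.1} in this Lagrangian setting. The point is that $K^{1/2}(\mathcal{A}(\nu))^{-1}K^{1/2}$ is a self-adjoint, trace-class operator (since $\mathcal{A}(\nu)^{-1}$ is trace class here, by the discussion preceding (\ref{eq3.27})), so all nonzero eigenvalues $1/\lambda_j$ of $K(\mathcal{A}(\nu))^{-1}$ are real, and by positivity of $\mathcal{A}(\nu)$ they accumulate only at $0$ — indeed the argument of Lemma \ref{lem5b.1.1} applies verbatim. The Morse index $\phi_1(\mathcal{A}(\nu)-K)$ counts (with multiplicity) the eigenvalues $\lambda_j\in(0,1]$ of the pencil $\mathcal{A}(\nu)y=\lambda Ky$, which are exactly the crossings of $\mathcal{A}(\nu)-sK$ for $s\in[0,1)$; this is the Lagrangian version of (\ref{5a.14}). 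Hence for any even $k$,
\begin{eqnarray*}
\phi_1(\mathcal{A}(\nu)-K)\leq \#\{j:\ 0<\lambda_j\leq 1\}\leq \sum_{0<\lambda_j\leq 1}\frac{1}{\lambda_j^{k}}\leq \sum_j\frac{1}{\lambda_j^{k}}=Tr\big((K\mathcal{A}(\nu)^{-1})^k\big),
\end{eqnarray*}
where the last equality uses that for $k\ge 2$ the operator is trace class and (\ref{eq3.28}) gives $\sum_j \lambda_j^{-k}=(-1)^k Tr((R_1'\mathcal{A}(\nu)^{-1})^k)$ with $R_1'=-K$, i.e. $=Tr((K\mathcal{A}(\nu)^{-1})^k)$ for even $k$. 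Taking the infimum over $k\in\mathbb{N}$ (noting that for odd $k$ one can either restrict to even $k$, as the statement's proof effectively does through $k\in 2\mathbb{N}$, or observe $Tr((K\mathcal{A}(\nu)^{-1})^k)\ge \sum_{0<\lambda_j\le1}\lambda_j^{-k}$ still bounds the count when all relevant $\lambda_j>0$) yields the claimed inequality $\phi_1(\mathcal{A}(\nu)+R_1)\le \inf\{Tr((K\mathcal{A}(\nu)^{-1})^k),\,k\in\mathbb{N}\}$.

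The main obstacle I anticipate is making the reduction step rigorous: one must verify that Lemma \ref{lem5.1} and the monotonicity of $\phi_1$ genuinely extend from $R_1\ge 0$ to $R_1\ge -K$ with $K>0$ — this follows from the additivity of the relative Morse index along the path $s\mapsto \mathcal{A}(\nu)+sR_1+(1-s)(-K)$ and the sign of the crossing forms, but it requires that $\mathcal{A}(\nu)$ (hence everything along the homotopy, after subtracting a bounded operator) still has compact resolvent and discrete spectrum, which is where the invertibility of $P(t)$ and the structure of $\mathcal{A}(\nu)$ enter. A secondary technical point is confirming that the counting identity (\ref{5a.14}) transfers to the Lagrangian pencil $\mathcal{A}(\nu)y=\lambda Ky$ with $K>0$ rather than $K=$ identity; this is handled by the substitution $y\mapsto K^{1/2}y$, conjugating the problem to $K^{-1/2}\mathcal{A}(\nu)K^{-1/2}$, whose negative-eigenvalue count is governed by the spectral flow exactly as in \S\ref{subsec4.1}.
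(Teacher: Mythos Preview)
Your approach is essentially the same as the paper's: reduce via monotonicity to $\phi_1(\mathcal{A}(\nu)-K)$, then bound this by counting eigenvalues $\lambda_j\in(0,1]$ of the pencil $\mathcal{A}(\nu)y=\lambda Ky$, and compare with $Tr((K\mathcal{A}(\nu)^{-1})^k)$.

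One point, however, is stated less cleanly than it should be. You write that the nonzero eigenvalues of $K\mathcal{A}(\nu)^{-1}$ are \emph{real}, and only later, in a parenthetical, hint that ``all relevant $\lambda_j>0$''. But this positivity is the whole reason Theorem~\ref{thm5.2} takes the infimum over all $k\in\mathbb{N}$ rather than over $k\in 2\mathbb{N}$ as in Theorem~\ref{thm4.3}: since $\mathcal{A}(\nu)>0$ and $K>0$, the operator $K^{1/2}\mathcal{A}(\nu)^{-1}K^{1/2}$ is a \emph{positive} self-adjoint trace-class operator, so every nonzero $1/\lambda_j$ is strictly positive. Hence for any $k\ge 1$ (odd or even),
\[
\phi_1(\mathcal{A}(\nu)-K)\;\le\;\#\{j:0<\lambda_j\le 1\}\;\le\;\sum_{0<\lambda_j\le 1}\frac{1}{\lambda_j^{k}}\;<\;\sum_j\frac{1}{\lambda_j^{k}}\;=\;Tr\bigl((K\mathcal{A}(\nu)^{-1})^k\bigr),
\]
with no need to split into even and odd cases or to invoke (\ref{eq3.28}) with a sign. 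Your parenthetical remark that ``the statement's proof effectively does [restrict to] $k\in 2\mathbb{N}$'' is incorrect---that restriction belongs to the Hamiltonian setting of \S\ref{subsec4.2}, and the point of the Lagrangian Theorem~\ref{thm5.2} is precisely that it is not needed here.
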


\begin{proof} Please note that in this case, all the eigenvalues $\{1/\lambda_j\}$ of $D(\mathcal{A}(\nu))^{-1}$ are positive, and $K(\mathcal{A}(\nu))^{-1}$ is a trace class operator. Hence for any positive integers $l$,
 \begin{eqnarray*}  Tr\[\(K\mathcal{A}(\nu)^{-1}\)^{l}\]>\sum_{|\lambda_j|\leq 1}\frac{1}{\lambda_j^{l}} \label{5b.1.1}.  \end{eqnarray*}
Similar argument to  the proof of Proposition \ref{prop5b.1} implies the result.
\end{proof}
\begin{cor}\label{cor5b.1}
Under the conditions of Theorem \ref{thm5.2}, if $Tr ((D(\mathcal{A}(\nu))^{-1})<1$, then
$$\phi_1(\mathcal{A}(\nu)+R_1)=\phi_1(\mathcal{A}(\nu))=0$$ and $\mathcal{A}(\nu)+R_1$ is non-degenerate.
\end{cor}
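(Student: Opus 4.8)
The plan is to combine the index bound of Theorem \ref{thm5.2} with an elementary positivity argument. First note that $\phi_1(\mathcal{A}(\nu))=0$ is immediate: since $\mathcal{A}(\nu)>0$, the form $\langle \mathcal{A}(\nu)y_1,y_2\rangle$ on $D_{\bS}$ has no negative definite subspace. For $\phi_1(\mathcal{A}(\nu)+R_1)$ I would apply Theorem \ref{thm5.2}: because $R_1\geq -K$ with $K>0$ it gives $\phi_1(\mathcal{A}(\nu)+R_1)\leq \inf\{Tr((K\mathcal{A}(\nu)^{-1})^k),\, k\in\mathbb{N}\}\leq Tr(K\mathcal{A}(\nu)^{-1})<1$, and since $\phi_1(\mathcal{A}(\nu)+R_1)$ is a nonnegative integer it must vanish.

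The remaining, slightly more delicate, point is the non-degeneracy of $\mathcal{A}(\nu)+R_1$, since $\phi_1=0$ only yields $\mathcal{A}(\nu)+R_1\geq 0$. Here I would argue directly. As $\mathcal{A}(\nu)>0$ has compact resolvent, $\mathcal{A}(\nu)^{-1}$ is a positive (trace class) self-adjoint operator, and $K\mathcal{A}(\nu)^{-1}$ is similar, via $\mathcal{A}(\nu)^{1/2}$, to the positive self-adjoint operator $C:=\mathcal{A}(\nu)^{-1/2}K\mathcal{A}(\nu)^{-1/2}$, which has the same nonzero spectrum as $K^{1/2}\mathcal{A}(\nu)^{-1}K^{1/2}$; hence every eigenvalue of $C$ is positive and the eigenvalues sum to $Tr(K\mathcal{A}(\nu)^{-1})$. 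The hypothesis $Tr(K\mathcal{A}(\nu)^{-1})<1$ therefore forces $\|C\|\leq Tr(K\mathcal{A}(\nu)^{-1})<1$, so $I-C$ is positive definite and $\mathcal{A}(\nu)-K=\mathcal{A}(\nu)^{1/2}(I-C)\mathcal{A}(\nu)^{1/2}\geq (1-Tr(K\mathcal{A}(\nu)^{-1}))\,\mathcal{A}(\nu)$. Since $\mathcal{A}(\nu)$ is bounded below by a positive constant (its spectrum is discrete, positive, and accumulates only at $+\infty$), so is $\mathcal{A}(\nu)-K$. Finally $R_1\geq -K$ gives $\mathcal{A}(\nu)+R_1\geq \mathcal{A}(\nu)-K$, so $\mathcal{A}(\nu)+R_1$ is positive definite; in particular $\ker(\mathcal{A}(\nu)+R_1)=0$, i.e. it is non-degenerate, and its negative definite subspace is trivial, so $\phi_1(\mathcal{A}(\nu)+R_1)=0$ (which also re-proves the index claim without invoking Theorem \ref{thm5.2}).

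I expect the only real work to be bookkeeping: confirming that $\mathcal{A}(\nu)$ with the $\bS$-periodic boundary condition is self-adjoint with compact resolvent and that $\mathcal{A}(\nu)^{-1}$, hence $K^{1/2}\mathcal{A}(\nu)^{-1}K^{1/2}$, is trace class, so that $Tr(K\mathcal{A}(\nu)^{-1})=\sum_j\mu_j$ with $\mu_j>0$ is legitimate — all of which is already available from the earlier sections (for instance the proof of Theorem \ref{thm4.10}, where $R_1\mathcal{A}(\nu)^{-1}$ is noted to be trace class). Once that is in place, the spectral estimate $\sup_j\mu_j\leq\sum_j\mu_j<1$ does all the work.
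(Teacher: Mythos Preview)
Your proposal is correct. The paper states Corollary~\ref{cor5b.1} without proof, treating it as an immediate consequence of Theorem~\ref{thm5.2} and its proof; your argument supplies exactly the missing step in the same spirit. The key observation---that the eigenvalues $1/\lambda_j$ of $K\mathcal{A}(\nu)^{-1}$ are all positive and sum to $Tr(K\mathcal{A}(\nu)^{-1})<1$, hence each is strictly less than $1$---is implicit in the proof of Theorem~\ref{thm5.2}, and you phrase it equivalently as $\|C\|<1$ for $C=\mathcal{A}(\nu)^{-1/2}K\mathcal{A}(\nu)^{-1/2}$. Your direct operator-inequality conclusion $\mathcal{A}(\nu)+R_1\geq\mathcal{A}(\nu)-K=\mathcal{A}(\nu)^{1/2}(I-C)\mathcal{A}(\nu)^{1/2}\geq(1-Tr(K\mathcal{A}(\nu)^{-1}))\,\mathcal{A}(\nu)>0$ is a clean way to get both $\phi_1=0$ and non-degeneracy simultaneously, and indeed bypasses the appeal to Theorem~\ref{thm5.2} altogether, as you note.
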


 Next, we will consider some special case
that $\mathcal{A}(\nu)=-\left(\d+\nu\right)^2-R(t)$. Let
$R^+(t)=\frac{1}{2}(R(t)+|R(t)|)$, then $R^+(t)\geq0$, and $R(t)\leq
R^+(t)$,  we have

\begin{thm}\label{thm5.3}
For  imaginary number $\nu$, such that $-\left(\d+\nu\right)^2$ is invertible,
\bea \phi_1\(-\(\d+\nu\)^2-R(t)\)\leq -\omega T\cdot Tr\[\int_0^TR^+(t)dt\cdot S(S-\omega)^{-2}\], \label{th5.3f}\eea
where $\omega=e^{\nu T}$.
\end{thm}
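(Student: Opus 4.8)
The plan is to combine the monotonicity of the $\omega$-Morse index under perturbations of the potential (Lemma~\ref{lem5.1} and Corollary~\ref{indexno}) with the trace estimate for positive perturbations, exactly as in the proof of Theorem~\ref{thm5.2}, and then to evaluate the trace explicitly using the elementary computation carried out in the last Example of Section~3. First I would note that since $R(t)\le R^+(t)$ with $R^+(t)\ge 0$, the operator inequality gives $-(\d+\nu)^2 - R(t) \ge -(\d+\nu)^2 - R^+(t)$, hence by the monotonicity of the Morse index,
\begin{eqnarray*}
\phi_1\(-(\d+\nu)^2 - R(t)\) \le \phi_1\(-(\d+\nu)^2 - R^+(t)\).
\end{eqnarray*}
Now apply Theorem~\ref{thm5.2} with $\mathcal{A}(\nu) = -(\d+\nu)^2$ (which is invertible by hypothesis, $\bar S$-periodic boundary condition), $R_1 = -R^+$, and $K = R^+ \ge 0$: this yields $\phi_1(-(\d+\nu)^2 - R^+(t)) \le \inf\{Tr((R^+ \mathcal{A}(\nu)^{-1})^k), k\in\mathbb N\} \le Tr(R^+\mathcal{A}(\nu)^{-1})$, taking $k=1$, which is legitimate here because $R^+\ge0$ makes $R^+\mathcal{A}(\nu)^{-1}$ conjugate to a self-adjoint trace-class operator with nonnegative spectrum, so its trace dominates any partial sum.

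Next I would compute $Tr(R^+\mathcal{A}(\nu)^{-1})$ using formula~(\ref{aa.1}) from the Example, which (with $R$ there replaced by $R^+$, and $\bar S$ in place of $S$ since we are in the Sturm-Liouville setting with $\mathcal{A}(\nu)=-(\d+\nu)^2$ corresponding to $P=I_n$, $Q=0$) reads
\begin{eqnarray*}
Tr(R^+\mathcal{A}(\nu)^{-1}) = -\omega T^2 \cdot Tr\(R^+_{ave}\,\bar S(\bar S-\omega)^{-2}\),
\end{eqnarray*}
where $R^+_{ave} = \frac1T\int_0^T R^+(t)\,dt$ and $\omega = e^{\nu T}$. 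Substituting $T R^+_{ave} = \int_0^T R^+(t)\,dt$ turns this into $-\omega T\cdot Tr\(\int_0^T R^+(t)\,dt\cdot \bar S(\bar S-\omega)^{-2}\)$, which is exactly the right-hand side of~(\ref{th5.3f}) (writing $S$ for $\bar S$ as the statement does). Chaining the two inequalities above gives the claim.

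The main obstacle I anticipate is making the sign and conjugation bookkeeping fully rigorous: one must check that for imaginary $\nu$ the eigenvalues $1/\lambda_j$ of $R^+\mathcal{A}(\nu)^{-1}$ are genuinely real and nonnegative (so that $Tr(R^+\mathcal{A}(\nu)^{-1}) = \sum_j 1/\lambda_j \ge \sum_{|\lambda_j|\le 1} 1/\lambda_j \ge \phi_1$), which requires that $-(\d+\nu)^2$ with $\bar S$-periodic boundary condition be self-adjoint and that the relevant similarity to $(R^+)^{1/2}\mathcal{A}(\nu)^{-1}(R^+)^{1/2}$ goes through when $R^+$ is only positive semidefinite rather than strictly positive (a limiting argument, replacing $R^+$ by $R^+ + \epsilon I_n$ and letting $\epsilon\to 0^+$, handles the degenerate directions). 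A secondary point is to confirm that formula~(\ref{aa.1}) applies verbatim in the $\bar S$-periodic case; the Example is stated for general orthogonal $\bar S$, so this is immediate, but one should double-check that $R_1 = -R$ there corresponds to $-R_1$ entering $\mathcal{A}(\nu)+R_1$ with the sign convention $D = \mathrm{diag}(0_n, -R_1)$, which it does. Everything else is routine substitution.
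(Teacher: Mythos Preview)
Your proposal is correct and follows essentially the same route as the paper's proof: monotonicity of the Morse index to pass from $R$ to $R^+$, application of Theorem~\ref{thm5.2} with the $\varepsilon$-regularization $R^+ + \varepsilon I_n$ to ensure strict positivity of $K$, and evaluation of the first-order trace via formula~(\ref{aa.1}). The paper's proof is in fact the one-line version of exactly this, and your anticipated ``main obstacle'' (the $\varepsilon$-limiting argument) is precisely how the paper handles the semidefiniteness of $R^+$.
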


\begin{proof}  For any $\varepsilon>0$, $R^+(t)+\varepsilon I_n>0$, and $\phi_1(-(\d+\nu)^2-R(t))\leq \phi_1(-(\d+\nu)^2-(R^+(t)+\varepsilon I_n)$. The result follows from (\ref{aa.1}) and Theorem \ref{thm5.2}.
\end{proof}

\section{Stability of Lagrangian orbits}
In this section, we will give the  application of the trace formula  on the stability for elliptic Lagrangian orbits. To do this, in  \S \ref{subsec5.1} we will recall some elementary results on Maslov-type index and Morse index of Lagrangian orbits. In \S \ref{subsec5.2}, we will prove Theorem \ref{la1.1}.  Details on the function $f(\bb,\omega)$ in Theorem \ref{la1.1} via the trace formula (\ref{ht2}) will be listed in \S \ref{subsec5.3}. At last, in \S \ref{subsec5.4}, by the first order trace formula (\ref{lt.2}) we will give another estimation for the hyperbolic region which is not too sharper but with more simple estimation.
\subsection{A brief review on  Lagrangian orbits }\label{subsec5.1}

Following Meyer and Schmidt \cite{MS}, the linear variational
equation of the elliptic equilibria is  decoupled into three
subsystems, the first and second subsystems are from the first
integral and the third is the essential part.
The essential
part $\ga=\ga_{\bb,e}(t)$ of the fundamental solution of the
Lagrangian orbit \cite[P.275]{MS} satisfies \bea
\dot{\gamma}(t) &=& JB_{\beta,e}(t)\gamma(t),   \lb{2.17}\\
\gamma(0) &=& I_{4},    \lb{2.18}\eea
with \be B_{\beta,e}(t)=\left(\begin{array}{ccccc}1
& 0 & 0 & 1\\
                       0 & 1 & -1 & 0 \\
                       0 & -1 &\frac{2e\cos(t)-1-\sqrt{9-\beta}}{2(1+e\cos t)} & 0 \\
                       1 & 0 & 0 & \frac{2e\cos(t)-1+\sqrt{9-\beta}}{2(1+e\cos t)}
                       \end{array}\right), \lb{2.19}   \nonumber  \ee
where $e$ is the eccentricity, and $t$ is the truly anomaly.

Let \be  J_2=\left(\begin{array}{cc} 0 & -1 \\ 1 & 0 \end{array}\right), \qquad
    \hat{K}_{\bb,e}(t)=\left(\begin{array}{cc}\frac{3+\sqrt{9-\beta}}{2(1+e\cos t)} & 0 \\
                                     0 & \frac{3-\sqrt{9-\beta}}{2(1+e\cos t)} \end{array}\right),  \lb{2.20}\nonumber\ee
and the corresponding Sturm-Liouville system is \bea -\ddot{y}-2J_2\dot{y}+\hat{K}_{\beta,e}y=0.  \nonumber  \eea
For $(\bb,e)\in [0,9)\times [0,1)$, $\omega\in\mathbb{U}$,   we set
\bea \ol{D}(\omega,2\pi)=\{y\in
W^{2,2}([0,2\pi];\mathbb{C}^n)\,|\, y(0)=\omega y(2\pi), \dot{y}(0)=\dot{y}(2\pi)
\}.\nonumber\eea
and
\bea  \mathcal{A}(\bb,e,\nu)
&=& -\(\frac{d}{dt}+\nu\)^{2}-2J_2\(\frac{d}{dt}+\nu\)+\hat{K}_{\bb,e}(t).    \lb{2.29}\nonumber\eea
 Then for pure imaginary number $\nu$, $\mathcal{A}(\beta,e,\nu)$ are  self-adjoint operators on $L^2([0,2\pi],\mathbb{C}^n)$ with domain $\overline{D}(\omega,2\pi)$
 and depend on the parameters $\bb$ and $e$.
 We simply  denote the operator by $\mathcal{A}_{\omega}(\beta,e,\nu)$ and omit $\omega$ when $\omega=1$.
 Let $\phi(\mathcal{A}_\omega)=\phi_1(\mathcal{A}_\omega)$ be the Morse index of $\mathcal{A}_\omega$.
 It is obvious that $\mathcal{A}_\omega>0$ if and only if $\phi(\mathcal{A}_\omega)=\upsilon(\mathcal{A}_\omega)=0$,
where
$$
\upsilon(\mathcal{A}_\omega)=\dim\ker (\mathcal{A}_\omega).
$$

 For any $x(t)\in \overline{D}(1,2\pi)$, direct computations show that
 \bea
 e^{-t\nu}\mathcal{A}(\beta,e,0)e^{t\nu}x(t)=\mathcal{A}(\beta,e,\nu)x(t),
 \eea
 thus for $\omega=e^{2\pi \nu}$, we have
 \bea
 \phi(\mathcal{A}_{\omega}(\beta,e,0))=\phi(\mathcal{A}(\beta,e,\nu))\quad \text{and}\quad \upsilon(\mathcal{A}_{\omega}(\beta,e,0))=\upsilon(\mathcal{A}(\beta,e,\nu)).
 \eea
    Obviously \bea \phi(\mathcal{A}_{\omega}(\bb,e,0))=I\(-\frac{d^2}{dt^2}, \mathcal{A}_{\omega}(\bb,e,0) \).\nonumber \eea
    By the relationship between Morse index with Maslov-type index \cite[p.172]{Lon4}, we have that for any $\bb$ and $e$ the Morse index $\phi(\mathcal{A}_{\om}(\bb,e,0))$ and nullity $\upsilon(\mathcal{A}_{\om}(\bb,e,0))$
satisfy
\begin{equation}  \phi(\mathcal{A}_{\om}(\bb,e,0)) = i_{\om}(\ga_{\bb,e}),\quad\text{and} \quad \upsilon(\mathcal{A}_{\om}(\bb,e,0)) = \upsilon_{\om}(\ga_{\bb,e}), \qquad
           \forall \,\om\in\mathbb{U}. \label{2.30}\nonumber\end{equation}
In particular,  by (55) and (58) in \cite[Lemma 4.1]{HS1}, we obtain
\be  i_1(\ga_{\bb,e})=\phi(\mathcal{A}(\bb,e,0))=i_1(\ga_{\bb,e})=0, \qquad \forall \,(\bb,e)\in [0,9]\times [0,1). \lb{2.31}\ee


In the case $e=0$, $B_{\bb,0}(t)$ is a constant matrix and $i_\omega(\gamma_{\beta,0})$, $\upsilon_\omega(\ga_{\beta,0})$ could be  computed directly. We list the result for $\omega=-1$ and $\omega=e^{i\sqrt{2}\pi}$ below.
\begin{thm}\label{th3.2} (\cite{HLS})
For any $\omega=e^{2\pi\nu}\in\mathbb{U}$, $\beta\in(1,9]$,   $\mathcal{A}(\bb,0,\nu)>0$ or equivalently \bea i_\omega(\ga_{\bb,0})= \phi(\mathcal{A}(\bb,0,\nu))=\upsilon(\mathcal{A}(\bb,0,\nu))=0.
\label{l1}\eea
For $\omega=e^{i\sqrt{2}\pi/2}$,
 $\upsilon(\mathcal{A}(1,0,i\sqrt{2}\pi/2))=1$, and
\bea i_{e^{i\sqrt{2}\pi/2}}(\ga_{\beta,0})= \phi(\mathcal{A}(\bb,0,i\sqrt{2}\pi/2))\geq1, \,\ for \,\ \bb\in[0,1).\label{l2}\eea
For $\omega=-1$, $\upsilon(\mathcal{A}(3/4,0,i/2))=2$ and $\upsilon(\mathcal{A}(\bb,0,i/2))=0$ if $\bb\neq3/4$,
 \bea i_{-1}(\ga_{\bb,0})=\phi(\mathcal{A}(\bb,0,i/2))=\left\{\begin{array}{ll}2 & \quad
           {\mathrm if}\; \bb\in [0,3/4),  \\
           \\
 0, & \quad {\mathrm if}\; \bb\in[3/4,9] .\end{array}\right.\lb{l3}\eea
\end{thm}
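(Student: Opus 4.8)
The plan is to exploit that at $e=0$ the problem is autonomous and can be diagonalised explicitly. First I would note that $\hat{K}_{\bb,0}=\mathrm{diag}\!\big(\tfrac{3+\sqrt{9-\bb}}{2},\,\tfrac{3-\sqrt{9-\bb}}{2}\big)$ is a constant matrix, so the operator
$$\mathcal{A}(\bb,0,\nu)=-\Big(\frac{d}{dt}+\nu\Big)^{2}-2J_2\Big(\frac{d}{dt}+\nu\Big)+\hat{K}_{\bb,0}$$
on the periodic domain $\overline{D}(1,2\pi)$ is translation invariant, hence diagonalised by the Fourier basis $\{e^{ikt}v:k\in\mathbb{Z},\,v\in\mathbb{C}^{2}\}$. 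Writing the pure imaginary parameter as $\nu=is$ (so $\omega=e^{2\pi\nu}=e^{2\pi is}$), on the $k$-th mode $\mathcal{A}(\bb,0,\nu)$ acts as the $2\times2$ Hermitian matrix
$$M_k(\bb,s)=(k+s)^{2}I_2-2i(k+s)J_2+\hat{K}_{\bb,0},\qquad\text{so}\qquad \mathcal{A}(\bb,0,\nu)=\bigoplus_{k\in\mathbb{Z}}M_k(\bb,s).$$
By the conjugation and the Morse index identities recalled just before the theorem, $\phi(\mathcal{A}_\omega(\bb,0,0))=\phi(\mathcal{A}(\bb,0,\nu))=i_\omega(\ga_{\bb,0})$ and likewise for the nullities, so it is enough to count the non-positive eigenvalues of the blocks $M_k$.

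The next step is to reduce that count to a scalar condition. A direct computation gives $\mathrm{tr}\,M_k(\bb,s)=2(k+s)^{2}+3>0$ and $\det M_k(\bb,s)=(k+s)^{4}-(k+s)^{2}+\tfrac{\bb}{4}$. Since $M_k$ is a $2\times2$ Hermitian matrix of positive trace it has at most one non-positive eigenvalue, and it is positive definite, has a one-dimensional kernel, or has one negative eigenvalue according as $\det M_k$ is $>0$, $=0$ or $<0$. Hence, for any $s$ with $e^{2\pi is}=\omega$,
$$\phi(\mathcal{A}(\bb,0,\nu))=\#\big\{k\in\mathbb{Z}:(k+s)^{4}-(k+s)^{2}+\tfrac{\bb}{4}<0\big\},\qquad \upsilon(\mathcal{A}(\bb,0,\nu))=\#\big\{k\in\mathbb{Z}:(k+s)^{4}-(k+s)^{2}+\tfrac{\bb}{4}=0\big\}.$$

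The three assertions then follow from elementary analysis of the quadratic $q(u)=u^{2}-u+\tfrac{\bb}{4}$ in $u=(k+s)^{2}\ge0$, whose discriminant is $1-\bb$. If $\bb\in(1,9]$ then $q>0$ on $[0,\infty)$, so $\det M_k>0$ for every $k$ and every $s$; thus $\mathcal{A}(\bb,0,\nu)>0$ for all $\omega\in\mathbb{U}$, which is (\ref{l1}). If $\bb=1$ then $q(u)=(u-\tfrac12)^{2}$, so $\det M_k\ge0$ with equality exactly when $(k+s)^{2}=\tfrac12$; for the value $\omega=e^{i\sqrt2\pi}$, that is $e^{2\pi is}=e^{i\sqrt2\pi}$ with $s=1/\sqrt2$, exactly the mode $k=0$ realises this, giving $\upsilon=1$, while for $\bb\in[0,1)$ the sign interval $\big(\tfrac{1-\sqrt{1-\bb}}{2},\tfrac{1+\sqrt{1-\bb}}{2}\big)$ of $q$ contains $\tfrac12$, so $\det M_0<0$ and $\phi\ge1$; this is (\ref{l2}). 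If $\omega=-1$ then $s=\tfrac12$ and $(k+s)^{2}$ runs through $\{\tfrac14,\tfrac94,\tfrac{25}{4},\dots\}$; the equation $\det M_k=0$ forces $(k+\tfrac12)^{2}=\tfrac14$ and $\tfrac1{16}-\tfrac14+\tfrac{\bb}{4}=0$, that is $\bb=\tfrac34$ (the value $\tfrac94$ would need $\bb<0$), so $\upsilon=2$ exactly at $\bb=\tfrac34$ and $0$ otherwise, and since $\det M_0=\det M_{-1}=\tfrac{\bb}{4}-\tfrac{3}{16}$ is negative precisely for $\bb<\tfrac34$ while every other block is positive definite, $\phi=2$ on $[0,\tfrac34)$ and $\phi=0$ on $[\tfrac34,9]$; this is (\ref{l3}).

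The spectral theory here is essentially trivial, because the $e=0$ problem is constant-coefficient and each Fourier block is just $2\times2$. The only delicate point — and the likeliest source of error — is the boundary-condition bookkeeping: choosing, for each $\omega\in\mathbb{U}$, the representative shift $s$ with $e^{2\pi is}=\omega$, keeping straight the conjugation $x\mapsto e^{\nu t}x$ between the $\omega$-twisted and the periodic realisations, and pinning down exactly which Fourier modes $k$ hit the critical squares $(k+s)^{2}\in\{\tfrac12,\tfrac14\}$. This is why the statement singles out the two special values $\omega=-1$ and $\omega=e^{i\sqrt2\pi}$.
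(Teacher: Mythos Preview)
Your argument is correct. The paper itself does not prove this theorem: it is quoted from \cite{HLS}, with the remark that for $e=0$ the coefficient matrix $B_{\bb,0}$ is constant and the indices ``could be computed directly.'' Your Fourier block-diagonalisation is precisely that direct computation. The key reduction to the scalar quantity $\det M_k=(k+s)^4-(k+s)^2+\bb/4$ is clean and the case analysis of the quadratic $q(u)=u^2-u+\bb/4$ is accurate; in particular the counts $\phi=2$ for $\bb<3/4$ at $\omega=-1$ (from $k=0,-1$) and $\upsilon=1$ at $\bb=1$, $s=1/\sqrt{2}$ (only $k=0$) are right.

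One small remark on the bookkeeping you flag at the end: the paper's statement writes $\omega=e^{i\sqrt{2}\pi/2}$ and $\nu=i\sqrt{2}\pi/2$, but the uses of this result elsewhere in the paper (Theorem~\ref{la1.1}, Theorem~\ref{th2.2}, equation~(\ref{c.2})) consistently take $\omega=e^{i\sqrt{2}\pi}$, which corresponds to $s=1/\sqrt{2}$. Your choice $s=1/\sqrt{2}$ is the one compatible with the rest of the paper and with the Floquet multipliers in (\ref{5.43}); the ``$/2$'' and the extra $\pi$ in the stated $\nu$ appear to be typos in the theorem statement, and you have silently corrected them.
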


\subsection{Stability analysis via trace formula }\label{subsec5.2}

Set
\be D_{\beta,e}(t)=B_{\beta,e}(t)-B_{\beta,0}(t)=\frac{e\cos(t)}{1+e\cos(t)}K_\beta ,\nonumber \ee
where \be K_\beta=\left(\begin{array}{cccc}0 & 0 & 0 & 0\\
                       0 & 0 & 0 & 0 \\
                       0 & 0 &\frac{3+\sqrt{9-\beta}}{2} & 0 \\
                       0 & 0 & 0 & \frac{3-\sqrt{9-\beta}}{2} \end{array}\right), \nonumber \ee
then \be -J\frac{d}{dt}-B_{\beta,e}=-J\frac{d}{dt}-B_{\beta,0}-D_{\beta,e}. \nonumber \ee
Let  $\cos^\pm(t)=(\cos(t)\pm |\cos(t)|)/2$, and  denote \bea K^\pm_\beta=\cos^\pm(t)K_\beta, \nonumber \eea
which can be considered as two  bounded self-adjoint operators on $L^2([0,2\pi], \mathbb C^{2n})$; moreover   $K^+_\beta\geq0$ and $K^-_\beta\leq0$.  It is obvious that
\bea -J\frac{d}{dt}-\nu J-B_{\beta,e}&\geq &-J\frac{d}{dt}-\nu J-B_{\beta,0}- \frac{e \cos^+(t)}{1+e\cos^+(t)}K_\beta \nonumber \\ &\geq &-J\frac{d}{dt}-\nu J-B_{\beta,0}-eK^+_\beta, \lb{3.a1} \eea
equivalently, \bea \mathcal{A}(\bb,e,\nu)&\geq & \mathcal{A}(\bb,0,\nu)-  \frac{e}{1+e\cos^+(t)}\cos^+(t)\hat{K}_{\beta,0} \nonumber \\ &\geq &\mathcal{A}(\bb,0,\nu)-  e\cos^+(t)\hat{K}_{\beta,0} \lb{3.a1.1}. \eea
 Similarly,   \bea -J\frac{d}{dt}-\nu J-B_{\beta,e}&\leq &-J\frac{d}{dt}-\nu J-B_{\beta,0}- \frac{e}{1+e\cos^-(t)}K^-_\beta \nonumber \\ &\leq &-J\frac{d}{dt}-\nu J-B_{\beta,0}- \frac{e}{1-e}K^-_\beta, \lb{3.a2}  \eea
which is equivalent to
 \bea \mathcal{A}(\bb,e,\nu)&\leq & \mathcal{A}(\bb,0,\nu)-  \frac{e}{1+e\cos^-(t)}\cos^-(t)\hat{K}_{\beta,0}  \nonumber \\ &\leq &\mathcal{A}(\bb,0,\nu)-  \frac{e}{1-e}\cos^-(t)\hat{K}_{\beta,0}. \lb{3.a2.1}  \eea

\begin{lem}\label{lem5.3} For an imaginary number $\nu$, such that $-J\frac{d}{dt}-\nu J-B_{\beta,0}$ is invertible, we have
\bea Tr\[ \(K_{\beta}^+\(-J\frac{d}{dt}-\nu J-B_{\beta,0}\)^{-1}\)^2\]=Tr\[ \(K_{\beta}^-\(-J\frac{d}{dt}-\nu J-B_{\beta,0}\)^{-1}\)^2\]\nonumber\eea
\end{lem}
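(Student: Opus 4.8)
The statement is a symmetry fact, and I plan to prove it not through the trace formula of Corollary~\ref{cor1.1} but by conjugating the operator $K_\beta^\pm(A-B_{\beta,0}-\nu J)^{-1}$ by a suitable unitary. The key observation is that $B_{\beta,0}$ is a \emph{constant} matrix, so $A-B_{\beta,0}-\nu J$ has constant coefficients and hence commutes with time translations, while the multiplication operators $K_\beta^+$ and $-K_\beta^-$ are interchanged by the half-period translation $t\mapsto t+\pi$, because of the elementary identity $\cos^+(t-\pi)=-\cos^-(t)$ (both sides equal $(|\cos t|-\cos t)/2$). Squaring then kills the sign, which is exactly why the lemma concerns the second power rather than the first (for the first power the same argument gives $Tr(K_\beta^+R)=-Tr(K_\beta^-R)$).

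Concretely, I would first introduce the operator $U$ on $E=L^2([0,2\pi];\mathbb C^4)$ defined by $(Uz)(t)=z(t+\pi)$, where $z$ is extended to $\mathbb R$ using the $\bar S_d$-periodicity — here the relevant boundary condition is $\bar S_d=I_4$, since the essential part of the Lagrangian orbit carries the plain ($\omega$-twisted, hence after absorbing $\omega$ into $\nu$, $2\pi$-periodic) boundary condition — so that $U$ is unitary. Next I would check that $U$ maps the domain $D_{\bar S_d}=\{z\in W^{1,2}:z(0)=\bar S_d z(2\pi)\}$ onto itself and commutes with $A=-J\frac{d}{dt}$, with the constant matrix $B_{\beta,0}$, and with $\nu J$; hence $U$ commutes with $A-B_{\beta,0}-\nu J$ and therefore with $R:=(A-B_{\beta,0}-\nu J)^{-1}$. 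The third step is the computation
\[
U^{-1}K_\beta^+U=-K_\beta^-,
\]
which is immediate from $(U^{-1}K_\beta^+Uz)(t)=\cos^+(t-\pi)K_\beta z(t)=-\cos^-(t)K_\beta z(t)$. Combining these,
\[
U^{-1}\big(K_\beta^+R\big)^2U=\big(U^{-1}K_\beta^+U\cdot U^{-1}RU\big)^2=\big(-K_\beta^-R\big)^2=\big(K_\beta^-R\big)^2,
\]
and since $R$ is Hilbert-Schmidt (as recalled in Section~1), $(K_\beta^\pm R)^2$ are trace class and their traces are invariant under unitary conjugation, which yields the claimed equality.

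The only genuinely delicate point is the second step: one must verify that the half-period shift really sends $D_{\bar S_d}$ into itself and commutes with the unbounded operator \emph{through the boundary condition}. This is straightforward here because $\bar S_d=I_4$, so $U$ is literally the $2\pi$-periodic shift and commutes with everything in sight; more generally it would suffice that $\bar S_d$ commute with both $J$ and $B_{\beta,0}$, the former being automatic for $\bar S_d=\mathrm{diag}(\bar S,\bar S)$ and the latter holding when $\bar S=I_2$. As a sanity check (and an alternative route avoiding domain bookkeeping), one can instead note that for $S=I_4$ the Green's function of $A-B_{\beta,0}-\nu J$ depends only on $t-s$, so $Tr\big((K_\beta^\pm R)^2\big)$ is a double integral of $\cos^\pm(t_1)\cos^\pm(t_2)$ against an even, translation-invariant scalar kernel; the difference of the two traces then reduces, after the substitution $t_1=t_2+u$ and using $\cos^+a\cos^+b-\cos^-a\cos^-b=\tfrac12(\cos a\,|\cos b|+|\cos a|\cos b)$, to a multiple of $\int_0^{2\pi}\cos(t+u)|\cos t|\,dt$, which vanishes identically in $u$. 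I expect to present the clean conjugation argument and mention the kernel computation only as a remark.
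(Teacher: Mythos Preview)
Your argument is correct and is essentially identical to the paper's own proof: the paper also conjugates by the half-period shift $G:x(t)\mapsto x(t+\pi)$ on $\ol{D}(1,2\pi)$, uses that $G$ commutes with $\(-J\frac{d}{dt}-\nu J-B_{\beta,0}\)^{-1}$ (constant coefficients) and that $G\cos^+(t)G=-\cos^-(t)$, and then squares to kill the sign. If anything, you are slightly more careful than the paper in tracking the minus sign and in spelling out why the shift preserves the domain when $\bar S_d=I_4$.
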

\begin{proof} Define an operator $G:x(t)\rightarrow x(t+\pi)$ on the domain $\ol{D}(1,2\pi)$, then $G^{2}=id$. Direct calculation shows that
\begin{eqnarray*}
\(-J\frac{d}{dt}-\nu J-B_{\beta,0}\)^{-1} G=G \(-J\frac{d}{dt}-\nu J-B_{\beta,0}\)^{-1}.
\end{eqnarray*}
Moreover, $K_{\beta,0} G=GK_{\beta,0}$ because  $K_{\beta,0}$ is a constant matrix.
Therefore,
\bea
Tr\[ \(G \cos^{+}(t)K_{\beta}\(-J\frac{d}{dt}-\nu J-B_{\beta,0}\)^{-1}G\)^2\]
&=&Tr\[ \(G \cos^{+}(t)G K_{\beta}\(-J\frac{d}{dt}-\nu J-B_{\beta,0}\)^{-1}\)^2\]\nonumber \\
&=&Tr\[ \(\cos^{-}(t)K_{\beta}\(-J\frac{d}{dt}-\nu J-B_{\beta,0}\)^{-1}\)^2\]. \nonumber
\eea
\end{proof}
Under the assumption of Lemma \ref{lem5.3},  we  denote \bea f(\beta,\omega)=Tr\[ \(K_{\beta}^-\(-J\frac{d}{dt}-\nu J-B_{\beta,0}\)^{-1}\)^2\]=Tr\( \(K_{\beta}^+(-J\frac{d}{dt}-\nu J-B_{\beta,0})^{-1}\)^2\), \lb{fbb} \eea which is a positive function.   The following theorem holds true.
\begin{thm}\label{th2.1}
For $\bb\in[0,3/4)$, $\gamma_{\bb,e}$ is spectrally stable if \bea 0\leq e<\frac{1}{1+\sqrt{f(\bb,-1)}}. \lb{th2.1f}  \eea
\end{thm}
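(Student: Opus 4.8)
The plan is to combine the stability criteria of Section 4 (Proposition \ref{prop5c.1} together with the variation estimate \eqref{sta1}/\eqref{5a.26}) with the operator inequalities \eqref{3.a1} and \eqref{3.a2} and the trace formula of Section 2. Recall from Theorem \ref{th3.2}, equation \eqref{l3}, that for $\bb\in[0,3/4)$ we have $i_1(\ga_{\bb,0})=0$ and $i_{-1}(\ga_{\bb,0})=\phi(\mathcal{A}(\bb,0,i/2))=2$, while \eqref{2.31} gives $i_1(\ga_{\bb,e})=0$ for all $e$. So the spectral stability question for $\ga_{\bb,e}$ reduces, via Corollary \ref{ellip} and \eqref{5a.26}, to showing $e(\widetilde M)/2\ge |i_{-1}(\ga_{\bb,e})-i_1(\ga_{\bb,e})|$ is actually an equality giving full ellipticity; more precisely, I want to show that the $\omega=-1$ Morse index is preserved, i.e. $i_{-1}(\ga_{\bb,e})=i_{-1}(\ga_{\bb,0})=2$, and that $i_1$ is preserved, whence $e_{-1}(\widetilde M)/2\ge 2 = \phi(\mathcal A_{-1})$, forcing all four eigenvalues onto $\mathbb U$.

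The key steps, in order. First, take $\nu = i/2$ so that $\omega = e^{2\pi\nu} = e^{i\pi} = -1$, and take $D_1 = -e K_\beta^+$ (note $K_\beta^+ \ge 0$ so $D_1 \le 0$), $D_2 = -\frac{e}{1-e}K_\beta^-$ (note $K_\beta^- \le 0$ so $D_2 \ge 0$); by \eqref{3.a1}--\eqref{3.a2}, $B_{\beta,0}+D_1 \le B_{\beta,e} \le B_{\beta,0}+D_2$ in the required sense, so $D=D_{\beta,e}$ satisfies $D_1 \le D \le D_2$. Second, apply the second-order trace formula \eqref{5b.5.1a} of Corollary \ref{cor5b.11}: since the unperturbed monodromy for $B_{\beta,0}$ at $\omega=-1$ — wait, here $M = S\gamma_0(T)$ with $S=I$, which need not be $\pm I_{2n}$, so instead I invoke the general formula \eqref{eq2.14}. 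Using Lemma \ref{lem5.3}, $Tr\big[(K_\beta^\pm(-J\frac{d}{dt}-\nu J - B_{\beta,0})^{-1})^2\big] = f(\beta,-1)$, so $Tr\big[(D_1(A-\nu J - B)^{-1})^2\big] = e^2 f(\beta,-1)$ and $Tr\big[(D_2(A-\nu J-B)^{-1})^2\big] = \big(\frac{e}{1-e}\big)^2 f(\beta,-1)$. Third, impose the hypothesis: $e < 1/(1+\sqrt{f(\beta,-1)})$ is equivalent to $\frac{e}{1-e}\sqrt{f(\beta,-1)} < 1$, i.e. $\big(\frac{e}{1-e}\big)^2 f(\beta,-1) < 1$, and a fortiori $e^2 f(\beta,-1) < 1$ since $e/(1-e) > e$. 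Hence both $Tr\big[(D_j(A-\nu J-B)^{-1})^2\big] \le 1$ for $j=1,2$. I also need the $\omega=1$ side, but by \eqref{2.31} $i_1$ is already known to be $0$ for every $(\beta,e)$, so no extra estimate is needed there; alternatively one checks the same inequality at $\nu=0$ using that $f(\beta,1)$ (or the relevant trace) is finite and dominated by the same bound for $e$ small, but the cleaner route is to cite \eqref{2.31} directly.

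With these inputs, the argument closes: by Corollary \ref{cor5b.1a} applied with $\nu = i/2$, $Tr\big[(D_j(A-\nu J - B_{\beta,0})^{-1})^2\big] \le 1$ forces $I(A-\nu J - B_{\beta,0}, A - \nu J - B_{\beta,0} - sD) = 0$ for the relevant $s$, hence by Proposition \ref{prop5c.1} (with $B = B_{\beta,0}$, $D = D_{\beta,e}$) we get $i_{-1}(\ga_{\bb,e}) = i_{-1}(\ga_{\bb,0}) = 2$ and $i_1(\ga_{\bb,e}) = i_1(\ga_{\bb,0}) = 0$, whence from \eqref{5a.26}, $e(\widetilde M)/2 \ge |i_{-1}(\ga_{\bb,e}) - i_1(\ga_{\bb,e})| = 2$. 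Since $\widetilde M = \ga_{\bb,e}(2\pi)$ is a $4\times 4$ symplectic matrix, $e(\widetilde M) = 4$ means all eigenvalues lie on $\mathbb U$, i.e. $\ga_{\bb,e}$ is spectrally stable. The main obstacle I anticipate is bookkeeping of the monodromy matrix $M = \gamma_{\beta,0}(2\pi)$: one must make sure that the non-degeneracy hypothesis ``$A - \nu J - B_{\beta,0}$ invertible'' needed to even write down $f(\beta,-1)$ holds for the relevant $\beta\in[0,3/4)$ — this follows from \eqref{l3}, which gives $\upsilon(\mathcal{A}(\bb,0,i/2)) = 0$ precisely when $\bb\ne 3/4$ — and to track that the operator inequalities \eqref{3.a1}--\eqref{3.a2} really put $D_{\beta,e}$ between $D_1$ and $D_2$ as self-adjoint operators, not merely pointwise, so that Corollary \ref{indexno} applies. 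The positivity and finiteness of $f(\beta,\omega)$, and its explicit elementary form, are handled separately in \S\ref{subsec5.3} via \eqref{ht2}, so I will take them as given here.
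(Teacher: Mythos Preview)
Your approach is essentially the paper's: control $i_{-1}(\gamma_{\beta,e})$ via the second-order trace bound at $\nu=i/2$, combine with $i_1(\gamma_{\beta,e})=0$ from \eqref{2.31}, and invoke \eqref{5a.26}.

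Two corrections. First, your $D_1,D_2$ are mislabeled: the inequalities \eqref{3.a1}--\eqref{3.a2} give $\tfrac{e}{1-e}K_\beta^- \le D_{\beta,e}\le eK_\beta^+$, so the correct choices are $D_1=\tfrac{e}{1-e}K_\beta^-$ and $D_2=eK_\beta^+$; your $D_1=-eK_\beta^+$ does \emph{not} lie below $D_{\beta,e}$ where $\cos t<0$. By Lemma \ref{lem5.3} this swap does not change the trace values, so your numerical bounds survive, but the sandwich itself must be fixed before invoking Corollary \ref{indexno} or Proposition \ref{prop5c.1}. Second, since $K_\beta^\pm$ are only semidefinite, one cannot apply Corollary \ref{cor5b.1a} or Proposition \ref{prop5c.1} directly; the paper handles this by passing to $\tfrac{e}{1-e}K_\beta^- - \epsilon I_{2n}<0$ and using continuity of the trace, a step you should include.

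The paper is also slightly leaner than your plan: instead of sandwiching on both sides to obtain $i_{-1}(\gamma_{\beta,e})=2$, it uses only the lower bound \eqref{3.a2} to conclude $i_{-1}(\gamma_{\beta,e})\ge i_{-1}(\gamma_{\beta,0})=2$, which already forces $e(\gamma_{\beta,e}(2\pi))=4$ in dimension $4$. Your two-sided version is not wrong, just unnecessary for this range of $\beta$.
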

\begin{proof}   Obviously,
\bea
  Tr\( \(\frac{e}{1-e}K_{\beta}^-\(-J\frac{d}{dt}-\frac{\sqrt{-1}}{2} J-B_{\beta,0}\)^{-1}\)^2\)&=&  \frac{e^2}{(1-e)^2} Tr\( \(K_{\beta}^-\(-J\frac{d}{dt}-\frac{\sqrt{-1}}{2} J-B_{\beta,0}\)^{-1}\)^2\) \nonumber\\
&=&  \frac{e^2}{(1-e)^2}f(\beta,-1). \nonumber \eea  Thus,   (\ref{th2.1f} ) is equivalent to $\frac{e^2}{(1-e)^2}f(\beta,-1)<1$ which implies $Tr\( \(\frac{e}{1-e}K_{\beta}^-(-J\frac{d}{dt}-\frac{\sqrt{-1}}{2} J-B_{\beta,0})^{-1}\)^2\)<1$. By the continuity of the trace, for $\epsilon>0$ small enough,
$Tr\( \((\frac{e}{1-e}K_{\beta}^--\epsilon I_{2n})(-J\frac{d}{dt}-\frac{\sqrt{-1}}{2} J-B_{\beta,0})^{-1}\)^2\)<1.$ Obviously,  $\frac{e}{1-e}K_{\beta}^--\epsilon I_{2n}<0$. By
Theorem  \ref{thm4.2} and Theorem \ref{thm4.3}, $-J\frac{d}{dt}-\frac{\sqrt{-1}}{2} J-B_{\beta,0}-\frac{e}{1-e}K_{\beta}^-$ is non-degenerate and
$$ I\(-J\frac{d}{dt}-\frac{\sqrt{-1}}{2} J-B_{\beta,0},-J\frac{d}{dt}-\frac{\sqrt{-1}}{2} J-B_{\beta,0}-\frac{e}{1-e}K_{\beta}^-\)=0.  $$
From (\ref{3.a2}), $ I\left(-J\frac{d}{dt}-\frac{\sqrt{-1}}{2} J-B_{\beta,0}-\frac{e}{1-e}K_{\beta}^-, -J\frac{d}{dt}-\frac{\sqrt{-1}}{2} J-B_{\beta,e}\right)\geq0  $, consequently,
$$I\(-J\frac{d}{dt}-\frac{\sqrt{-1}}{2} J-B_{\beta,0}, -J\frac{d}{dt}-\frac{\sqrt{-1}}{2} J-B_{\beta,e}\)\geq0.  $$
By (\ref{l3}) \bea i_{-1}(\gamma_{\bb,e})\geq i_{-1}(\gamma_{\bb,0})=2. \nonumber \eea
By (\ref{2.31}) and  (\ref{5a.26}), $e(\gamma_{\beta,e})/2=2$. The desired result is proved.
 \end{proof}

\begin{thm}\label{th2.2}
For $\bb\in(3/4,1)$, $\gamma_{\bb,e}$ is spectrally stable if \bea 0\leq e<f(\beta,-1)^{-\frac{1}{2}}.\lb{th2.2f1}  \eea and \bea
 0\leq e<\frac{1}{1+f(\beta,e^{i\sqrt{2}\pi})^{\frac{1}{2}}}. \lb{th2.1f2}  \eea
\end{thm}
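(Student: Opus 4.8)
The plan is to follow the blueprint laid out in Theorem~\ref{th2.1}, but now to handle the range $\bb\in(3/4,1)$, where the operator $\mathcal A(\bb,0,i/2)$ is already positive (so the $\om=-1$ obstruction has disappeared) and the relevant non-degeneracy must be tested at two separate values of $\om$: at $\om=-1$ and at $\om=e^{i\sqrt2\pi}$. First I would record the two ingredients from \S\ref{subsec5.1}: by Theorem~\ref{th3.2}, for $\bb\in(3/4,1)$ we have $i_{-1}(\ga_{\bb,0})=\phi(\mathcal A(\bb,0,i/2))=0$ and $\upsilon(\mathcal A(\bb,0,i/2))=0$, while $i_{e^{i\sqrt2\pi/2}}(\ga_{\bb,0})=\phi(\mathcal A(\bb,0,i\sqrt2\pi/2))\ge 1$ with $\upsilon(\mathcal A(1,0,i\sqrt2\pi/2))=1$ only at the endpoint $\bb=1$; and by \eqref{2.31}, $i_1(\ga_{\bb,e})=0$ for all $(\bb,e)$. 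These give the two anchors against which the stability criterion \eqref{sta1} will be applied.

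Next I would run the perturbation-with-monotonicity argument twice. For $\om=-1$ (i.e.\ $\nu=\sqrt{-1}/2$): the hypothesis $e<f(\bb,-1)^{-1/2}$ is equivalent to $\dfrac{e^2}{(1-e)^2}f(\bb,-1)<1$ only after adjusting, but note that here, since $\mathcal A(\bb,0,i/2)>0$ with trivial kernel, I do not need the full strength used in Theorem~\ref{th2.1}; I only need that the relative Morse index between $-J\frac d{dt}-\frac{\sqrt{-1}}2 J-B_{\bb,0}$ and $-J\frac d{dt}-\frac{\sqrt{-1}}2 J-B_{\bb,e}$ stays controlled. Using the sandwich \eqref{3.a1.1}--\eqref{3.a2.1}, the operator $-J\frac d{dt}-\frac{\sqrt{-1}}2 J-B_{\bb,e}$ lies between $-J\frac d{dt}-\frac{\sqrt{-1}}2 J-B_{\bb,0}-eK^+_\bb$ and $-J\frac d{dt}-\frac{\sqrt{-1}}2 J-B_{\bb,0}-\frac{e}{1-e}K^-_\bb$; the trace formula \eqref{ht2} (Corollary~\ref{cor1.1}) evaluates $Tr\big[(K^\pm_\bb(-J\frac d{dt}-\frac{\sqrt{-1}}2 J-B_{\bb,0})^{-1})^2\big]=f(\bb,-1)$ via Lemma~\ref{lem5.3}, so the condition \eqref{th2.2f1} (as written, $e<f(\bb,-1)^{-1/2}$, so $e^2 f(\bb,-1)<1$, and \emph{a fortiori} after the $\epsilon$-perturbation the trace of the squared operator is $<1$) lets me invoke Theorem~\ref{thm4.2}/Theorem~\ref{thm4.3} to conclude non-degeneracy of $A-\nu J-B_{\bb,0}-sD$ along the whole segment and $I=0$; hence $i_{-1}(\ga_{\bb,e})=i_{-1}(\ga_{\bb,0})=0$. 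For $\om=e^{i\sqrt2\pi}$ (i.e.\ $\nu=i\sqrt2\pi/2$): the condition \eqref{th2.1f2}, $e<1/(1+\sqrt{f(\bb,e^{i\sqrt2\pi})})$, is exactly $\frac{e^2}{(1-e)^2}f(\bb,e^{i\sqrt2\pi})<1$, and by the same sandwich plus Theorem~\ref{thm4.2}/Theorem~\ref{thm4.3} applied at $\nu=i\sqrt2\pi/2$ I get that the relative Morse index between the unperturbed and perturbed operators at this $\om$ vanishes; since the unperturbed index at $\om=e^{i\sqrt2\pi/2}$ is $\ge1$ by Theorem~\ref{th3.2}, I need to be careful that the index $i_{e^{i\sqrt2\pi}}$ (not $i_{e^{i\sqrt2\pi/2}}$) is what enters — I would note the indexing convention $\om=e^{2\pi\nu}$ so that $\nu=i\sqrt2\pi/2$ gives $\om=e^{i\sqrt2\pi}$, and that Theorem~\ref{th3.2}'s statement about $e^{i\sqrt2\pi/2}$ is written with $\nu=i\sqrt2\pi/2$, i.e.\ the same operator; thus $i_{e^{i\sqrt2\pi}}(\ga_{\bb,e})=i_{e^{i\sqrt2\pi}}(\ga_{\bb,0})\ge1$.

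Finally I would assemble the stability conclusion. With $i_1(\ga_{\bb,e})=0$, $i_{-1}(\ga_{\bb,e})=0$, and $i_{e^{i\sqrt2\pi}}(\ga_{\bb,e})\ge1$, inequality \eqref{sta1} with $\om=e^{i\sqrt2\pi}$ gives
\[
e(\ga_{\bb,e}(2\pi))/2\ \ge\ |i_{-1}(\ga_{\bb,e})-i_{\om}(\ga_{\bb,e})|+|i_1(\ga_{\bb,e})-i_{\om}(\ga_{\bb,e})|\ \ge\ 1+1\ =\ 2,
\]
and since the total algebraic multiplicity of eigenvalues on $\mathbb U$ is at most $4=2n$, we get $e(\ga_{\bb,e}(2\pi))=4$, i.e.\ all eigenvalues lie on $\mathbb U$, so $\ga_{\bb,e}$ is spectrally stable. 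I would also observe that the non-degeneracy established at both $\om=-1$ and $\om=e^{i\sqrt2\pi}$ (indeed at the crossing value) plus a short index-jump argument upgrades this to linear stability, matching the statement of Theorem~\ref{la1.1}. The main obstacle I anticipate is bookkeeping the two $\nu$'s correctly against the index data from \cite{HLS}/Theorem~\ref{th3.2} — in particular making sure the inequality \eqref{th2.2f1} (which has the ``$e<f^{-1/2}$'' shape, \emph{not} ``$e<1/(1+f^{1/2})$'') really suffices to push the squared-operator trace below $1$ after the $K^\pm_\bb$ sandwich, since at $\om=-1$ the relevant bound comes from the $-eK^+_\bb$ side rather than the $-\frac{e}{1-e}K^-_\bb$ side; I would check that the unperturbed operator $\mathcal A(\bb,0,i/2)$ being strictly positive with trivial kernel is what makes the weaker bound enough there.
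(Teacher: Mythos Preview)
Your approach is essentially the paper's own, and the overall architecture is right: use the $K^+_\bb$ side at $\om=-1$ to pin down $i_{-1}(\ga_{\bb,e})=0$, use the $K^-_\bb$ side at $\om=e^{i\sqrt2\pi}$ to get $i_{e^{i\sqrt2\pi}}(\ga_{\bb,e})\ge1$, then feed both into \eqref{sta1} together with \eqref{2.31}.

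One imprecision worth tightening: in the body of your argument you repeatedly assert that the relative Morse index between $-J\frac d{dt}-\nu J-B_{\bb,0}$ and $-J\frac d{dt}-\nu J-B_{\bb,e}$ \emph{vanishes}. That equality does not follow from the hypotheses you have at each step; only a one-sided inequality does. At $\om=-1$, the bound $e^2f(\bb,-1)<1$ controls the $eK^+_\bb$ side and yields (via Theorem~\ref{thm4.3} and Corollary~\ref{indexno}) only
\[
I\Big(-J\tfrac{d}{dt}-\tfrac{\sqrt{-1}}{2}J-B_{\bb,0},\ -J\tfrac{d}{dt}-\tfrac{\sqrt{-1}}{2}J-B_{\bb,e}\Big)\le 0,
\]
i.e.\ $i_{-1}(\ga_{\bb,e})\le i_{-1}(\ga_{\bb,0})=0$; equality then comes from $i_{-1}(\ga_{\bb,e})=\phi(\mathcal A_{-1}(\bb,e,0))\ge0$, exactly the ``positivity of $\mathcal A(\bb,0,i/2)$'' point you flag at the end. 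Symmetrically, at $\om=e^{i\sqrt2\pi}$ the bound $\frac{e^2}{(1-e)^2}f(\bb,e^{i\sqrt2\pi})<1$ controls the $\frac{e}{1-e}K^-_\bb$ side and gives only $i_{e^{i\sqrt2\pi}}(\ga_{\bb,e})\ge i_{e^{i\sqrt2\pi}}(\ga_{\bb,0})\ge1$, not equality --- and that is all \eqref{sta1} needs. The paper writes these as inequalities (see \eqref{c.1}, \eqref{c.2}); once you state them that way your proof is identical to the paper's.
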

\begin{proof}
 Firstly, we'll show that (\ref{th2.2f1}) implies  \bea i_{-1}(\gamma_{\bb,e})= 0, \lb{c.1} \eea and  the proof is similar to the proof of Theorem\ref{th2.1}. In fact, please note
 \bea  Tr\( \(eK_{\beta}^+\(-J\frac{d}{dt}-\frac{\sqrt{-1}}{2} J-B_{\beta,0}\)^{-1}\)^2\)&=&  e^2 Tr\( \(K_{\beta}^+\(-J\frac{d}{dt}-\frac{\sqrt{-1}}{2} J-B_{\beta,0}\)^{-1}\)^2\) \nonumber\\
&=&  e^2f(\beta,-1).  \nonumber\eea
  Thus,   (\ref{th2.1f} ) impli1es $Tr\( \(eK_{\beta}^+(-J\frac{d}{dt}-\frac{\sqrt{-1}}{2} J-B_{\beta,0})^{-1}\)^2\)<1$ , then for $\epsilon>0$ small enough,
$$Tr\( \((eK_{\beta}^++\epsilon I_{2n})(-J\frac{d}{dt}-\frac{\sqrt{-1}}{2} J-B_{\beta,0})^{-1}\)^2\)<1.$$  Obviously,  $eK_{\beta}^++\epsilon I_{2n}>0$. Again, by Theorem \ref{thm4.2} and  Theorem \ref{thm4.3},   $-J\frac{d}{dt}-\frac{\sqrt{-1}}{2} J- B_{\beta,0}-eK_{\beta}^+$ is non-degenerate and $I(-J\frac{d}{dt}-\frac{\sqrt{-1}}{2} J-B_{\beta,0},-J\frac{d}{dt}-\frac{\sqrt{-1}}{2} J-B_{\beta,0}-eK_{\beta}^+)=0$. By (\ref{3.a1}),
$$I\(-J\frac{d}{dt}-\frac{\sqrt{-1}}{2} J-B_{\beta,0}- eK_{\beta}^+,-J\frac{d}{dt}-\frac{\sqrt{-1}}{2} J- B_{\bb,e}\)\leq0.$$
Therefore
\bea I\(-J\frac{d}{dt}-\frac{\sqrt{-1}}{2} J-B_{\beta,0},-J\frac{d}{dt}-\frac{\sqrt{-1}}{2} J- B_{\bb,e}\)\leq0. \nonumber \eea By (\ref{l3}), we have (\ref{c.1}).

On the other hand,  almost the same proof as that of Theorem \ref{th2.1} shows that  (\ref{th2.1f2}), (\ref{l2}) implies \bea i_{e^{i\sqrt{2}\pi}}(\gamma_{\bb,e})\geq i_{e^{i\sqrt{2}\pi}}(\gamma_{\bb,0})\geq1. \lb{c.2} \eea
The result comes from (\ref{c.1}), (\ref{c.2}), (\ref{2.31}) and (\ref{sta1}).
\end{proof}

\begin{rem} \label{a5.5} It has been proved in \cite{HLS},\cite{HS1} that $\gamma_{\beta,e}(2\pi)$ is linear stable when $(\beta,e)$ is in the stable region and not on the bifurcation curves.  This implies that under the condition in Theorem \ref{th2.1} and Theorem \ref{th2.2}, $\gamma_{\beta,e}$  is linear stable. Moreover,   the  normal form of $\gamma_{\beta,e}(2\pi)$ was given in  \cite{HLS},\cite{HS1}. Precisely, for $(\beta,e)$ in the stable region given in Theorem \ref{th2.1}, $\gamma_{\bb,e}(2\pi)\approx R(\theta_1)\diamond R(\theta_2)$ for some $\theta_1,\theta_2\in(\pi,2\pi)$; for $(\beta,e)$ in the stable region given in Theorem \ref{th2.2}, $\gamma_{\bb,e}(2\pi)\approx R(\theta_1)\diamond R(\theta_2)$ for some $\theta_1\in ((2-\sqrt{2})\pi,\pi),\theta_2\in(\sqrt{2}\pi,2\pi)$.

\end{rem}

To estimate the hyperbolic region,  denote
\bea \hat{f}(\beta)=\sup\{f(\bb,\omega),\omega\in\mathbb{U}\},\label{eq5.47a} \eea
and we have
\begin{thm}\label{th2.3}
For $\bb\in(1,9]$, $\gamma_{\bb,e}$ is hyperbolic if \bea e<\hat{f}(\bb)^{-1/2}. \lb{th2.3f} \eea \end{thm}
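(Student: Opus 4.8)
\emph{Proof proposal.} The plan is to adapt the scheme of Theorems~\ref{th2.1}--\ref{th2.2}: translate hyperbolicity into a non-degeneracy statement for the perturbed operator, compare it with the unperturbed one through (\ref{3.a1}), and control the relative Morse index by the trace formula. The reason the final bound is $e<\hat f(\beta)^{-1/2}$ rather than $1/(1+\sqrt{\hat f})$ is that for $\beta\in(1,9]$ the unperturbed operator is \emph{positive} for every $\omega\in\mathbb U$, so that only the one--sided comparison $D_{\beta,e}\le eK_\beta^+$ is needed. Concretely, I would first reduce as follows: writing $\omega=e^{2\pi\nu}$ with $\nu$ purely imaginary, the matrix $\gamma_{\beta,e}(2\pi)$ has an eigenvalue on $\mathbb U$ iff $\upsilon_\omega(\gamma_{\beta,e})\neq0$ for some such $\omega$, and by (\ref{4.23}) together with the dictionary of \S\ref{subsec5.1} this quantity equals $\dim\ker(A-\nu J-B_{\beta,e})$, where $A=-J\frac{d}{dt}$ carries the $\omega$-periodic boundary condition; by conjugate symmetry of the real path it suffices to take $\nu=\sqrt{-1}\,s$ with $s\in[0,1/2]$. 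Thus the claim becomes: if $e<\hat f(\beta)^{-1/2}$, then $A-\nu J-B_{\beta,e}$ is non-degenerate for all such $\nu$. By Theorem~\ref{th3.2}, for $\beta\in(1,9]$ and every $\omega\in\mathbb U$ one has $\mathcal A(\beta,0,\nu)>0$; hence $L_\omega:=A-\nu J-B_{\beta,0}$ is invertible, $i_\omega(\gamma_{\beta,0})=0$, $\gamma_{\beta,0}$ is itself hyperbolic, and $f(\beta,\omega)=Tr[(K_\beta^+L_\omega^{-1})^2]$ (see (\ref{fbb})) is a finite, continuous, non-negative function of $\omega$ on the compact circle, so $\hat f(\beta)=\sup_{\omega\in\mathbb U}f(\beta,\omega)<\infty$ by (\ref{eq5.47a}).

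Now fix $\omega$. Since $e^2\hat f(\beta)<1$, we have $Tr[(eK_\beta^+L_\omega^{-1})^2]=e^2f(\beta,\omega)<1$, and by continuity of the trace $Tr[((eK_\beta^++\epsilon I_4)L_\omega^{-1})^2]<1$ for all small $\epsilon>0$, with $D_2:=eK_\beta^++\epsilon I_4>0$. Corollary~\ref{cor5b.1a} applied to $L_\omega$ and $D_2$ then gives that $L_\omega-D_2$ is non-degenerate and $I(L_\omega,L_\omega-D_2)=0$. Because $D':=D_2-D_{\beta,e}\ge\epsilon I_4>0$ (using $D_{\beta,e}\le eK_\beta^+$, which is part of (\ref{3.a1})) and $A-\nu J-B_{\beta,e}=(L_\omega-D_2)+D'$, additivity of the relative Morse index yields $I(L_\omega,A-\nu J-B_{\beta,e})=I(L_\omega-D_2,(L_\omega-D_2)+D')$, and by the crossing formula (\ref{5a.16}) the latter equals $-\sum_{s_0\in(0,1]}\dim\ker((L_\omega-D_2)+s_0D')\le0$. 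On the other hand, by (\ref{4.24}) and (\ref{2.31}), $I(L_\omega,A-\nu J-B_{\beta,e})=i_\omega(\gamma_{\beta,e})-i_\omega(\gamma_{\beta,0})=i_\omega(\gamma_{\beta,e})\ge0$. Hence this index equals $0$, so every (non-negative) summand above vanishes; since the $s_0=1$ term is $\dim\ker(A-\nu J-B_{\beta,e})$, we conclude it is $0$. (Equivalently: along $s\mapsto(L_\omega-D_2)+sD'$ the $\omega$-Morse index is non-increasing and takes the same value at $s=0$ and $s=1$, hence is constant, so no eigenvalue can reach $0$.) Letting $\omega$ sweep $\mathbb U$ shows $\upsilon_\omega(\gamma_{\beta,e})=0$ for all $\omega\in\mathbb U$, i.e. $\gamma_{\beta,e}(2\pi)$ has no eigenvalue on $\mathbb U$, which is (\ref{th2.3f}).

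The main obstacle is the uniformity in $\omega$: the estimate must hold simultaneously for all $\omega\in\mathbb U$, which is precisely why $\hat f(\beta)$ must be taken as the supremum in (\ref{eq5.47a}) and why one needs $\mathcal A(\beta,0,\nu)>0$ for \emph{every} imaginary $\nu$ (so that $L_\omega^{-1}$, and with it $f(\beta,\cdot)$, stays bounded as $\omega$ runs over the circle). A secondary subtlety is that $D_{\beta,e}$ is indefinite, so Corollary~\ref{cor5b.1a} (or Theorem~\ref{thm4.2}) cannot be applied to it directly; the argument must be routed through the positive semidefinite comparison operator $eK_\beta^+$ and the crossing-form bookkeeping above, and it is exactly this one--sided routing that produces the bound $e<\hat f(\beta)^{-1/2}$, in contrast with the two--sided bound $1/(1+\sqrt{\hat f})$ forced by the lower comparison $\tfrac{e}{1-e}K_\beta^-\le D_{\beta,e}$ in Theorems~\ref{th2.1}--\ref{th2.2}.
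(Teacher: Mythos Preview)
Your proposal is correct and follows essentially the same route as the paper: use the one--sided comparison $D_{\beta,e}\le eK_\beta^+$ from (\ref{3.a1}), apply the trace bound $e^2f(\beta,\omega)<1$ uniformly in $\omega$ via $\hat f(\beta)$, invoke Corollary~\ref{cor5b.1a} (after the $\epsilon$--regularization), and transfer the conclusion to $B_{\beta,e}$ by monotonicity of the relative Morse index. The paper's own proof is a two--line ``similar to Theorem~\ref{th2.2}'' which records $i_\omega(\gamma_{\beta,e})\le i_\omega(\gamma_{\beta,0})=0$ and $\upsilon_\omega=0$ without spelling out the non-degeneracy step; your crossing--formula bookkeeping via (\ref{5a.16}) makes that step explicit. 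An equally short alternative, perhaps closer to what the paper has in mind, is to stay on the Lagrangian side: since $\mathcal A(\beta,0,\nu)>0$ and Corollary~\ref{cor5b.1a} gives $\phi_1=\upsilon=0$ for $\mathcal A(\beta,0,\nu)-(e\cos^+(t)\hat K_{\beta,0}+\epsilon)$, that operator is strictly positive, whence $\mathcal A(\beta,e,\nu)\ge \mathcal A(\beta,0,\nu)-e\cos^+(t)\hat K_{\beta,0}>0$ by (\ref{3.a1.1}), giving non-degeneracy for every $\omega$ immediately.
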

\begin{proof} Similar to the proof of Theorem \ref{th2.2}, the condition (\ref{th2.3f}) implies that for any $\omega\in\mathbb{U}$
\bea i_\omega(\gamma_{\bb,e})\leq i_\omega(\gamma_{\bb,0})=0,  \nonumber\eea
 and \bea \upsilon(A_\omega(\bb,e,0))= \upsilon(A_\omega(\bb,0,0))=0, \nonumber \lb{eee} \eea which implies that $\gamma_{\bb,e}$ is hyperbolic.    \end{proof}
Combining Theorem \ref{th2.1}, Theorem \ref{th2.2} with Theorem \ref{th2.3} and Remark \ref{a5.5}, we have Theorem \ref{la1.1}.
The function $f(\beta,\omega)$ will be dealt with in the next subsection, and based on this,
with the help of Mathlab, we can draw a picture of the stable region and hyperbolic region in Figure 1.

\subsection{The precise form of $f(\bb,\omega)$}\label{subsec5.3}
\label{sec:4}
In this subsection, we compute  $f(\bb,\omega)$ by trace formula (\ref{ht2}). In order to make the calculation easier, we need to
use some transformation first. By the definition, for $e=0$,
\be B_{\bb,0}(t)= B_\beta=\left(\begin{array}{cccc}1 & 0 & 0 & 1\\
                       0 & 1 & -1 & 0 \\
                       0 & -1 &\frac{-1-\sqrt{9-\beta}}{2} & 0 \\
                       1 & 0 & 0 & \frac{-1+\sqrt{9-\beta}}{2} \end{array}\right). \lb{2.20}\nonumber\ee
 For $\bb\in(0,9]\ \setminus\{1\}$, let
$P_\beta=\left(\begin{array}{cc}P_{11} & P_{12}\\ P_{21} & P_{22}\end{array}\right) $ be the $4\times4$ transformation matrices, where
\be  P_{11}=\left(
              \begin{array}{cc}
                0 & P_{11}(1,2) \\
                P_{11}(2,1) & 0 \\
              \end{array}
            \right)
=\left(\begin{array}{cc}  0 &       \frac{ (2+2\sqrt{1-\beta})^{1/4}(\sqrt{9-\beta}-\sqrt{1-\beta})  }{ 2(1-\beta)^{1/4}  \sqrt{4+\sqrt{1-\beta}-\sqrt{9-\beta}}   }\\ \frac{ -(2-2\sqrt{1-\beta})^{3/4}(2+\sqrt{9-\beta}-\sqrt{1-\beta}) }{2 (1-\beta)^{1/4}(3+\sqrt{9-\beta})\sqrt{4-\sqrt{1-\beta}-\sqrt{9-\beta} }  } &0\end{array}\right),
\nonumber\ee
\be  P_{12}=\left(
              \begin{array}{cc}
                P_{12}(1,1) & 0 \\
                0 & P_{12}(2,2) \\
              \end{array}
            \right)
=\left(\begin{array}{cc} \frac{-(2-2\sqrt{1-\beta})^{1/4}(\sqrt{9-\beta}+\sqrt{1-\beta}) }{ 2(1-\beta)^{1/4}\sqrt{ 4-\sqrt{1-\beta}-\sqrt{9-\beta}        } }          &            0\\ 0&  \frac{(2+2\sqrt{1-\beta})^{3/4}(2+\sqrt{9-\beta}+\sqrt{1-\beta})}{2(1-\beta)^{1/4}(3+\sqrt{9-\beta}) \sqrt{4+\sqrt{1-\beta}-\sqrt{9-\beta}} }\end{array}\right),
\nonumber\ee
\be  P_{21}=\left(
              \begin{array}{cc}
                P_{21}(1,1) & 0 \\
                0 & P_{21}(2,2) \\
              \end{array}
            \right)
=\left(\begin{array}{cc}  \frac{(2-2\sqrt{1-\beta})^{3/4}(4+\sqrt{9-\beta}+\sqrt{1-\beta}) }{2(1-\beta)^{1/4}(3+\sqrt{9-\beta})\sqrt{ 4-\sqrt{1-\beta}-\sqrt{9-\beta}} } &                                                                                                                0\\ 0&              \frac{-2(2+2\sqrt{1-\beta})^{1/4} }{(1-\beta)^{1/4}\sqrt{4+\sqrt{1-\beta}-\sqrt{9-\beta}} }                              \end{array}\right),
\nonumber\ee
and
\be  P_{22}=\left(
              \begin{array}{cc}
                0 & P_{22}(1,2) \\
                P_{22}(2,1) & 0 \\
              \end{array}
            \right)
=\left(\begin{array}{cc}                                                                                                                0&       \frac{ -(2+2\sqrt{1-\beta})^{3/4}(\sqrt{9-\beta}+4-\sqrt{1-\beta})}{2 (1-\beta)^{1/4}(3+\sqrt{9-\beta}) \sqrt{4+\sqrt{1-\beta}-\sqrt{9-\beta}}}  \\   \frac{ 2(2-2\sqrt{1-\beta})^{1/4}}{(1-\beta)^{1/4} \sqrt{4-\sqrt{1-\beta}-\sqrt{9-\beta}} }                                      &                                                                                                                       0\end{array}\right).
\nonumber\ee
In fact, $P_\beta$ is obtained with the help of matlab. Direct computation shows that
 \bea P_\beta^TJP_\beta=J.\nonumber \eea
 For $\beta\in(0,1)$,  $P_\beta$ is real, thus it is a symplectic matrix,  and for $\beta\in(1,9]$, $P_\beta$ is complex matrix. To continue, we need the notation of symplectic sum, which was introduced by Long \cite{Lon2} and \cite{Lon4}.
Given any two $2m_k\times 2m_k$ matrices of square block form
$M_k=\left(\begin{array}{cc}A_k&B_k\\
                                C_k&D_k\end{array}\right)$ with $k=1, 2$,
the symplectic sum of $M_1$ and $M_2$ is defined by
\bea M_1\diamond M_2=\left(
  \begin{array}{cccc}
   A_1 &   0 & B_1 &   0\\
                            0   & A_2 &   0 & B_2\\
                           C_1 &   0 & D_1 &   0\\
                           0   & C_2 &   0 & D_2  \\
  \end{array}
\right).\nonumber
\eea

Set  $\theta_1(\beta)=-\sqrt{\frac{1}{2}(1-\sqrt{1-\beta})},$ and $\theta_2(\beta)=\sqrt{\frac{1}{2}(1+\sqrt{1-\beta})}$.
Let $$B_j(\beta)=\left(\begin{array}{cc} 0 & -\theta_j(\bb)\\
                                             \theta_j(\bb)& 0\end{array}\right), \,\ for \,\ j=1,2, $$
 and set  \be S_\beta=B_1(\beta)\dm B_2(\beta).\nonumber \ee
Direct computation shows that \bea P_\beta^{-1}JB_\beta P_\beta=JP^T_\beta B_{\beta} P_\beta=S_\beta, \,\ \beta\in(0,1)\cup(1,9].\label{trans1}\eea
Obviously \bea \exp(J_2B_k(\beta)t)=R(\theta_kt)=\left(\begin{array}{cc}\cos (\theta_kt)&-\sin(\theta_kt)\\
                                \sin (\theta_kt)&\cos(\theta_kt)\end{array}\right),   \,\  k=1, 2, \nonumber\eea
and hence  \bea P_\beta^{-1}\gamma_{\beta,0}(t)P_\beta=R(\theta_1t)\diamond R(\theta_2t). \label{5.43}\eea
In order to get the diagonal matrix, we introduce a unitary matrix  $U=\frac{1}{\sqrt{2}}\left(
                                                                         \begin{array}{cc}
                                                                           I_{2} & \sqrt{-1} I_{2}\\
                                                                           I_{2} & -\sqrt{-1} I_{2} \\
                                                                         \end{array}
                                                                       \right)
$, then we have \bea U P_\beta^{-1}\gamma_{\beta,0}(t)P_\beta U^{-1}=e^{i \Theta t}.\eea
where $\Theta=diag\(\theta_{1},\theta_{2},\theta_{3},\theta_{4}\)$ , $\theta_{3}=-\theta_{1}, \theta_{4}=-\theta_{2}$. \\

Especially  \bea U P_\beta^{-1}\gamma_{\beta,0}(2\pi)P_\beta U^{-1}=e^{2\pi i \Theta}.\eea
Change the basis by $P_\beta U^{-1}$, then $f(\bb,\omega)$ could be computed by (\ref{ht2}),  we have
\bea f(\beta,\omega)&=&Tr((K_{\beta}^{-}(-J\frac{d}{dt}-\nu J-B_{\beta,0})^{-1})^{2})\nonumber\\
&=&-2Tr\big(J\int_{0}^{2\pi}\gamma_{\beta,0}^{T}(t)K_{\beta}^{-}(t)\gamma_{\beta,0}(t)
J\int_{0}^{t}\gamma_{\beta,0}^{T}(s)K_{\beta}^{-}(s)\gamma_{\beta,0}(s)ds dt
\cdot\gamma_{\beta,0}(2\pi)(\gamma_{\beta,0}(2\pi)-\omega I_{4})^{-1}\big) \nonumber\\
&&+Tr\big([J\int_{0}^{2\pi}\gamma_{\beta,0}^{T}(t)K_{\beta}^{-}(t)\gamma_{\beta,0}(t)dt
\cdot\gamma_{\beta,0}(2\pi)(\gamma_{\beta,0}(2\pi)-\omega I_{4})^{-1}]^{2}\big) \nonumber\\
&=&2Tr\big(J\int_{0}^{2\pi}\widetilde{\gamma}_{\beta,0}^{T}(t)\widetilde{D}^{-}_{\beta}(t)\widetilde{\gamma}_{\beta,0}(t)
J\int_{0}^{t}\widetilde{\gamma}_{\beta,0}^{T}(s)\widetilde{D}^{-}_{\beta}(s)\widetilde{\gamma}_{\beta,0}(s)ds dt
\cdot\widetilde{\gamma}_{\beta,0}(2\pi)(\widetilde{\gamma}_{\beta,0}(2\pi)-\omega I_{4})^{-1}\big) \nonumber\\
&&-Tr\big([J\int_{0}^{2\pi}\widetilde{\gamma}_{\beta,0}^{T}(t)\widetilde{D}^{-}_{\beta}(t)\gamma_{\beta,0}(t)dt
\cdot\widetilde{\gamma}_{\beta,0}(2\pi)(\widetilde{\gamma}_{\beta,0}(2\pi)-\omega I_{4})^{-1}]^{2}\big) \nonumber\\
&=&2\int_{0}^{2\pi}\int_{0}^{t}Tr\big(\widetilde{\gamma}_{\beta,0}(s)\widetilde{\gamma}_{\beta,0}^{T}(-t)\cdot J \widetilde{D}^{-}_{\beta}(t)
\cdot\widetilde{\gamma}_{\beta,0}(t)\widetilde{\gamma}_{\beta,0}^{T}(-s)\cdot J \widetilde{D}^{-}_{\beta}(s)ds dt
\cdot M_{\beta}(\omega)\big) \nonumber\\
&&-\int_{0}^{2\pi}\int_{0}^{2\pi}Tr\big(\widetilde{\gamma}_{\beta,0}(s)\widetilde{\gamma}_{\beta,0}^{T}(-t)\cdot J \widetilde{D}^{-}_{\beta}(t)\cdot M_{\beta}(\omega)
\cdot\widetilde{\gamma}_{\beta,0}(t)\widetilde{\gamma}_{\beta,0}^{T}(-s)J \widetilde{D}^{-}_{\beta}(s)ds dt
\cdot M_{\beta}(\omega)\big),
\label{5.44}\eea
where $\widetilde{D}^{-}_{\beta}(s)=U^{-T}P_\beta^T K_{\beta}^{-}(s) P_\beta U^{-1}$,
$\widetilde{\gamma}_{\beta,0}(t)=UP_{\beta}^{-1}\gamma_{\beta,0}(t)P_{\beta}U^{-1}=e^{i\Theta t}$, and
$M_{\beta}(\omega)=\widetilde{\gamma}_{\beta,0}(2\pi)(\widetilde{\gamma}_{\beta,0}(2\pi)-\omega I_{4})^{-1}$.
The last equation from the facts that $J \widetilde{\gamma}_{\beta,0}(t)=\widetilde{\gamma}_{\beta,0}(-t)J$ and
$\widetilde{\gamma}_{\beta,0}(s)$ commutes with $\widetilde{\gamma}_{\beta,0}(2\pi)$ .
In order to get the trace, we need to calculate $J\widetilde{D}_{\beta}^{-}(t)$ and
$M_{\beta}(\omega)$.
Let $\omega=e^{2\pi i u}$, $u\in\mathbb R$, direct calculation shows that
\bea
M_{\beta}(\omega)=
\left(
  \begin{array}{cccc}
    k_{1} & 0 & 0& 0 \\
    0 & k_{2} & 0 & 0 \\
    0 & 0 & k_{3} & 0 \\
    0 & 0  & 0 & k_{4}  \\
  \end{array}
\right), \ \ \ k_{j}=\frac{e^{2\pi i \theta_{j}}}{e^{2\pi i \theta_{j}}-e^{2\pi i u}}, \label{5.45}
\eea
and
$
P_{\beta}^{T}K_{\beta}P_{\beta}=
\left(
  \begin{array}{cccc}
    a & 0 & 0 & b \\
    0 & h & f & 0 \\
    0 & f & g & 0 \\
    b & 0 & 0 & c \\
  \end{array}
\right)
$ ,
where $a, b, c, h, f, g $ have explicit expression and depend on parameter $\beta$,
\bea\left\{
        \begin{array}{c}
        a=P_{21}(1,1)P_{21}(1,1)d_{1}\\
        b=P_{21}(1,1)P_{22}(1,2)d_{1}\\
        c=P_{22}(1,2)P_{22}(1,2)d_{1}\\
       \end{array}
\right.,\ \ \ \
\left\{
        \begin{array}{c}
        h=P_{21}(2,2)P_{21}(2,2)d_{2}\\
        f=P_{21}(2,2)P_{22}(2,1)d_{2}\\
        g=P_{22}(2,1)P_{22}(2,1)d_{2}
       \end{array}
\right.,\quad\text{and}\quad
\left\{
        \begin{array}{c}
        d_{1}=\frac{3+\sqrt{9-\beta}}{2}\\
        d_{2}=\frac{3-\sqrt{9-\beta}}{2}
       \end{array}.
\right.
\eea
Let $\widetilde{D}_{\beta}=U^{-T}P_\beta^T K_{\beta} P_\beta U^{-1}$,  by direct computation,  we have
\bea J\widetilde{D}_{\beta}=\frac{1}{2}\cdot
\left(
  \begin{array}{cccc}
     D_{11}&   D_{12} &  D_{13} & D_{14} \\
     D_{21} &  D_{22} &  D_{23} & D_{24} \\
     D_{31} &  D_{32} &  D_{33} & D_{34} \\
     D_{41} &  D_{42} &  D_{43} & D_{44} \\
  \end{array}
\right),  \label{formd}
\eea
where
\bea\label{eq5.46a}
D_{11}&=&-(a+g),  \ \ \ \
D_{22}=-(h+c), \ \ \ \
D_{33}=a+g,  \ \ \ \
D_{44}=h+c, \nonumber \\
D_{12}&=&-D_{21}=-i(f-b), \ \ \ \
D_{23}=D_{32}=-i(f+b), \ \ \ \
D_{24}=-D_{42}=c-h, \nonumber \\
D_{13}&=&-D_{31}=g-a,\ \ \ \
D_{14}=D_{41}=-i(f+b),\ \ \ \
D_{34}=-D_{43}=i(b-f). \ \ \ \
\eea
Obviously,
\bea J\widetilde{D}_{\beta}^{-}(s)=\cos^{-}(s)\cdot J\widetilde{D}_{\beta}.
\eea

In order to make the computation clearer, we introduce
\bea
f_{1}(\beta,\omega)=
\int_{0}^{2\pi}\int_{0}^{t}Tr\big[\widetilde{\gamma}_{\beta,0}(s)\widetilde{\gamma}_{\beta,0}^{T}(-t)J \widetilde{D}^{-}_{\beta}(t)
\cdot\widetilde{\gamma}_{\beta,0}(t)\widetilde{\gamma}_{\beta,0}^{T}(-s)J \widetilde{D}^{-}_{\beta}(s)\cdot M_{\beta}(\omega)\big]ds dt,
\eea
and
\bea
f_{2}(\beta,\omega)=
\int_{0}^{2\pi}\int_{0}^{2\pi}Tr\big[\widetilde{\gamma}_{\beta,0}(s)\widetilde{\gamma}_{\beta,0}^{T}(-t)J \widetilde{D}^{-}_{\beta}(t)
\cdot M_{\beta}(\omega)\cdot\widetilde{\gamma}_{\beta,0}(t)\widetilde{\gamma}_{\beta,0}^{T}(-s)J \widetilde{D}^{-}_{\beta}(s)\cdot M_{\beta}(\omega)\big]ds dt.
\eea
Therefore
\bea
f(\beta,\omega)=2f_{1}(\beta,\omega)-f_{2}(\beta,\omega).
\eea
Direct computation shows that
\bea f_{1}(\beta,\omega)=
\frac{1}{4}\sum_{\tiny \begin{array}{c}
 n=1  \\
 m=1
\end{array}}^{4}
D_{nm}D_{mn} k_{n}
\frac{2e^{\pi(\theta_{m}-\theta_{n})i}+\pi i(\theta_{m}-\theta_{n})[(\theta_{m}-\theta_{n})^{2}-1]}
{2[(\theta_{m}-\theta_{n})^{2}-1]^{2}},
\eea
and
\bea
f_{2}(\beta,\omega)=
\frac{1}{4}\sum_{\tiny \begin{array}{c}
 n=1  \\
 m=1
\end{array}}^{4}
D_{nm}D_{mn} k_{n}k_{m}\frac{2+e^{\pi(\theta_{m}-\theta_{n})i}+e^{-\pi(\theta_{m}-\theta_{n})i}}{[(\theta_{m}-\theta_{n})^{2}-1]^{2}},
\eea
where the blocks ${D_{nm}}$ are defined by (\ref{eq5.46a}). Thus  $f_{1}(\beta,\omega)$, $f_{2}(\beta,\omega)$ and $f(\bb,\omega)$ are elementary functions. Based on the precise form of the above functions, we can draw the  the curves $\Gamma_i$, $i=1,...,4$ in Figure 1 with the help of Matlab.

\subsection{Hyperbolicity analysis via the first order trace formula }\label{subsec5.4}

Recall that in (\ref{eq5.47a}), $\hat{f}(\beta)$ is defined by taking maximum, and maybe it is not an elementary function. Another way to estimate the hyperbolic region is to use the trace formula for Lagrangian system (\ref{lt.2}). It will be seen that the estimation of the hyperbolic region given by  the trace formula (\ref{ht2}) for Hamiltonian system is sharper than that given by the trace formula (\ref{lt.2}) for Lagrangian system. However, the later is more computable.

From (\ref{l1}),  for $\beta\in(1,9]$,  $\nu$ is imaginary number, $ \mathcal{A}(\beta,0,\nu)>0$. Recall that $\hat{K}_{\bb,0}(t)=\left(\begin{array}{cc}\frac{3+\sqrt{9-\beta}}{2} & 0 \\ 0 & \frac{3-\sqrt{9-\beta}}{2} \end{array}\right)$, for $\omega=e^{2\pi\nu}\in\mathbb{U}$, we define
\bea g(\beta,\nu)=-Tr\(JK_{\beta}
\cdot\gamma_{\beta,0}(2\pi)(\gamma_{\beta,0}(2\pi)-e^{2\pi\nu} I_{4})^{-1}\big)\). \label{g1} \eea
From (\ref{lt.2}) or (\ref{eq3.49a}), \bea Tr\(\frac{e\cos^+(t)}{1+e\cos(t)}\hat{K}_{\bb,0}\mathcal{A}(\beta,0,\nu)^{-1} \) &=& -Tr\(J\int_{0}^{2\pi}\gamma_{\beta,0}^{T}(t)\frac{e\cos^+(t)}{1+e\cos(t)}K_{\beta}\gamma_{\beta,0}(t)dt
\cdot\gamma_{\beta,0}(2\pi)(\gamma_{\beta,0}(2\pi)-\omega I_{4})^{-1}\big)\) \nonumber \\ &=&-\int_{0}^{2\pi}\frac{e\cos^+(t)}{1+e\cos(t)}dt\cdot Tr\(JK_{\beta}
\cdot\gamma_{\beta,0}(2\pi)(\gamma_{\beta,0}(2\pi)-\omega I_{4})^{-1}\big)\) \nonumber \\ &=& \(\pi-\frac{4}{\sqrt{1-e^2}}\tan^{-1}\sqrt{\frac{1-e}{1+e}}\)g(\beta,\nu), \eea
where the second equality is from the fact that $\gamma_{\beta,0}(t)=\exp(JB_{\beta,0}t)$ commutes with $\gamma_{\beta,0}(2\pi)$, and the third equality is
from \bea \int_{-\pi/2}^{\pi/2}\frac{e\cos(s)}{1+e\cos(s)}ds= \pi-\frac{4}{\sqrt{1-e^2}}\tan^{-1}\sqrt{\frac{1-e}{1+e}}. \nonumber \eea
Noting that $\pi-\frac{4}{\sqrt{1-e^2}}\tan^{-1}\sqrt{\frac{1-e}{1+e}}\geq0$ for $e\in[0,1)$, and  seting \bea \hat{g}(\beta)=\sup\{g(\beta,\nu), \nu\in\sqrt{-1}\mathbb{R} \}. \label{fbeta}\eea
In order to calculate $g(\beta,\nu)$, we change the basis by $P_\beta U^{-1}$, then
\bea  g(\bb,\nu)=Tr(iJ\widetilde{D}_{\beta}M_\beta(\omega)).  \eea
From (\ref{formd}), direct computation shows that
\begin{lem}\label{lem6.4}For $\beta\in (1,9]$ and $\nu=\sqrt{-1}u\in\sqrt{-1}\mathbb R$,  \label{lem6.2} \bea g(\beta,\nu)= 2Re\left(\frac{\sqrt{2}(-3-\beta+3\sqrt{1-\beta})}{4\sqrt{1-\beta}\sqrt{-1+\sqrt{1-\beta}}}
\frac{e^{-\sqrt{2}\pi\sqrt{-1+\sqrt{1-\beta}}}-e^{\sqrt{2}\pi\sqrt{-1+\sqrt{1-\beta}}}}{2\cos(2\pi u)-e^{-\sqrt{2}\pi\sqrt{-1+\sqrt{1-\beta}}}-e^{\sqrt{2}\pi\sqrt{-1+\sqrt{1-\beta}}}} \right).\eea
\end{lem}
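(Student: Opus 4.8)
\emph{Proof plan.} The plan is to evaluate the trace $g(\beta,\nu)=Tr\big(\sqrt{-1}\,J\widetilde D_\beta M_\beta(\omega)\big)$ by hand, using that $M_\beta(\omega)$ is diagonal. By (\ref{5.45}) one has $M_\beta(\omega)=\mathrm{diag}(k_1,k_2,k_3,k_4)$ with $k_j=e^{2\pi\sqrt{-1}\theta_j}/(e^{2\pi\sqrt{-1}\theta_j}-\omega)$, so only the diagonal entries of $J\widetilde D_\beta$ contribute, and by (\ref{formd})--(\ref{eq5.46a}) these are $\tfrac12D_{11}=-\tfrac12(a+g)$, $\tfrac12D_{22}=-\tfrac12(h+c)$, $\tfrac12D_{33}=\tfrac12(a+g)$, $\tfrac12D_{44}=\tfrac12(h+c)$. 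Since moreover $\theta_3=-\theta_1$ and $\theta_4=-\theta_2$, this gives
$$g(\beta,\nu)=\frac{\sqrt{-1}}{2}\Big[(a+g)(k_3-k_1)+(h+c)(k_4-k_2)\Big].$$
First I would record the elementary identity, valid for any $\theta$ and any $\omega=e^{2\pi\sqrt{-1}u}$,
$$\frac{e^{-2\pi\sqrt{-1}\theta}}{e^{-2\pi\sqrt{-1}\theta}-\omega}-\frac{e^{2\pi\sqrt{-1}\theta}}{e^{2\pi\sqrt{-1}\theta}-\omega}=\frac{e^{2\pi\sqrt{-1}\theta}-e^{-2\pi\sqrt{-1}\theta}}{2\cos(2\pi u)-e^{2\pi\sqrt{-1}\theta}-e^{-2\pi\sqrt{-1}\theta}},$$
obtained by putting the left-hand side over a common denominator and dividing numerator and denominator by $\omega$; applied with $\theta=\theta_1$ and $\theta=\theta_2$ this computes $k_3-k_1$ and $k_4-k_2$ in closed form.

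Next comes the branch bookkeeping. From $\theta_1(\beta)=-\sqrt{\tfrac12(1-\sqrt{1-\beta})}$ we get $2\pi\sqrt{-1}\,\theta_1=-\sqrt2\,\pi\sqrt{-1}\sqrt{1-\sqrt{1-\beta}}$, while $\sqrt{-1+\sqrt{1-\beta}}=\sqrt{-1}\sqrt{1-\sqrt{1-\beta}}$, hence $e^{\pm2\pi\sqrt{-1}\theta_1}=e^{\mp\sqrt2\pi\sqrt{-1+\sqrt{1-\beta}}}$. Substituting this into the closed form for $k_3-k_1$ turns it into exactly the quotient appearing in the statement. It then remains to identify the scalar $\tfrac{\sqrt{-1}}{2}(a+g)$ with the prefactor $\dfrac{\sqrt2\,(-3-\beta+3\sqrt{1-\beta})}{4\sqrt{1-\beta}\sqrt{-1+\sqrt{1-\beta}}}$; this is a tedious but elementary substitution of the entries $P_{21}(1,1)$ and $P_{22}(2,1)$ of $P_\beta$ together with $a=P_{21}(1,1)^2 d_1$, $g=P_{22}(2,1)^2 d_2$ and $d_1=\tfrac{3+\sqrt{9-\beta}}{2}$, $d_2=\tfrac{3-\sqrt{9-\beta}}{2}$, followed by rationalizing the nested radicals $(1-\beta)^{1/4}$, $(2\pm2\sqrt{1-\beta})^{1/4}$ and $\sqrt{4\pm\sqrt{1-\beta}-\sqrt{9-\beta}}$.

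Finally, for real $\beta\in(1,9)$ the quantity $\sqrt{1-\beta}$ is purely imaginary, so $\theta_1=-\overline{\theta_2}$, $e^{\pm2\pi\sqrt{-1}\theta_1}=\overline{e^{\pm2\pi\sqrt{-1}\theta_2}}$ and $2\cos(2\pi u)\in\dbR$; combined with the relation $a+g=-\overline{h+c}$ (which drops out of the same substitution, since complex conjugation interchanges the two $2\times2$ blocks of $P_\beta$ up to sign), this shows that the second summand $\tfrac{\sqrt{-1}}{2}(h+c)(k_4-k_2)$ is the complex conjugate of the first summand, so that $g(\beta,\nu)=2\,\Re\big(\tfrac{\sqrt{-1}}{2}(a+g)(k_3-k_1)\big)$, which is precisely the asserted formula; the endpoint $\beta=9$ follows by continuity. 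The main obstacle is exactly the explicit algebraic simplification $\tfrac{\sqrt{-1}}{2}(a+g)=\dfrac{\sqrt2(-3-\beta+3\sqrt{1-\beta})}{4\sqrt{1-\beta}\sqrt{-1+\sqrt{1-\beta}}}$ together with a coherent choice of branches for all the radicals making the symmetry $a+g=-\overline{h+c}$ hold on the nose; once these are pinned down, everything else is formal.
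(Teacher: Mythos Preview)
Your proposal is correct and follows exactly the route the paper indicates: the paper simply states ``From (\ref{formd}), direct computation shows'' the formula, and you carry out that direct computation of $Tr(\sqrt{-1}\,J\widetilde D_\beta M_\beta(\omega))$ in full, using the diagonality of $M_\beta(\omega)$, the identity for $k_3-k_1$, the branch identification $e^{\pm 2\pi\sqrt{-1}\theta_1}=e^{\mp\sqrt{2}\pi\sqrt{-1+\sqrt{1-\beta}}}$, and the conjugation symmetry $\overline{a+g}=-(h+c)$ (which indeed drops out of the explicit $P_\beta$ entries since for $\beta\in(1,9]$ complex conjugation sends $\sqrt{1-\beta}\mapsto-\sqrt{1-\beta}$). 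The only step you flag as tedious, the algebraic verification of $\tfrac{\sqrt{-1}}{2}(a+g)=\dfrac{\sqrt{2}(-3-\beta+3\sqrt{1-\beta})}{4\sqrt{1-\beta}\sqrt{-1+\sqrt{1-\beta}}}$, is likewise left implicit in the paper and is a routine (if unpleasant) simplification.
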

Similar to the proof of Theorem \ref{th2.3}, we have the following theorem.
\begin{thm}\label{thm6.1} For $\beta\in(1,9]$, $\gamma_{\beta,e}$ is hyperbolic if
\bea \pi-\frac{4}{\sqrt{1-e^2}}\tan^{-1}\sqrt{\frac{1-e}{1+e}}<1/\hat{g}(\beta). \label{6.21}\eea

\end{thm}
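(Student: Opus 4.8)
The plan is to mirror the proof of Theorem \ref{th2.3} (the hyperbolicity criterion coming from the second-order trace formula), but using the first-order trace formula (\ref{lt.2})/(\ref{eq3.49a}) for the Lagrangian system instead. Recall from Theorem \ref{th3.2} that for $\beta\in(1,9]$ and any $\omega=e^{2\pi\nu}\in\mathbb{U}$ we have $\mathcal{A}(\beta,0,\nu)>0$, equivalently $i_\omega(\gamma_{\beta,0})=\upsilon(\mathcal{A}_\omega(\beta,0,0))=0$. The strategy is: bound the perturbation $\mathcal{A}(\beta,e,\nu)-\mathcal{A}(\beta,0,\nu)$ above and below by multiples of $\cos^{\pm}(t)\hat{K}_{\beta,0}$ as in (\ref{3.a1.1})--(\ref{3.a2.1}); compute the relevant first-order trace via (\ref{eq3.49a}), which factors as a scalar integral times $g(\beta,\nu)$; invoke Corollary \ref{cor5b.1} (the Lagrangian non-degeneracy criterion, stating that if $Tr(D\mathcal{A}(\nu)^{-1})<1$ then $\phi_1(\mathcal{A}(\nu)+R_1)=\phi_1(\mathcal{A}(\nu))=0$ and $\mathcal{A}(\nu)+R_1$ is non-degenerate) to conclude $i_\omega(\gamma_{\beta,e})=0$ and $\upsilon_\omega(\gamma_{\beta,e})=0$ for all $\omega\in\mathbb{U}$; and finally note that a symplectic path $\gamma_{\beta,e}$ with $\det(\gamma_{\beta,e}(2\pi)-\omega I_4)\neq0$ for every $\omega\in\mathbb{U}$ has no eigenvalue on the unit circle, i.e.\ is hyperbolic.

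In more detail, first I would observe that by (\ref{3.a1.1}) we have $\mathcal{A}(\beta,e,\nu)\geq \mathcal{A}(\beta,0,\nu)-\tfrac{e}{1+e\cos^+(t)}\cos^+(t)\hat{K}_{\beta,0}$, and since $0\le \tfrac{e}{1+e\cos^+(t)}\cos^+(t)\le e\cos^+(t)$ pointwise, the perturbing operator $R_1^+ := \tfrac{e\cos^+(t)}{1+e\cos(t)}\hat{K}_{\beta,0}$ satisfies $0\le R_1^+\le e\,\hat K_{\beta,0}$ after passing through $P^{-1}$; similarly on the other side using (\ref{3.a2.1}). Then I compute $Tr\big(R_1^+\mathcal{A}(\beta,0,\nu)^{-1}\big)$ using (\ref{eq3.49a}): since $\gamma_{\beta,0}(t)=\exp(JB_{\beta,0}t)$ commutes with $\gamma_{\beta,0}(2\pi)$, the matrix-valued factor $\gamma_{\beta,0}^T(t)K_\beta\gamma_{\beta,0}(t)$ can be conjugated past the resolvent factor $M(M-\omega I_4)^{-1}$, so the trace becomes $\Big(\int_0^{2\pi}\tfrac{e\cos^+(t)}{1+e\cos(t)}dt\Big)\cdot\big(-Tr(JK_\beta\cdot\gamma_{\beta,0}(2\pi)(\gamma_{\beta,0}(2\pi)-\omega I_4)^{-1})\big)$. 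The scalar integral equals $\pi-\tfrac{4}{\sqrt{1-e^2}}\tan^{-1}\sqrt{\tfrac{1-e}{1+e}}$ by the quoted elementary antiderivative, and the matrix trace is $g(\beta,\nu)$ from (\ref{g1}). Bounding $g(\beta,\nu)$ by $\hat g(\beta)$ from (\ref{fbeta}) and using the monotonicity of the relative Morse index in the perturbation (Corollary \ref{indexno}) and the boundedness estimate $Tr(R_1^+\mathcal{A}^{-1})\le(\pi-\tfrac{4}{\sqrt{1-e^2}}\tan^{-1}\sqrt{\tfrac{1-e}{1+e}})\hat g(\beta)$, the hypothesis (\ref{6.21}) forces this trace to be $<1$.

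Applying Corollary \ref{cor5b.1} (with a cosmetic $\epsilon$-shift to make the perturbing operators strictly definite, exactly as in the proofs of Theorems \ref{th2.1}--\ref{th2.3}) yields $\phi_1(\mathcal{A}(\beta,0,\nu)-R_1^+)=\phi_1(\mathcal{A}(\beta,0,\nu))=0$ and non-degeneracy; combined with (\ref{3.a1.1}) and Corollary \ref{indexno} this gives $\phi_1(\mathcal{A}(\beta,e,\nu))=0$ and $\upsilon(\mathcal{A}(\beta,e,\nu))=0$. Translating back through the identifications $\phi(\mathcal{A}_\omega(\beta,e,0))=i_\omega(\gamma_{\beta,e})$, $\upsilon(\mathcal{A}_\omega(\beta,e,0))=\upsilon_\omega(\gamma_{\beta,e})$ and $\phi(\mathcal{A}_\omega(\beta,e,0))=\phi_1(\mathcal{A}(\beta,e,\nu))$, we get $i_\omega(\gamma_{\beta,e})=\upsilon_\omega(\gamma_{\beta,e})=0$ for all $\omega\in\mathbb{U}$, hence $\det(\gamma_{\beta,e}(2\pi)-\omega I_4)\neq 0$ for all $\omega\in\mathbb{U}$, i.e.\ $\gamma_{\beta,e}$ is hyperbolic.

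The main obstacle I anticipate is the bookkeeping of the two-sided bound: one needs both $\mathcal{A}(\beta,e,\nu)\ge \mathcal{A}(\beta,0,\nu)-R_1^+$ with $R_1^+\ge0$ \emph{and} $\mathcal{A}(\beta,e,\nu)\le \mathcal{A}(\beta,0,\nu)-R_1^-$ with $R_1^-\le0$, and to verify that the trace bound applies to both $R_1^\pm$ (this uses Lemma \ref{lem5.3}-type symmetry $\cos^+(t)\leftrightarrow\cos^-(t)$ under $t\mapsto t+\pi$, which shows the relevant traces coincide). A secondary technical point is justifying the $\epsilon$-perturbation so that Corollary \ref{cor5b.1} (which is stated for strictly definite perturbations) applies and that the continuity of the trace preserves the strict inequality $<1$; this is routine and identical to the arguments already used for Theorems \ref{th2.1} and \ref{th2.2}. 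Everything else — the commutation that diagonalizes the trace, the evaluation of the scalar integral, and the formula for $g(\beta,\nu)$ in Lemma \ref{lem6.4} — is a direct computation already carried out in \S\ref{subsec5.4}.
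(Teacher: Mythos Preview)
Your proposal is correct and follows essentially the same approach as the paper, which simply says ``Similar to the proof of Theorem \ref{th2.3}'' --- you have correctly unpacked what that means in the Lagrangian first-order setting: compute $Tr(R_1^+\mathcal{A}(\beta,0,\nu)^{-1})$ via (\ref{eq3.49a}), factor it as the scalar integral $\pi-\tfrac{4}{\sqrt{1-e^2}}\tan^{-1}\sqrt{\tfrac{1-e}{1+e}}$ times $g(\beta,\nu)$, bound by $\hat g(\beta)$, and invoke Corollary \ref{cor5b.1}.

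One simplification: your anticipated ``two-sided bound'' obstacle is not actually needed here. Because $\mathcal{A}(\beta,0,\nu)>0$ is strictly positive definite for $\beta\in(1,9]$ (Theorem \ref{th3.2}), Corollary \ref{cor5b.1} applied with $K=R_1^+ +\epsilon I_n$ gives $\phi_1(\mathcal{A}(\beta,0,\nu)-R_1^+)=0$ and non-degeneracy, which together mean $\mathcal{A}(\beta,0,\nu)-R_1^+>0$; then (\ref{3.a1.1}) alone yields $\mathcal{A}(\beta,e,\nu)>0$, giving both $i_\omega(\gamma_{\beta,e})=0$ and $\upsilon_\omega(\gamma_{\beta,e})=0$ at once. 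The lower bound (\ref{3.a2.1}) and the $\cos^+\leftrightarrow\cos^-$ symmetry are not required.
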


We can use Theorem \ref{thm6.1} or Theorem \ref{th2.3} to estimate the hyperbolic region.
Next, we draw the following figure to compare the hyperbolic regions given by the two theorems  respectively.

\begin{figure}[H]
 \centering
   \includegraphics[height=0.36\textwidth,width=0.76\textwidth]{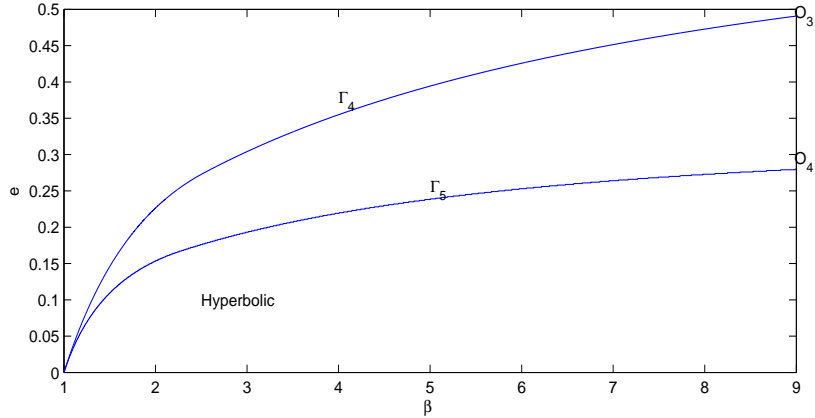}
     \caption{The hyperbolic region  given by Theorem \ref{thm6.1} and Theorem \ref{th2.3}.}
\end{figure}
\noindent

\noindent In Figure 2, the points $O_{3}\approx(9, 0.4907)$, $O_{4}\approx(9, 0.2800).$ The curves  $$\Gamma_{4}=\left\{(\beta,e)\,\left|\,
e=\hat{f}(\bb)^{-1/2}, 1\leq\beta\leq9\right.\right\},\ \ \Gamma_{5}=\left\{(\beta,e)\,\left|\,
\pi-\frac{4}{\sqrt{1-e^2}}\tan^{-1}\sqrt{\frac{1-e}{1+e}}=\hat{g}(\beta)^{-1}, 1\leq\beta\leq 9\right.\right\},$$
where $\Gamma_{4}$ is given by theorem \ref{th2.3}, which is obtained by (\ref{eq5b.5.1.1a}), and $\Gamma_{5}$ is given by theorem \ref{thm6.1}, which is obtained by using (\ref{lt.2}).\\

Since $\hat{g}(\beta)$ is not easy to be computed, we will control $\hat{g}(\beta)$ by some elementary  function.
\begin{lem}\label{lem6.1} For $\beta\in(1,9]$,
\bea \hat{g}(\beta)\leq\frac{\beta^{\frac{1}{4}}\sqrt{\beta+15}}{\sqrt{2(\beta-1)}}
\frac{\sqrt{e^{-2\sqrt{2}\pi \hat{c}}+e^{2\sqrt{2}\pi \hat{c}}-2\cos(2\sqrt{2}\pi \hat{d})}}
{|(e^{-\sqrt{2}\pi \hat{c}}-e^{\sqrt{2}\pi \hat{c}})\sin(\sqrt{2}\pi \hat{d})|},\eea
where $\hat{c}=Re(\sqrt{-1+\sqrt{1-\beta}})$, $\hat{d}=Im(\sqrt{-1+\sqrt{1-\beta}})$.

\end{lem}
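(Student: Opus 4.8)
The plan is to bound $g(\bb,\nu)$ pointwise in $\nu\in\sqrt{-1}\,\mathbb{R}$ starting from the closed expression of Lemma~\ref{lem6.4}, and then pass to the supremum. Fix $\bb\in(1,9]$ and choose the principal branches throughout, so $\sqrt{1-\bb}=\sqrt{-1}\,\sqrt{\bb-1}$ and $w:=\sqrt{-1+\sqrt{1-\bb}}=\hat c+\sqrt{-1}\,\hat d$. Since $-1+\sqrt{-1}\,\sqrt{\bb-1}$ lies in the open second quadrant, its principal square root $w$ lies in the open first quadrant, hence $\hat c>0$; squaring $w$ gives $\hat c^2=(\sqrt\bb-1)/2$, $\hat d^2=(\sqrt\bb+1)/2$, and in particular $|w|=\bb^{1/4}$ and $|{-1+\sqrt{1-\bb}}|=\sqrt\bb$. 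Write $\nu=\sqrt{-1}\,u$ with $u\in\mathbb{R}$. From Lemma~\ref{lem6.4} and $|2\,\mathrm{Re}\,z|\le 2|z|$ we get $|g(\bb,\nu)|\le 2|z_\bb|\cdot|r(u)|$, where
\[
z_\bb=\frac{\sqrt2\,(-3-\bb+3\sqrt{1-\bb})}{4\sqrt{1-\bb}\;w},\qquad
r(u)=\frac{e^{-\sqrt2\pi w}-e^{\sqrt2\pi w}}{2\cos(2\pi u)-e^{-\sqrt2\pi w}-e^{\sqrt2\pi w}}.
\]

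The first step is to evaluate $|z_\bb|$ exactly. With $\sqrt{1-\bb}=\sqrt{-1}\,\sqrt{\bb-1}$ one has $|-3-\bb+3\sqrt{1-\bb}|^2=(\bb+3)^2+9(\bb-1)=\bb^2+15\bb=\bb(\bb+15)$, while $|\sqrt{1-\bb}|=\sqrt{\bb-1}$ and $|w|=\bb^{1/4}$. Therefore
\[
2|z_\bb|=\frac{2\sqrt2\,\sqrt{\bb(\bb+15)}}{4\sqrt{\bb-1}\;\bb^{1/4}}=\frac{\bb^{1/4}\sqrt{\bb+15}}{\sqrt{2(\bb-1)}},
\]
which is precisely the algebraic factor appearing in the claimed bound.

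The second step is to estimate $|r(u)|$ uniformly in $u$. Writing $e^{\mp\sqrt2\pi w}=e^{\mp\sqrt2\pi\hat c}\bigl(\cos(\sqrt2\pi\hat d)\mp\sqrt{-1}\sin(\sqrt2\pi\hat d)\bigr)$, a direct expansion of the numerator gives
\[
|e^{-\sqrt2\pi w}-e^{\sqrt2\pi w}|^2=(e^{-\sqrt2\pi\hat c}-e^{\sqrt2\pi\hat c})^2\cos^2(\sqrt2\pi\hat d)+(e^{-\sqrt2\pi\hat c}+e^{\sqrt2\pi\hat c})^2\sin^2(\sqrt2\pi\hat d),
\]
which simplifies (using $\cos^2-\sin^2=\cos$ of the double angle) to $e^{-2\sqrt2\pi\hat c}+e^{2\sqrt2\pi\hat c}-2\cos(2\sqrt2\pi\hat d)$, the numerator in the lemma. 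For the denominator, the decisive observation is that its imaginary part, $-(e^{\sqrt2\pi\hat c}-e^{-\sqrt2\pi\hat c})\sin(\sqrt2\pi\hat d)$, does \emph{not} depend on $u$; hence for every $u\in\mathbb{R}$,
\[
|2\cos(2\pi u)-e^{-\sqrt2\pi w}-e^{\sqrt2\pi w}|\ \ge\ \bigl|(e^{-\sqrt2\pi\hat c}-e^{\sqrt2\pi\hat c})\sin(\sqrt2\pi\hat d)\bigr|.
\]
Combining this with the numerator identity and the first step yields $|g(\bb,\nu)|\le$ (the right-hand side of Lemma~\ref{lem6.1}) for every $\nu\in\sqrt{-1}\,\mathbb{R}$, and taking the supremum over $\nu$ — recalling the definition \eqref{fbeta} of $\hat g(\bb)$ — completes the proof.

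I do not anticipate a deep obstacle here: the content is a careful modulus computation. The one point requiring genuine care is the uniform lower bound for the denominator, which is legitimate only because the imaginary part of $2\cos(2\pi u)-e^{-\sqrt2\pi w}-e^{\sqrt2\pi w}$ is $u$-independent (and, thanks to $\hat c>0$, nonzero as soon as $\sin(\sqrt2\pi\hat d)\neq 0$). A secondary bookkeeping issue is fixing branch choices so that $\hat c,\hat d$ match the statement and noting the degenerate case $\bb=9$: there $\hat d=\sqrt2$, so $\sin(\sqrt2\pi\hat d)=0$ and the asserted bound reads $\hat g(9)\le+\infty$, which is vacuous; for $\bb\in(1,9)$ all denominators above are strictly positive and the argument goes through cleanly.
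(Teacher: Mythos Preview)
Your proof is correct and follows essentially the same route as the paper: bound $2\,\mathrm{Re}(z)$ by $2|z|$ in the formula of Lemma~\ref{lem6.4}, compute the modulus of the algebraic prefactor exactly as $\beta^{1/4}\sqrt{\beta+15}/\sqrt{2(\beta-1)}$, and then bound the $u$-dependent denominator from below by its ($u$-independent) imaginary part. You supply somewhat more detail than the paper (the explicit numerator identity, the reason the denominator bound is uniform in $u$, and the remark that at $\beta=9$ the bound degenerates to $+\infty$), but the argument is the same.
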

\begin{proof} Let $\hat{c}=Re(\sqrt{-1+\sqrt{1-\beta}})$, $\hat{d}=Im(\sqrt{-1+\sqrt{1-\beta}})$,
direct computation shows that $\big|\sqrt{-1+\sqrt{1-\beta}}\big|=\beta^{\frac{1}{4}}$ for $\beta\in(1,9]$, applying Lemma \ref{lem6.4}, we have
\bea
g(\beta,\nu)&=&
2Re\(\frac{\sqrt{2}(-3-\beta+3\sqrt{1-\beta})}{4\sqrt{1-\beta}\sqrt{-1+\sqrt{1-\beta}}}
\frac{e^{-\sqrt{2}\pi\sqrt{-1+\sqrt{1-\beta}}}-e^{\sqrt{2}\pi\sqrt{-1+\sqrt{1-\beta}}}}{2\cos(2\pi u)-e^{-\sqrt{2}\pi\sqrt{-1+\sqrt{1-\beta}}}-e^{\sqrt{2}\pi\sqrt{-1+\sqrt{1-\beta}}}} \)\nonumber\\
&\leq&
\Big|\(\frac{\sqrt{2}(-3-\beta+3\sqrt{1-\beta})}{2\sqrt{1-\beta}\sqrt{-1+\sqrt{1-\beta}}}
\frac{e^{-\sqrt{2}\pi\sqrt{-1+\sqrt{1-\beta}}}-e^{\sqrt{2}\pi\sqrt{-1+\sqrt{1-\beta}}}}{2\cos(2\pi u)-e^{-\sqrt{2}\pi\sqrt{-1+\sqrt{1-\beta}}}-e^{\sqrt{2}\pi\sqrt{-1+\sqrt{1-\beta}}}} \)\Big|\nonumber \\
&=&
\frac{\sqrt{2}\sqrt{(3+\beta)^{2}+9(\beta-1)}}{2\sqrt{\beta-1}\beta^{\frac{1}{4}}}
\frac{|e^{-\sqrt{2}\pi(\hat{c}+\hat{d} i)}-e^{\sqrt{2}\pi(\hat{c}+\hat{d} i)}|}{|2\cos(2\pi u)-
e^{-\sqrt{2}\pi(\hat{c}+\hat{d} i)}-e^{\sqrt{2}\pi(\hat{c}+\hat{d} i)}|} \nonumber \\
&=&
\frac{\beta^{\frac{1}{4}}\sqrt{\beta+15}}{\sqrt{2}\sqrt{\beta-1}}
\frac{\sqrt{e^{-2\sqrt{2}\pi \hat{c}}+e^{2\sqrt{2}\pi \hat{c}}-2\cos(2\sqrt{2}\pi \hat{d})}}
{\sqrt{(2\cos(2\pi u)-(e^{-\sqrt{2}\pi \hat{c}}+e^{\sqrt{2}\pi \hat{c}})\cos(\sqrt{2}\pi \hat{d}))^{2}+
((e^{-\sqrt{2}\pi \hat{c}}-e^{\sqrt{2}\pi \hat{c}})\sin(\sqrt{2}\pi \hat{d}))^{2}}}\nonumber\\
&\leq&
\frac{\beta^{\frac{1}{4}}\sqrt{\beta+15}}{\sqrt{2(\beta-1)}}
\frac{\sqrt{e^{-2\sqrt{2}\pi \hat{c}}+e^{2\sqrt{2}\pi \hat{c}}-2\cos(2\sqrt{2}\pi \hat{d})}}
{|(e^{-\sqrt{2}\pi \hat{c}}-e^{\sqrt{2}\pi \hat{c}})\sin(\sqrt{2}\pi \hat{d})|}.\nonumber
\eea
Hence
\bea \hat{g}(\beta)\leq\frac{\beta^{\frac{1}{4}}\sqrt{\beta+15}}{\sqrt{2(\beta-1)}}
\frac{\sqrt{e^{-2\sqrt{2}\pi \hat{c}}+e^{2\sqrt{2}\pi \hat{c}}-2\cos(2\sqrt{2}\pi \hat{d})}}
{|(e^{-\sqrt{2}\pi \hat{c}}-e^{\sqrt{2}\pi \hat{c}})\sin(\sqrt{2}\pi \hat{d})|},\nonumber\eea
where $\hat{c}=Re\(\sqrt{-1+\sqrt{1-\beta}}\)$, $\hat{d}=Im\(\sqrt{-1+\sqrt{1-\beta}}\)$,
this completes the proof.
\end{proof}

\begin{cor}\label{cor6.1}
 For $\beta\in(1,9]$, $\gamma_{\beta,e}$ is hyperbolic if
\bea \pi-\frac{4}{\sqrt{1-e^2}}\tan^{-1}\sqrt{\frac{1-e}{1+e}}<
\frac{\sqrt{2(\beta-1)}}{\beta^{\frac{1}{4}}\sqrt{\beta+15}}
\frac{|(e^{-\sqrt{2}\pi \hat{c}}-e^{\sqrt{2}\pi \hat{c}})\sin(\sqrt{2}\pi \hat{d})|}
{\sqrt{e^{-2\sqrt{2}\pi \hat{c}}+e^{2\sqrt{2}\pi \hat{c}}-2\cos(2\sqrt{2}\pi \hat{d})}},\label{cor6.5f}
\eea
where $\hat{c}=Re(\sqrt{-1+\sqrt{1-\beta}})$, $\hat{d}=Im(\sqrt{-1+\sqrt{1-\beta}})$.
\end{cor}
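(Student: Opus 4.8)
\textbf{Proof proposal for Corollary \ref{cor6.1}.} The plan is to derive the statement by simply chaining together the two results that immediately precede it: the hyperbolicity criterion of Theorem \ref{thm6.1} and the explicit upper bound for $\hat g(\beta)$ furnished by Lemma \ref{lem6.1}. Recall that Theorem \ref{thm6.1} asserts that, for $\beta\in(1,9]$, the orbit $\gamma_{\beta,e}$ is hyperbolic whenever
\[
\pi-\frac{4}{\sqrt{1-e^2}}\tan^{-1}\sqrt{\frac{1-e}{1+e}}<\frac{1}{\hat g(\beta)},
\]
where $\hat g(\beta)=\sup\{g(\beta,\nu):\nu\in\sqrt{-1}\,\mathbb R\}$. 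So the whole content to be checked is that the right-hand side of \eqref{cor6.5f} does not exceed $1/\hat g(\beta)$.

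The first step is to invoke Lemma \ref{lem6.1}, which gives
\[
0\le \hat g(\beta)\le \Phi(\beta):=\frac{\beta^{1/4}\sqrt{\beta+15}}{\sqrt{2(\beta-1)}}\cdot\frac{\sqrt{e^{-2\sqrt2\pi\hat c}+e^{2\sqrt2\pi\hat c}-2\cos(2\sqrt2\pi\hat d)}}{\bigl|(e^{-\sqrt2\pi\hat c}-e^{\sqrt2\pi\hat c})\sin(\sqrt2\pi\hat d)\bigr|},
\]
with $\hat c=\mathrm{Re}(\sqrt{-1+\sqrt{1-\beta}})$, $\hat d=\mathrm{Im}(\sqrt{-1+\sqrt{1-\beta}})$. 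The second step is an elementary monotonicity remark: since $\hat g(\beta)\le\Phi(\beta)$, we get $1/\hat g(\beta)\ge 1/\Phi(\beta)$ (interpreting $1/\hat g(\beta)=+\infty$ in the degenerate case $\hat g(\beta)=0$, in which the criterion of Theorem \ref{thm6.1} holds vacuously). Finally, the hypothesis \eqref{cor6.5f} is exactly the statement that
\[
\pi-\frac{4}{\sqrt{1-e^2}}\tan^{-1}\sqrt{\frac{1-e}{1+e}}<\frac{1}{\Phi(\beta)}\le\frac{1}{\hat g(\beta)},
\]
so Theorem \ref{thm6.1} applies and $\gamma_{\beta,e}$ is hyperbolic, which is the assertion.

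There is essentially no deep obstacle here; the corollary is a packaging of Theorem \ref{thm6.1} via the clean upper estimate of Lemma \ref{lem6.1}. The only points demanding a line of care are (i) checking that the denominator $|(e^{-\sqrt2\pi\hat c}-e^{\sqrt2\pi\hat c})\sin(\sqrt2\pi\hat d)|$ in $\Phi(\beta)$ is strictly positive for $\beta\in(1,9]$, so that $\Phi(\beta)$ and $1/\Phi(\beta)$ are well defined (this follows because for $\beta\in(1,9]$ one has $\hat d\ne 0$, hence $\sin(\sqrt2\pi\hat d)\ne0$, and $\hat c\ne0$ unless $\beta$ is such that $\sqrt{-1+\sqrt{1-\beta}}$ is purely imaginary, a case one treats separately by noting $\hat g(\beta)=0$ there so the conclusion is immediate), and (ii) making sure the reciprocal inequality is used in the correct direction, as above. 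Both are routine, so the proof consists only of quoting Theorem \ref{thm6.1} and Lemma \ref{lem6.1} and performing this inequality chain.
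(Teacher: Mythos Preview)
Your proposal is correct and matches the paper's approach exactly: the paper states Corollary~\ref{cor6.1} immediately after Lemma~\ref{lem6.1} without an explicit proof, treating it as the obvious combination of Theorem~\ref{thm6.1} with the upper bound on $\hat g(\beta)$ from Lemma~\ref{lem6.1}, which is precisely the inequality chain you wrote. Your extra care about the nonvanishing of the denominator is not addressed in the paper but is harmless (in fact for $\beta\in(1,9]$ one always has $\hat c>0$ and $\hat d>0$, so no separate case is needed).
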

Denote $h(\beta)$ be the right item of (\ref{cor6.5f}), and let $\beta_0$ be the point such that $$h(\beta_{0})=\max\{h(\beta):\beta\in(1,9]\}.$$ With the help of Mathlab, we know that $\beta_{0}\approx 3.0334$, correspondingly, $e\approx0.1797$. Hence $$h(\beta_0)\geq h(3.0334)=0.3154.$$
It was proved in \cite{HLS} that if  $\gamma_{\beta_0,e}$ is hyperbolic, then $\gamma_{\beta,e}$ is hyperbolic for any $\beta\geq\beta_0$, then we have
\begin{cor}\label{cor6.2}
 For $\beta\in[\beta_{0},9]$, $\gamma_{\beta,e}$ is hyperbolic if
\bea \pi-\frac{4}{\sqrt{1-e^2}}\tan^{-1}\sqrt{\frac{1-e}{1+e}}<
0.3154.\eea That is, $\gamma_{\beta,e}$ is hyperbolic if $(\beta,e)\in[3.0334,9]\times[0,0.1797]$.
\end{cor}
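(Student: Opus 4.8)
The plan is to combine Corollary \ref{cor6.1} with the monotonicity-in-$\beta$ result of \cite{HLS} that propagates hyperbolicity upward from a given parameter. First I would recall the statement of Corollary \ref{cor6.1}: for $\beta\in(1,9]$, the orbit $\gamma_{\beta,e}$ is hyperbolic whenever
\be
\pi-\frac{4}{\sqrt{1-e^2}}\tan^{-1}\sqrt{\frac{1-e}{1+e}}<h(\beta),\nonumber
\ee
where $h(\beta)$ denotes the elementary right-hand side of \eqref{cor6.5f}, built from $\hat c=\Re(\sqrt{-1+\sqrt{1-\beta}})$ and $\hat d=\Im(\sqrt{-1+\sqrt{1-\beta}})$. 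The left-hand side is a strictly increasing function of $e$ on $[0,1)$ vanishing at $e=0$ (as noted after \eqref{fbeta}), so for fixed $\beta$ the hyperbolicity condition defines an interval $e\in[0,e_\beta)$; our task is to exhibit a single interval of $e$ that works uniformly for all $\beta\in[\beta_0,9]$.

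The key step is to locate $\beta_0\approx 3.0334$ as the maximizer of $h$ on $(1,9]$ and to extract the lower bound $h(\beta_0)\ge h(3.0334)=0.3154$. This I would carry out numerically with Matlab, exactly as already done for the curves $\Gamma_i$ in \S\ref{subsec5.3} and $\Gamma_5$ in \S\ref{subsec5.4}; since $h$ is an explicit elementary function of $\beta$ this is a routine maximization, and one only needs the single evaluation $h(3.0334)\ge 0.3154$ together with the observation that this value is achieved (approximately) at $\beta_0$. Then, invoking the result of \cite{HLS} that hyperbolicity of $\gamma_{\beta_0,e}$ forces hyperbolicity of $\gamma_{\beta,e}$ for all $\beta\ge\beta_0$, it suffices to verify hyperbolicity at $\beta=\beta_0$; and by Corollary \ref{cor6.1} this holds as soon as
\be
\pi-\frac{4}{\sqrt{1-e^2}}\tan^{-1}\sqrt{\frac{1-e}{1+e}}<0.3154.\nonumber
\ee
Finally I would check that $e\le 0.1797$ implies this last inequality: substitute $e=0.1797$ into the monotone left-hand side and confirm the numerical value is below $0.3154$, which again is a one-line Matlab check. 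Combining these, for $(\beta,e)\in[\beta_0,9]\times[0,0.1797]$ the hypothesis of Corollary \ref{cor6.1} is met at $\beta=\beta_0$, and the propagation result of \cite{HLS} then yields hyperbolicity of $\gamma_{\beta,e}$ for every $\beta\in[\beta_0,9]$, which is the claim.

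The only genuine obstacle is making the numerical claims rigorous, i.e.\ justifying $h(\beta_0)\ge 0.3154$ and that $\beta_0$ is (close to) the true maximizer; but since $h$ is an explicit combination of exponentials and trigonometric functions of $\beta$ on the compact interval $[1,9]$, interval-arithmetic or a crude Lipschitz estimate with a fine grid makes this precise, and the loss from replacing $\hat g(\beta)$ by its elementary upper bound in Lemma \ref{lem6.1} has already been absorbed. No new analytic input beyond Corollary \ref{cor6.1} and the monotonicity result of \cite{HLS} is needed.
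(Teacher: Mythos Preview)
Your proposal is correct and follows essentially the same route as the paper: define $h(\beta)$ as the right-hand side of \eqref{cor6.5f}, locate its maximizer $\beta_0\approx 3.0334$ numerically to obtain $h(\beta_0)\ge 0.3154$, apply Corollary~\ref{cor6.1} at $\beta_0$, and then invoke the monotonicity result from \cite{HLS} to propagate hyperbolicity to all $\beta\in[\beta_0,9]$. Your added remarks on the monotonicity of the left-hand side in $e$ and on certifying the numerics are helpful refinements, but the underlying argument is the same.
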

By using Corollary \ref{cor6.1}, \ref{cor6.2}, we can draw a picture of the hyperbolic region as follows.
\begin{figure}[H]
\begin{minipage}[t]{0.5\linewidth}
\centering
\includegraphics[height=0.64\textwidth,width=0.98\textwidth]{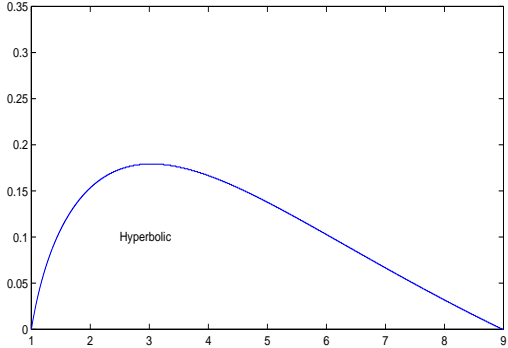}
     \caption{The  hyperbolic region given by Cor.\ref{cor6.1}}
\label{fig:side:a}
\end{minipage}%
\begin{minipage}[t]{0.5\linewidth}
\centering
\includegraphics[height=0.64\textwidth,width=1\textwidth]{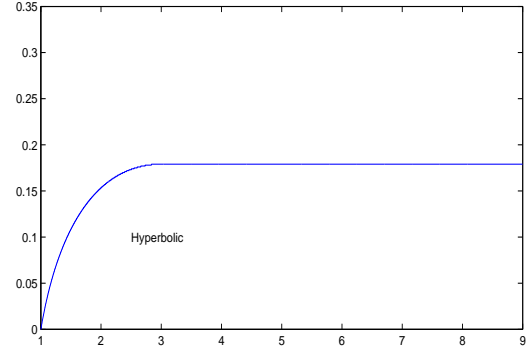}
     \caption{The  hyperbolic region given by Cor.\ref{cor6.2}}
\label{fig:side:b}
\end{minipage}
\end{figure}

\begin{rem} \label{5.5} From the proof of Theorems \ref{th2.1} and Theorem \ref{th2.2}, $\ga_{\bb,e}$ is $-1$-nondegenerate if $(\bb,e)$ belongs to the set $\{(\beta,e)| 0\leq e<1/(1+\sqrt{f(\beta,-1)}), 0\leq\beta<3/4\}$  or
$\{(\beta,e)| 0\leq e<1/\sqrt{f(\beta,-1)}, 3/4<\beta\leq9\}$.
However, using (\ref{lt.2}), we get that $\ga_{\bb,e}$ is $-1$-nondegenerate if $(\bb,e)$ belongs to the set
$\{(\beta,e)|\ \ \pi-\frac{4}{\sqrt{1-e^2}}\tan^{-1}\sqrt{\frac{1-e}{1+e}}<1/g(\beta,\frac{\sqrt{-1}}{2}), 3/4<\beta\leq9\}$.
\end{rem}

\begin{figure}[H]
 \centering
   \includegraphics[height=0.36\textwidth,width=0.76\textwidth]{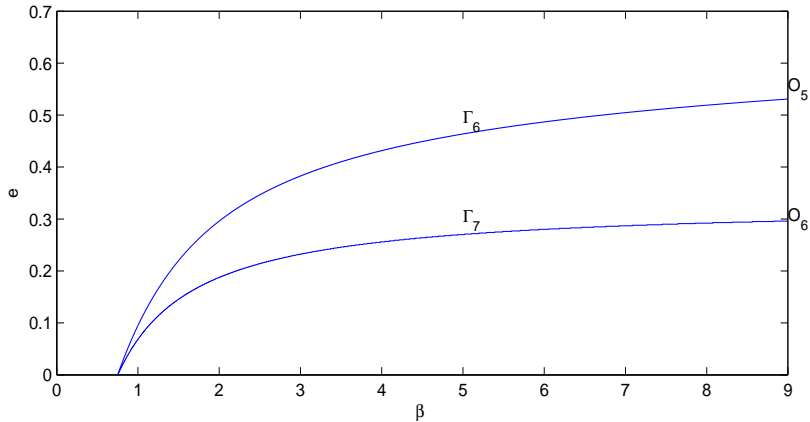}
     \caption{The -1-nondegenerate region given by  Remark 5.12.}
\end{figure}
\noindent
In Figure 5, the points $O_{5}\approx(9, 0.5309)$, $O_{6}\approx(9, 0.2961).$ The curves  $$\Gamma_{6}=\left\{(\beta,e)\,\left|\,
e=f(\beta,-1)^{-1/2}, 1\leq\beta\leq9\right.\right\},$$ and $$ \Gamma_{7}=\left\{(\beta,e)\,\left|\, \pi-\frac{4}{\sqrt{1-e^2}}\tan^{-1}\sqrt{\frac{1-e}{1+e}}=g\(\beta,\frac{\sqrt{-1}}{2}\)^{-1}, 1\leq\beta\leq 9\right.\right\}.$$

The same reasoning as above implies that we can estimate the non-degenerate region by the trace formulas in Theorem \ref{thm1.1} for  $k$. As $k$ is lager, the estimation of the non-degenerate region is sharper, however, the trace formula is more complex and less computable.

\medskip

\noindent {\bf Acknowledgements.} The  authors sincerely thank Y. Long and C. Zeng for their encouragements and interests. The first author sincerely thanks Y.Long and S.Sun's helpful discussion for the Lagrangian orbits.

\end{document}